\newtheorem{theorem}{Theorem}[section]
\newtheorem{proposition}[theorem]{Proposition}
\theoremstyle{remark}
\newtheorem{definition}[theorem]{Definition}
\newcommand{\R}{\mathbb{R}}
\newcommand{\paren}[1]{\left({#1}\right)}
\newcommand{\set}[1]{\left\{{#1}\right\}}
\newcommand{\abs}[1]{\left|{#1}\right|}
\newcommand{\defeq}{\overset{\text{def.}}{=}}
\newcommand{\cmark}{\ding{51}}
\newcommand{\xmark}{\ding{55}}
\begin{document}

\begin{frontmatter}
\title{On Numerical Considerations for Riemannian Manifold Hamiltonian Monte Carlo}
\runtitle{Numerical Considerations for Riemannian Hamiltonian Monte Carlo}

\begin{aug}
%%%%%%%%%%%%%%%%%%%%%%%%%%%%%%%%%%%%%%%%%%%%%%
%%Only one address is permitted per author. %%
%%Only division, organization and e-mail is %%
%%included in the address.                  %%
%%Additional information can be included in %%
%%the Acknowledgments section if necessary. %%
%%%%%%%%%%%%%%%%%%%%%%%%%%%%%%%%%%%%%%%%%%%%%%
\author[A]{\fnms{James A.} \snm{Brofos}\ead[label=e1]{james.brofos@yale.edu}},
\and
\author[A]{\fnms{Roy R.} \snm{Lederman}}
%%%%%%%%%%%%%%%%%%%%%%%%%%%%%%%%%%%%%%%%%%%%%%
%% Addresses                                %%
%%%%%%%%%%%%%%%%%%%%%%%%%%%%%%%%%%%%%%%%%%%%%%
\address[A]{Yale University, \printead{e1}}
\end{aug}

\begin{abstract}
Riemannian manifold Hamiltonian Monte Carlo (RMHMC) is a sampling algorithm that seeks to adapt proposals to the local geometry of the posterior distribution. The specific form of the Hamiltonian used in RMHMC necessitates {\it implicitly-defined} numerical integrators in order to sustain reversibility and volume-preservation, two properties that are necessary to establish detailed balance of RMHMC. In practice, these implicit equations are solved to a non-zero convergence tolerance via fixed-point iteration. However, the effect of these convergence thresholds on the ergodicity and computational efficiency properties of RMHMC are not well understood. The purpose of this research is to elucidate these relationships through numerous case studies. Our analysis reveals circumstances wherein the RMHMC algorithm is sensitive, and insensitive, to these convergence tolerances.
Our empirical analysis examines several aspects of the computation: (i) we examine the ergodicity of the RMHMC Markov chain by employing statistical methods for comparing probability measures based on collections of samples; (ii) we investigate the degree to which detailed balance is violated by measuring errors in reversibility and volume-preservation; (iii) we assess the efficiency of the RMHMC Markov chain in terms of time-normalized ESS. In each of these cases, we investigate the sensitivity of these metrics to the convergence threshold and further contextualize our results in terms of comparison against Euclidean HMC.
We propose a method by which one may select the convergence tolerance within a Bayesian inference application using techniques of stochastic approximation and we examine Newton's method, an alternative to fixed point iterations, which can eliminate much of the sensitivity of RMHMC to the convergence threshold.
\end{abstract}

\begin{keyword}
\kwd{Markov chain Monte Carlo}
\kwd{Riemannian manifold Hamiltonian Monte Carlo}
\kwd{statistical computing}
\end{keyword}

\end{frontmatter}
%%%%%%%%%%%%%%%%%%%%%%%%%%%%%%%%%%%%%%%%%%%%%%
%% Please use \tableofcontents for articles %%
%% with 50 pages and more                   %%
%%%%%%%%%%%%%%%%%%%%%%%%%%%%%%%%%%%%%%%%%%%%%%
\tableofcontents

\section{Introduction}

Bayesian inference provides a theoretical basis for reasoning under uncertainty in statistical applications. In particular, let $q\in\R^m$ be a parameter vector of interest and let $\mathcal{L}(q)$ denote the log-density of $q$; in many applications of interest $\mathcal{L}(q)$ will be, up to a additive constant, the sum of the log-likelihood of data given $q$ and a log-density representing a prior belief over $q$. A central task in Bayesian inference is the estimation of expectations of functions of the random variable $q$. However, since $m$ may be large, computing expectations via an $m$-dimensional integral may be intractable. Instead, in order to compute expectations over the distribution of $q$, we seek to generate samples from the distribution of $q$ and form a Monte Carlo expectation; this idea has led to the development of Markov chain Monte Carlo (MCMC), which require only that $\mathcal{L}(q)$ be known up to an additive constant. See \citet{brooks2011handbook} for an overview of MCMC sampling methods. In MCMC, one establishes a Markov chain whose stationary distribution has density proportional to $\exp(\mathcal{L}(q))$; by running the Markov chain for a ``large'' number of steps, one obtains a sequence of auto-correlated samples whose marginal distributions have converged -- one hopes -- to the distribution of interest. We refer the interested reader to \citet{meyn1993markov} for a discussion of the ergodicity properties of Markov chain.

Hamiltonian Monte Carlo (HMC) \citep{Duane1987216,betancourt2017conceptual,neal-hmc} is a MCMC sampling procedure that seeks to efficiently explore the typical set of a smooth probability density by integrating Hamilton's equations of motion. A variant of HMC that is the focus of this work in Riemannian manifold Hamiltonian Monte Carlo (RMHMC), which seeks to exploit the local geometry of the posterior distribution in order to align proposals along directions of the posterior that exhibit the greatest local variation. The {\it de facto} standard numerical integrator for RMHMC is the generalized leapfrog method \citep{softabs,rmhmc,cobb2019introducing,pmlr-v139-brofos21a}, which involves implicitly defined steps. These steps are typically resolved to a finite precision via fixed point iteration, thereby necessitating the practitioner to choose a suitable convergence tolerance for their posterior. This begs the questions:
\begin{enumerate}
    \item To what extent is detailed balance violated in practice with variable convergence thresholds?
    \item How much do non-zero convergence thresholds effect the ergodicity of the RMHMC Markov chains?
    \item How does the choice of convergence threshold impact the computational efficiency and time-normalized performance of the sampler?
\end{enumerate}
% {\color{red}This begs the questions: {\bf how much do common integration procedures violate the detailed balance? Does it matter for ergodicity?}[not sure we want it like this, but we want to emphasize these points]}.
The role of these convergence tolerances in the generalized leapfrog method, and its effect on the ergodicity of the Markov chain, does not appear to have been thoroughly evaluated in the literature. Therefore, the purpose of this work is to elucidate these relationships, paying special attention to the manner in which these convergence tolerances impact the detailed balance condition in RMHMC, for which specialized integrators such as the generalized leapfrog is the principle motivation. In our experimental evaluation, we find that there tends to be a convergence threshold after which point no further improvement in the ergodicity of the Markov chain can be obtained, even though detailed balance may be more precisely enforced by decreasing the convergence threshold. Since sample ergodicity, rather than detailed balance, is what matters in Bayesian analysis, computational benefits -- such as faster sampling -- emerge by careful selection of this convergence parameter. We propose one mechanism, based on Ruppert averaging, by which one may choose a convergence tolerance for a given posterior. We also examine the use of Newton's method as an alternative to fixed point iterations with special attention to the reversibility, volume-preservation, and ergodicity of the resulting method.

The outline of the balance of this work is as follows. In \cref{sec:preliminaries} we review the elements of HMC and RMHMC, the derivation of detailed balance, the generalized leapfrog algorithm, and the transition kernels employed in RMHMC. Then, in \cref{sec:related-work}, we review related work on RMHMC, with special attention toward those works that have investigated the computational efficiency of the technique. \Cref{sec:analytical-apparatus} states the research question we investigate -- to understand the importance of these thresholds in the RMHMC algorithm -- and the measures we employ for this purpose. We then turn our attention to experimentation in \cref{sec:experiments}, which consider non-identifiable models, hierarchical Bayesian logistic regression, hierarchical models with all posterior quantities sampled jointly, a hierachical stochastic volatility and log-Gaussian Cox-Poisson model, Bayesian inference in stochastic differential equations, and a multiscale Student-$t$ distribution.

\section{Preliminaries}\label{sec:preliminaries}

In this section we review the foundations of RMHMC. We begin by deducing that reversibility and volume preservation of the numerical integrator are sufficient to establish detailed balance of HMC. We then discuss a particular numerical integrator that is the focus of our study, the generalized leapfrog method; we review how this numerical integrator is reversible and volume preserving and how these properties are violated when solving implicit relations with fixed point iteration to a prescribed tolerance. We then reproduce the transition kernel of the RMHMC algorithm and carefully distinguish between transition kernels computed with variable convergence thresholds.

\subsection{Notation and Background} 

We denote by $\mathrm{Id}_m$ the $m\times m$ identity matrix and by $0_m$ the $m$-dimensional zero vector. Given a collection of numbers $\{a_1,\ldots, a_n\}$, we denote the median of the collection by $\mathrm{median}(\{a_1,\ldots,a_n\})$. Let $\Omega \subseteq \R^m$; a partition of $\Omega$ is a collection of subsets $\set{A_i}_{i=1}^\infty$ with $A_i\subseteq\Omega$ such that $A_i\cap A_j=\emptyset$ when $i\neq j$ and $\cup_{i=1}^\infty A_i = \Omega$. Given a subset $\Omega$ of $\R^m$, we denote by $\mathrm{Vol}(\Omega)$ its $m$-dimensional volume; i.e. $\mathrm{Vol}(\Omega) = \int_\Omega ~\mathrm{d}x$. Denote by $\mathbb{S}^{m-1}$ the $(m-1)$-dimensional sphere embedded in $\R^m$ in the natural way. We denote the $(m-1)$-dimensional surface element of $\mathbb{S}^{m-1}$ by $\mathrm{dVol}_{\mathbb{S}^{m-1}}$. Given a vector $z\in\R^m$ we adopt the notation $\Vert z\Vert_2 = \sqrt{\sum_{i=1}^m z^2}$ and $\Vert z\Vert_{\infty} = \max_{i=1,\ldots, m} \abs{z_i}$. Let $\Phi:\R^m\to\R^m$ be a smooth map; the divergence of $\Phi$ at $z\in\R^m$ is $\mathrm{div}(\Phi)(z) = \sum_{i=1}^m \frac{\partial}{\partial z_i} \Phi(z)$.

Let $\mathcal{X}$ be a set and let $\mathfrak{B}(\mathcal{X})$ be the Borel $\sigma$-algebra on $\mathcal{X}$. A probability measure is a map $\Pi : \mathfrak{B}(\mathcal{X}) \to [0, 1]$ with the properties that $\Pi(\emptyset)=0$ and $\Pi(\mathcal{X}) = 1$. Moreover, $\Pi$ satisfies the property of countable additivity: for any countable collection $\set{A_i}$ where $A_i\in \mathfrak{B}(\mathcal{X})$ and $A_i\cap A_j = \emptyset$ for $i\neq j$, we have $\Pi(\cup_{i} A_i) = \sum_{i=1} \Pi(A_i)$. The Dirac measure at $x\in\mathcal{X}$ is a probability measure $\delta_x$ with the property that $\delta_x(A) = \mathbf{1}\set{x\in A}$ where $A\in\mathfrak{B}(\mathcal{X})$. Given two probability measures $\Pi$ and $\Pi'$, the total variation distance between them is defined by $\Vert \Pi - \Pi'\Vert_{\mathrm{TV}} = \frac{1}{2} \sup_{A\in\mathfrak{B}(\mathcal{X})} \abs{\Pi(A) - \Pi'(A)}$.

A Markov chain is a sequence of random variables $(X_n)_{n=1}^\infty$ which are determined by the Markov transition kernel. Suppose $X_i=x_i$ for $i=1,\ldots, n$; then $X_{n+1} \vert X_{1}=x_1, \ldots, X_{n}=x_n$ has the same distribution as $X_{n+1}\vert X_n=x_n$ and in particular $X_{n+1}\vert X_n=x_n \sim K(x_n, \cdot)$ where $K : \mathcal{X}\times \mathfrak{B}(\mathcal{X}) \to [0, 1]$ is a mapping with the properties \citep{10.5555/1051451},
\begin{enumerate}
    \item For any $A\in\mathfrak{B}(\mathcal{X})$, $K(\cdot, A)$ is measurable.
    \item For any $x\in\mathcal{X}$, $K(x,\cdot)$ is a probability measure.
\end{enumerate}
A Markov chain is said to be stationary for the distribution $\Pi$ if $\underset{x\sim \Pi}{\mathbb{E}} K(x, A) = \Pi(A)$ for any $A\in\mathfrak{B}(\mathcal{X})$. A condition stronger than stationarity but which is often employed in the practical development of Markov chains for sampling is the detailed balance condition, which says that if $X_n\sim \Pi$ then $\mathrm{Pr}(X_n\in A ~\mathrm{and}~ X_{n+1}\in B) = \mathrm{Pr}(X_n\in B ~\mathrm{and}~ X_{n+1}\in A)$. Given $X_1=x_1$, we write $X_{n}\vert X_n=x_n \sim K^n(x_1, \cdot)$ so that $K^n$ denotes the $n$-step transition probability measure. A Markov chain is said to be ergodic for the distribution $\Pi$ if $\lim_{n\to\infty} \Vert K^n(x, \cdot) - \Pi\Vert_{\mathrm{TV}}$ for every $x\in\mathcal{X}$.

\subsection{Hamiltonian Mechanics}

Formally, given a smooth function $H : \R^m\times\R^m\to\R$, Hamilton's equations of motion are described by the initial value problem,
\begin{align}
  \label{eq:hamiltonian-position-evolution} \dot{q}_t &= \nabla_p H(q_t, p_t) \\
  \label{eq:hamiltonian-momentum-evolution} \dot{p}_t &= -\nabla_q H(q_t, p_t),
\end{align}
with initial condition $(q_0, p_0)\in \R^m\times\R^m$. Physically, the solutions $q_t$ and $p_t$ are referred to as the {\it position} and {\it momentum} variables, respectively, and together $(q_t, p_t)$ represents a point in {\it phase space}. Hamilton's equations of motion possess several special properties, which we now recall; see \citet{mechanics-and-symmetry} for a thorough treatment. First, energy is conserved over the course of the trajectory so that $\frac{\mathrm{d}}{\mathrm{d}t} H(q_t, p_t) = 0$. Second, volume in $\R^m\times\R^m$ is conserved under the evolution prescribed by Hamilton's equations of motion; i.e. $\mathrm{div}((\dot{q}_t, \dot{p}_t)) = 0$. Third, under the dual conditions that $-\nabla_p H(q, -p) = \nabla_p H(q, p)$ and $\nabla_q H(q, -p) = \nabla_q H(q, p)$, which are satisfied by Hamiltonians with quadratic potential energies, the solutions to Hamilton's equations of motion are symmetric under negation of the momentum $\mathbf{F} : p\mapsto -p$. In (Euclidean) HMC, one defines a Hamiltonian of the form,
\begin{align}
  \label{eq:euclidean-hamiltonian} H(q, p) = -\mathcal{L}(q) + \frac{1}{2} p^\top p.
\end{align}
A Hamiltonian of this form is called ``separable'' because it is the sum of two functions, each depending only on position or only on momentum. However, excepting the most elementary forms of $\mathcal{L}$, the equations of motion corresponding to this Hamiltonian are not available in closed-form. Instead, one resorts to numerical approximations of these equations of motion. The most popular numerical integrator for approximating solutions to Hamilton's equations of motion for a separable Hamiltonian is called the leapfrog integrator, pseudo-code for which is given is \cref{alg:leapfrog}. Unlike the analytical equations of motion, however, the leapfrog integrator does not conserve the Hamiltonian energy, necessitating a Metropolis-Hastings accept reject criterion in order to maintain detailed balance. Like the equations of motion that it seeks to approximate, however, the leapfrog integrator is volume preserving and reversible under negation of the momentum; these two properties are critical to the proof that the Metropolis-Hastings criterion used in HMC suffices to obtain detailed balance with respect to the target distribution. Pseudo-code implementing HMC is provided in \cref{alg:hamiltonian-monte-carlo}. In practice the momentum variable $p_n$ is randomized in order to facilitate moving between energy level sets of the distribution. For Hamiltonians of the form \cref{eq:euclidean-hamiltonian}, we sample $p_n\sim\mathrm{Normal}(0_m, \mathrm{Id}_m)$, which can be viewed as a Gibbs sampling step in between Metropolis-Hastings updates of $q_n$.

\begin{algorithm}[t!]
  \caption{The leapfrog integrator, which is a volume-preserving, reversible,
    second-order accurate numerical integrator for separable Hamiltonians. The
    leapfrog integrator is fully explicit, involving no fixed point iterations,
    giving it an important computational advantage over Riemannian methods.}
  \label{alg:leapfrog}
  \begin{algorithmic}
    \STATE {\bf Input:} An initial state $(q_n, p_n)\in \R^m\times\R^m$, a smooth, separable Hamiltonian $H:\R^m\times\R^m\to\R$, a step-size $\epsilon\in\R$.
    \STATE Compute an explicit half-step in the momentum variable,
    \begin{align}
      p_{n+1/2} = p_n - \frac{\epsilon}{2} \nabla_q H(q_n, p_{n}).
    \end{align}
    \STATE Compute an explicit full step in the position variable,
    \begin{align}
      q_{n+1} = q_n + \epsilon\nabla_p H(q_n, p_{n+1/2}).
    \end{align}
    \STATE Apply an explicit half-step in the momentum variable,
    \begin{align}
      p_{n+1} = p_{n+1/2} - \frac{\epsilon}{2} \nabla_q H(q_{n+1}, p_{n+1/2}).
    \end{align}
    \STATE {\bf Return:} $(q_{n+1}, p_{n+1})$, the next position in phase-space along the integrated trajectory.
  \end{algorithmic}
\end{algorithm}

\begin{algorithm}[t!]
  \caption{Pseudo-code implementing the Hamiltonian Monte Carlo (HMC) algorithm. Given an initial position, a random momentum is generated and a reversible, volume preserving integrator is applied to this location in phase space. A Metropolis-Hastings correction is then applied to the output of the integrator in order to ensure detailed balance. When the metric $\mathbf{G}$ is constant with respect to $q$, the resulting Hamiltonian is separable and the leapfrog integrator may be used; on the other hand, if $\mathbf{G}$ depends on position then the generalized leapfrog integrator should be used to ensure reversibility and volume preservation.}
  \label{alg:hamiltonian-monte-carlo}
  \begin{algorithmic}
    \STATE {\bf Input:} An initial state $q_n\in \R^m$ and momentum $p_n \in\R^m$, a smooth log-density
    $\mathcal{L}:\R^m\to\R$, a Riemannian metric $\mathbf{G}:\R^m\to\R^m\times\R^m$, a reversible and volume-preserving numerical integrator $\Phi_\epsilon^k : \R^m\times\R^m\to\R^m\times\R^m$ consisting of $k\in\mathbb{N}$ steps with step-size $\epsilon\in \R$.
    \STATE Define the function
    \begin{align}
      H(q, p) = -\mathcal{L}(q) + \frac{1}{2} \log \mathrm{det}(\mathbf{G}(q)) + \frac{1}{2} p^\top\mathbf{G}(q)^{-1}p.
    \end{align}
    \STATE Apply the numerical integrator to obtain the proposal state for the Markov chain $(\tilde{q}_{n+1}, \hat{p}_{n+1}) = \Phi_\epsilon^k(q_n, p_n)$.
    \STATE Apply the momentum flip operator to produce a self-inverse proposal $\tilde{p}_{n+1} = -\hat{p}_{n+1}$.
    \STATE Apply the Metropolis-Hastings accept-reject decision.
    \STATE Set $M\gets H(q_n, p_n) - H(\tilde{q}_{n+1}, \tilde{p}_{n+1})$ and sample $u\sim\mathrm{Uniform}(0, 1)$.
    \IF{$u < M$}
    \STATE Accept the proposal and set $(q_{n+1}, p_{n+1}) \gets (\tilde{q}_{n+1}, \tilde{p}_{n+1})$.
    \ELSE
    \STATE Reject the proposal and set $(q_{n+1}, p_{n+1}) \gets (q_n, p_n)$.
    \ENDIF
    \STATE {\bf Return:} The next state of the Markov chain $(q_{n+1}, p_{n+1})$.
  \end{algorithmic}
\end{algorithm}

One difficulty that arises in HMC is that if the density has multiple spatial scales, employing the leapfrog integrator with a Hamiltonian in the form of \cref{eq:euclidean-hamiltonian} will produce severe oscillations in the directions of smallest variation; see \cref{subfig:banana-trajectory-threshold-precondition} for an illustration. This can be alleviated by employing a preconditioner that continuously adapts the momentum to the local variation of the posterior \citep{rmhmc}. Riemannian manifold Hamiltonian Monte Carlo (RMHMC) is a MCMC sampling procedure that seeks to adapt proposals to the directions of greatest variation in the posterior locally. This is accomplished by introducing second-order geometric information into the Hamiltonian Monte Carlo transition operator; second-order information is typically represented by the sum of the Fisher information matrix of the log-likelihood and the negative Hessian of the log-prior, or by the Hessian of the log-density $\mathcal{L}(\theta)$. As in HMC, the proposal generated in RMHMC is obtained by numerically integrating Hamilton's equations of motion corresponding to the non-separable Hamiltonian,
\begin{align}
  \label{eq:riemannian-hamiltonian} H(q, p) = -\mathcal{L}(q) + \frac{1}{2}\log\mathrm{det}(\mathbf{G}(q)) + \frac{1}{2}p^\top \mathbf{G}(q)^{-1}p,
\end{align}
where $\mathcal{L} : \R^m\to\R$ is the posterior log-density. This form of Hamiltonian is motivated by geometric principles; the Hamiltonian $H(q, p) = \frac{1}{2} p^\top \mathbf{G}(q)^{-1}p$ corresponds to co-geodesic motion on $(\R^m, \mathbf{G})$ \citep{Calin2004GeometricMO}. Thus, physically, $\mathcal{L}(q) + \frac{1}{2}\log\mathrm{det}(\mathbf{G}(q))$ represents a potential energy function that causes motion to deviate from the co-geodesics of the manifold. As conceived by \citet{rmhmc}, when the density $(q, p)$ is proportional to $\exp(-H(q, p))$, the term $\frac{1}{2}\log\mathrm{det}(\mathbf{G}(q))$ is included in the Hamiltonian so that we obtain the conditional distribution $p\vert q\sim \mathrm{Normal}(0_m, \mathbf{G}(q))$. The marginal distribution of $q$ has density $\pi(q) \propto \exp(\mathcal{L}(q))$. The quadratic form in $p$ causes the form of the Hamiltonian to be ``non-separable;'' this means that the Hamiltonian is not expressible as the sum of two functions, one a function of the position alone and the other a function of the momentum alone.  As a result, this Hamiltonian cannot be integrated by the (non-generalized) leapfrog method in a way that preserves reversibility and volume preservation. The form of \cref{eq:riemannian-hamiltonian} produces the following equations of motion:
\begin{align}
  \dot{q}_t &= \mathbf{G}(q_t)^{-1} p_t \\
  \dot{p}_t &= \nabla_q \mathcal{L}(q_t) - \frac{1}{2} \mathrm{trace}\paren{\mathbf{G}(q_t)^{-1}\nabla \mathbf{G}(q_t)} + \frac{1}{2} p_t^\top \mathbf{G}(q_t)^{-1} \nabla \mathbf{G}(q_t) \mathbf{G}(q_t)^{-1} p_t,
\end{align}
where $\mathrm{trace}(\mathbf{G}(q)\nabla \mathbf{G}(q))$ is an $m$-dimensional vector whose $k$-th element is,
\begin{align}
    \mathrm{trace}(\mathbf{G}(q)^{-1}\nabla \mathbf{G}(q))_{k} = \mathrm{trace}(\mathbf{G}(q)^{-1} \frac{\partial}{\partial q_k} \mathbf{G}(q)).
\end{align}
Similarly, the $k$-th element of $p^\top \mathbf{G}(q)^{-1} \nabla \mathbf{G}(q) \mathbf{G}(q)^{-1} p$ is,
\begin{align}
    (p^\top \mathbf{G}(q)^{-1} \nabla \mathbf{G}(q) \mathbf{G}(q)^{-1} p)_k = p^\top \mathbf{G}(q)^{-1} \frac{\partial}{\partial q_k} \mathbf{G}(q) \mathbf{G}(q)^{-1} p.
\end{align}

\subsection{Establishing Detailed Balance in Hamiltonian Monte Carlo}

\begin{definition}
  A map $\Phi:\R^{2m}\to\R^{2m}$ is called self-inverse if $\Phi\circ\Phi = \mathrm{Id}$.
\end{definition}
Self-inverse maps are also called involutions.
\begin{definition}
  A smooth map $\Phi:\R^{2m}\to\R^{2m}$ is called volume preserving if $\mathrm{div}(\nabla \Phi)(z) = 0$ for all $z\in\R^{2m}$.
\end{definition}

\begin{theorem}
  Let $\Phi$ be a self-inverse and volume-preserving map from $\R^{2m}$ to itself. Consider a probability distribution whose density is proportional to $\exp(-H(q, p))$ where $H:\R^m\times\R^m\to\R$ is a smooth function. Consider the Markov chain transition kernel defined as,
  \begin{align}
    \begin{split}
      K((q, p), B) &= \min\set{1, \exp(H(q, p) - H(\Phi(q, p)))} \cdot\mathbf{1}\set{\Phi(q, p)\in B} \\
      &\qquad + (1 - \min\set{1, \exp(H(q, p) - H(\Phi(q, p)))})\cdot \mathbf{1}\set{(q, p)\in B},
    \end{split}
  \end{align}
  where $B$ is a Borel subset of $\R^{2m}$. Then the stationary distribution of $K$ is the distribution with density $\pi(q, p) \propto \exp(-H(q, p))$.
\end{theorem}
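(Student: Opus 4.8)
The plan is to verify the detailed balance condition directly, since detailed balance implies stationarity (as noted in the preliminaries). Concretely, detailed balance for a kernel $K$ with respect to $\pi$ requires that $\pi(\mathrm{d}z) K(z, \mathrm{d}z') = \pi(\mathrm{d}z') K(z', \mathrm{d}z)$ as measures on $\R^{2m}\times\R^{2m}$; it suffices to check this on product sets $A\times B$, i.e. to show
\begin{align}
  \int_A \pi(z) K(z, B)\,\mathrm{d}z = \int_B \pi(z') K(z', A)\,\mathrm{d}z'.
\end{align}
First I would split $K(z, B)$ into its two pieces: the \emph{acceptance} part, carrying mass from $z$ to the image point $\Phi(z)$, and the \emph{rejection} part, which keeps mass at $z$. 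The rejection part is trivially symmetric because it is supported on the diagonal $z = z'$, so the contribution of the rejection terms to both sides of the identity agree automatically; the real content lies in the acceptance part.

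For the acceptance part I would write, using the notation $a(z) = \min\{1, \exp(H(z) - H(\Phi(z)))\}$,
\begin{align}
  \int_A \pi(z)\, a(z)\, \mathbf{1}\{\Phi(z)\in B\}\,\mathrm{d}z,
\end{align}
and perform the change of variables $w = \Phi(z)$. Here the two hypotheses enter. Volume preservation guarantees that the Jacobian determinant of $\Phi$ has absolute value $1$, so $\mathrm{d}z = \mathrm{d}w$ and no Jacobian factor appears. The self-inverse property $\Phi\circ\Phi = \mathrm{Id}$ means that $z = \Phi(w)$, that the constraint $z\in A$ becomes $\Phi(w)\in A$, and that the constraint $\Phi(z)\in B$ becomes $w\in B$; thus the domain of integration transforms into exactly the set appearing on the right-hand side. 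After the substitution, the integrand becomes $\pi(\Phi(w))\,a(\Phi(w))\,\mathbf{1}\{\Phi(w)\in A\}$ integrated over $w\in B$.

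It remains to check the pointwise identity $\pi(z)\,a(z) = \pi(\Phi(z))\,a(\Phi(z))$ once the variables are matched up, which is the standard Metropolis symmetry of the acceptance ratio. Writing $\pi(z) = \exp(-H(z))/Z$ and using that $a(z)/a(\Phi(z))$ equals $\exp(H(\Phi(z)) - H(z)) = \pi(\Phi(z))/\pi(z)$ whenever $a(z) < 1$ (and a symmetric statement when $a(\Phi(z)) < 1$), one obtains $\exp(-H(z))\,a(z) = \exp(-H(\Phi(z)))\,a(\Phi(z))$; the self-inverse property is what makes $H(\Phi(\Phi(z))) = H(z)$ so that the ratios close up consistently. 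This shows the transformed acceptance integral equals $\int_B \pi(w)\,a(w)\,\mathbf{1}\{\Phi(w)\in A\}\,\mathrm{d}w$, matching the acceptance part of the right-hand side. I expect the main obstacle to be bookkeeping rather than depth: one must handle the $\min$ in the acceptance probability carefully (the three cases $a(z) = 1$, $a(\Phi(z)) = 1$, or both less than one) and be scrupulous that the self-inverse property is invoked to turn $\Phi(z)\in B$ into $w\in B$ and to recover $z$ from $w$. Once detailed balance is established, stationarity follows immediately by integrating the balance identity over $A = \R^{2m}$.
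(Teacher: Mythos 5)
Your proposal is correct and takes essentially the same route as the paper's proof: detailed balance is verified by splitting the kernel into acceptance and rejection parts, applying the change of variables $w=\Phi(z)$ (using volume preservation for the unit Jacobian and self-inverseness to transform the constraints and recover $H(\Phi(\Phi(z)))=H(z)$), and invoking the Metropolis symmetry $\pi(z)a(z)=\pi(\Phi(z))a(\Phi(z))$, with stationarity following from the choice $A=\R^{2m}$. The only cosmetic difference is that you carry the indicator functions and the case analysis on the $\min$ explicitly, where the paper absorbs them into the domains of integration and uses the symmetry of $\min\set{\pi(z),\pi(\Phi(z))}$ in one step.
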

\begin{proof}
  The following proof is similar to that in \citet{pmlr-v139-brofos21a}. It suffices to show that the transition kernel satisfies detailed balance with respect to the density $\pi$. Let $A$ and $B$ be Borel subsets of $\R^{2m}$. Let $z$ be a random variable drawn from the distribution with density $\pi(z)$ and suppose that $z'\vert z \sim K(z, \cdot)$. Then,
  \begin{align}
  \begin{split}
      \mathrm{Pr}\left[z\in A~\mathrm{and}~ z'\in B\right] &= \int_A \pi(z) \min\set{1, \frac{\pi(\Phi(z))}{\pi(z)}} \mathbf{1}\set{\Phi(z)\in B} ~\mathrm{d}z \\&\qquad + ~\int_A \pi(z) \paren{1 - \min\set{1, \frac{\pi(\Phi(z))}{\pi(z)}}}\mathrm{1}\set{z\in B}~\mathrm{d}z
  \end{split} \\
  \begin{split}
      &= \int_{A\cap \Phi(B)} \pi(z) \min\set{1, \frac{\pi(\Phi(z))}{\pi(z)}} ~\mathrm{d}z \\&\qquad + ~\int_B \pi(z) \paren{1 - \min\set{1, \frac{\pi(\Phi(z))}{\pi(z)}}}\mathrm{1}\set{z\in A}~\mathrm{d}z
  \end{split} \\
  \begin{split}
      \label{eq:detailed-balance-change-of-variables} &= \int_{\Phi(A)\cap B} \pi(\Phi(z')) \min\set{1, \frac{\pi(z')}{\pi(\Phi(z'))}} ~\mathrm{d}z' \\&\qquad + ~\int_B \pi(z) \paren{1 - \min\set{1, \frac{\pi(\Phi(z))}{\pi(z)}}}\mathrm{1}\set{z\in A}~\mathrm{d}z
  \end{split} \\
  \begin{split}
      &= \int_{\Phi(A)\cap B} \pi(z') \min\set{1, \frac{\pi(\Phi(z'))}{\pi(z')}} ~\mathrm{d}z' \\&\qquad + ~\int_B \pi(z) \paren{1 - \min\set{1, \frac{\pi(\Phi(z))}{\pi(z)}}}\mathrm{1}\set{z\in A}~\mathrm{d}z
  \end{split} \\
  \begin{split}
      &= \int_{B} \mathbf{1}\set{z'\in \Phi(A)} \pi(z') \min\set{1, \frac{\pi(\Phi(z'))}{\pi(z')}} ~\mathrm{d}z' \\&\qquad + ~\int_B \pi(z) \paren{1 - \min\set{1, \frac{\pi(\Phi(z))}{\pi(z)}}}\mathrm{1}\set{z\in A}~\mathrm{d}z
  \end{split} \\
  &= \mathrm{Pr}\left[z\in B~\mathrm{and}~z'\in A\right]
  \end{align}
  The fact that distribution whose density is $\pi$ is the stationary distribution of such a Markov chain then follows immediately from the choice $A = \R^{2m}$; mathematically,
  \begin{align}
      \mathrm{Pr}\left[z\in\R^{2m}~\mathrm{and}~z'\in B\right] = \mathrm{Pr}\left[z'\in B\right] = \mathrm{Pr}\left[z\in B\right].
  \end{align}
  The fact that $\Phi$ is a self-inverse function with unit Jacobian determinant is used in the change-of-variables in \cref{eq:detailed-balance-change-of-variables}. Thus, we see how these two properties play a critical role in establishing detailed balance of the HMC algorithm.
\end{proof}

\subsection{The Generalized Leapfrog Integrator}\label{subsec:generalized-leapfrog-integrator}

\begin{algorithm}[t!]
  \caption{The generalized leapfrog integrator for producing a symmetric,
    volume-preserving, second-order accurate to Hamilton's equations of motion
    for an arbitrary smooth Hamiltonian. This procedure implements a single step
    of the generalized leapfrog integrator, which may then be repeatedly
    composed in order to integrate over longer trajectories.}
  \label{alg:moral-leapfrog}
  \begin{algorithmic}
    \STATE {\bf Input:} An initial state $(q_n, p_n)\in \R^m\times\R^m$, a smooth Hamiltonian $H:\R^m\times\R^m\to\R$, a step-size $\epsilon\in\R$.
    \STATE Compute a half-step in the momentum variable,
    \begin{align}
      \label{eq:moral-leapfrog-momentum-implicit} p_{n+1/2} = p_n - \frac{\epsilon}{2} \nabla_q H(q_n, p_{n+1/2}).
    \end{align}
    \STATE Compute a full step in the position variable,
    \begin{align}
      \label{eq:moral-leapfrog-position} q_{n+1} = q_n + \frac{\epsilon}{2} \paren{\nabla_p H(q_n, p_{n+1/2}) + \nabla_p H(q_{n+1}, p_{n+1/2})}.
    \end{align}
    \STATE Apply an explicit half-step update to the momentum variable,
    \begin{align}
      \label{eq:moral-leapfrog-momentum-explicit} p_{n+1} = p_{n+1/2} - \frac{\epsilon}{2} \nabla_q H(q_{n+1}, p_{n+1/2}).
    \end{align}
    \STATE {\bf Return:} $(q_{n+1}, p_{n+1})$, the next position in phase-space along the integrated trajectory.
  \end{algorithmic}
\end{algorithm}

\begin{algorithm}[t!]
  \caption{In practice the generalized leapfrog integrator is implemented with
    fixed point iterations, as advocated by \citet{}, to resolve the
    implicitly-defined relations. As a result, the integrator violates
    reversibility and volume preservation above the level of numerical
    precision.}
  \label{alg:threshold-leapfrog}
  \begin{algorithmic}
    \STATE {\bf Input:} An initial state $(q_n, p_n)\in \R^m\times\R^m$, a
    smooth Hamiltonian $H:\R^m\times\R^m\to\R$, a step-size $\epsilon\in\R$, a
    convergence threshold $\delta > 0$, a maximal number of allowable fixed
    point iterations $M\in \mathbb{N}$.
    \STATE Compute a half-step in the momentum variable:
    \STATE Set $p_{n+1/2} \gets p_n$, $l_p \gets 0$, $\Delta_p \gets \infty$.
    \WHILE{$\Delta_p > \delta$ and $l_p < M$}
    \STATE
    \begin{align}
      \label{eq:generalized-leapfrog-momentum-fixed-point-i} p_{n+1/2}' &\gets p_n - \frac{\epsilon}{2} \nabla_q H(q_n, p_{n+1/2}) \\
      \label{eq:generalized-leapfrog-momentum-fixed-point-ii} \Delta_p &\gets \Vert p_{n+1/2}' - p_{n+1/2}\Vert_\infty
    \end{align}
    \STATE Assign $p_{n+1/2} \gets p_{n+1/2}'$ and set $l_p \gets l_p + 1$.
    \ENDWHILE
    \STATE Compute a full step in the position variable:
    \STATE Set $q_{n+1} \gets q_n$, $l_q\gets 0$, $\Delta_q \gets \infty$.
    \WHILE{$\Delta_q > \delta$ and $l_q < M$}
    \STATE
    \begin{align}
      \label{eq:generalized-leapfrog-position-fixed-point-i} q_{n+1}' &\gets q_n + \frac{\epsilon}{2} \paren{\nabla_p H(q_n, p_{n+1/2}) + \nabla_p H(q_{n+1}, p_{n+1/2})} \\
      \label{eq:generalized-leapfrog-position-fixed-point-ii} \Delta_q &\gets \Vert q_{n+1}' - q_{n+1}\Vert_\infty
    \end{align}
    \STATE Assign $q_{n+1} \gets q_{n+1}'$ and set $l_q \gets l_q + 1$.
    \ENDWHILE
    \STATE Compute an explicit half-step update to the momentum variable:
    \begin{align}
      p_{n+1} \gets p_{n+1/2} - \frac{\epsilon}{2} \nabla_q H(q_{n+1}, p_{n + 1/2}).
    \end{align}
    \STATE {\bf Return:} $(q_{n+1}, p_{n+1})$, the next position in phase-space along the integrated trajectory.
  \end{algorithmic}
\end{algorithm}

The generalized leapfrog integrator \citep{leimkuhler_reich_2005,10.2307/43692571,PhysRev.159.98} is a numerical method that provides a reversible, volume-preserving, and second-order accurate approximate solution to Hamilton's equations of motion \cref{eq:hamiltonian-position-evolution,eq:hamiltonian-momentum-evolution} for an arbitrary smooth Hamiltonian $H$. By combining the generalized leapfrog integrator with the negation of the momentum variable, one obtains a self-inverse and volume-preserving function, as described in \cref{app:reversibility-volume-preservation-generalized-leapfrog}.

The form of the Hamiltonian in \cref{eq:riemannian-hamiltonian} introduces computational complexities into the need for reversible and volume-preserving numerical integration. The most common numerical integrator for non-separable Hamiltonians is the generalized leapfrog method, which maintains reversibility and volume-preservation, but at the cost of introducing implicitly-defined equations for the updates to the position and momentum variables. In practice, these implicit updates are solved to a prescribed convergence tolerance via fixed point iteration, as advocated by \citet{hairer-geometric}. However, in \cref{subsec:alternatives-newtons-method} we discuss using Newton's method to resolve the implicit update to the momentum variable in the generalized leapfrog integrator. Over the years, some guidance has been provided as to the implementation of the fixed-point solution; in \citet{rmhmc}, the authors suggest five or six fixed point iterations whereas \citet{softabs} pursued fixed-point solutions to within a given convergence tolerance. We will use the second of these alternatives since it clearly defines the accuracy of the solution that we seek to obtain. Moreover, unlike a prescribed number of iterations, a convergence tolerance can avoid unnecessary additional iterations once the threshold has been achieved.

A crucial distinction in our analysis lies in the difference between the exact generalized leapfrog integrator (\cref{alg:moral-leapfrog}), which is reversible and volume preserving, and the practical implementation of the generalized leapfrog integrator using fixed point iterations to resolve the implicit updates (\cref{alg:threshold-leapfrog}), which is neither. (We note that the use of the term ``exact'' to describe \cref{alg:moral-leapfrog} does not mean that the output of the integrator are exact solutions to Hamilton's equations of motion, but merely that the steps of the integrator are defined exactly by solutions to implicit equations.) The degree to which an implementation of the generalized leapfrog algorithm exhibits reversibility and volume preservation is controlled by the level of precision one demands from the solution to the fixed point equation: the greater the precision exacted by the threshold, the better the fidelity of the method to true reversibility and volume preservation. On the other hand, smaller thresholds require more fixed point iterations, thereby reducing the computational expediency of the method. The proof that the exact implementation of the generalized leapfrog method preserves volume invokes the symmetry of partial derivatives; namely, that $\nabla_q \nabla_p H(q, p) = \nabla_p\nabla_q H(q, p)$. In most cases, this condition will hold; however, we mention it now in preparation for \cref{subsec:experiment-fitzhugh-nagumo} wherein we will examine two circumstance -- one benign and the other more sinister -- where this assumption is violated.

We illustrate the difference between the exact implementation of the generalized leapfrog integrator and a practical implementation by giving pseudo-code for each procedure in \cref{alg:moral-leapfrog} and \cref{alg:threshold-leapfrog}, respectively. We emphasize this distinction by writing $\Phi_\epsilon^k :\R^m\times\R^m\to\R^m\times\R^m$ as the expression for $k$-steps of the exact generalized leapfrog integrator with step-size $\epsilon$ and $\Phi_\epsilon^k((\cdot, \cdot); \delta)$ for an implementation of the generalized leapfrog method depending on the convergence threshold $\delta > 0$. For Hamiltonians of the form of \cref{eq:riemannian-hamiltonian}, the composition $\mathbf{F}\circ \Phi_\epsilon^k$ is a self-inverse, volume-preserving map; the map $\mathbf{F}\circ \Phi_\epsilon^k((\cdot,\cdot), \delta)$ is neither self-inverse nor volume-preserving, but the error in these quantities is controllable by the threshold $\delta$. As a further implementation detail, we note that fixed point iterations may diverge; in this case, in order to avert a scenario wherein the fixed point iterations continue indefinitely, it is common to impose an upper bound on the allowable number of fixed point iterations.

We briefly remark on the special case when $\frac{\partial}{\partial q_i} \mathbf{G}(q) = \mathbf{0}$ for $i=1,\ldots, m$ in Hamiltonians of the form in \cref{eq:riemannian-hamiltonian}. In this case, the implicit update to the momentum becomes,
\begin{align}
    p_{n+1/2} &= p_n - \frac{\epsilon}{2} \left[-\nabla \mathcal{L}(q_n) + \frac{1}{2} \mathrm{trace}(\mathbf{G}^{-1}(q_n) \nabla \mathbf{G}(q_n) + \frac{1}{2} p_{n+1/2}^\top \mathbf{G}^{-1}(q_n) \nabla \mathbf{G}(q_n) \mathbf{G}^{-1}(q_n) p_{n+1/2} \right] \\
    &= p_n + \frac{\epsilon}{2} \nabla \mathcal{L}(q_n)
\end{align}
so that the first update to momentum becomes explicit. Similarly, the implicit update to position is,
\begin{align}
    q_{n+1} &= q_n + \frac{\epsilon}{2}\left[ \mathbf{G}^{-1}(q_n)p_{n+1/2} + \mathbf{G}^{-1}(q_{n+1}) p_{n+1/2}\right] \\
    &= q_n + \frac{\epsilon}{2} \left[ \mathbf{G}^{-1}p_{n+1/2} + \mathbf{G}^{-1} p_{n+1/2} \right] \\
    &= q_n + \epsilon \mathbf{G}^{-1}p_{n+1/2},
\end{align}
so that the update to position also becomes explicit in this case. In fact, these simplifications show that the generalized leapfrog method reduces to the leapfrog method in \cref{alg:leapfrog} when the metric is constant. This leads us to conclude that when the metric is constant, the convergence tolerance becomes irrelevant. Intuitively, in cases wherein the metric is ``close to constant'' the convergence tolerance will not have a large effect on the ergodicity of the algorithm.

\subsubsection{Computational Complexity of Implicit Updates}

The generalized leapfrog integrator involves two fixed point iterations. The first of these, given in \cref{eq:generalized-leapfrog-momentum-fixed-point-i,eq:generalized-leapfrog-momentum-fixed-point-ii}, provides a half-step update to the momentum variable. A principle advantage of the generalized leapfrog method is that it facilitates caching of reusable quantities from iteration to iteration. In the context of the first momentum update, the computation of the gradient of the log-posterior, the Riemannian metric and its inverse, and the Jacobian of the Riemannian metric can be cached from the explicit update to the momentum in \cref{eq:moral-leapfrog-momentum-explicit}; this averts recomputing these quantities within each fixed-point iteration. However, at iteration $n$, if we define $v_{n+1/2} = \mathbf{G}(q_n)^{-1}p_{n+1/2}$ then the quantity $v_{n+1/2}^\top \frac{\partial \mathbf{G}(q_n)}{\partial q_i} v_{n+1/2}$ incurs a computational cost $\mathcal{O}(m^2)$. Since this computation is replicated for $i=1,\ldots,m$, the total computational complexity to update the momentum is $\mathcal{O}(m^3)$. 

The second fixed-point iteration provides, given in \cref{eq:generalized-leapfrog-position-fixed-point-i,eq:generalized-leapfrog-position-fixed-point-ii} a full-step update to the position variable. At iteration $n$, $v_{n+1/2} = \mathbf{G}(q_n)^{-1} p_{n+1/2}$ may be cached but $\mathbf{G}(q_{n+1})^{-1} p_{n+1/2}$ requires the the Riemannian metric, and its inverse, be recomputed at each iteration. Because matrix inversion scales as $\mathcal{O}(m^3)$, this position update shares the same overall computational complexity with the fixed point update to the momentum. Although by this analysis both implicitly defined updates scale as $\mathcal{O}(m^3)$, the update to position differs fundamentally in the sense that it must compute new quantities from the posterior at each fixed point iteration; namely, the Riemannian metric must be recomputed for every new candidate solution to the fixed point equation to update the position variable. As a practical matter, the computational effort required to compute the Riemannian metric may be more substantial than that required to invert it, such as in the case of \cref{subsec:experiment-fitzhugh-nagumo} wherein the Riemannian metric takes values in $\R^{3\times 3}$ but whose computation requires the solution to an eight-dimensional initial value problem at two-hundred predetermined locations in time. Therefore, in these cases we expect the implicit update to the position to be the more costly of the two.

The final step of the generalized leapfrog integrator requires us to compute the gradient of the log-posterior, the Riemannian metric and its inverse, and the Jacobian of the Riemannian metric in order to give an explicit half-step update to the momentum variable. We measure the total computational complexity of the generalized leapfrog integrator by calculating the number of fixed point iterations used in updating the momentum and position variables. The number of fixed point iterations depends on the convergence threshold, thereby revealing that the computational efficiency of the RMHMC algorithm will depend on the threshold. Concretely, the quantities $l_q$ and $l_p$ defined in \cref{alg:threshold-leapfrog} represent the number of fixed point iterations required by the integrator to solve the implicit relations to within a prescribed tolerance.

\subsection{The Implicit Midpoint Integrator}

\begin{algorithm}[t!]
  \caption{The implicit midpoint integrator for producing a symmetric,
    volume-preserving, second-order accurate to Hamilton's equations of motion
    for an arbitrary smooth Hamiltonian. This procedure implements a single step
    of the implicit midpoint integrator, which may then be repeatedly
    composed in order to integrate over longer trajectories.}
  \label{alg:moral-implicit-midpoint}
  \begin{algorithmic}
    \STATE {\bf Input:} An initial state $(q_n, p_n)\in \R^m\times\R^m$, a smooth Hamiltonian $H:\R^m\times\R^m\to\R$, a step-size $\epsilon\in\R$.
    \STATE Solve the implicit equation,
    \begin{align}
      \label{eq:moral-implicit-midpoint} \begin{pmatrix} q_{n+1} \\ p_{n+1} \end{pmatrix} &= \begin{pmatrix} q_n \\ p_n \end{pmatrix} + \epsilon \begin{pmatrix} \nabla_p H(\bar{q}_n, \bar{p}_n) \\ -\nabla_q H(\bar{q}_n, \bar{p}_n) \end{pmatrix} \\
      \bar{q}_n &= \frac{q_{n+1} + q_n}{2} \\
      \bar{p}_n &= \frac{p_{n+1} + p_n}{2}
    \end{align}
    \STATE {\bf Return:} $(q_{n+1}, p_{n+1})$, the next position in phase-space along the integrated trajectory.
  \end{algorithmic}
\end{algorithm}

The focus of our experimental evaluation is on the generalized leapfrog integrator and the sensitivity of its performance to the choice of convergence threshold. However, in \cref{subsec:experiment-banana-shaped}, we illustrate an alternative numerical method, the implicit midpoint integrator, which averts certain pathologies associated to the generalized leapfrog method in that example. Like the generalized leapfrog method, the implicit midpoint integrator is a symmetric, volume-preserving, second-order integrator. We give the exact implementation of the implicit midpoint method in \cref{alg:moral-implicit-midpoint} and note that the implicit relation in \cref{eq:moral-implicit-midpoint} is resolved by fixed point iteration in practice. Further experimental evaluation of the implicit midpoint integrator, including an evaluation of the relative conservation of volume and reversibility error exhibited by the implicit midpoint method and the generalized leapfrog, may be found in \citet{pmlr-v139-brofos21a}.

\subsection{The Transition Kernel of Riemannian Manifold Hamiltonian Monte Carlo}

Denoting the step-size of the numerical integrator by $\epsilon$ and the number
of integration steps by $k\in\mathbb{N}$, the transition kernel, expressing the
probability of transitioning to a Borel subset $A\subset\R^m\times\R^m$ given the
current state $(q, p)$ in phase-space is given by,
\begin{align}
  \begin{split}
    K((q, p), A; \epsilon, k) &= \min\set{1, \exp(H(q, p) - H(\mathbf{F}\circ\Phi_\epsilon^k(q, p))} \cdot \mathbf{1}\{\mathbf{F}\circ\Phi_{\epsilon}^k(q, p)\in A\} \\
    &\qquad~+ \paren{1 - \min\set{1, \exp(H(q, p) - H(\mathbf{F}\circ\Phi_{\epsilon}^k(q, p))}} \cdot \mathbf{1}\{(q, p)\in A\},
  \end{split}
\end{align}
Recognizing
the transition kernel as a simple mixture distribution of two Dirac measures, we
may sample from $K((q, p), \cdot; \epsilon, k)$ by defining the function,
\begin{align}
  \Psi_\epsilon^k((q, p), u) = \begin{cases}
    \mathbf{F}\circ\Phi_{\epsilon}^k(q, p) &~\mathrm{if}~ u <  \min\set{1, \exp(H(q, p) - H(\mathbf{F}\circ\Phi_{\epsilon}^k(q, p))} \\
    (q, p) &~\mathrm{otherwise}. 
  \end{cases}
\end{align}
Therefore $\Psi_\epsilon^k((q, p), u) \sim K((q, p), \cdot; \epsilon, k)$ when
$u\sim\mathrm{Uniform}(0, 1)$. Pseudo-code demonstrating Riemannian HMC using this transition kernel, and randomization of the momentum using Gibbs sampling, is shown in \cref{alg:hamiltonian-monte-carlo}. Given an initial $(q_0, p_0)$, we may sample $(q_1, p_1) \sim K((q_0, p_0),\cdot; \epsilon, k)$ and, in general, generate $(q_{n+1}, p_{n+1}) \sim K((q_n, p_n), \cdot; \epsilon, k)$. This produces a Markov chain whose marginal distribution in the $q$-variable, at stationarity, has density $\pi(q)\propto \exp(\mathcal{L}(q))$.

As described in \cref{subsec:generalized-leapfrog-integrator}, an actual implementation of RMHMC necessitates the introducting of fixed point iterations in order to resolve the implicitly-defined updates. Therefore, in practice, we sample from the transition kernel by replacing the exact leapfrog method by its thresholded counterpart:
\begin{align}
  \Psi_\epsilon^k((q, p), u; \delta) = \begin{cases}
    \mathbf{F}\circ\Phi_{\epsilon}^k(q, p;\delta) &~\mathrm{if}~ u <  \min\set{1, \exp(H(q, p) - H(\mathbf{F}\circ\Phi_{\epsilon}^k(q, p; \delta))} \\
    (q, p) &~\mathrm{otherwise}. 
  \end{cases}
\end{align}
When $u\sim\mathrm{Uniform}(0, 1)$, we can regard $\Psi_\epsilon^k((q, p), u; \delta)$ as an approximate sample from $K((q, p), \cdot; \epsilon, k)$; the use of a non-zero convergence tolerance prevents the sample from being exact to numerical precision. Instead, we will adopt the notation $K((q, p), \cdot; \epsilon, k, \delta)$ to denote the distribution of $\Psi_\epsilon^k((q, p), u; \delta)$ when $u\sim\mathrm{Uniform}(0, 1)$.

\subsection{Stochastic Approximation}

We will propose a mechanism to adapt the convergence threshold during sampling as a component of a burn-in phase. Consider a function $B : \R\to\R$ and suppose we seek to identify a $\delta\in\R$ such that $B(\delta) = \kappa$ for some $\kappa \in \R$. In the case where $B$ is directly computable, this could be accomplished via a line search or, under suitable smoothness assumptions, via Newton's method. Suppose that $B$ is not directly computable but that there exists a surrogate {\it random variable} $L(\delta)$ such that $\mathbf{E} L(\delta) = B(\delta)$. \Citet{10.1214/aoms/1177729586} proposed a sequential method by which to identify $\delta$ such that $B(\delta) = \kappa$ which depends only on a sequence $L(\delta_n)$. Specifically, they proposed the sequence,
\begin{align}
    \label{eq:primal-sequence} \delta_{n+1} = \delta_n - \gamma_n (L(\delta_n) - \kappa),
\end{align}
where $(\gamma_n)_{n=1}^\infty$ is a sequence of step-sizes. \Citet{10.1214/aoms/1177729586} demonstrated that $\mathbb{E} (\delta_n-\delta)^2 \to 0$ under the following conditions:
\begin{enumerate}
    \item $\sum_{i=1}^\infty \gamma_n = \infty$ and $\sum_{i=1}^\infty \gamma_n^2 < \infty$.
    \item The function $B$ is monotonically increasing.
    \item There exists $\delta\in\R$ such that $B(\delta) = \kappa$.
    \item The function $B$ is differentiable in a neighborhood of $\delta$ and $\frac{\mathrm{d}}{\mathrm{d}\delta} B(\delta) > 0$.
\end{enumerate}
\Citet{ruppert-dual-averaging} proposed to study step-size sequences of the form $\gamma_n = D n^{-\omega}$ for $\omega\in (1/2, 1)$ and $D\in\R_+$. With the sequence $\set{\delta_n}_{n=1}^\infty$ constructed as in \cref{eq:primal-sequence}, we then construct the average sequence,
\begin{align}
    \label{eq:dual-sequence} \bar{\delta}_{n+1} = \frac{n}{n+1} \bar{\delta}_n + \frac{1}{n+1} \delta_n.
\end{align}
The sequence $\set{\bar{\delta}_n}_{n=1}^\infty$ is, in fact, asymptotically efficient as an estimator of $\delta$ \citep{ruppert-dual-averaging}. Methods similar to this averaging procedure have been considered previously in application to MCMC; perhaps most notable among these is the dual averaging method of \citet{JMLR:v15:hoffman14a}, which seeks to tune the step-size of HMC in order to achieve a desired acceptance probability.

\subsection{Convergence of Fixed Point Iterations and Newton's Method}

Fixed point iteration can be used to solve an implicit equation of the form $f(z) = z$ where $z\in \R^m$ and $f:\R^m\to\R^m$; equations of this form appear in the generalized leapfrog (see \cref{eq:generalized-leapfrog-first-momentum,eq:generalized-leapfrog-position}) and implicit midpoint numerical integrators (\cref{eq:moral-implicit-midpoint}). By rearranging terms we arrive at the equivalent problem: find $z$ such that $g(z) = f(z) - z = 0$, which relates the solution to fixed point equations to root-finding methods. In examining these methods, we first recall the definition of order of convergence.
\begin{definition}
  Let $(z_0,z_1,z_2,\ldots)$ be an $\R^m$-valued sequence converging to some $z\in \R^m$. The sequence is said to convergence with order $k$ to the value $z$ if,
  \begin{align}
      \lim_{n\to\infty} \frac{\Vert z_{n+1} - z\Vert}{\Vert z_n - z\Vert^k} < L
  \end{align}
  for some $L\in \R$.
\end{definition}
The case $k=1$ is called linear convergence while the case $k=2$ is called quadratic convergence.
\begin{proposition}
Let $f : \R^m\to\R^m$ be a contraction map (i.e. $\Vert f(z) - f(z')\Vert \leq L \Vert z - z'\Vert$ for some $L < 1$). Then, from any initial position $z_0$, the sequence formed by $z_{n+1} = f(z_n)$ converges with order one to $z$.
\end{proposition}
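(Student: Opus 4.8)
The plan is to establish the Banach fixed point theorem's convergence-rate consequence directly, since the statement is essentially the contraction-mapping principle together with a linear-convergence claim. First I would observe that the hypothesis $\Vert f(z) - f(z')\Vert \leq L\Vert z - z'\Vert$ with $L < 1$ guarantees the existence of a unique fixed point $z$ satisfying $f(z) = z$; this is the standard Banach fixed point theorem, which I would either invoke as known or sketch via the Cauchy criterion below. The key identity driving everything is that, because $z$ is a fixed point, we have $z_{n+1} - z = f(z_n) - f(z)$, and applying the contraction bound immediately yields
\begin{align}
  \Vert z_{n+1} - z\Vert = \Vert f(z_n) - f(z)\Vert \leq L \Vert z_n - z\Vert.
\end{align}

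From this one-step contraction I would iterate to obtain $\Vert z_n - z\Vert \leq L^n \Vert z_0 - z\Vert$, which tends to zero as $n\to\infty$ because $L < 1$; this establishes convergence of the sequence to $z$ from an arbitrary initial position $z_0$. To make the argument self-contained without presupposing the existence of $z$, I would first prove the sequence is Cauchy: using $\Vert z_{n+1} - z_n\Vert \leq L^n \Vert z_1 - z_0\Vert$ and summing a geometric series, one bounds $\Vert z_{n+p} - z_n\Vert$ by $\frac{L^n}{1-L}\Vert z_1 - z_0\Vert$, so completeness of $\R^m$ furnishes a limit, and continuity of the contraction (Lipschitz maps are continuous) identifies that limit as a fixed point.

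The order-of-convergence claim then follows directly from the one-step inequality above. Since $\Vert z_{n+1} - z\Vert \leq L \Vert z_n - z\Vert$, we have
\begin{align}
  \frac{\Vert z_{n+1} - z\Vert}{\Vert z_n - z\Vert} \leq L < 1
\end{align}
for every $n$ (discarding any terms where $z_n = z$ already, in which case the sequence is eventually constant and convergence is trivial). Taking the limit superior, the ratio with $k = 1$ in the definition of order of convergence is bounded by $L$, so the sequence converges with order one, matching the definition of linear convergence recalled just before the proposition.

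The main obstacle, if any, is a bookkeeping one rather than a conceptual one: I must be careful about the edge case in which $z_n = z$ for some finite $n$, which makes the ratio in the order-of-convergence definition undefined (division by zero); in that situation the iteration is stationary and the claim holds trivially, so I would dispose of it with a short remark. A secondary subtlety is that the proposition as stated asserts convergence "from any initial position $z_0$" but does not separately assert existence or uniqueness of the fixed point, so I would decide whether to fold the Banach existence argument into the proof (via the Cauchy estimate) or simply cite it; including the Cauchy estimate makes the proof cleanly self-contained and I would favor that route.
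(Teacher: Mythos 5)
Your proof is correct and follows essentially the same route as the paper's: invoke the Banach fixed point theorem for convergence, then use $z_{n+1}-z = f(z_n)-f(z)$ and the contraction bound to show the error ratio is at most $L$, giving order-one convergence. Your additional care about the case $z_n = z$ (where the ratio is undefined) and the optional self-contained Cauchy argument are reasonable refinements the paper omits, but the substance is identical.
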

\begin{proof}
Convergence of the iterates $(z_0,z_1,\ldots)$ is an immediate consequence of the Banach fixed point theorem. To see that convergence is order one, write
\begin{align}
    \frac{\Vert z_{n+1} - z\Vert}{\Vert z_n - z\Vert} &= \frac{\Vert f(z_{n}) - f(z)\Vert}{\Vert z_n - z\Vert} \\
    &\leq L \\
    &< L+\epsilon
\end{align}
by the assumption that $f$ is a contraction map and where $\epsilon \in (0, 1-L)$. Taking limits on both sides gives the result.
\end{proof}
Methods besides fixed point iteration have been suggested for root-finding. The most famous of these is Newton's method, which exhibits converge of order two. Given a function $g$ whose root is sought, Newton's method computes the following sequence:
\begin{align}
z_{n+1} = z_n - (\nabla g(z_n))^{-1} g(z_n).
\end{align}
We therefore see that Newton's method requires not only the evaluation of the function $g$ at the current iterate, but also its Jacobian, which must be inverted.

\section{Related Work}\label{sec:related-work}

The method of RMHMC was originally conceived in \citet{rmhmc}, though the fundamental idea of using adaptive geometry to inform the proposals can be found earlier in \citet{NIPS1999_d2cdf047}. This latter work used the (non-generalized) leapfrog integrator to compute proposals, which \citet{rmhmc} criticized as lacking reversibility and volume-preservation, and therefore failing to maintain detailed balance with respect to the target distribution. Several attempts at resolving the implicit updates used in applying the generalized leapfrog integrator. \Citet{Lan_2015} made progress in this direction by transforming from Hamiltonian to Lagrangian dynamics, which eliminates fixed point iterations but at a cost of requiring the determinant of an $m\times m$ matrix at every step of the integrator; moreover, the resulting integrator purports to have a larger global error rate than integration methods based on Hamiltonian mechanics. It is also not obvious how to incorporate general-purpose metrics such as SoftAbs \citep{softabs} into the Lagrangian framework. More recently, \citet{cobb2019introducing} proposed an explicit integrator for RMHMC by ``coupling'' two copies of the numerical trajectory; however, this method also suffers from the lack of a theory of reversibility and volume-preservation. The work of \citet{pmlr-v139-brofos21a} examined the use of the implicit midpoint integrator as an alternative to the generalized leapfrog method, though the implicit midpoint method lacks the computational efficiency properties of the generalized leapfrog method. We focus our attention on the generalized leapfrog integrator since it is the {\it de facto} standard integrator used in RMHMC and to enable a simplified analysis of the threshold effects under consideration. Yet another approach to improving the performance of RMHMC was explored by \citet{NIPS2014_a87ff679}, which proposes to utilize a Riemannian metric with a specific ``alternating block-wise'' structure that facilitates the use of the non-generalized leapfrog method. The importance of volume preservation and symmetry of the proposal operator is often stressed in the proofs of detailed balance for HMC (see {\it inter alia} \citet{neal-hmc}); \citet{Lelivre2019HybridMC} proposed a check for violations of reversibility in the case of embedded manifolds, wherein the numerical integrator involves solving for Lagrange multipliers.

\section{Analytical Apparatus}\label{sec:analytical-apparatus}

We now review the research question our work seeks to address and review the various metrics we will employ to measure quantities pursuant to the investigation of the research question.

\subsection{Research Question}

As we have seen, an implementation of the generalized leapfrog integrator requires the use of a non-zero convergence tolerance. How should one determine the convergence threshold? Under the principle that it is better to be slow and correct than fast and wrong, one possibility is to treat the numerics of the generalized leapfrog integrator with caution and adopt a minuscule convergence tolerance for all Bayesian inference problems. However, as discussed in \cref{subsec:measuring-computational-effort}, one step of RMHMC is $\mathcal{O}(m^3)$ whereas a single step of HMC is $\mathcal{O}(m^2)$ (and can be as fast as $\mathcal{O}(m)$ when using an identity mass matrix). Therefore, it seems reasonable to alleviate the computational burden by selecting a non-zero threshold that does not degrade the performance of the RMHMC algorithm. Thus, the research question we seek to answer is, ``To what extent are the ergodicity and the computational efficiency of the RMHMC sampler affected by these convergence tolerances?'' The question of ergodicity is important because the objective of a sampler is to produce a Markov chain that converges to the target density; if the choice of threshold produces detectable and substantial differences between the large-sample distribution of the chain and the target distribution, then ergodicity has been meaningfully violated. The point of computational efficiency is also important: if a larger threshold produces minute differences in the large-sample distribution of the chain, then one would like to characterize the computational savings associated to this larger threshold relative to a more stringent one. This will allow us to quantify and compare, among other properties, the effective sample size per unit of computation in the RMHMC algorithm with varying thresholds.

% {\color{red}[maybe we should comment somewhere about the error metric being difference between point and update. Close to error, but not the same in general.]}

\subsection{Measures and Metrics}

Here we describe our procedures for measuring quantities of interest in assessing the dependence of the RMHMC algorithm on the threshold. We develop metrics for measuring reversibility and volume-preservation of the proposal operator, ergodicity, similarity between transition kernels, and the computational effort expended by the proposal operator. In measuring reversibility and volume-preservation, we modify the technique of \citet{pmlr-v139-brofos21a}. We emphasize that by measuring reversibility and volume-preservation, we are not measuring detailed balance. Instead, we are measuring two properties which, taken together, imply detailed balance and therefore stationarity of the RMHMC Markov chain. We carry out all computation in 64-bit precision.

\subsubsection{Measuring Reversibility}

As described previously, the easiest mechanism to ensure reversibility of the RMHMC proposal operator is by negating the momentum variable following integration. Therefore, to assess numerical reversibility, we define the momentum flip operator $\mathbf{F}(q, p) = (q, -p)$. Under reversibility, it follows that $\mathbf{F} \circ \Phi^k_{\epsilon} \circ \mathbf{F}\circ \Phi^k_{\epsilon} = \mathrm{Id}$ for any $k\in\mathbb{N}$ and $\epsilon\in \R$. To measure the degree of reversibility of the numerical integrator with fixed point convergence tolerance $\delta$, we set, $(q_f, p_f) = \mathbf{F}\circ\Phi^k_{\epsilon}(q, p; \delta)$ and $(q_r, p_r) = \mathbf{F}\circ \Phi^k_\epsilon(q_f, p_f; \delta)$ and compute the absolute reversibility error (ARE) by,
\begin{align}
  \mathrm{ARE} = \sqrt{\Vert q - q_r\Vert_2^2 + \Vert p - p_r\Vert_2^2}.
\end{align}
We additionally consider a normalized version of this absolute error by dividing by $\sqrt{\Vert q\Vert^2 + \Vert p\Vert^2}$: the relative reversibility error (RRE) is $\mathrm{RRE} = \mathrm{ARE} / \sqrt{\Vert q\Vert^2 + \Vert p\Vert^2}$. The results of the relative error analysis are shown in \cref{app:relative-reversibility-error}.

\subsubsection{Measuring Volume Preservation}

The (non-generalized) leapfrog integrator is always a volume-preserving transformation because its three steps consist only of translations of the position (resp. momentum) by quantities depending only on the momentum (resp. position). For volume-preservation property of the generalized leapfrog method is complicated by the implicit relations used to define the integrator. Nevertheless, viewing the concatenation of the momentum flip operator and the generalized leapfrog integrator as a map from $\R^{2m}$ to itself, we recall that a necessary and sufficient condition for a smooth map to be volume-preserving is that it has unit Jacobian determinant. writing $z=(q, p)$ and letting $e_i$ be the $i^\mathrm{th}$ standard basis vector of $\R^{2m}$ we approximate the Jacobian of the generalized leapfrog integrator by the central difference formula whose $i^\mathrm{th}$ column is given by,
\begin{align}
  \tilde{\mathbf{J}}_i = \frac{\Phi^k_\epsilon(z + \omega  e_i / 2; \delta) - \Phi^k_\epsilon(z - \omega e_i / 2; \delta)}{\omega},
\end{align}
where $\omega > 0$ is the finite difference perturbation size.
Forming the approximate Jacobian $\tilde{\mathbf{J}}$ in this way, we compute its determinant and compare its absolute difference against unity in order to obtain a measure of volume preservation: the volume preservation error (VPE) is
\begin{align}
  \mathrm{VPE} = \abs{\abs{\mathrm{det}(\tilde{\mathbf{J}})} - 1}.
\end{align}
The use of a finite difference approximation in computing the Jacobian will produce round-off and truncation errors that prevent perfect estimation of the true Jacobian. Therefore, we measure the true violation of volume preservation with error. In our experiments we search over values of $\omega$ from the set $\set{1\times10^{-8}, 1\times 10^{-7}, 1\times 10^{-6}, 1\times 10^{-5}, 1\times 10^{-4}, 1\times 10^{-3}}$ and report the volume preservation results for the value of $\omega$ that produces estimates of the Jacobian determinant that are closest to zero when using a convergence of $1\times 10^{-9}$. Further details on the sensitivity of these estimates are given in \cref{app:jacobian-determinant-perturbation}.

\subsubsection{Measuring Ergodicity}\label{subsubsec:measuring-ergodicity}

% {\color{red}[we're not at the experiments yet. If you want to discuss specifics here, reference forward somehow.]}

Ergodicity of a Markov chain refers to the property that, from any initial condition, the state of the Markov chain is, asymptotically, distributed according to the target distribution. In practice, of course, we cannot actually assess this asymptotic behavior because the chain must be stopped at some finite, but large, number of steps. However, in cases where it is possible to generate i.i.d. samples from the posterior, we can assess the similarity of the Markov chain samples and analytic samples drawn from the target distribution. Moreover, in these cases we can initialize the Markov chain in stationary distribution by drawing an initial state from the target distribution.
Therefore, in order to assess the ergodicity of RMHMC, we propose to compare the samples produced by RMHMC against a collection of i.i.d. samples drawn from the target distribution. In the case of the banana-shaped distribution and the Fitzhugh-Nagumo model, which are of dimension two and three, respectively, it is feasible to generate i.i.d. samples via rejection sampling. For Neal's funnel distribution and for the multi-scale Student-$t$ distribution, we can generate i.i.d. samples analytically. Formally, let $(q_1,\ldots,q_n)$ be a collection of $m$-dimensional samples produced by a Markov chain and let $(q'_1,\ldots,q'_{n'})$ be a collection of i.i.d. samples from an $m$-dimensional target distribution. 

\paragraph{Cram\'{e}r-Wold Methods}

To assess the ergodicity of the RMHMC samplers of varying thresholds, we draw on the Cram\'{e}r-Wold theorem \citep{Bill86} and the Kolmogorov-Smirnov statistic for inspiration.
\begin{theorem}[Cram\'{e}r-Wold]
A density function in $\R^m$ is determined by the its projection onto all the one-dimensional subspaces of $\R^m$.
\end{theorem}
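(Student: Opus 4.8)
The plan is to route the argument through characteristic functions and the Fourier uniqueness theorem, reducing the $m$-dimensional statement to a one-parameter family of one-dimensional facts. Let $\Pi$ be the law of a random vector $X$ taking values in $\R^m$, and recall its characteristic function $\varphi_X(t) = \mathbb{E}\left[\exp(i t^\top X)\right]$ for $t\in\R^m$. The foundational input I would invoke is the multivariate uniqueness theorem: two probability measures on $\R^m$ sharing the same characteristic function coincide (and when densities exist, they agree almost everywhere via Fourier inversion). Thus it suffices to show that the family of one-dimensional projections determines $\varphi_X$ on all of $\R^m$.

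First I would fix a direction, i.e. a unit vector $u\in\mathbb{S}^{m-1}$ spanning a one-dimensional subspace, and consider the real-valued random variable $Y_u = u^\top X$, whose distribution is precisely what the projection onto that subspace records. Writing $\psi_u(s) = \mathbb{E}\left[\exp(i s Y_u)\right]$ for its one-dimensional characteristic function, the key identity is $\psi_u(s) = \mathbb{E}\left[\exp(i (su)^\top X)\right] = \varphi_X(su)$. Hence knowledge of the law of $Y_u$ — equivalently, of $\psi_u$ on all of $\R$ — pins down $\varphi_X$ along the entire line $\set{su : s\in\R}$.

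Second, letting $u$ range over all of $\mathbb{S}^{m-1}$ sweeps these lines over the whole of $\R^m$: every $t\in\R^m\setminus\set{0}$ factors as $t = su$ with $u = t/\Vert t\Vert_2$ and $s=\Vert t\Vert_2$, while $t=0$ gives $\varphi_X(0)=1$ trivially. Therefore the collection of projection laws — the laws of $Y_u$ as $u$ ranges over $\mathbb{S}^{m-1}$ — determines $\varphi_X(t)$ for every $t\in\R^m$, and the uniqueness theorem then forces the law of $X$, and its density when one exists, to be determined.

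The main obstacle is not in this reduction, which is elementary, but in the external analytic input: the multivariate Fourier uniqueness (inversion) theorem guaranteeing that a measure is recovered from its characteristic function. I would treat that as a standard citation (e.g. \citet{Bill86}) rather than reprove it. The genuine content of Cram\'{e}r-Wold is merely that the single scalar evaluation $\psi_u(1)=\varphi_X(u)$, gathered across all directions and dilations, reconstitutes the entire transform.
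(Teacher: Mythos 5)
Your proof is correct and is the standard characteristic-function argument; the paper itself does not prove this theorem but simply cites \citet{Bill86}, where essentially the same reduction via $\psi_u(s)=\varphi_X(su)$ and Fourier uniqueness is the canonical proof. Nothing is missing beyond the external uniqueness theorem, which you rightly treat as a citation.
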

\begin{definition}[Kolmogorov-Smirnov Statistic]
  Given two sets of data $(\omega_1,\ldots,\omega_n)$ and $(\omega'_1,\ldots,\omega'_m)$ in $\R$, denote their empirical cumulative distribution functions by $\hat{F}$ and $\hat{F}'$, respectively. The Kolmogorov-Smirnov statistic is
  \begin{align}
      \mathrm{KS}(\hat{F}, \hat{F}') = \sup_{x\in\R} \abs{\hat{F}(x) - \hat{F}'(x)}.
  \end{align}
\end{definition}
Therefore, we propose the measure the ergodicity of the sampler by comparing the Markov chain samples to the analytic samples across many random one-dimensional projections. For as many iterations as desired, compute the following:
\begin{enumerate}
    \item Sample $u\sim\mathrm{Uniform}(\mathbb{S}^{m-1})$.
    \item Compute the orthogonal projections onto the vector space spanned by $u$; that is, $\omega_i = u^\top q_i\in\R$ for $i=1,\ldots,n$ and $\omega_i' = u^\top q_i'\in\R$ for $i=1,\ldots, n'$.
    \item Compute the Kolmogorov-Smirnov statistic (the maximal absolute difference in the cumulative distribution functions) between $(\omega_1,\ldots,\omega_n)$ and $(\omega_1',\ldots,\omega'_{n'})$.
\end{enumerate}
A procedure similar to the one described above was previously advocated in \citet{two-sample-random-projection} for the purposes of crafting a two-sample test for equality of distributions. By constructing a histogram of these Kolmogorov-Smirnov statistics across numerous random directions, one obtains a quantitative measure of the closeness of the distribution of the Markov chain samples and the i.i.d. samples. We note that we do not attempt to compute a $p$-value associated to these Kolmogorov-Smirnov statistics due to the serial auto-correlation of the Markov chain samples, which would invalidate any independence assumption.

\paragraph{Maximum Mean Discrepancy}

We consider the method of maximum mean discrepancy developed in \citet{gretton-kernel}. Given two random variables $x\sim \pi$ and $y\sim \tilde{\pi}$, the maximum mean discrepancy is defined by,
\begin{align}
    \mathrm{MMD}[\mathcal{F},\pi,\tilde{\pi}] \defeq \sup_{\phi\in\mathcal{F}} \paren{\underset{x\sim \pi}{\mathbb{E}} \phi(x) - \underset{y\sim \tilde{\pi}}{\mathbb{E}}\phi(y)},
\end{align}
where $\mathcal{F}$ is a prescribed set of $\R$-valued functions. Given a reproducing kernel Hilbert space (RKHS) $\mathcal{H}$ with kernel $k$, \citet{gretton-kernel} showed that the squared maximum mean discrepancy enjoys the following characterization,
\begin{align}
    \mathrm{MMD}^2[\mathcal{F}, \pi, \tilde{\pi}] = \underset{x, x'\sim \pi}{\mathbb{E}} k(x, x') + \underset{y, y'\sim \tilde{\pi}}{\mathbb{E}} k(y, y') - 2\underset{x\sim \pi, y\sim \tilde{\pi}}{\mathbb{E}} k(x, y).
\end{align}
when $\mathcal{F}$ is the set of functions in the unit ball of the RKHS: $\mathcal{F} = \set{\phi \in \mathcal{H} : \Vert \phi\Vert_{\mathcal{H}} \leq 1}$. Under suitable conditions on the RKHS, it has been shown that $\pi=\tilde{\pi} \iff \mathrm{MMD}[\mathcal{F}, \pi, \tilde{\pi}] = 0$. We use the following unbiased estimator of the squared maximum mean discrepancy as a measure of similarity between the Monte Carlo and analytical samples,
\begin{align}
    \mathrm{MMD}_u^2[\mathcal{F}, \pi, \tilde{\pi}] = \frac{1}{n(n-1)} \sum_{i=1}^n\sum_{j\neq i}^n k(q_i, q_j) + \frac{1}{n'(n'-1)} \sum_{i=1}^{n'} \sum_{j\neq i}^{n'} k(q'_i, q'_j) - \frac{2}{nn'} \sum_{i=1}^{n}\sum_{j=1}^{n'} k(q_i, q'_j).
\end{align}
As a measure of ergodicity, when sampling has been effective we expect that $\mathrm{MMD}$ will be close to zero; since the estimator we employ is unbiased, this can produce a negative quantity. In our experiments, we report $\abs{\mathrm{MMD}_u^2[\mathcal{F}, \pi, \tilde{\pi}]}$ so as to avoid presenting negative values of what ought to be a non-negative quantity. We use the squared exponential kernel $k(q, q') = \exp(-\Vert q - q'\Vert_2^2 / h^2)$ where $h$ is the kernel bandwidth. We use the median distance heuristic among the i.i.d. samples to select the bandwidth $h$ so that $h = \mathrm{median}(\{ \Vert q'_i - q'_j\Vert \}_{i\neq j})$. The computational complexity of the MMD estimator is $\mathcal{O}(n^2 + (n')^2)$.

\paragraph{Sliced Wasserstein Distances}

A somewhat similar method to the Cram\'{e}r-Wold procedure described above can be formulated based on Wasserstein distances rather than Kolmogorov-Smirnov statistics along one-dimensional projections. Given two probability measures $\Pi$ and $\tilde{\Pi}$ on $\R$ with cumulative distribution functions $F_\Pi$ and $F_{\tilde{\Pi}}$, respectively, the 1-Wasserstein distance between $\Pi$ and $\tilde{\Pi}$ is \citep{e19020047},
\begin{align}
    W_1(\Pi, \tilde{\Pi}) = \int_0^1 \abs{F_\Pi^{-1}(t) - F_{\tilde{\Pi}}^{-1}(t)} ~\mathrm{d}t.
\end{align}
Using this one-dimensional characterization of the 1-Wasserstein distance, a distance -- called the sliced Wasserstein distance -- on probability measures on $\R^m$. Let $\Pi$ now be a probability measure on $\R^m$ may be constructed. The the sliced Wasserstein distance is defined by
\begin{align}
    \label{eq:sliced-wasserstein} SW_1(\Pi, \tilde{\Pi}) = \int_{\mathbb{S}^{m-1}} W_1(\mathcal{R}_\theta(\Pi), \mathcal{R}_\theta(\tilde{\Pi})) ~\mathrm{dVol}_{\mathbb{S}^{m-1}}(\theta),
\end{align}
where $\mathcal{R}_{\theta}(\Pi)$ is the probability measure of the random variable $x^\top \theta$ when $x\sim \Pi$. \citet{nadjahi2021fast} computes a Monte Carlo approximation of \cref{eq:sliced-wasserstein} by sampling unit vectors uniformly over $\mathbb{S}^{m-1}$ and treating $(\omega_1,\ldots,\omega_n)$ and $(\omega'_1,\ldots,\omega'_{n'})$ as the locations of Dirac measures by which to define discrete probability measures. (The quantities $\omega_i$ and $\omega_i'$ were defined in step two in our discussion of Cram\'{e}r-Wold Methods.) In the special case that $n=n'$, the one-dimensional Wasserstein distance assumes the simple form,
\begin{align}
    W_1(\Pi,\tilde{\Pi}) = \frac{1}{n} \sum_{i=1}^n \abs{\omega_{(i)} - \omega'_{(i)}},
\end{align}
where $(\omega_{(1)}, \ldots,\omega_{(n)})$ and $(\omega'_{(1)}, \ldots,\omega'_{(n)})$ are, respectively, $(\omega_1,\ldots,\omega_{n})$ and $(\omega'_1,\ldots,\omega'_{n})$ sorted in ascending order.

\paragraph{Discretized Differences in Probability}

In \cref{subsec:experiment-banana-shaped}, the low-dimensionality of the posterior distribution permits us to employ the method of \citet{10.36045/bbms/1170347810} which considers the $L_1$ distance between discretized probability densities. Given a partition $\mathcal{P}$ of $\R^m$, we compute the statistic,
\begin{align}
    DL_1(\set{q_i}_{i=1}^n, \set{q'_i}_{i=1}^{n'}) = \sum_{r\in\mathcal{P}} \mathrm{Vol}(r)\cdot \abs{\frac{1}{n} \sum_{i=1}^n \mathbf{1}\set{q_i\in r} - \frac{1}{n'} \sum_{i=1}^{n'} \mathbf{1}\set{q'_i\in r}}.
\end{align}
This can be interpreted as an approximation of the $L_1$ distance between probability densities, where the quality of the approximation depends on the number of samples $n$ and $n'$ and the fineness of the partition. In \cref{subsec:experiment-banana-shaped}, we partition $[-30, 10]\times[-10, 10]\subset \R^2$ (which contains virtually all of the probability mass of the posterior in that example) into $2,500$ equally sized rectangles and compute the $DL_1$ statistic.

\paragraph{Methods Based on Multiple Chains}

Let $q_0$ be a fixed initial position variable. Consider $r$ independent (RM)HMC Markov chains starting from this shared initial position. Let $q_{ij}$ denote the position variable at step $j$ of the $i$-th Markov chain. On the $j$-th step, how close is the distribution of the set $\set{q_{1j},\ldots, q_{rj}}$ to the target distribution? Note that, given the fixed initial position $q_0$, $q_{ij}$ is independent of $q_{kj}$. Therefore, the question as posed is clearly distinct from the question, ``How close is the distribution $\set{q_{i1},\ldots, q_{in}}$ to the target distribution?'' Answering this question can give an indication of the convergence {\it speed} of the HMC and RMHMC procedures and, in the latter case, its sensitivity to the convergence threshold.

Computing the previously described metrics would be prohibitively expensive to compute for every step of the Markov chain. However, in \cref{subsec:experiment-student-t,subsec:experiment-neal-funnel} there are singular dimensions of the posterior that are particularly challenging to sample due to the multiscale structure of the target distribution, but which nonetheless have known marginal distributions. By exclusively considering these single dimensions, we may assess convergence with respect to the most challenging dimension of the posterior. Let $l$ denote the index of this challenging dimension of the posetior so that $q_{ijl}$ is the $l$-th dimension of the state at the $j$-th step in the $i$-th chain. In our experiments, we compute the {\it single sample} Kolmogorov-Smirnov statistic comparing the distribution of $\set{q_{1jl},\ldots, q_{rjl}}$ against the known marginal. We set $r=1,000$ and consider $j=1,\ldots, 10,000$.

\subsubsection{Measuring Sample Independence}

Ergodicity measures how close the iterates of the Markov chain are to the target distribution of interest. However, the samples generated by Markov chains exhibit serial auto-correlation and therefore not independent. The degree of dependence between samples with effect the determine the precision of the Monte Carlo approximation of posterior expectations. A standard measure in the MCMC literature is the {\it effective} number of independent samples that a set of Markov chain samples represents. The effective sample size is the equivalent number of independent samples that would produce an estimator with the same variance as the auto-correlated samples produced by the Markov chain. Formally, following \citet{gelmanbda04}, let $(q_1,\ldots, q_n)$ be a sequence of univariate Markov chain samples and let $\rho_t$ be the auto-correlation of $(q_1,\ldots, q_n)$ with lag $t$. The effective sample size (ESS) is the quantity,
\begin{align}
    \mathrm{ESS} = \frac{n}{1 + 2\sum_{t=1}^\infty \rho_t}.
\end{align}
In practice, $\rho_t$ is not known and must be estimated from the sequence $(q_1,\ldots, q_n)$ itself. We utilize the procedure of \citet{arviz_2019} to compute the ESS in our experiments. As a practical matter, one is concerned not only with the effective sample size in absolute terms, but also with the effective sample size {\it per unit of computation}. To measure this quantity, we divide the ESS by the running time (in seconds) of the algorithm. In distributions with multiple dimensions, we may consider the mean ESS, which is the average ESS among each dimension of the Markov chain. Similarly, the minimum ESS is the minimum ESS among each dimension of the Markov chain.

\subsubsection{Measuring Transition Kernel Similarity}\label{subsubsec:transition-kernel-distance}

Given two RMHMC transition kernels $K((q, p), \cdot; \epsilon, k, \delta)$ and $K((q, p), \cdot; \epsilon, k, \delta')$ with the same step-size $\epsilon$ and number of integration steps $k$, but with differing fixed point convergence thresholds $\delta$ and $\delta'$, how can we measure their similarity? For Bayesian inference tasks, we are primarily concerned with their similarity in the $q$-dimensions, since the $p$-dimensions are auxiliary variables that serve only to facilitate the construction of a phase-space. Therefore, let $\mathrm{Proj}_q : \R^m\times\R^m\to\R^m$ denote the projection to the $q$-dimensions alone. We propose to measure the similarity of $K((q, p), \cdot; \epsilon, k, \delta)$ and $K((q, p), \cdot; \epsilon, k, \delta')$ as,
\begin{align}
    \underset{q\sim (\cdot)}{\mathbb{E}} ~\underset{p\sim \mathrm{Normal}(0_m, \mathbb{G}(q))}{\mathbb{E}} \underset{u\sim \mathrm{Uniform}(0, 1)}{\mathbb{E}} \Vert \mathrm{Proj}_q(\Psi_\epsilon^k((q, p), u; \delta)) - \mathrm{Proj}_q(\Psi_{\epsilon}^k((q, p), u; \delta'))\Vert_2,
\end{align}
where we average over a suitable distribution over the position and momentum variables. This is the expected difference in the samples generated by the transition kernels $K_{\delta}$ and $K_{\delta'}$ when ensuring that both transitions are computed using the same integration step-size, the same number of integration steps, the same current position in phase-space, and the same uniform random number used in applying the Metropolis-Hastings accept-reject criterion. The distribution over $q$, the random position variable, is arbitrary and one therefore has this degree of freedom when measuring the similarity of transition kernels. In our experiments, we will use either i.i.d. samples from the target distribution, or samples from another Markov chain to approximate the expectation over the $q$-variables. Note that when both transition kernels reject the proposal computed by the RMHMC integrator, the expected difference is zero; in our visualizations of this metric, we show a distribution of differences in the cases where at least one of the two transition kernels did not reject its proposal and the probability that both transition kernels reject, which we call the ``rejection agreement.'' In our experiments, we compare transition kernels with thresholds $\delta \in\set{10^{-1}, 10^{-2}, 10^{-3}, 10^{-4}, 10^{-5}, 10^{-6}, 1\times 10^{-7}, 10^{-8}, 10^{-9}}$ against the transition kernel with $\delta' = 1\times 10^{-10}$.

\subsubsection{Measuring Computational Effort}\label{subsec:measuring-computational-effort}

As a measure of computational effort, we report $l_p$ and $l_q$ as described in \cref{alg:threshold-leapfrog}. These count, respectively, the number of fixed point iterations required to resolve the implicit updates to the momentum and position.

\subsection{Stochastic Approximation for Threshold Identification}

We propose to adapt the threshold to achieve a prescribed average number of decimal digits of similarity with a numerical integrator using a strict convergence tolerance (such as $1\times 10^{-10}$). Let $\delta$ and $\delta'$ be two convergence thresholds and consider the quantity
\begin{align}
    G_{\epsilon, k}(q, p, \delta, \delta') = \begin{cases} 
    \log_{10}\paren{\Vert \Phi_\epsilon^k(q, p, \delta) - \Phi_{\epsilon}^k(q, p, \delta')\Vert_2} &  ~\mathrm{if}~ \delta > \delta' \\
    -16 & ~\mathrm{otherwise}.
    \end{cases}
\end{align}
This is the {\it negative} number of decimal digits of similarity between the numerical integrators with step-size $\epsilon$ and number of integration steps $k$, but with differing convergence tolerances $\delta$ and $\delta'$. This measure treats thresholds that are less than or equal to the baseline as equivalent; for thresholds greater than the baseline, we measure the number of decimal digits of similarity between their respective integrators. 
We may wish to choose $\delta$ to produce a given number of decimal digits of similarity between these two integrators; we denote the desired number of decimal digits of similarity by $\kappa$ and define the difference between the the observed and desired number of decimal digits of similarity by,
\begin{align}
    L(\delta) \equiv L_{\epsilon, k}(q, p, \delta, \delta') = G_{\epsilon, k}(q, p; \delta, \delta') + \kappa
\end{align}
We would like to find $\delta$ such that $B(\delta) = 0$ where
\begin{align}
    B(\delta) = \underset{q\sim \pi}{\mathbb{E}} ~\underset{p\sim \mathrm{Normal}(0_m, \mathbb{G}(q))}{\mathbb{E}} L(\delta).
\end{align}
We seek to produce a sequence of convergence thresholds $(\bar{\delta}_1,\bar{\delta}_2,\ldots)$ such that $B(\bar{\delta}_n) \to 0$. Given initial data $\bar{\delta}_1 = \delta_1$, we set
\begin{align}
    \log \delta_{n+1} &=  \log \delta_n - \gamma_n (L(\delta_n) - \kappa) \\
    \log \bar{\delta}_{n+1} &= \frac{n}{n+1} \log \bar{\delta}_n + \frac{1}{n+1} \log \delta_n,
\end{align}
where $\gamma_n = n^{-\omega}$. We also define $\bar{L}_n = n^{-1} \sum_{i=1}^n L(\delta_n)$ as a approximation to the expectation of $L_n$. In our experiments, we set $\delta'=1\times10^{-10}$ and $\omega = 3/4$, which produced reasonable behavior. Convergence may be faster for variations of these hyperparameters. We consider a maximum value of 1,000 for $n$. The number of decimal digits of similarity between transition integrators is only a proxy for the violation of reversibility and volume preservation. However, this is a measure that is simple to compute and deploy in practice; in contrast, a measure to directly match a prescribed average violation of volume preservation would be significantly more computationally expensive due to the use of finite differences to approximate the Jacobian. Moreover, if one can hypothesize a minimum scale of the posterior distribution, then scale can be used to guide how many decimal digits of similarity one should require on average from the RMHMC numerical integrator.

In our experimental evaluation, we make an effort to check that the assumptions of the averaging procedure are satisfied. To accomplish this, we compute a Monte Carlo approximation of $B(\delta)$ for one-hundred logarithmically-spaced values of $\delta$ between $1\times10^{-10}$ and $1\times10^{-1}$. In practice, these Monte Carlo approximations of $B(\delta)$ appear to be monotonically increasing, to have a value of $\delta$ satisfying $B(\delta)=0$, and {\it look} smooth. In computing the Monte Carlo approximation of $B(\delta)$ we use i.i.d. samples from the target distribution as the distribution over $q$; however, when reporting sequences $\set{\bar{L}_n}$ and $\set{\bar{\delta}_n}$, in order to mirror the actual practice of the technique, we instead draw samples using an RMHMC Markov chain. Therefore, the value of $\bar{\delta}_n$ after 1,000 iterations may not precisely match the apparent solution of the equation $B(\delta)=0$, though typically the two values are close.

\section{Experiments}\label{sec:experiments}

\begin{table}[t]
\begin{tabular}{@{}l|lllll@{}}
\toprule
Posterior                                                           & \# Dimensions & \# Samples & Metric                            & Hierarchical          & \begin{tabular}[c]{@{}l@{}}Exact\\ Gradients\end{tabular} \\ \midrule
Banana                                                              & 2             & 1,000,000  & Fisher Information + Hessian      & \xmark & \cmark                                     \\
\begin{tabular}[c]{@{}l@{}}Logistic\\ Regression\end{tabular}       & 14 + 1        & 100,000    & Fisher Information + Hessian      & \cmark & \cmark                                     \\
Neal's Funnel                                                       & 11            & 1,000,000  & SoftAbs Metric                    & \xmark & \cmark                                     \\
\begin{tabular}[c]{@{}l@{}}Stochastic\\ Volatility\end{tabular}     & 1000 + 3      & 100,000    & Fisher Information + Hessian      & \cmark & \cmark                                     \\
\begin{tabular}[c]{@{}l@{}}Log-Gaussian\\ Cox-Poisson\end{tabular}  & 1024 + 2      & 5,000      & Fisher Information + Hessian      & \cmark & \cmark                                     \\
Fitzhugh-Nagumo                                                     & 3             & 100,000    & Fisher Information + Hessian      & \xmark & \xmark                                     \\
\begin{tabular}[c]{@{}l@{}}Multivariate \\ Student-$t$\end{tabular} & 20            & 1,000,000  & Positive Definite Part of Hessian & \xmark & \cmark                                     \\ \bottomrule
\end{tabular}
\caption{Summary of the posterior distributions we examine in our experimental results. For the heirarchical models, we employ an alternating Metropolis-within-Gibbs-like strategy; we indicate the dimensionality of each alternating component by separating them with a plus-sign. The Neal Funnel distribution has a hierarchical structure but it is sampled in a non-hierarchical manner. As discussed in detail later, the Fitzhugh-Nagumo model computes approximate gradients by solving initial value problems.}
\label{tab:posteriors}
\end{table}

Here we present our empirical analysis of the role of thresholds in RMHMC. Throughout our experiments, we carefully control the seed of the pseudo-random number generator used in sampling the random momentum and in applying the Metropolis-Hastings accept-reject criterion. As a result of this experimental design, we may assess the causal effect of adjusting the convergence threshold. We summarize some critical aspects of our experimentation in \cref{tab:posteriors}. Code implementing these experiments may be found at \url{https://github.com/JamesBrofos/Thresholds-in-Hamiltonian-Monte-Carlo}.

\subsection{Banana-Shaped Distribution}\label{subsec:experiment-banana-shaped}

\begin{figure}[t!]
  \centering
  \begin{subfigure}[t]{0.32\textwidth}
    \includegraphics[width=\textwidth]{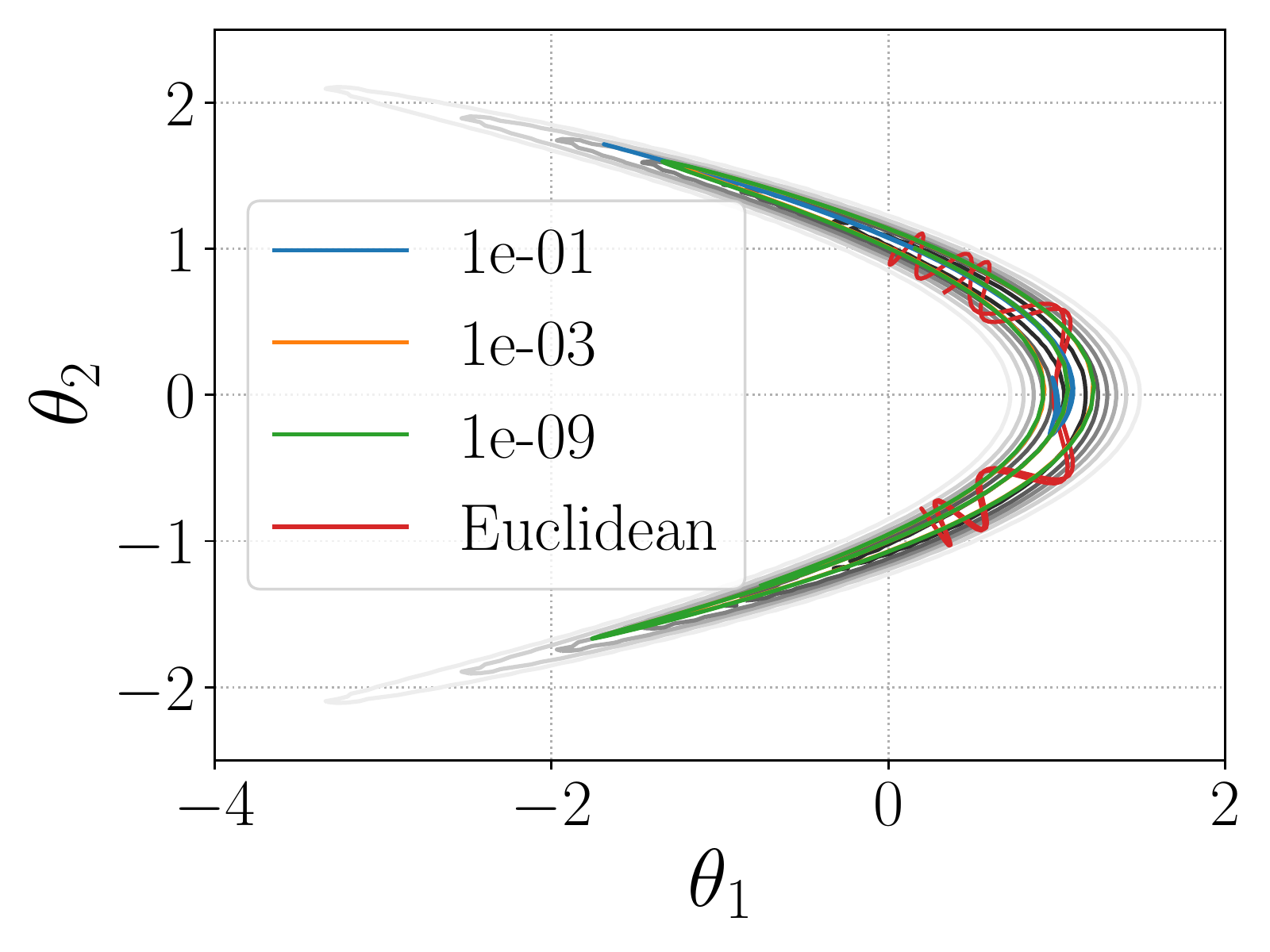}
    \caption{Generalized Leapfrog Trajectories}
    \label{subfig:banana-trajectory-threshold-precondition}
  \end{subfigure}
  ~
  \begin{subfigure}[t]{0.32\textwidth}
    \includegraphics[width=\textwidth]{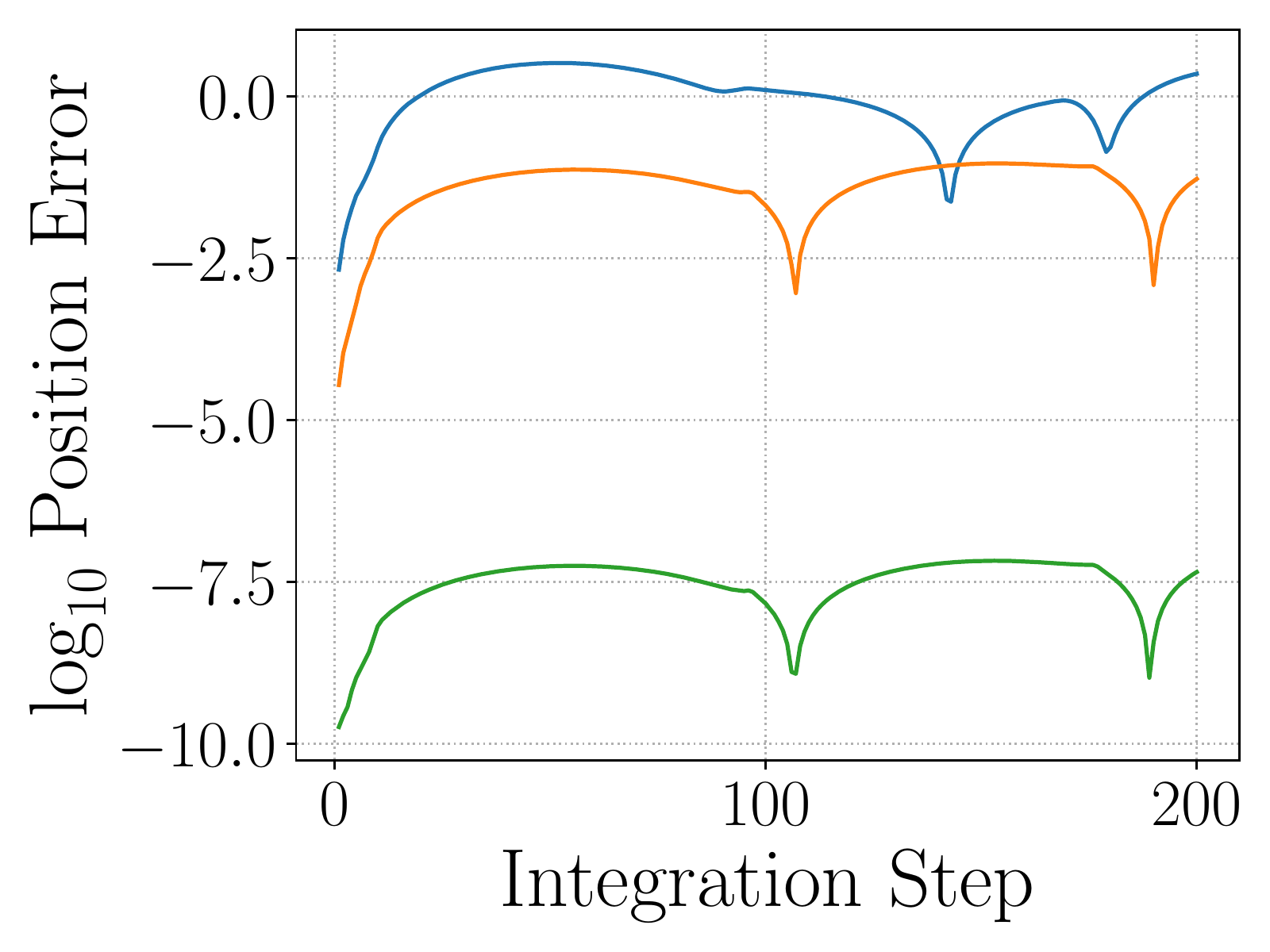}
    \caption{Error in the Position Variable}
    \label{subfig:banana-trajectory-position-deviation}
  \end{subfigure}
  ~
  \begin{subfigure}[t]{0.32\textwidth}
    \includegraphics[width=\textwidth]{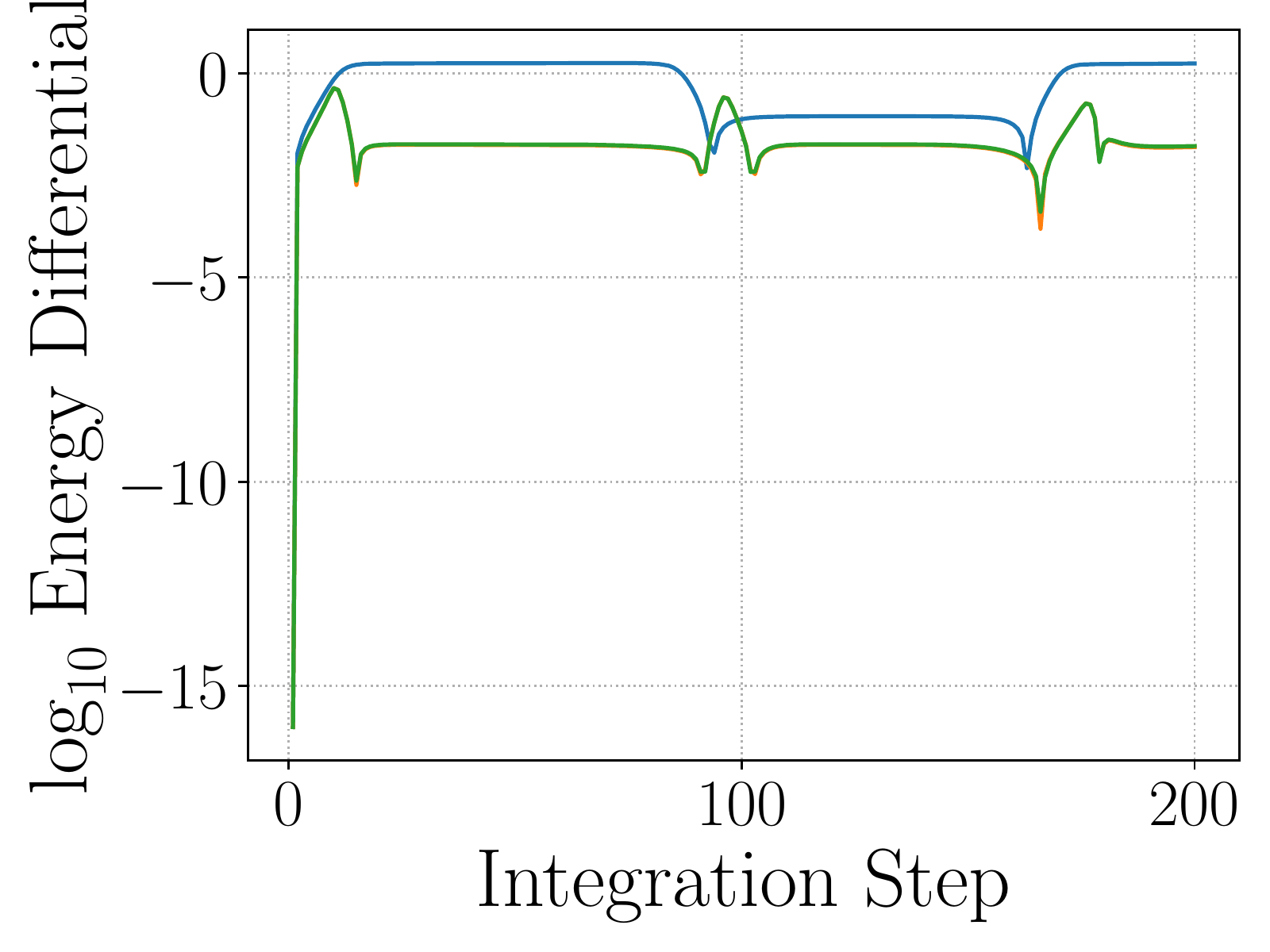}
    \caption{Conservation of Hamiltonian Energy}
    \label{subfig:banana-trajectory-energy-deviation}
  \end{subfigure}
  \caption{In \cref{subfig:banana-trajectory-threshold-precondition} we
    visualize the trajectories produced by the generalized leapfrog integrator
    with variable thresholds. A trajectory of the leapfrog integrator using a
    separable Hamiltonian is shown for comparison, which also reveals the
    preconditioning effect. Although the trajectories in \cref{subfig:banana-trajectory-threshold-precondition} are qualitatively similar, in
    \cref{subfig:banana-trajectory-position-deviation} we measure the
    number of decimal digits conserved in the position variable for differing
    thresholds relative to the generalized leapfrog integrator with threshold
    $1\times 10^{-10}$; this reveals the extent to which the trajectories are differentiated by the choice of threshold. The approximate conservation of the Hamiltonian is
    violated is visualized in \cref{subfig:banana-trajectory-energy-deviation}.}
  \label{fig:banana-trajectory}
\end{figure}

\begin{figure}[t!]
  \centering
  \begin{subfigure}[t]{0.32\textwidth}
    \includegraphics[width=\textwidth]{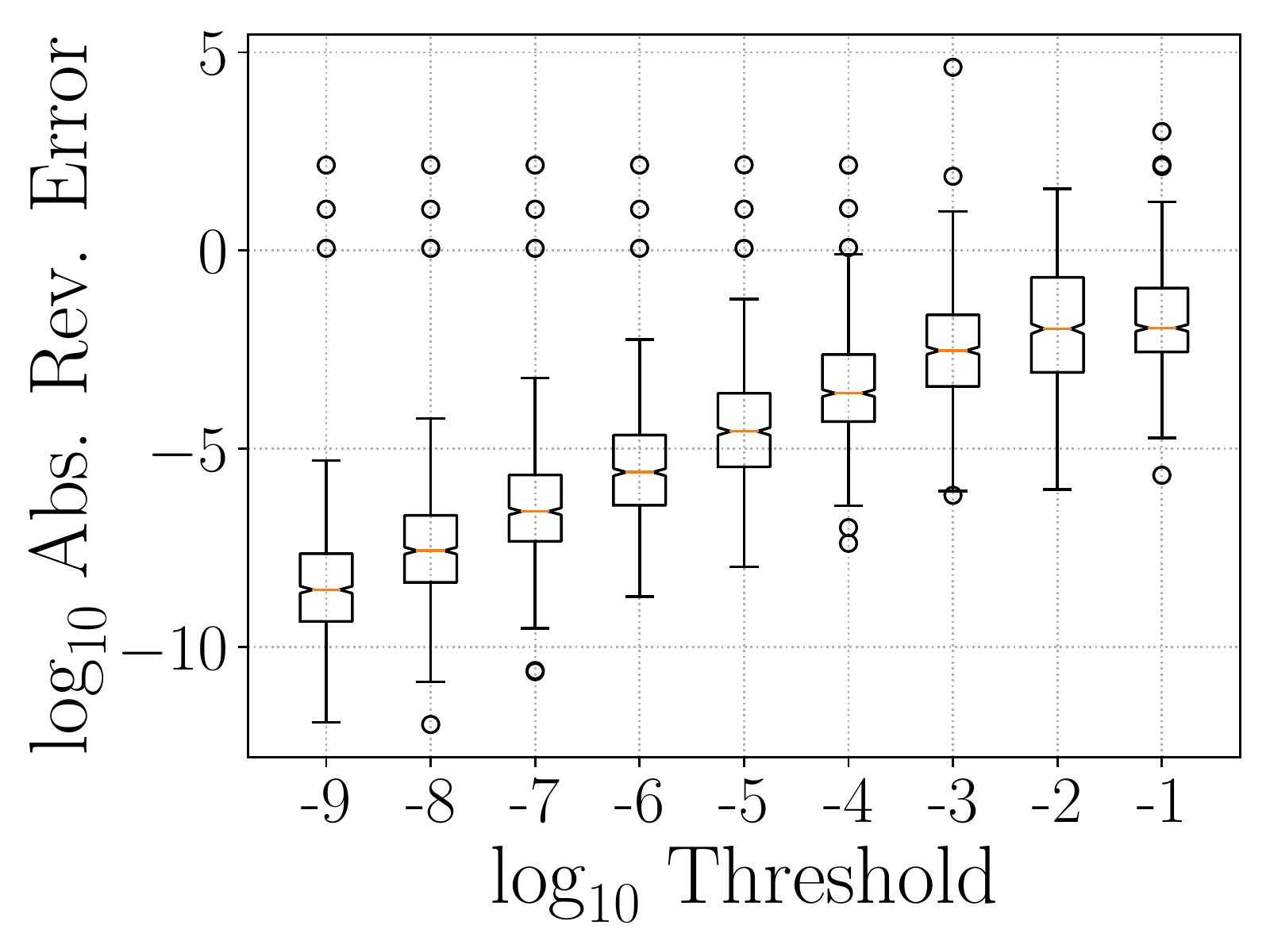}
    \caption{Error in Reversibility}
    \label{subfig:banana-reversibility}
  \end{subfigure}
  ~
  \begin{subfigure}[t]{0.32\textwidth}
    \includegraphics[width=\textwidth]{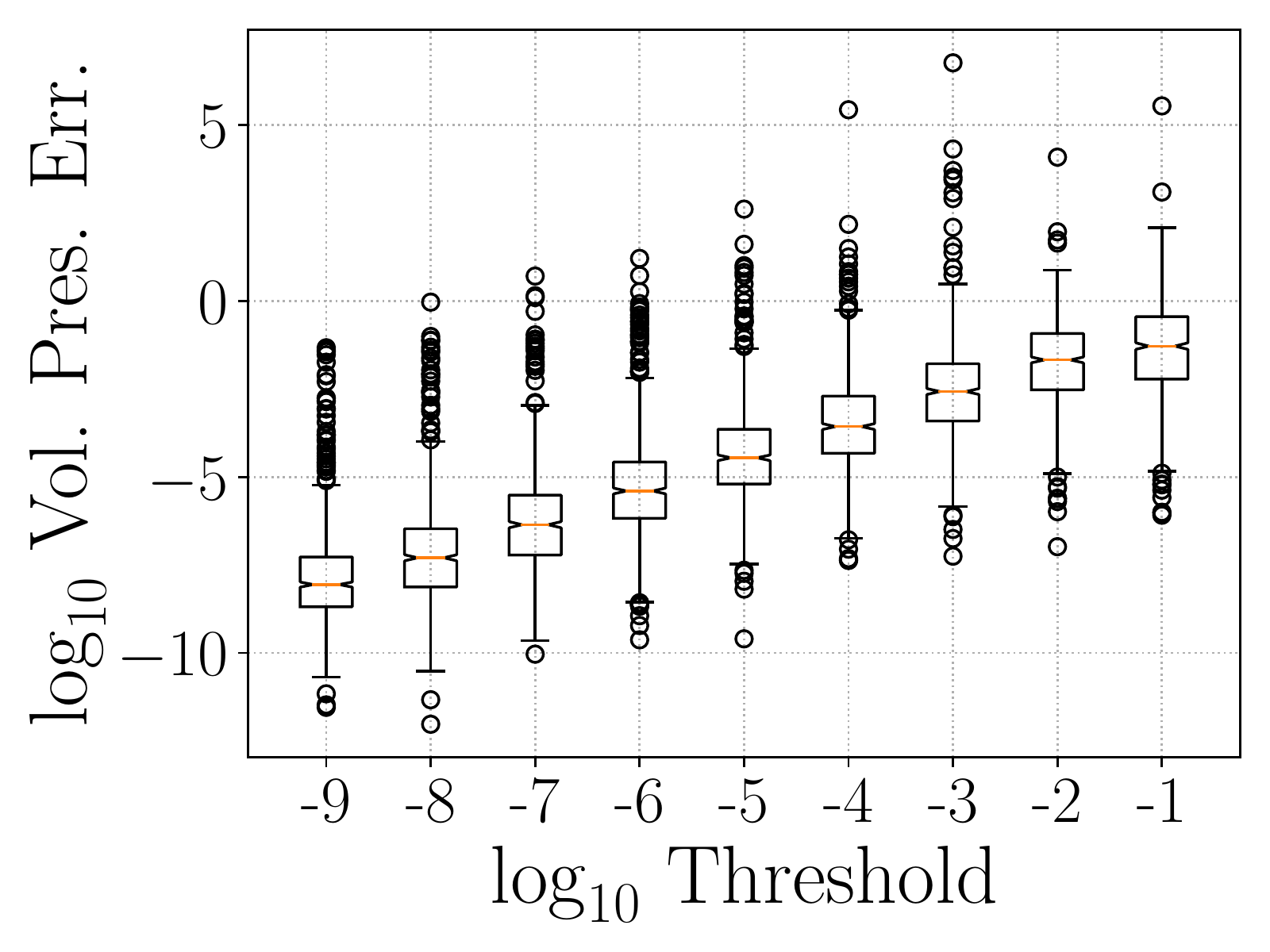}
    \caption{Error in Volume-Preservation}
    \label{subfig:banana-jacobian-determinant}
  \end{subfigure}
  ~
  \begin{subfigure}[t]{0.32\textwidth}
    \includegraphics[width=\textwidth]{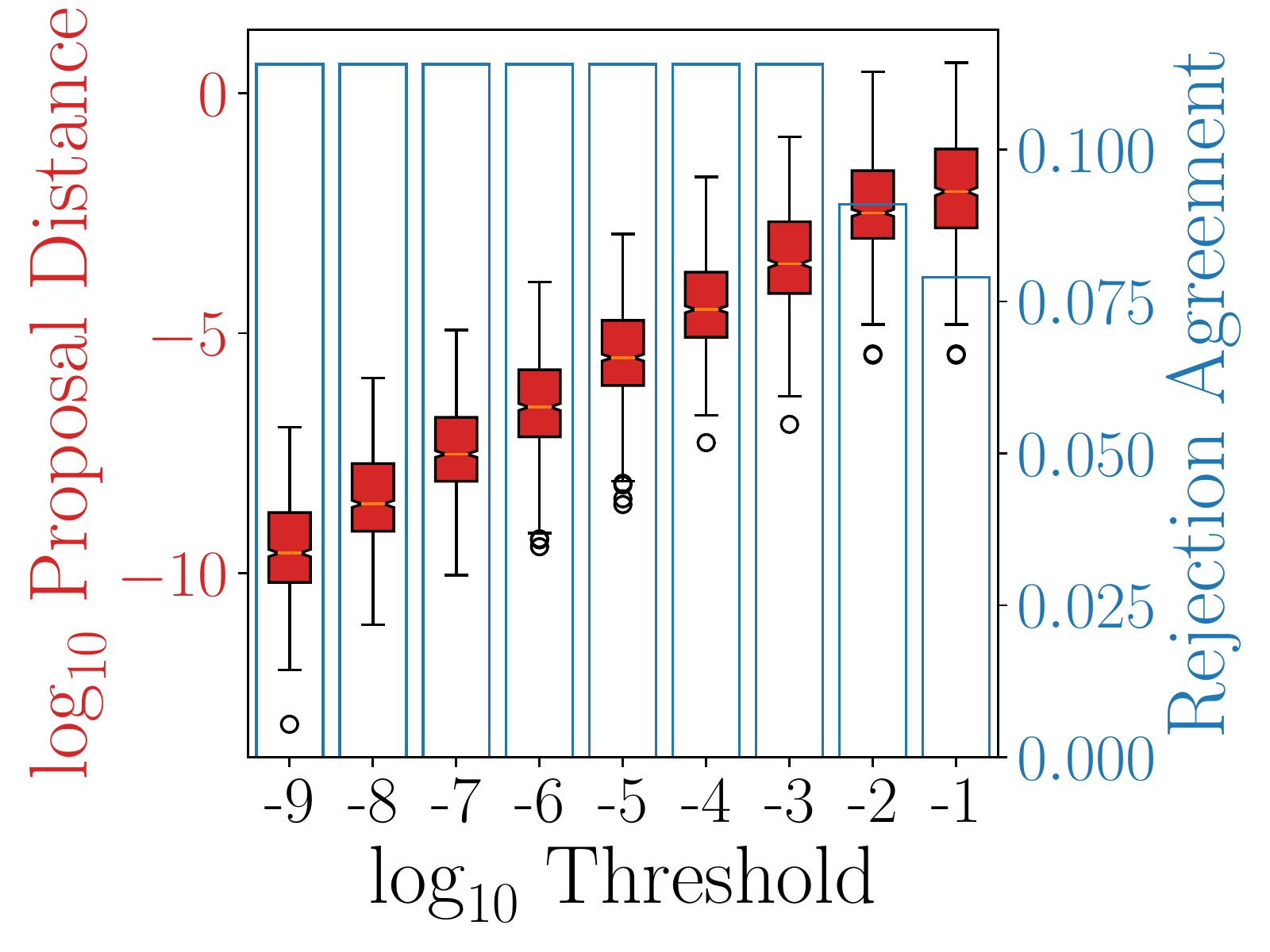}
    \caption{Difference in Transition Kernel}
    \label{subfig:banana-transition-difference}
  \end{subfigure}
  \caption{Visualization of the error in reversibility (see
    \cref{subfig:banana-reversibility}), error in
    volume-preservation (see
    \cref{subfig:banana-jacobian-determinant}), and the number of
    decimal digits of similarity in transition kernels (see
    \cref{subfig:banana-transition-difference}) for variable
    thresholds in the banana posterior distribution.}
\end{figure}

\begin{figure}[t!]
  \centering
  \begin{subfigure}[t]{0.49\textwidth}
    \includegraphics[width=\textwidth]{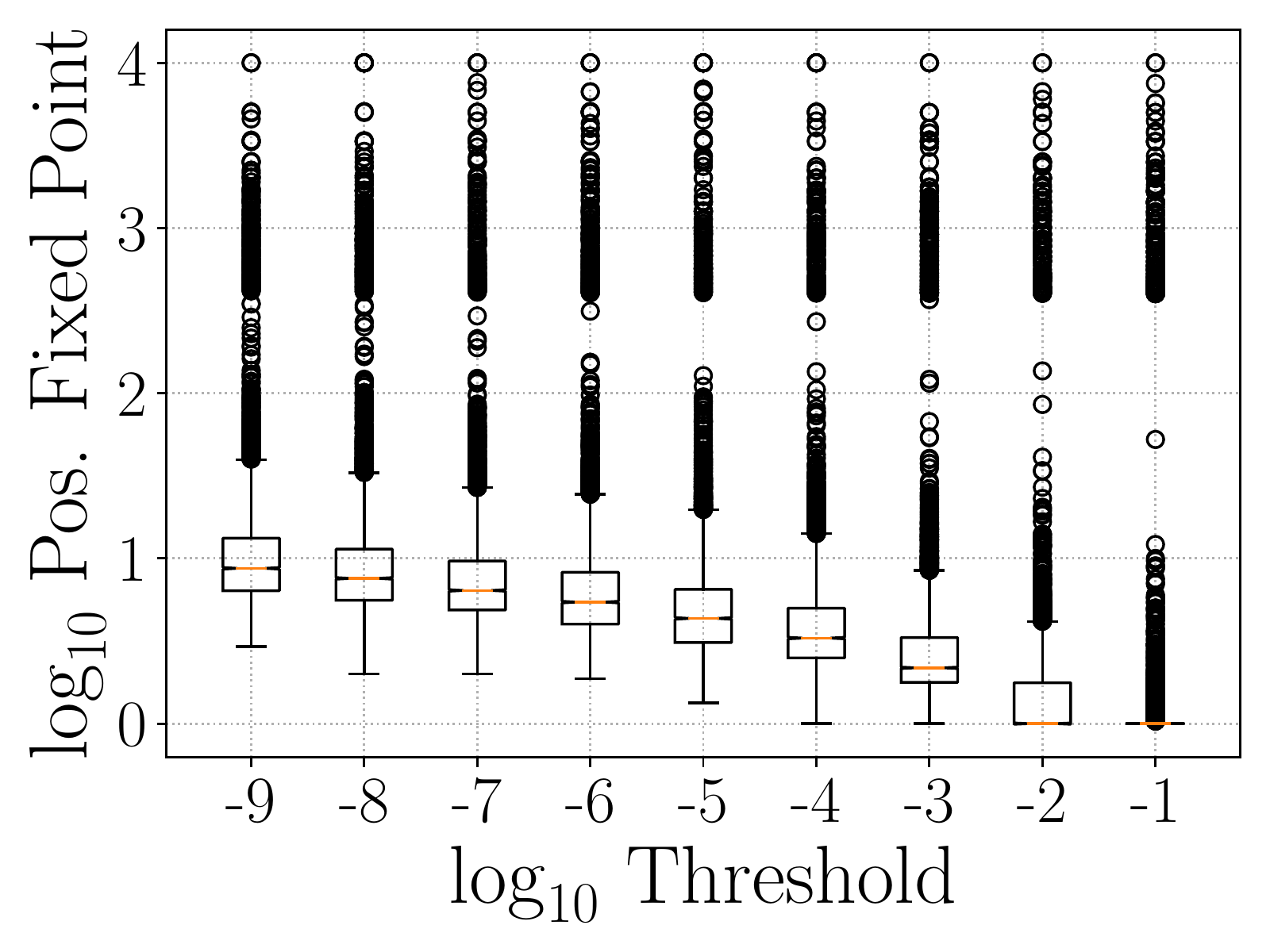}
    \caption{Number of fixed point iterations for position variable.}
    \label{subfig:banana-fixed-point-position}
  \end{subfigure}
  ~
  \begin{subfigure}[t]{0.49\textwidth}
    \includegraphics[width=\textwidth]{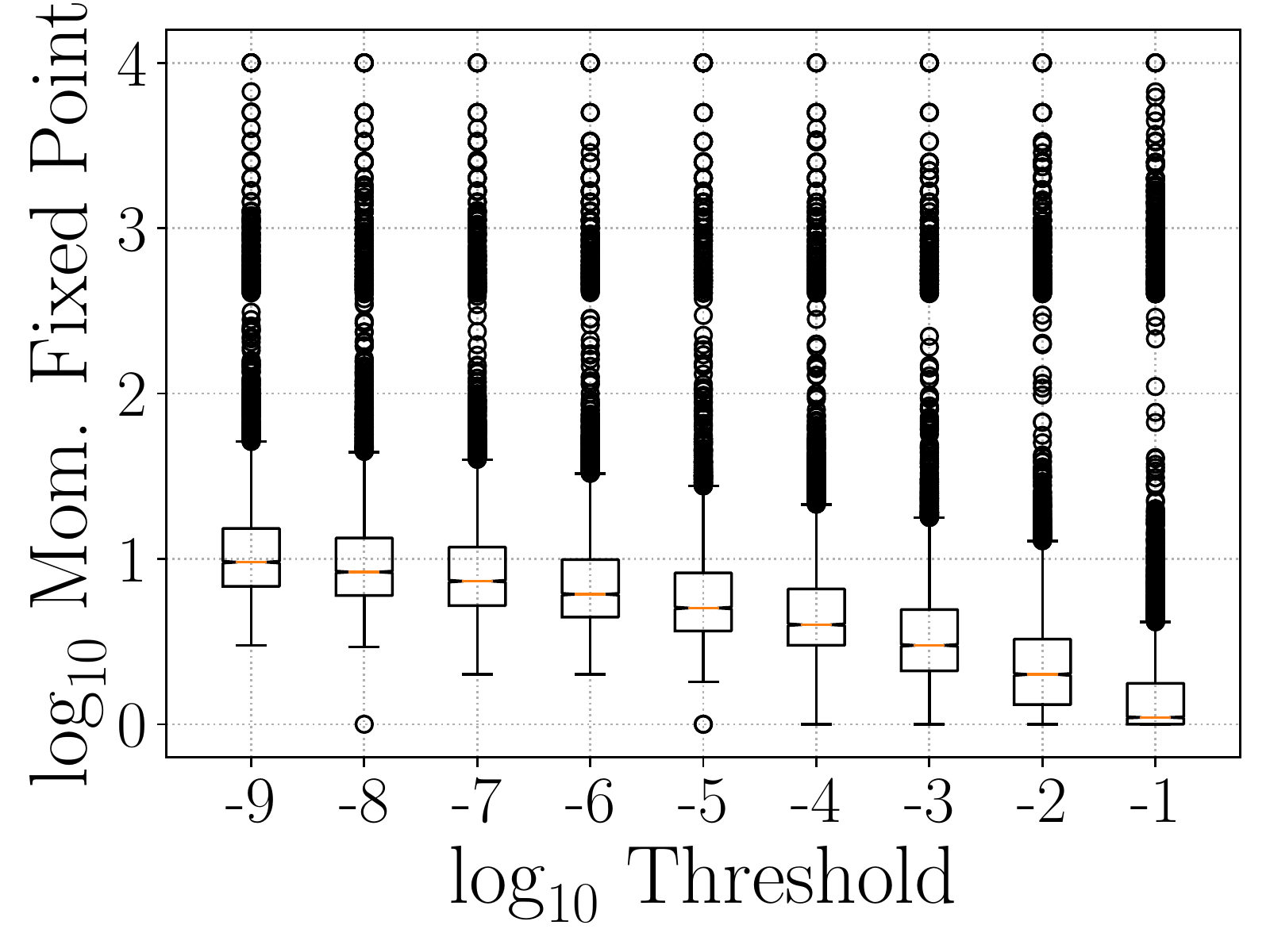}
    \caption{Number of fixed point iterations for momentum variable.}
    \label{subfig:banana-fixed-point-momentum}
  \end{subfigure}
  
  \begin{subfigure}[t]{\textwidth}
    \includegraphics[width=\textwidth]{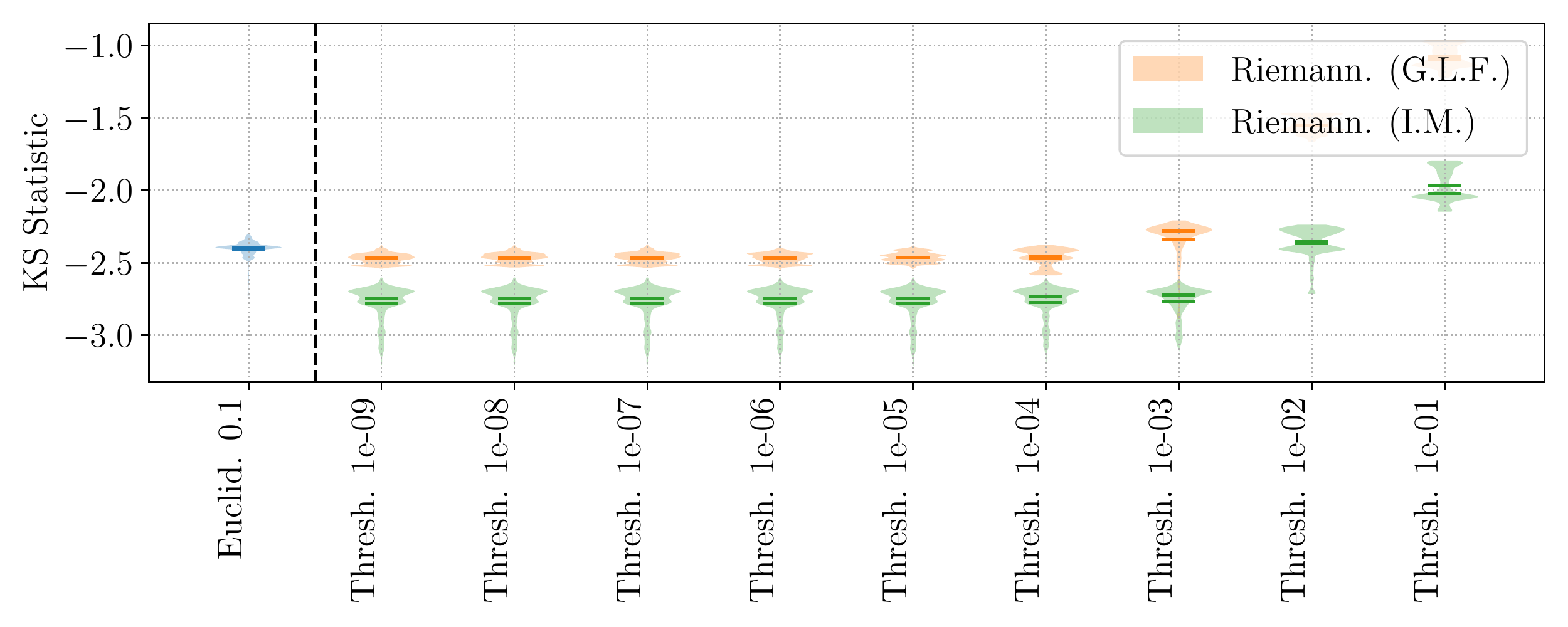}
    \caption{Ergodicity of RMHMC and HMC}
    \label{subfig:banana-ergodicity}
  \end{subfigure}
  
  \caption{Visualization of the computational effort required to sample with RMHMC from the banana posterior. We show the number of fixed point iterations required to compute the two implicit steps of the generalized leapfrog integrator as well as the distribution of Kolmogorov-Smirnov statistics over variable thresholds. The maximum Kolmogorov-Smirnov statistic of Euclidean HMC with a step-size of $0.1$ is shown as a dashed blue line.}
\end{figure}
\begin{figure}[t!]
  \begin{subfigure}[t]{0.3\textwidth}
    \centering
    \includegraphics[width=\textwidth]{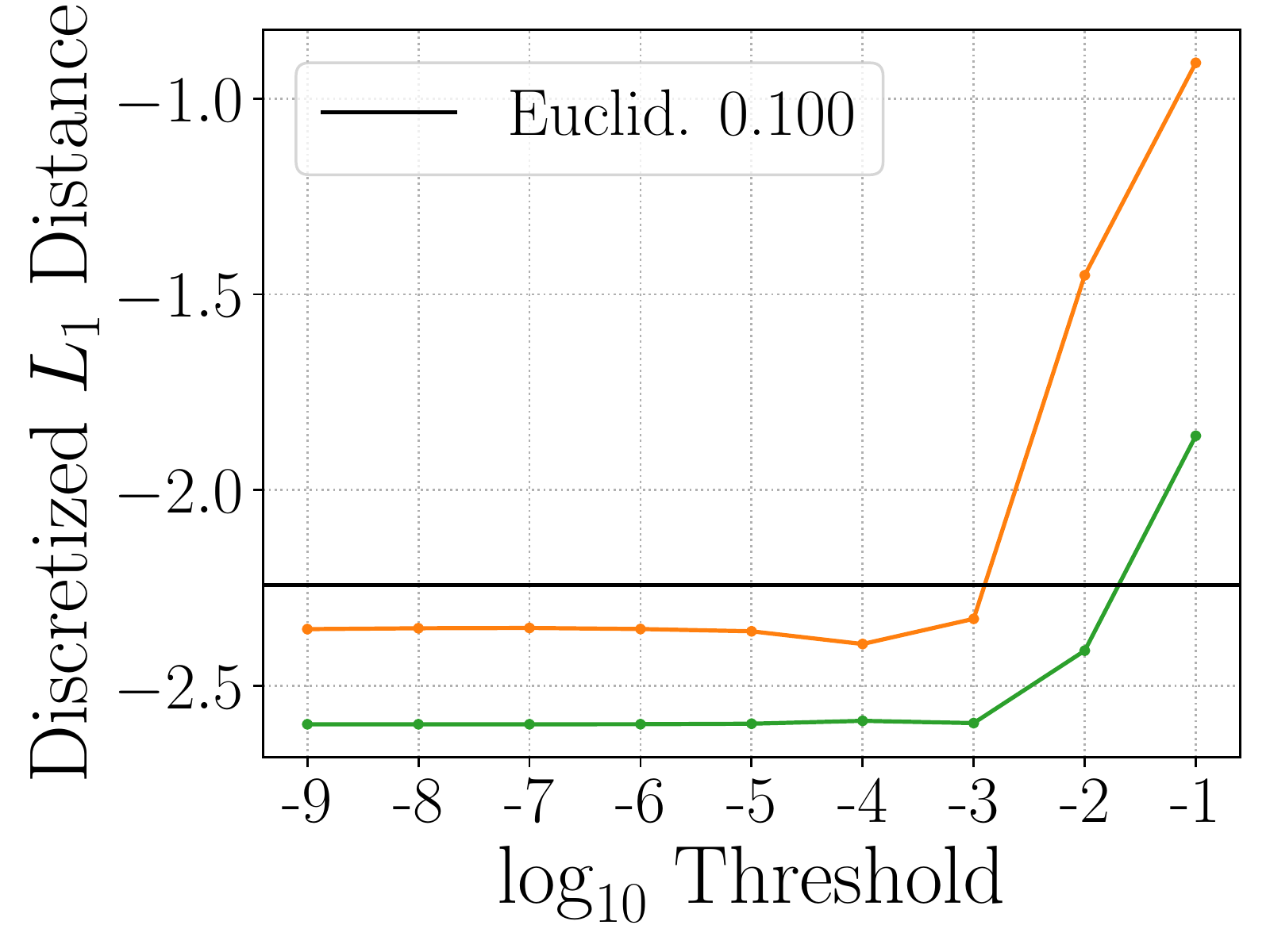}
    \caption{Biau Metric}
  \end{subfigure}
  ~
  \begin{subfigure}[t]{0.3\textwidth}
    \centering
    \includegraphics[width=\textwidth]{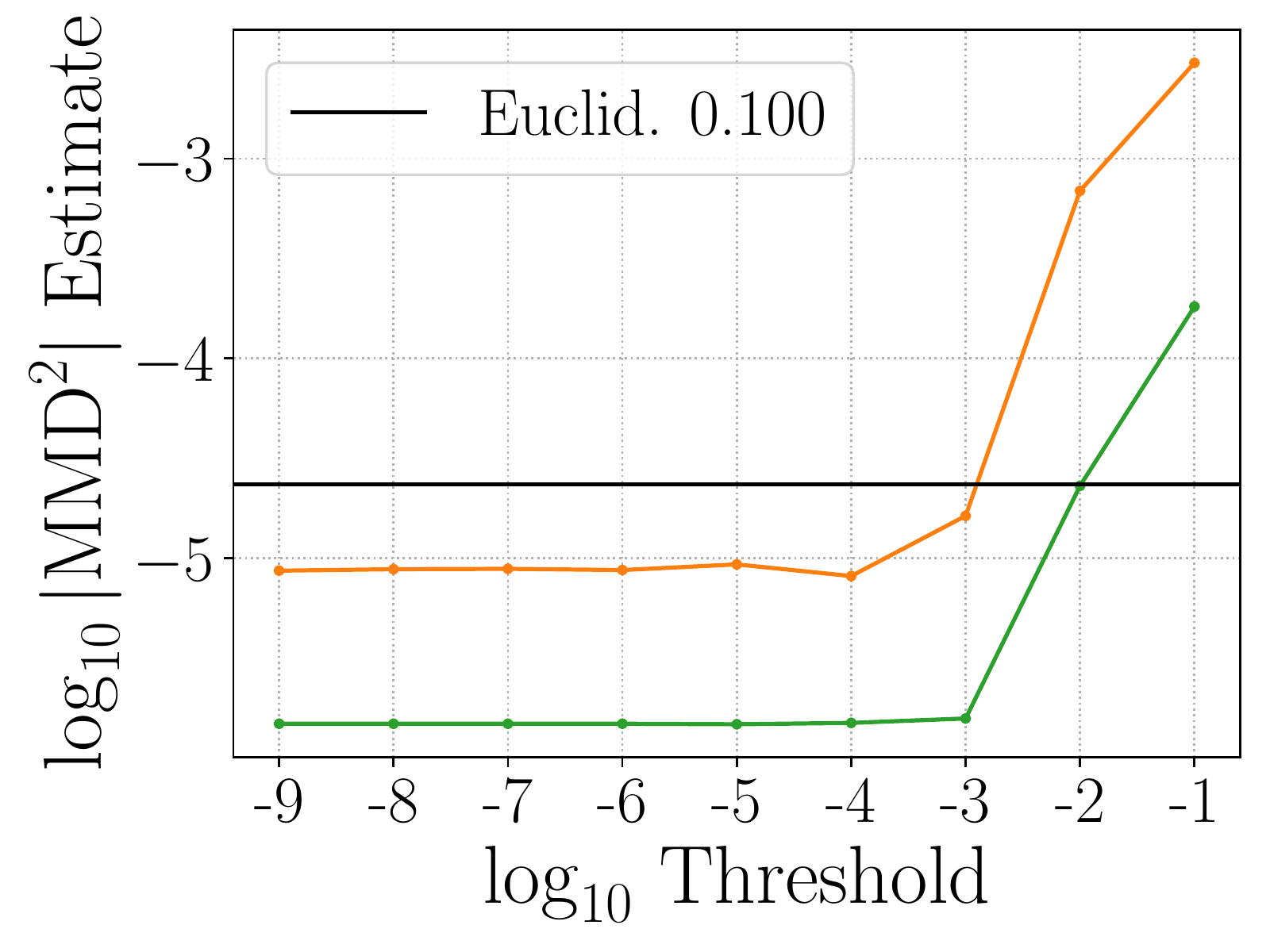}
    \caption{$\mathrm{MMD}_u^2$}
  \end{subfigure}
  ~
  \begin{subfigure}[t]{0.3\textwidth}
    \centering
    \includegraphics[width=\textwidth]{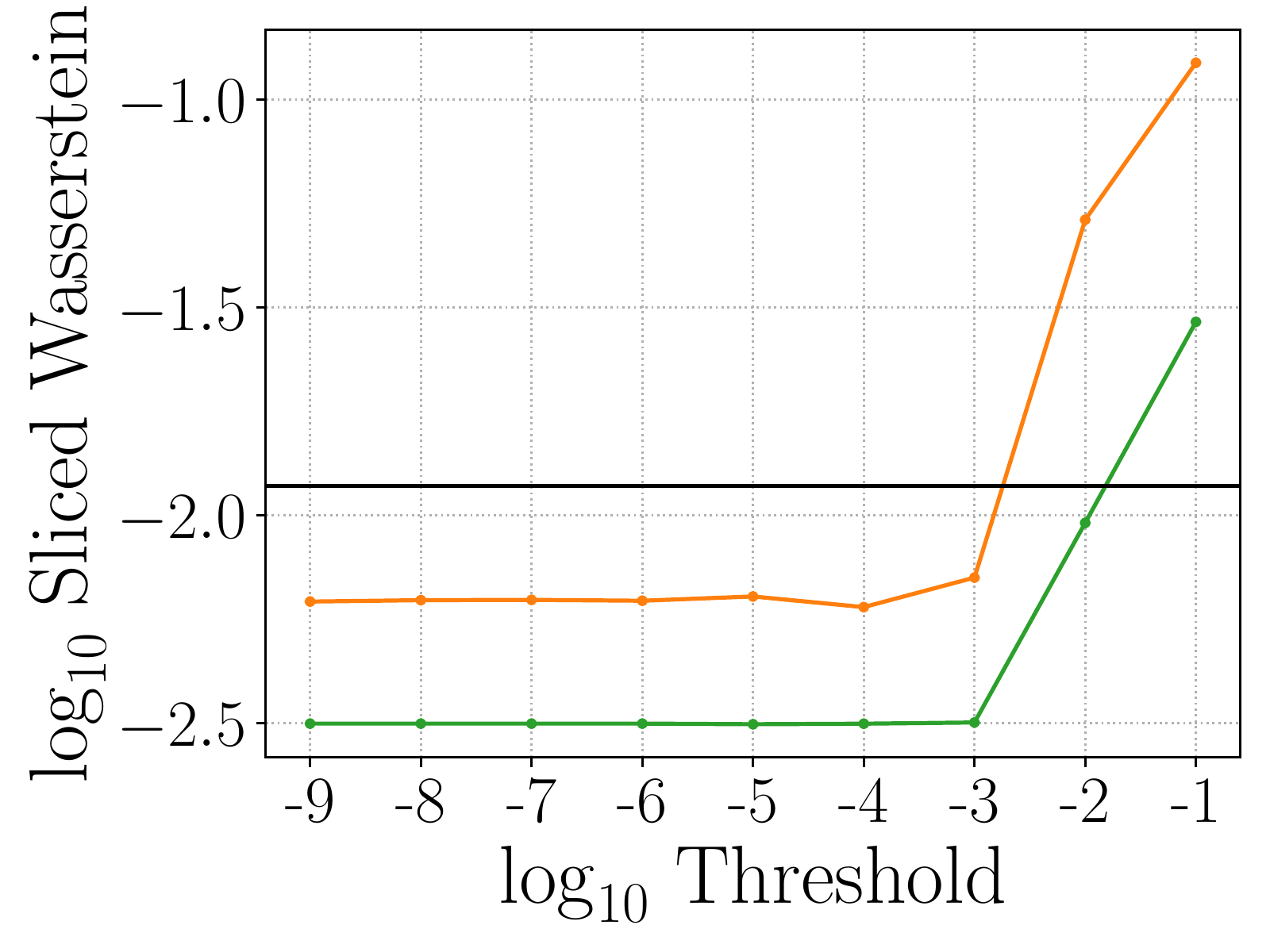}
    \caption{Sliced Wasserstein}
  \end{subfigure}
  \caption{Additional measures of ergodicity in the banana-shaped distribution. For thresholds smaller than $1\times10^{-2}$, the Riemannian methods produce a modest improvement in the ergodicity of the samples produced by the Markov chain. In computing the MMD metric, we employ a bandwidth of 1.727.}
  \label{fig:banana-extra-ergodicity}
\end{figure}

\begin{figure}[t!]
  \centering
  \includegraphics[width=\textwidth]{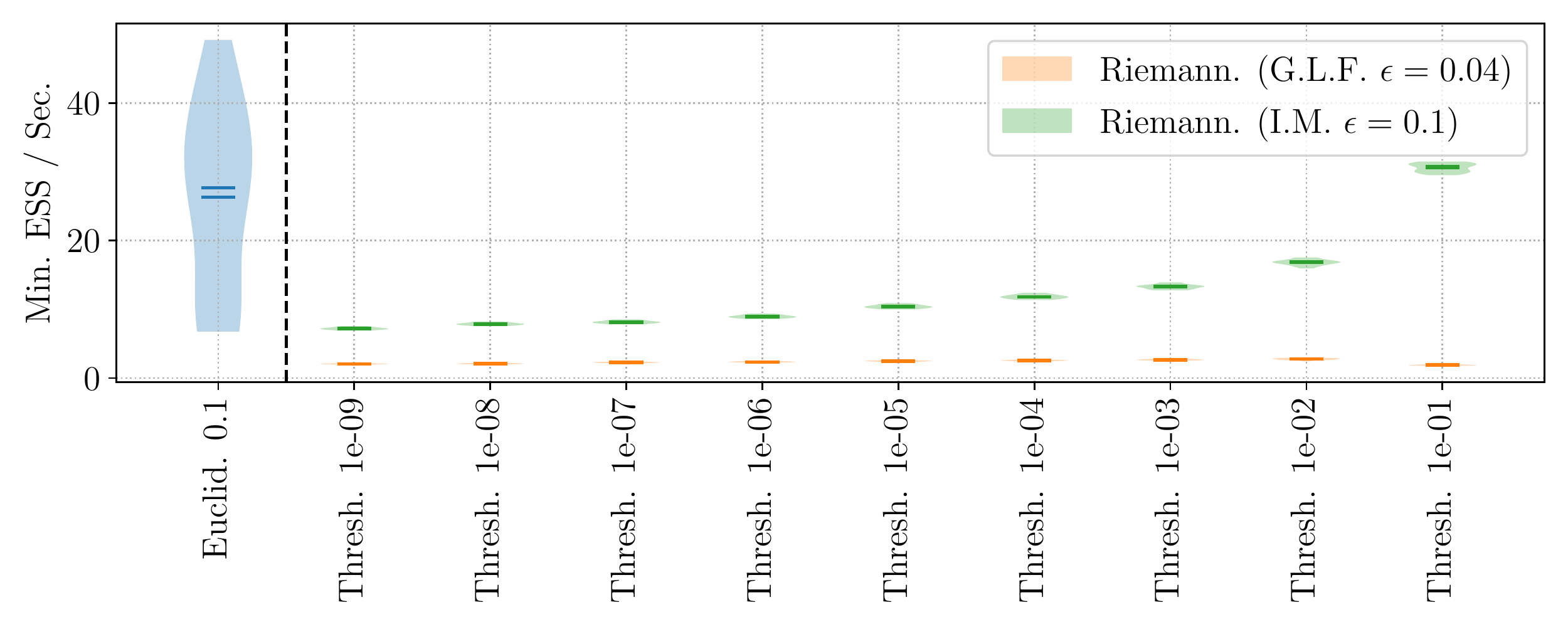}
  \caption{Visualization of the effective sample size per second of RMHMC compared to HMC on the banana-shaped distribution. Notice in particular the differentiation of the RMHMC with the generalized leapfrog integrator compared to the implicit midpoint method.}
\end{figure}
\begin{figure}[t!]
  \begin{subfigure}[t]{0.32\textwidth}
    \centering
    \includegraphics[width=\textwidth]{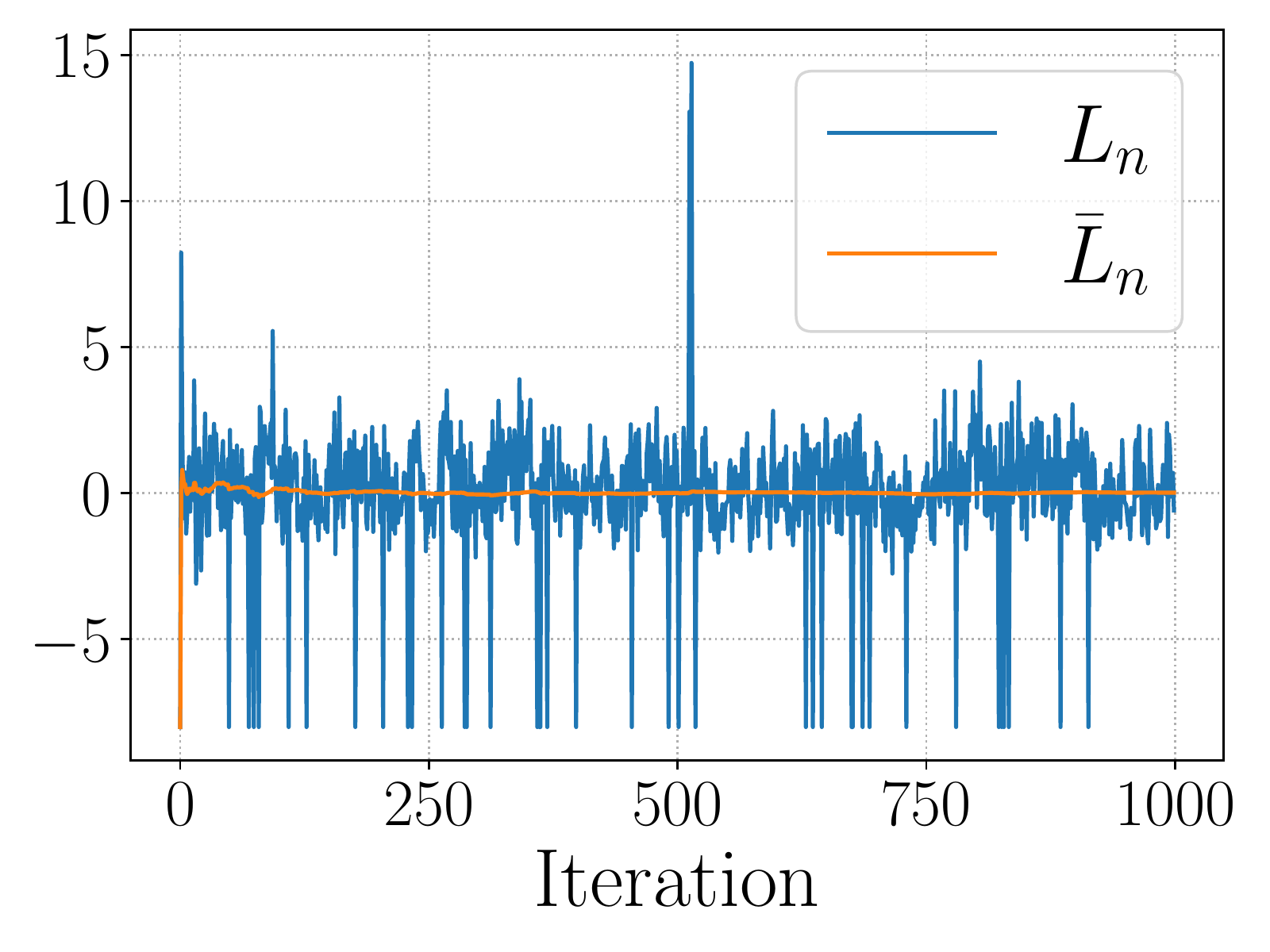}
    \caption{$L_n$ Sequence}
  \end{subfigure}
  ~
  \begin{subfigure}[t]{0.32\textwidth}
    \centering
    \includegraphics[width=\textwidth]{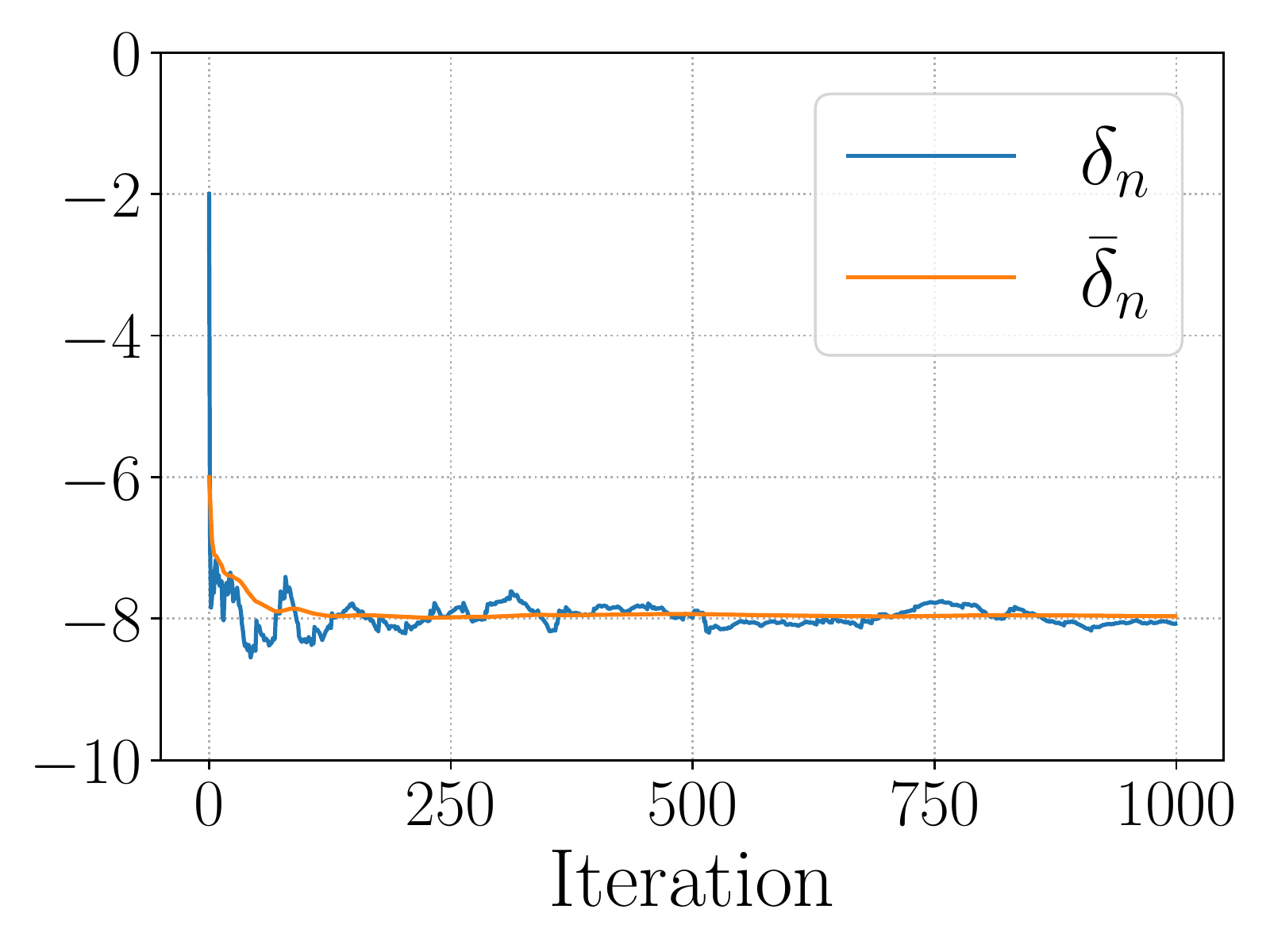}
    \caption{$\delta_n$ Sequence}
  \end{subfigure}
  ~
  \begin{subfigure}[t]{0.32\textwidth}
    \centering
    \includegraphics[width=\textwidth]{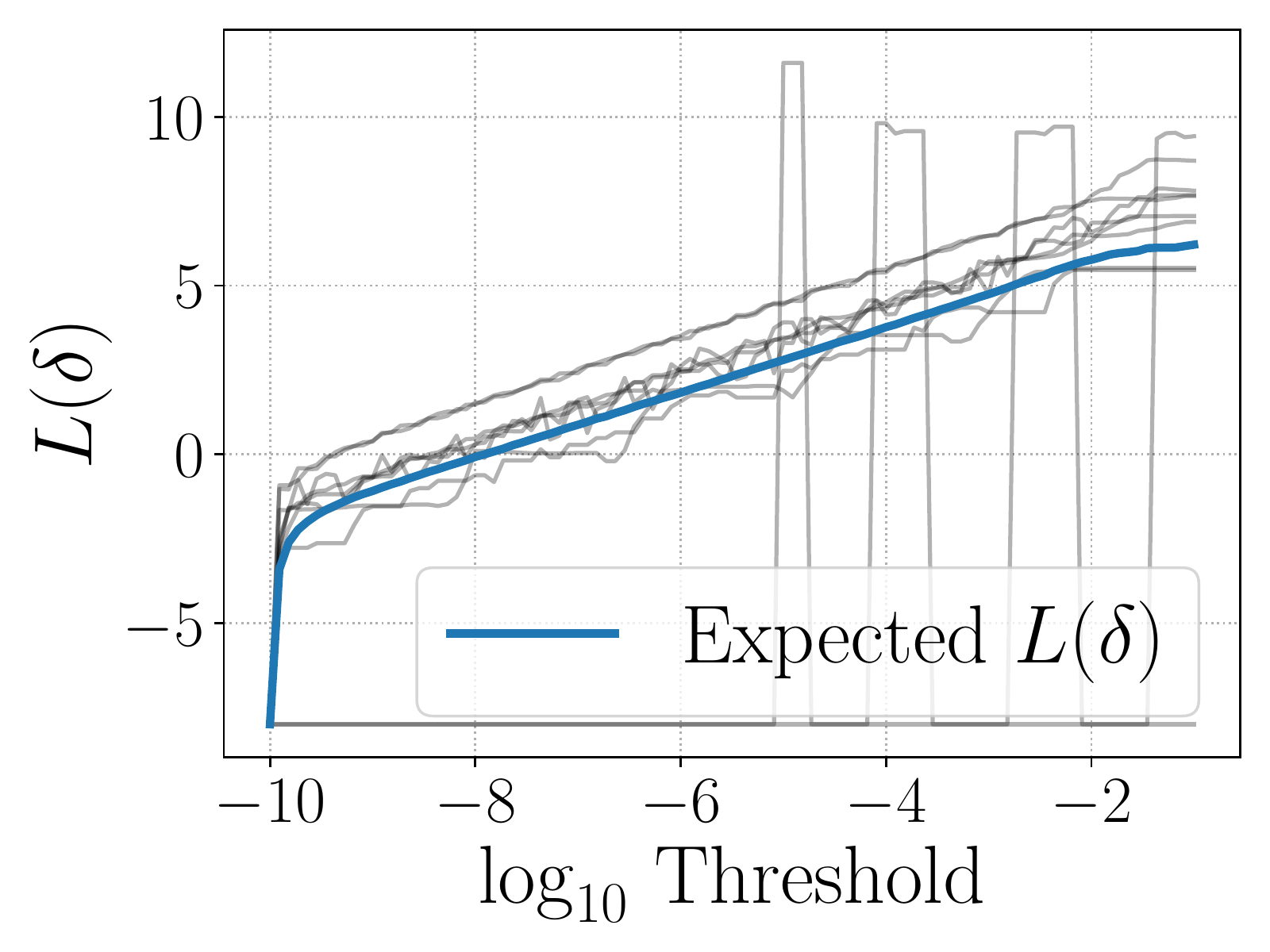}
    \caption{Monte Carlo $B(\delta)$}
    \label{subfig:banana-dual-averaging-function}
  \end{subfigure}
  \caption{The use of Ruppert averaging in the banana-shaped distribution to adaptively set the convergence threshold to achieve eight decimal digits of similarity compared to a transition kernel with a threshold of $1\times 10^{-10}$. We show a Monte Carlo approximation to $B(\delta)$, which appears smooth, monotonically increasing. The value of $\delta$ satisfying $B(\delta)=0$ is approximately $\delta = 1\times 10^{-8}$, which is approximately the value of produced by Ruppert averaging.}
  \label{fig:banana-dual-averaging}
\end{figure}

Consider the following generative model:
\begin{align}
  \theta_1,\theta_2 &\overset{\mathrm{i.i.d.}}{\sim} \mathrm{Normal}(0, \sigma_\theta^2) \\
  y_i \vert\theta_1,\theta_2 &\overset{\mathrm{i.i.d.}}{\sim} \mathrm{Normal}(\theta_1+\theta_2^2, \sigma_y^2) ~~\mathrm{for}~i=1,\ldots, 100.
\end{align}
As can be readily observed, the likelihood function of $(\theta_1,\theta_2)$ given $(y_1,\ldots,y_{100})$ is non-identifiable since for any real number $c \geq \theta_1$, there are two values of $\theta_2$ for which $c = \theta_1+\theta_2^2$; namely, $\theta_2 = \pm \sqrt{c - \theta_1}$. The purpose of of the banana-shaped distribution, suggested by Bornn and Cornebise in their rejoinder to \citet{rmhmc}, is therefore to give a representative example of a non-identifiable likelihood and the eponymous banana-shaped posterior that it produces. In our experiments we set $\sigma_y=2$ and $\sigma_\theta = 2$. When sampling observations $(y_1,\ldots, y_{100})$ we set $\theta_1 = 1/2$ and $\theta_2 = \sqrt{1 - 1/4}$. The sum of the Fisher information and the negative Hessian of the log-prior associated to the banana-shaped distribution is given by,
\begin{align}
  \mathbf{G}(\theta_1,\theta_2) = \begin{pmatrix}
    \frac{1}{\sigma_\theta^2}  + \frac{n}{\sigma_y^2} & \frac{2n\theta_2}{\sigma_y^2} \\
    \frac{2n\theta_2}{\sigma_y^2} & \frac{1}{\sigma_\theta^2} + \frac{4n\theta_2^2}{\sigma_y^2}
  \end{pmatrix}.
\end{align}

Due to its small dimensionality, we use the banana-shaped distribution as an opportunity to visualize the effect of varying threshold on the trajectories computed by the generalized leapfrog integrator. The principle advantage of the RMHMC algorithm lies in its ability to produce proposals that are adapted to directions of greatest variation locally. We would therefore like to assess if this property is preserved even for varying values of the threshold. We visualize a sample trajectory in \cref{subfig:banana-trajectory-threshold-precondition} where we observe that trajectories are qualitatively similar regardless of the threshold and extend along dimensions of greatest local variation; for both the Euclidean and the Riemannian trajectories, we consider integrating for $500$ steps using a step-size of 0.01. The behavior of the RMHMC proposal stands in contrast to HMC, which produces significant oscillations in directions of relatively little variation. This suggests that the preconditioning effect of RMHMC algorithm is not terribly sensitive to the threshold. We can obtain a more quantitative comparison of the trajectories by measuring the difference in trajectories at a given step for varying values of the thresold, which are shown in \cref{subfig:banana-trajectory-position-deviation}; as expected, the smaller thresholds exhibit greater fidelity toward the baseline but accumulating errors over the course of numerical integration produces trajectories that slowly diverge. We also evaluate conservation of the Hamiltonian over the course of the trajectory in \cref{subfig:banana-trajectory-threshold-precondition}. We observe that the largest threshold produces the largest variations in the Hamiltonian energy. The fact that the exact generalized leapfrog integrator is second-order accurate depends on its reversibility, since first-order accurate integrators that are reversible are automatically second-order accurate \citep{hairer-geometric}. As reversibility becomes increasingly violated with the larger thresholds, one expects the quality of the computed trajectory to diminish. Therefore, the fact the energy conservation becomes increasingly bad with larger thresholds is not entirely unexpected.

We now turn our discussion to inference in this statistical model. We consider Euclidean HMC with a step-size of $0.1$ and $20$ integration steps and with a step-size of 0.003 and $333$ integration steps. For RMHMC with the generalized leapfrog integrator, we consider a step-size of $0.04$ and $20$ integration steps. As described by Bornn and Cornebise, RMHMC requires a small integration step-size for the banana-shaped distribution because of divergences in the fixed point iterations for the momentum variable in \cref{eq:generalized-leapfrog-momentum-fixed-point-i,eq:generalized-leapfrog-momentum-fixed-point-ii}. The banana-shaped distribution a non-trivial amount of probability mass in long, thin ``tails'' that extend in the $\theta_1$-variable and which are symmetric for negations of the $\theta_2$-variable, whose exploration is inhibited by a small step-size. We also implement RMHMC using the implicit midpoint integrator with a step-size of $0.1$ and $20$ integration steps, which does not suffer the same divergences as the generalized leapfrog integrator due to its superior stability. We also observe the the implicit midpoint integrator has an acceptance probability of ninety-five percent, even greater than the generalized leapfrog integrator with a step-size less than half that used by the implicit midpoint method. The implicit midpoint method produces RMHMC with the best ergodicity properties, dominating both Eulcidean HMC and RMHMC with the implicit midpoint integrator when 1,000,000 samples are drawn. In terms of ergodicity, a threshold of $1\times 10^{-3}$ appears to be sufficient for inference in the banana-shaped distribution. However, the computational expediency of Euclidean HMC causes it to enjoy the best ESS per second, which must be balanced against the superior ergodicity of the the Riemannian variants. Notably, however, a threshold of $1\times 10^{-3}$ is by no means optimal in terms of producing the best reversibility and volume preservation among the implicitly defined integrators, indicating that computational benefits may be obtained via principled selection of this convergence parameter.

We also consider the use of the Ruppert averaging procedure in the banana-shaped posterior in order to identify a threshold that has, on average, eight ($\kappa=8$) decimal digits of similarity with a numerical integrator whose convergence threshold is $1\times 10^{-10}$. Results are shown in \cref{fig:banana-dual-averaging}. We see that convergence of $\bar{L}_n$ to zero is rapid; the sequence $(\bar{\delta}_1,\bar{\delta}_2,\ldots)$ converges by, approximately, iteration six-hundred. In \cref{subfig:banana-dual-averaging-function} we show a Monte Carlo approximation to $B(\delta)$ for the banana-shaped distribution and, in addition, ten random samples of $L(\delta)$ for randomly selected values of the position and momentum. One observes that individual samples are not monotonically increasing, nor do they appear to be particularly smooth. However, the average function, shown in blue, does appear monotonically increasing and smooth. These observations will be replicated in our other experiments.

\subsection{Bayesian Logistic Regression}

\begin{figure}[t!]
  \centering
  \begin{subfigure}[t]{\textwidth}
  \includegraphics[width=\textwidth]{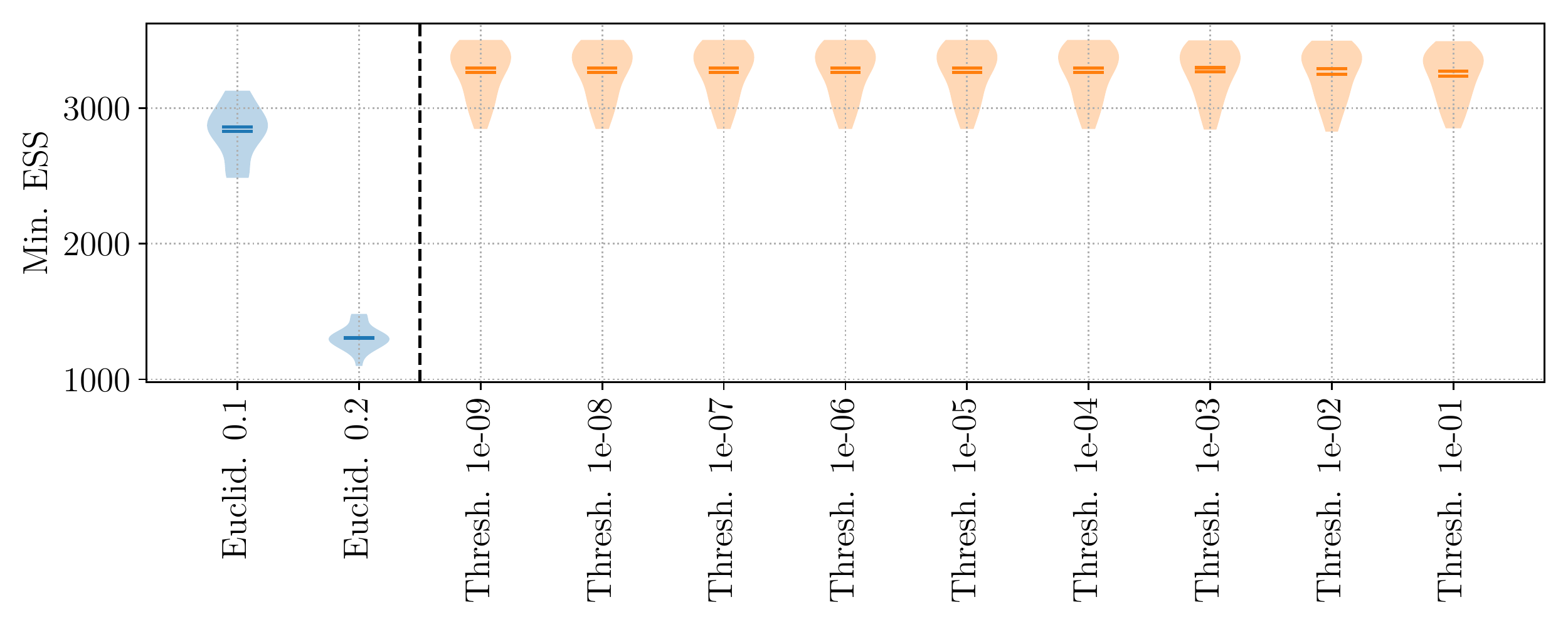}
  \label{subfig:logistic-regression-ess}
  \end{subfigure}
  
  \begin{subfigure}[t]{\textwidth}
  \includegraphics[width=\textwidth]{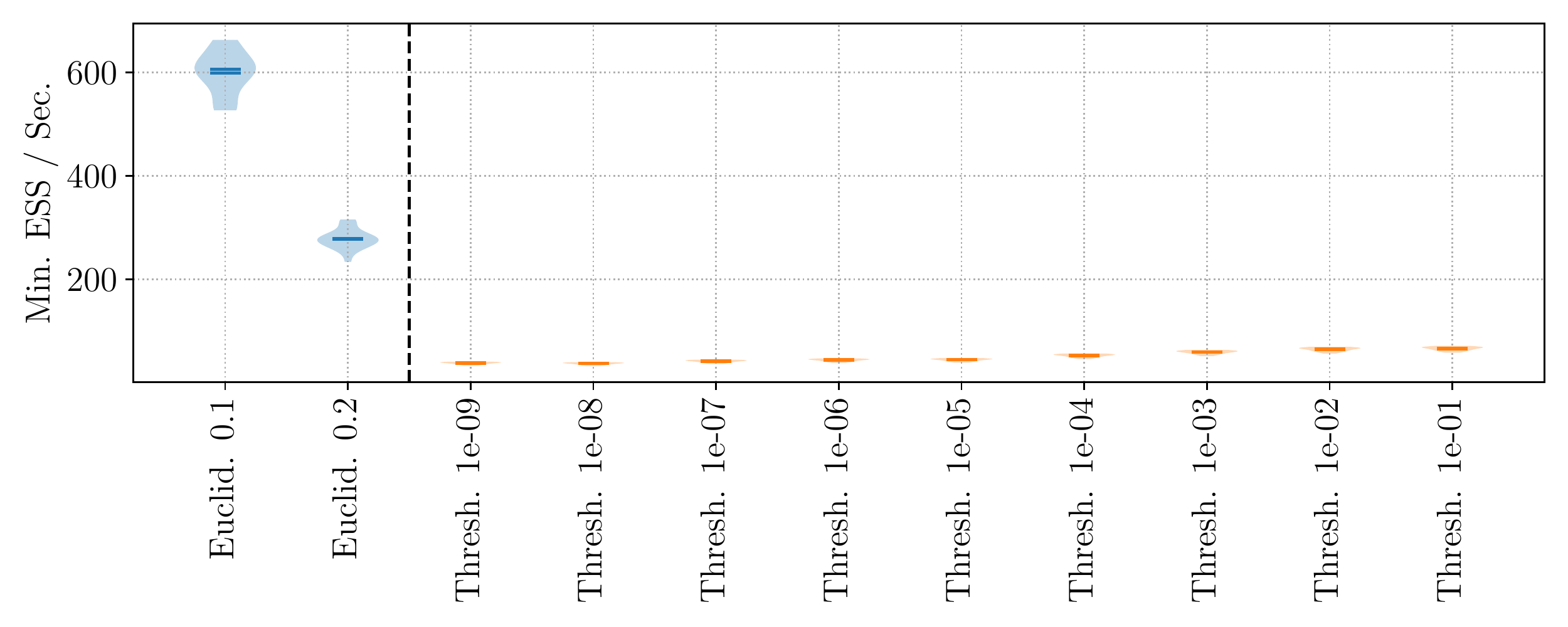}
  \label{subfig:logistic-regression-ess-per-second}
  \end{subfigure}
  \caption{Visualization of the effective sample sizes (ESS) of the variables in the logistic regression posterior. Distributions over ESS are computed by splitting a Markov chain of length 100,000 into twenty contiguous sequences of length 5,000. In \cref{subfig:logistic-regression-ess} we show the effective sample size and the time-normalized ESS per second is given in \cref{subfig:logistic-regression-ess-per-second}.}
  \label{fig:logistic-regression-ess}
\end{figure}

\begin{figure}[t!]
  \centering
  \begin{subfigure}[t]{0.32\textwidth}
    \includegraphics[width=\textwidth]{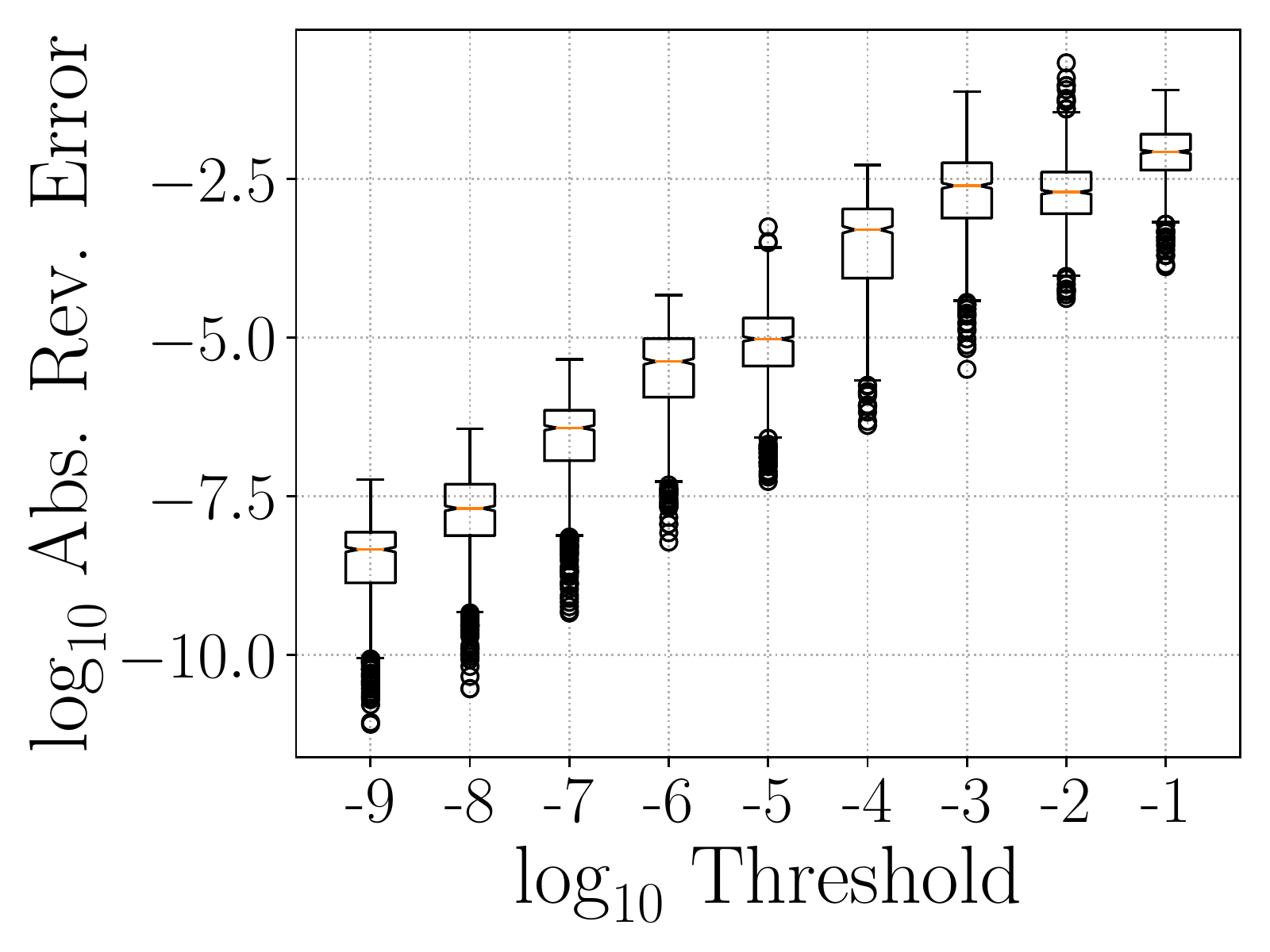}
    \caption{Error in Reversibility}
    \label{subfig:logistic-regression-reversibility}
  \end{subfigure}
  ~
  \begin{subfigure}[t]{0.32\textwidth}
    \includegraphics[width=\textwidth]{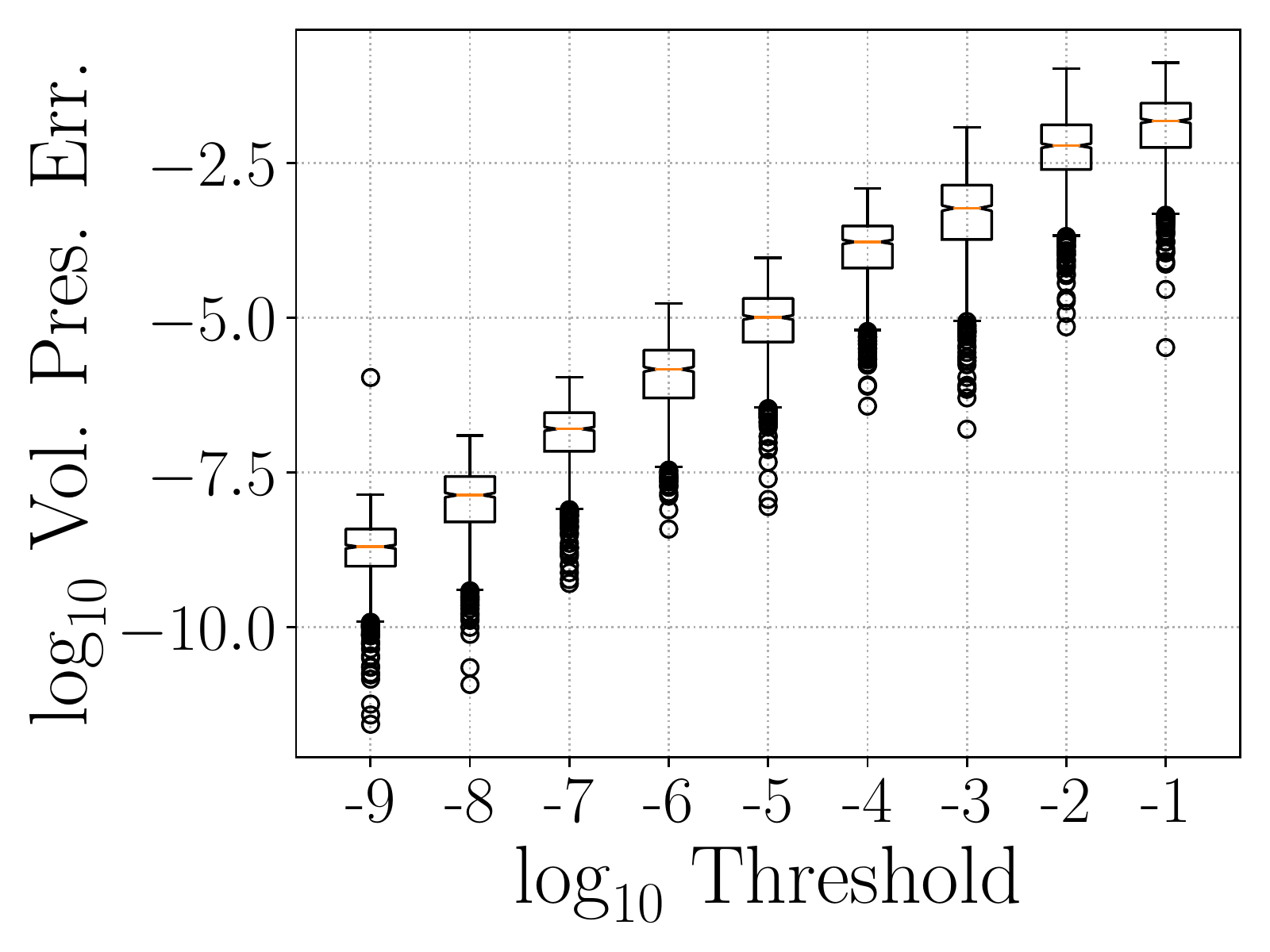}
    \caption{Error in Volume-Preservation}
    \label{subfig:logistic-regression-jacobian-determinant}
  \end{subfigure}
  ~
  \begin{subfigure}[t]{0.32\textwidth}
    \includegraphics[width=\textwidth]{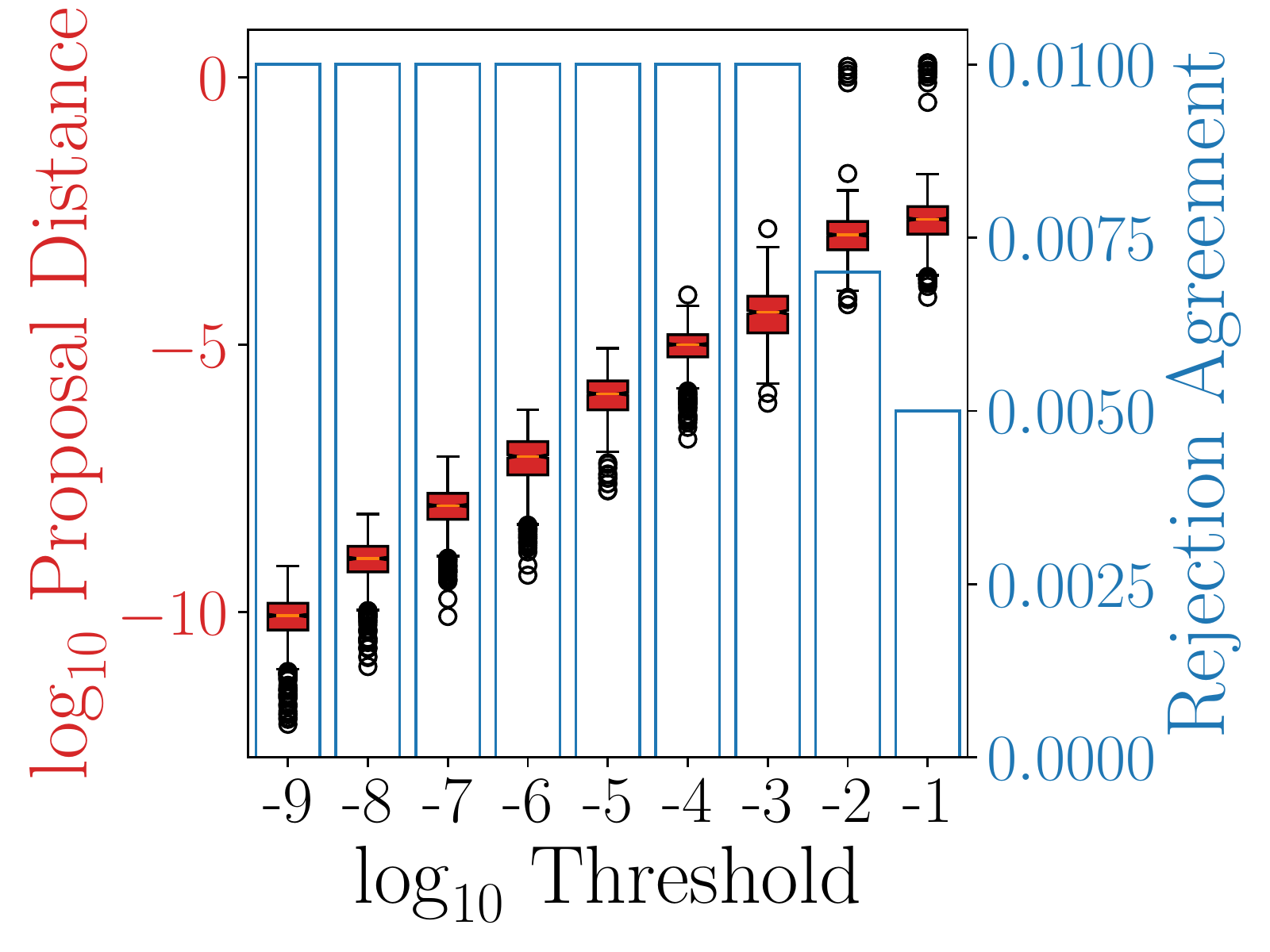}
    \caption{Difference in Transition Kernel}
    \label{subfig:logistic-regression-transition-difference}
  \end{subfigure}
  \caption{Visualization of the error in reversibility (see \cref{subfig:logistic-regression-reversibility}), error in volume-preservation (see \cref{subfig:logistic-regression-jacobian-determinant}), and the number of decimal digits of similarity in transition kernels (see \cref{subfig:logistic-regression-transition-difference}) for variable thresholds in the logistic posterior distribution.}
  \label{fig:logistic-regression-metrics}
\end{figure}

We consider a hierarchical Bayesian logistic regression defined by the following generative model
\begin{align}
  \alpha &\sim \mathrm{Gamma}(k, \theta) \\
  \beta_i \vert\alpha &\overset{\mathrm{i.i.d.}}{\sim} \mathrm{Normal}(0, \alpha^{-1}) ~~\mathrm{for}~i=1,\ldots, p \\
  y_i \vert \beta, \mathbf{x}_i &\overset{\mathrm{i.i.d.}}{\sim} \mathrm{Bernoulli}(\sigma(\mathbf{x}_i^\top\beta)) ~~\mathrm{for}~i=1,\ldots, n,
\end{align}
where $\sigma :\R\to (0, 1)$ is the sigmoid function. In our experiments we analyze the {\tt heart} dataset from \citet{rmhmc}. This dataset consists of $p=14$ regression coefficients and $n=270$ observations; including the latent prior precision, this produces a 15-dimensional posterior distribution. In our experiments, we set $k=1$ and $\theta=2$. We employ a Metropolis-within-Gibbs sampling strategy whereby we alternate between sampling $\beta\vert ((\mathbf{x}_1,y_1),\ldots,(\mathbf{x}_n, y_n)), \alpha$ using RMHMC and sampling $\alpha\vert\beta, ((\mathbf{x}_1,y_1),\ldots,(\mathbf{x}_n, y_n))$ analytically since,
\begin{align}
    \alpha\vert\beta,
((\mathbf{x}_1,y_1),\ldots,(\mathbf{x}_n, y_n)) \sim\mathrm{Gamma}\paren{k+\frac{p}{2}, \frac{1}{\frac{\beta^\top\beta}{2} + \frac{1}{\theta}}}.
\end{align}
The Fisher
information metric for sampling the former of these distributions depends on the value of $\beta$, thereby necessitating the use of the generalized leapfrog integrator. In particular, the sum of the Fisher information and the negative Hessian of the log-prior can be shown to be,
\begin{align}
  \mathbf{G}(\beta) = \mathbf{X}^\top \Lambda\mathbf{X} + \alpha ~\mathrm{Id}
\end{align}
where $\Lambda$ is a diagonal matrix with entries $\Lambda_{i, i} =
\sigma(\mathbf{x}_i^\top \beta)(1-\sigma(\mathbf{x}_i^\top\beta))$ and
$\mathbf{X}\in\R^{n\times p}$ is the row-wise concatenation of
$\mathbf{x}_1,\ldots,\mathbf{x}_n$.

We compare the ESS of the RMHMC algorithm in \cref{fig:logistic-regression-ess}. We compute the ESS by taking a single chain of length $100,000$ and splitting it into twenty contiguous arrays of length $5,000$. Within each contiguous sample, we compute the ESS and report the minimum ESS among the linear coefficients and the ESS of the precision variable. In this experiment, the minimum ESS is effectively constant as a function of the threshold, relative to the ESS per second exhibited by HMC. Indeed, this example provided a circumstance wherein no threshold employed in RMHMC was able to produce a time normalized minimum ESS which was competitive with Euclidean HMC. It is worth noting, however, that RMHMC tends to outperform HMC when time is {\it not} accounted for, as demonstrated in the top panel of \cref{fig:logistic-regression-ess}. We employ RMHMC with a step-size of $0.2$ and twenty integration steps. For reference, we also report these ESS statistics for HMC with a step-sizes of 0.1 and 0.2 and twenty integration steps.

We visualize the violation of reversibility and volume preservation over varying thresholds in \cref{fig:logistic-regression-metrics}. This analysis reveals that, in the worst case, the proposal operator enjoys approximately two decimal digits of reversibility and volume preservation. Moreover, for a threshold of $1\times 10^{-3}$, the transition kernels are in agreement to the third decimal place.

\subsection{Neal's Funnel Distribution}\label{subsec:experiment-neal-funnel}

\begin{figure}[t!]
  \centering
  \begin{subfigure}[t]{0.32\textwidth}
    \includegraphics[width=\textwidth]{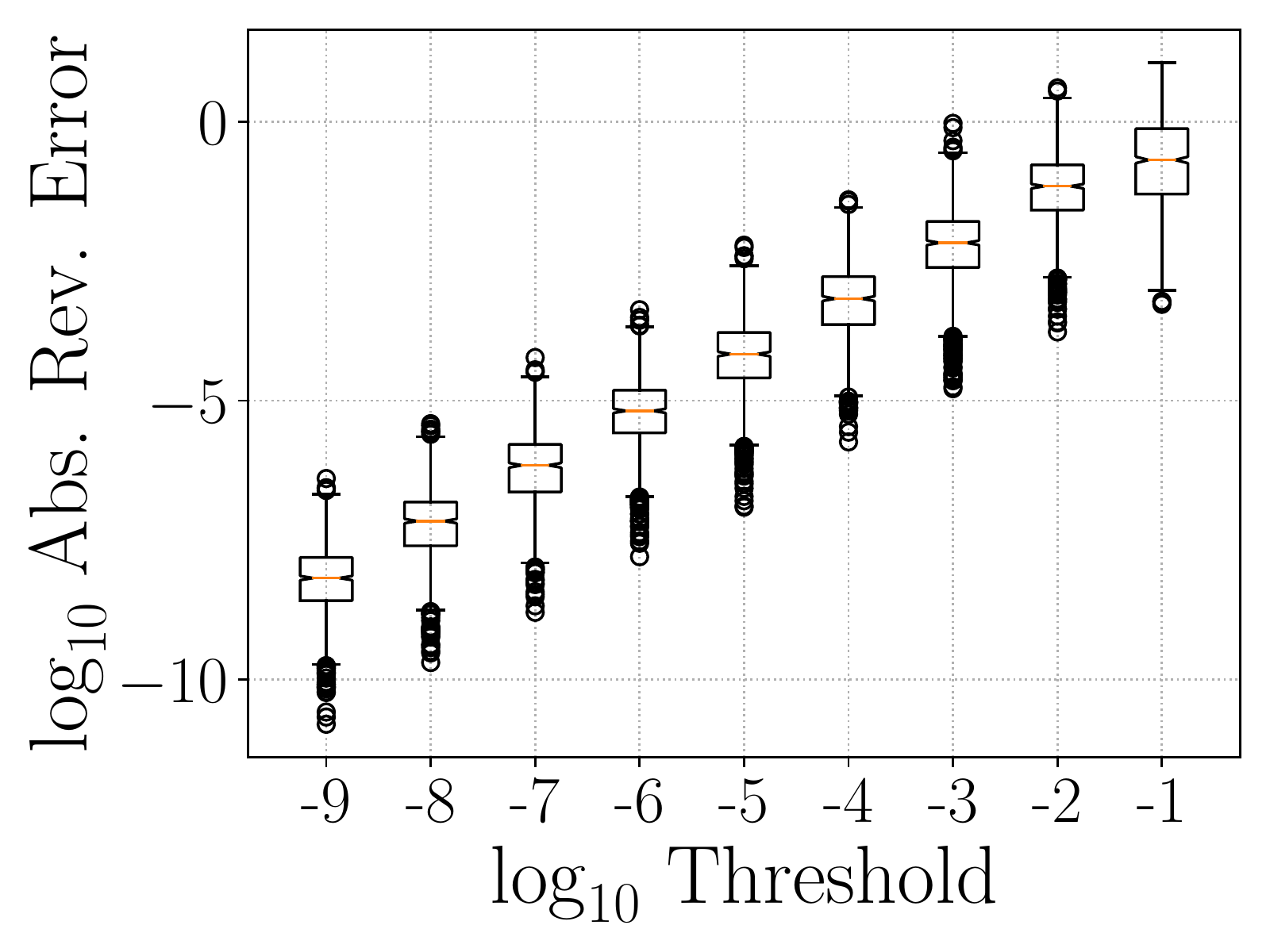}
    \caption{Error in Reversibility}
    \label{subfig:neal-funnel-reversibility}
  \end{subfigure}
  ~
  \begin{subfigure}[t]{0.32\textwidth}
    \includegraphics[width=\textwidth]{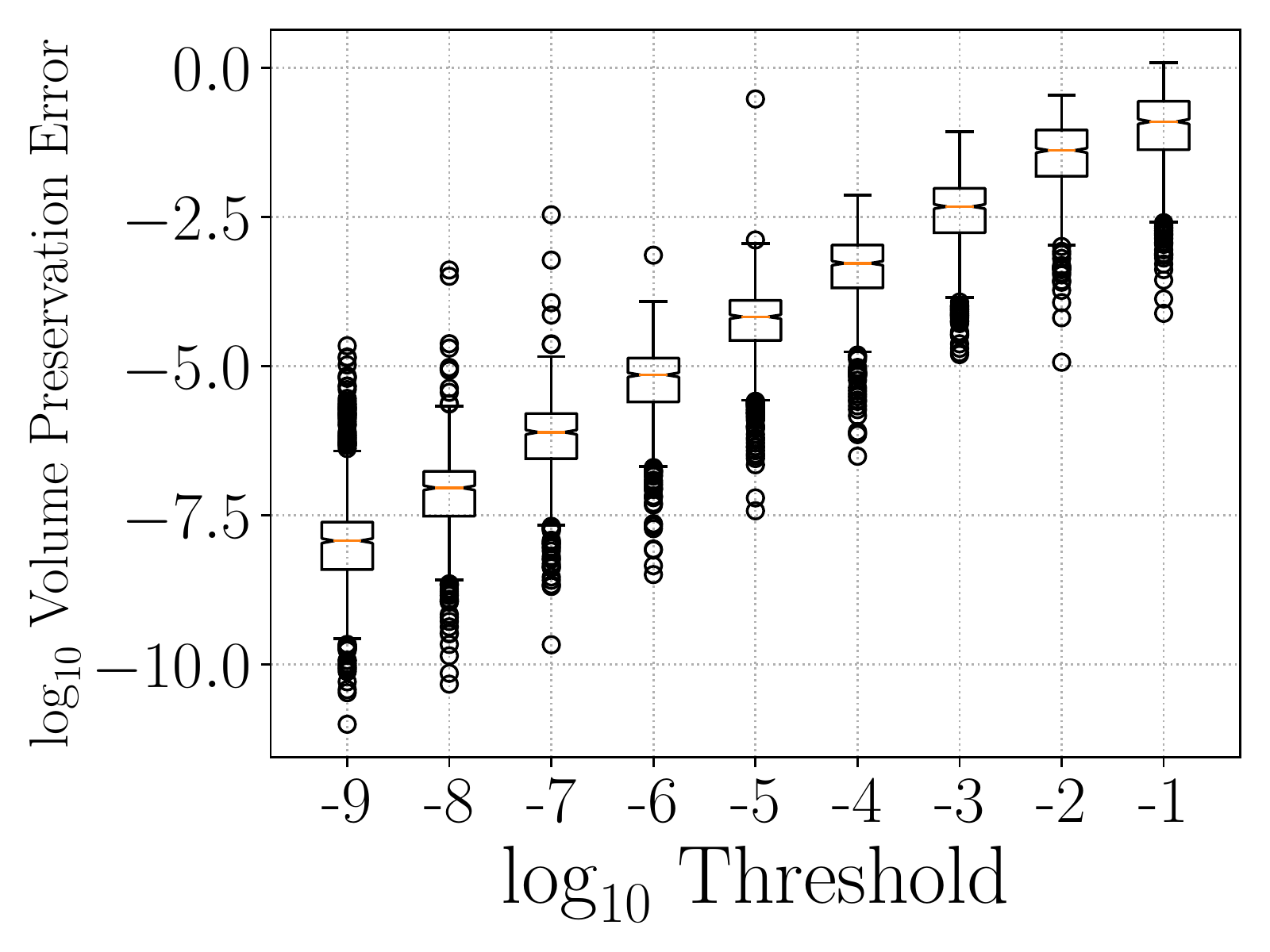}
    \caption{Error in Volume-Preservation}
    \label{subfig:neal-funnel-jacobian-determinant}
  \end{subfigure}
  ~
  \begin{subfigure}[t]{0.32\textwidth}
    \includegraphics[width=\textwidth]{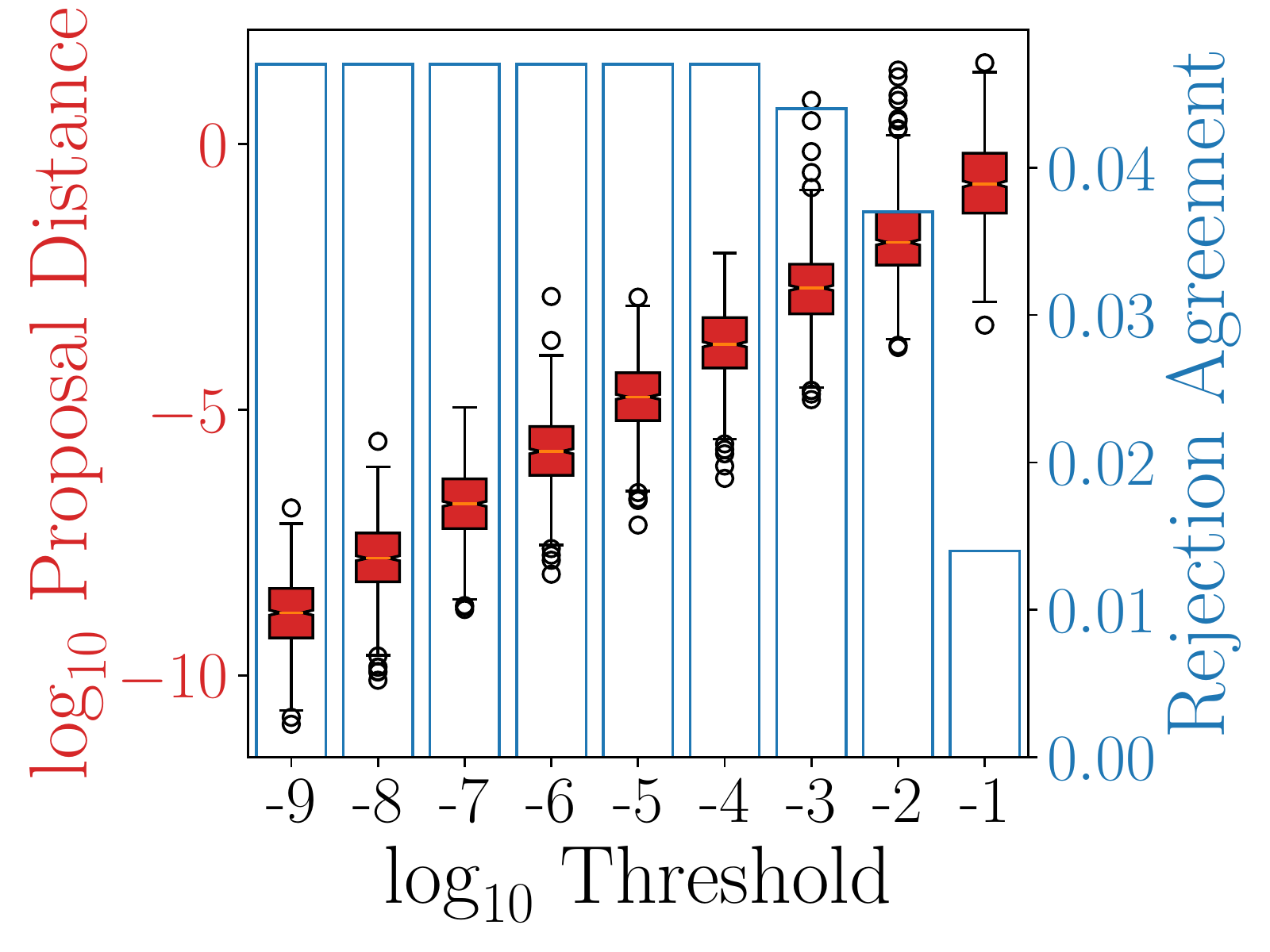}
    \caption{Difference in Transition Kernel}
    \label{subfig:neal-funnel-transition-difference}
  \end{subfigure}
  \caption{Visualization of the error in reversibility (see \cref{subfig:neal-funnel-reversibility}), error in volume-preservation (see \cref{subfig:neal-funnel-jacobian-determinant}), and the number of decimal digits of similarity in transition kernels (see \cref{subfig:neal-funnel-transition-difference}) for variable thresholds in Neal's funnel distribution with $10 + 1$ dimensions.}
\end{figure}

\begin{figure}[t!]
  \centering
  \begin{subfigure}[t]{0.49\textwidth}
    \includegraphics[width=\textwidth]{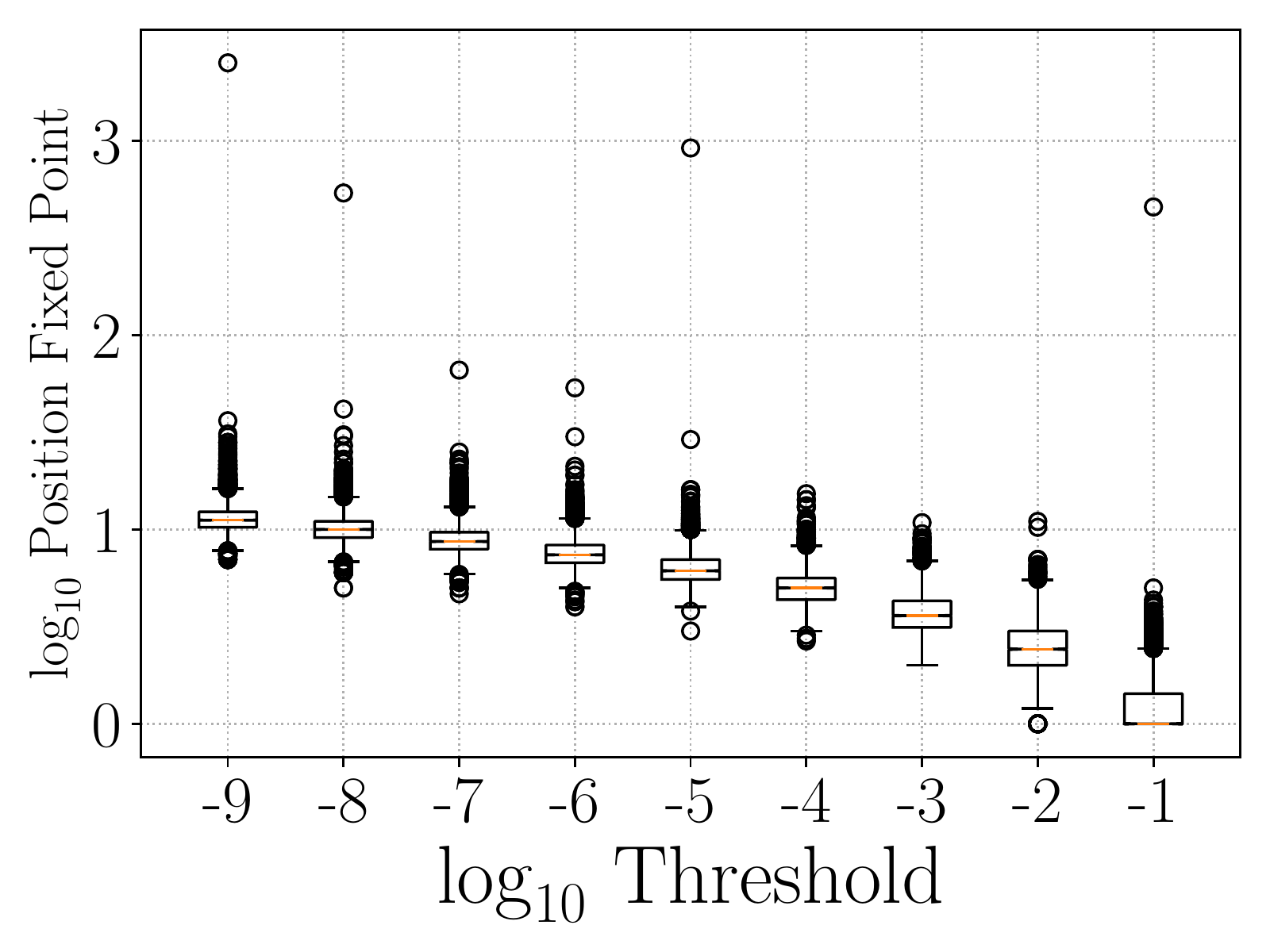}
    \caption{Number of fixed point iterations for position variable.}
    \label{subfig:neal-funnel-fixed-point-position}
  \end{subfigure}
  ~
  \begin{subfigure}[t]{0.49\textwidth}
    \includegraphics[width=\textwidth]{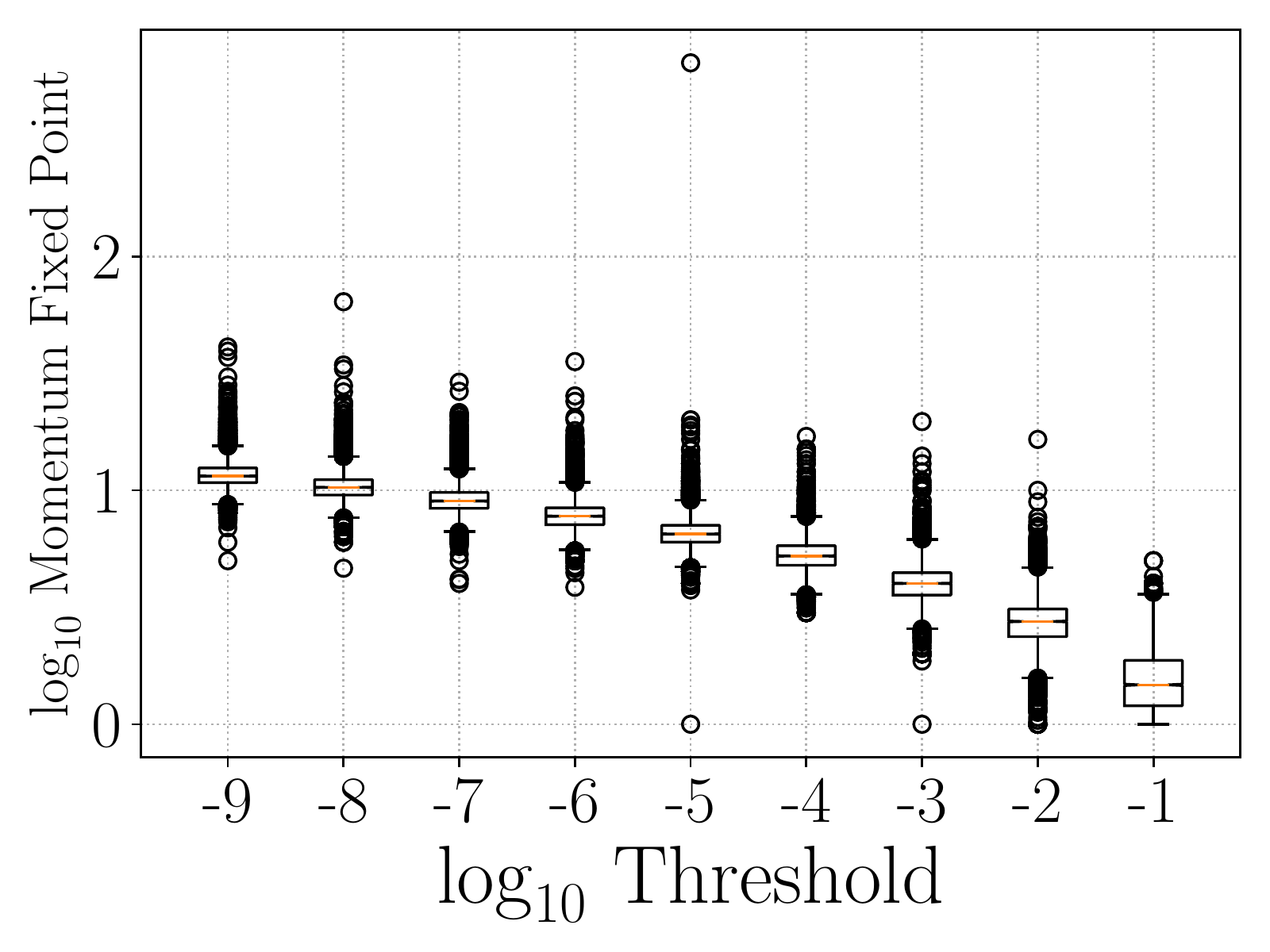}
    \caption{Number of fixed point iterations for momentum variable.}
    \label{subfig:neal-funnel-fixed-point-momentum}
  \end{subfigure}
  \caption{Visualization of the computational effort required to sample with RMHMC from Neal's funnel distribution. We show the number of fixed point iterations required to compute the two implicit steps of the generalized leapfrog integrato.}
\end{figure}

\begin{figure}[t!]
  \centering
  \begin{subfigure}[t]{\textwidth}
  \includegraphics[width=\textwidth]{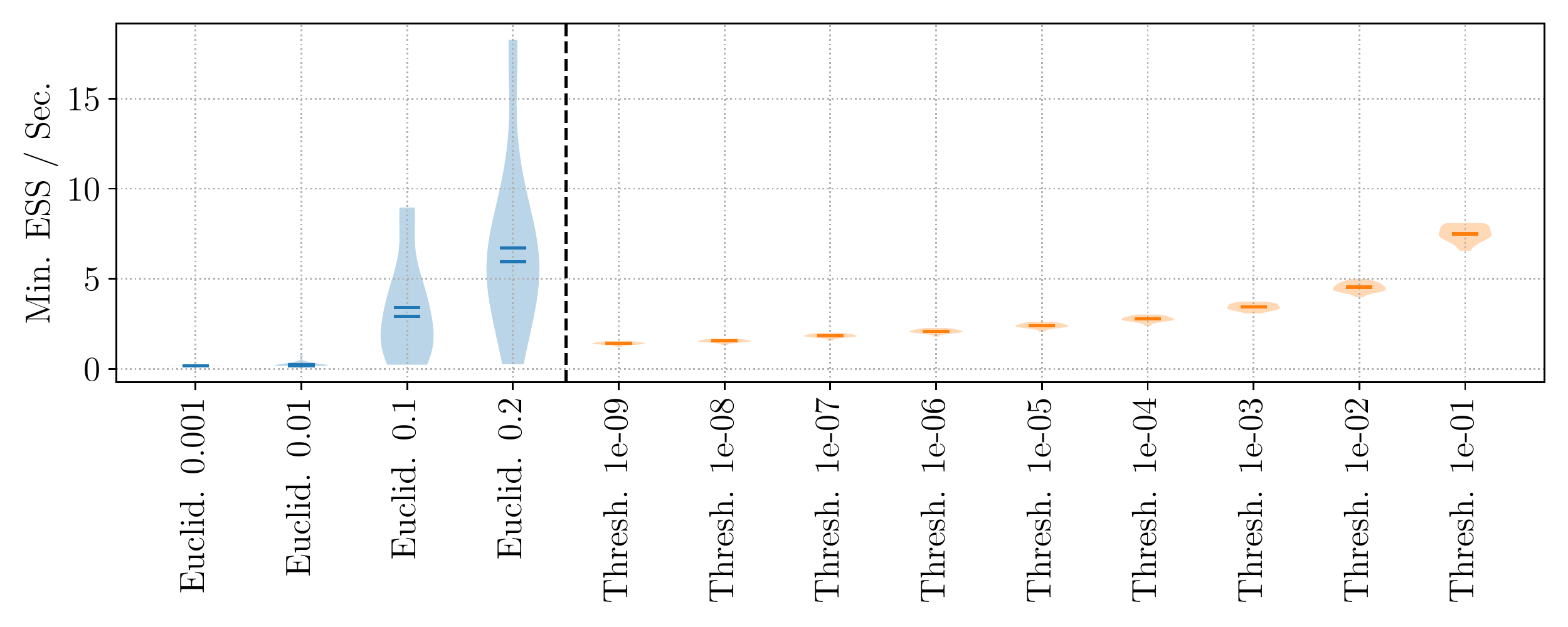}
  \caption{Minimum ESS per second in Neal's funnel distribution.}
  \label{subfig:neal-funnel-ess-per-second}
  \end{subfigure}
  
  \begin{subfigure}[t]{\textwidth}
  \includegraphics[width=\textwidth]{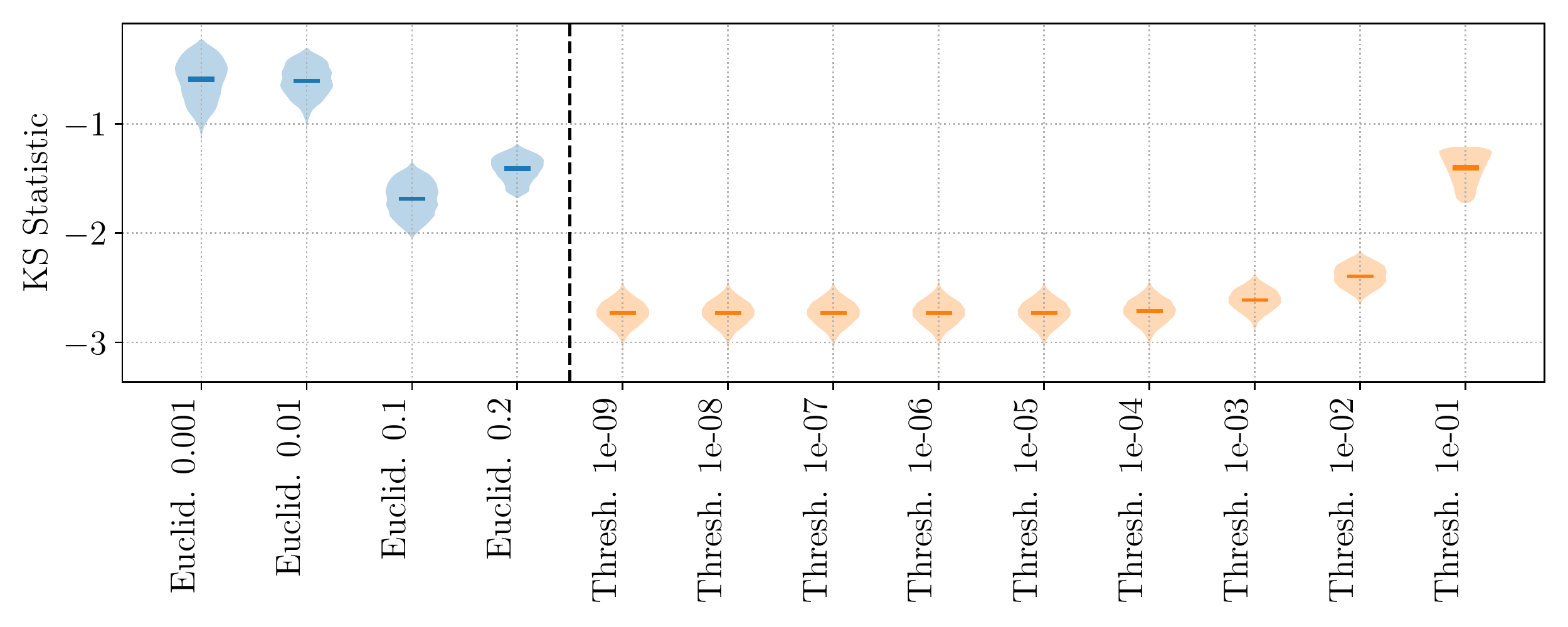}
  \caption{Sample ergodicity in Neal's funnel distribution.}
  \label{subfig:neal-funnel-ergodicity}
  \end{subfigure}
  
  \caption{Visualization of the sample quality of the variables in Neal's funnel distribution as measured by the ESS per second (\cref{subfig:neal-funnel-ess-per-second}) and the distribution of Kolmogorov-Smirnov (KS) statistics over random one-dimensional subspaces (\cref{subfig:neal-funnel-ergodicity}). Distributions over ESS are computed by splitting a Markov chain of length 1,000,000 into twenty contiguous sequences of length 50,000.}
  \label{fig:neal-funnel-ess}
\end{figure}

\begin{figure}[t!]
  \begin{subfigure}[t]{0.32\textwidth}
    \centering
    \includegraphics[width=\textwidth]{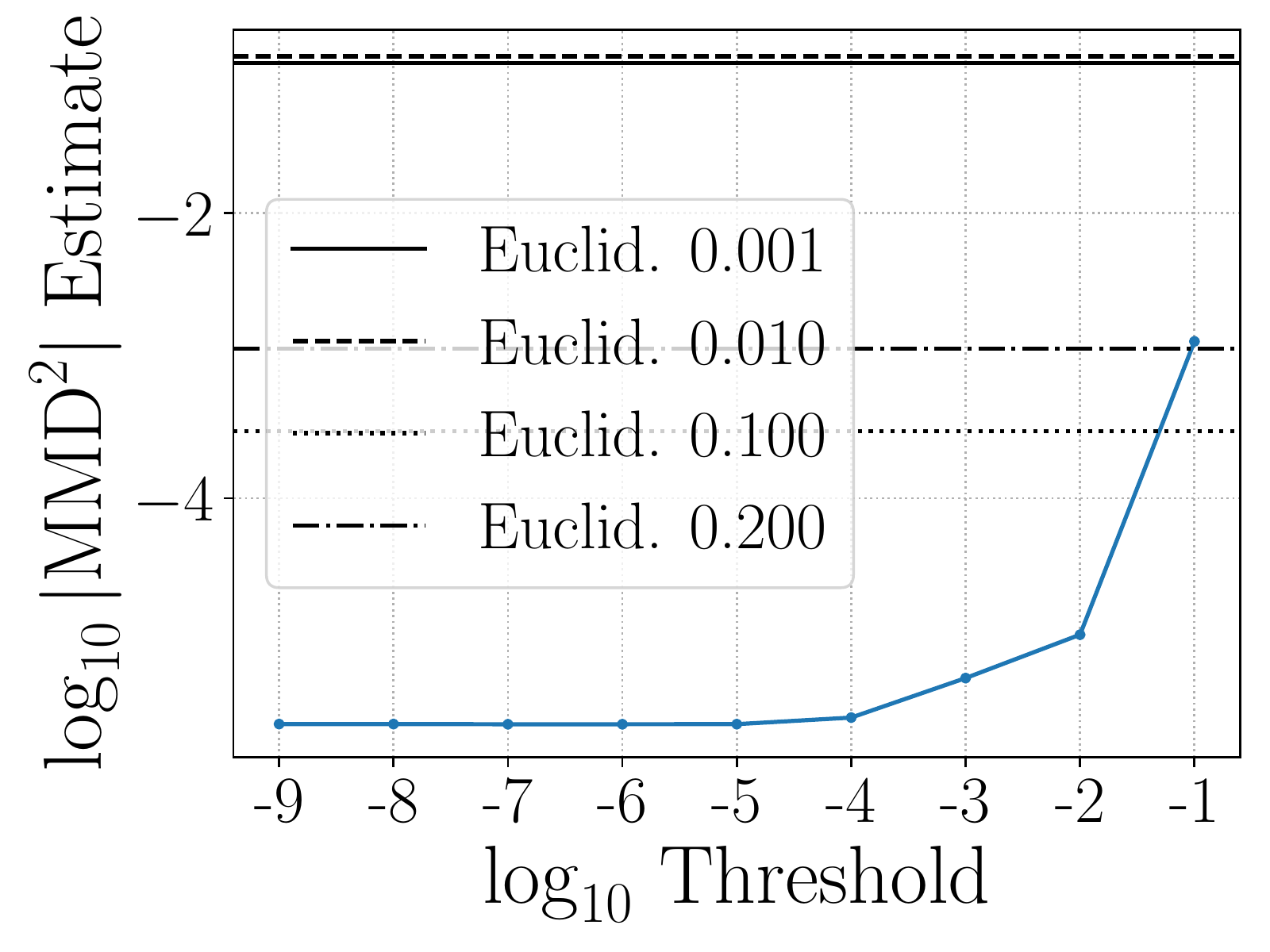}
    \caption{$\mathrm{MMD}_u^2$}
  \end{subfigure}
  ~
  \begin{subfigure}[t]{0.32\textwidth}
    \centering
    \includegraphics[width=\textwidth]{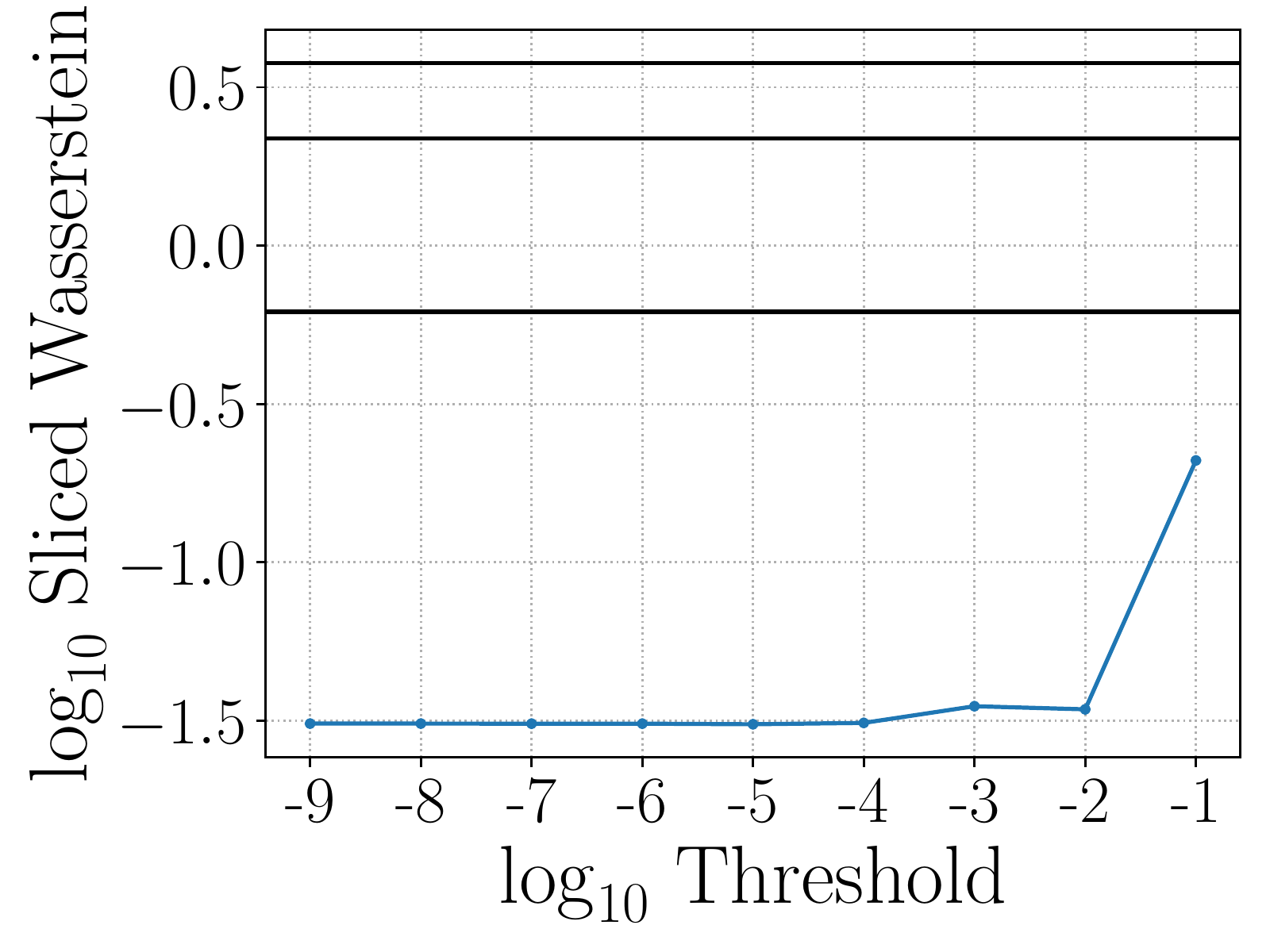}
    \caption{Sliced Wasserstein}
  \end{subfigure}
  ~
  \begin{subfigure}[t]{0.32\textwidth}
    \centering
    \includegraphics[width=\textwidth]{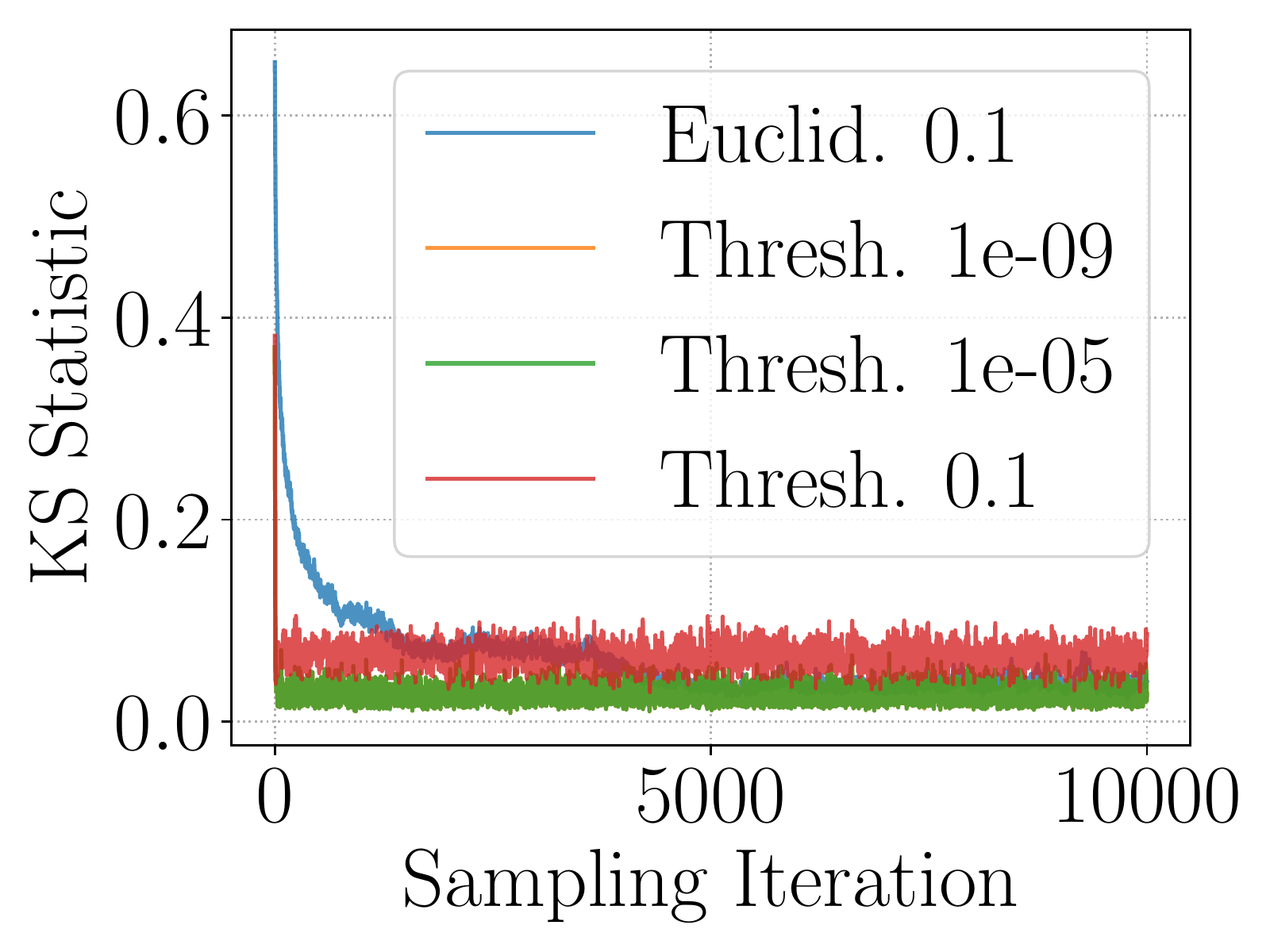}
    \caption{Ergodic Convergence}
  \end{subfigure}
  \caption{Additional measures of ergodicity in Neal's funnel distribution. According to the Kolmogorov-Smirnov and $\mathrm{MMD}_u^2$ statistics, the Riemannian methods with a threshold of $1\times10^{-1}$ are competitive with the best-performing Euclidean methods. However, by the sliced Wasserstein metric, all Riemannian methods have better ergodicity compared to the Euclidean variants. In any event, all metrics agree that ergodicity performance is essentially constant for thresholds less than $1\times 10^{-2}$. We show convergence of independent Markov chains in the marginal distribution of $v$ which is $\mathrm{Normal}(0, 3^2)$. We see that convergence of Riemannian methods is faster than for HMC; however, when employing the largest threshold, one can observe the bias in the stationary distribution. In computing the MMD statistic, we use a kernel bandwidth of 8.4.}
  \label{fig:neal-funnel-extra-ergodicity}
\end{figure}

\begin{figure}[t!]
  \begin{subfigure}[t]{0.32\textwidth}
    \centering
    \includegraphics[width=\textwidth]{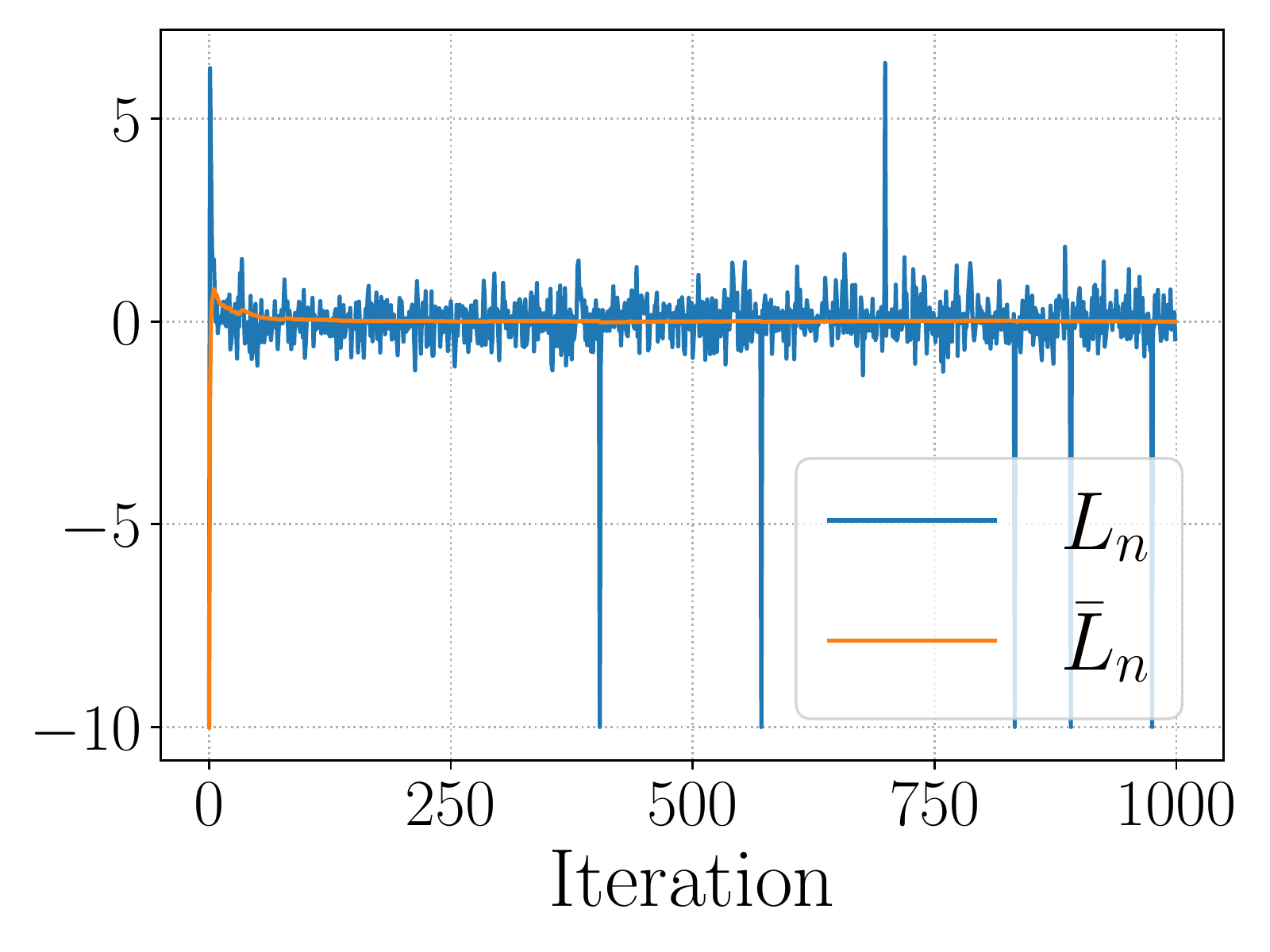}
    \caption{$L_n$ Sequence}
  \end{subfigure}
  ~
  \begin{subfigure}[t]{0.32\textwidth}
    \centering
    \includegraphics[width=\textwidth]{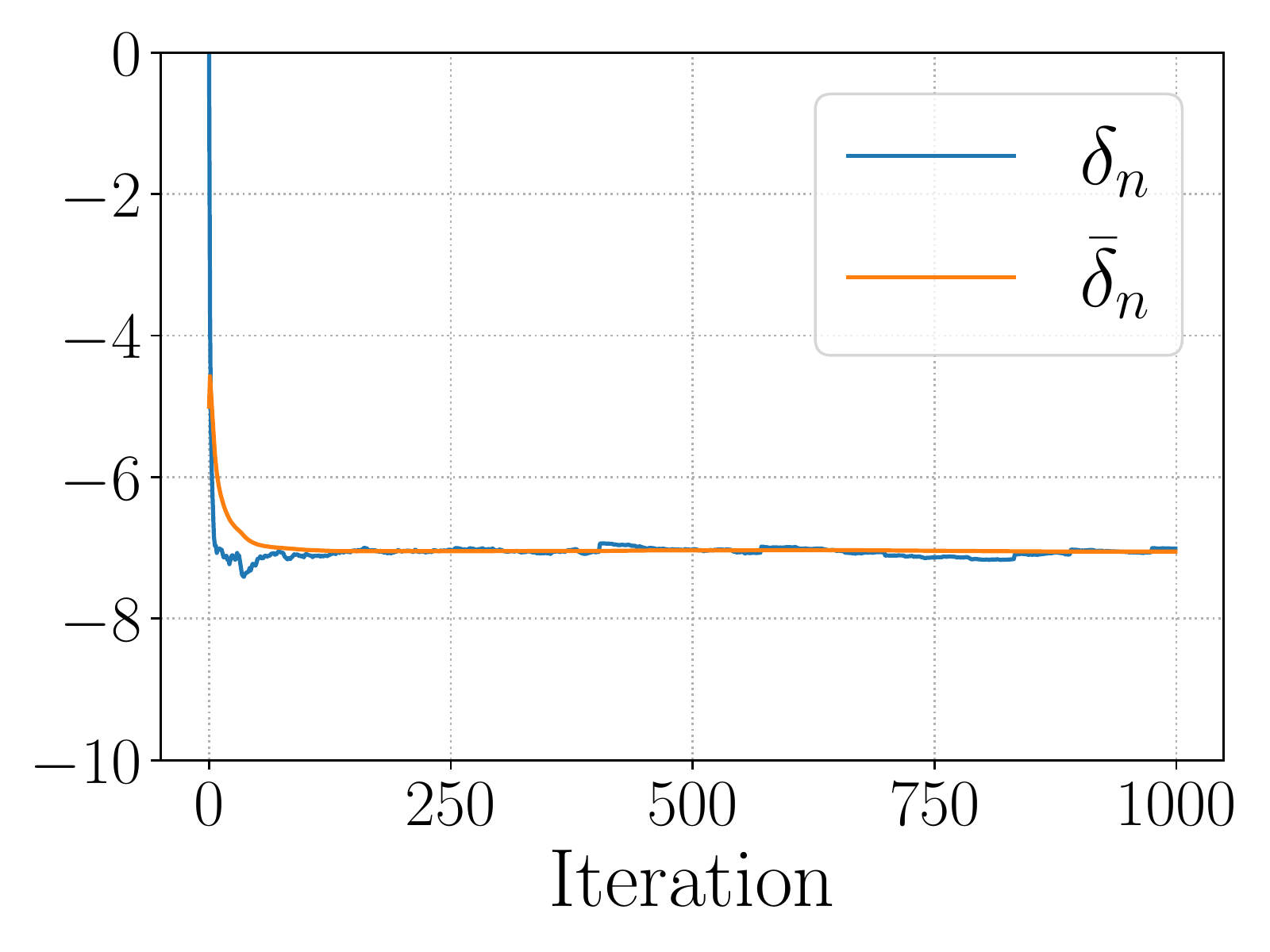}
    \caption{$\delta_n$ Sequence}
  \end{subfigure}
  ~
  \begin{subfigure}[t]{0.32\textwidth}
    \centering
    \includegraphics[width=\textwidth]{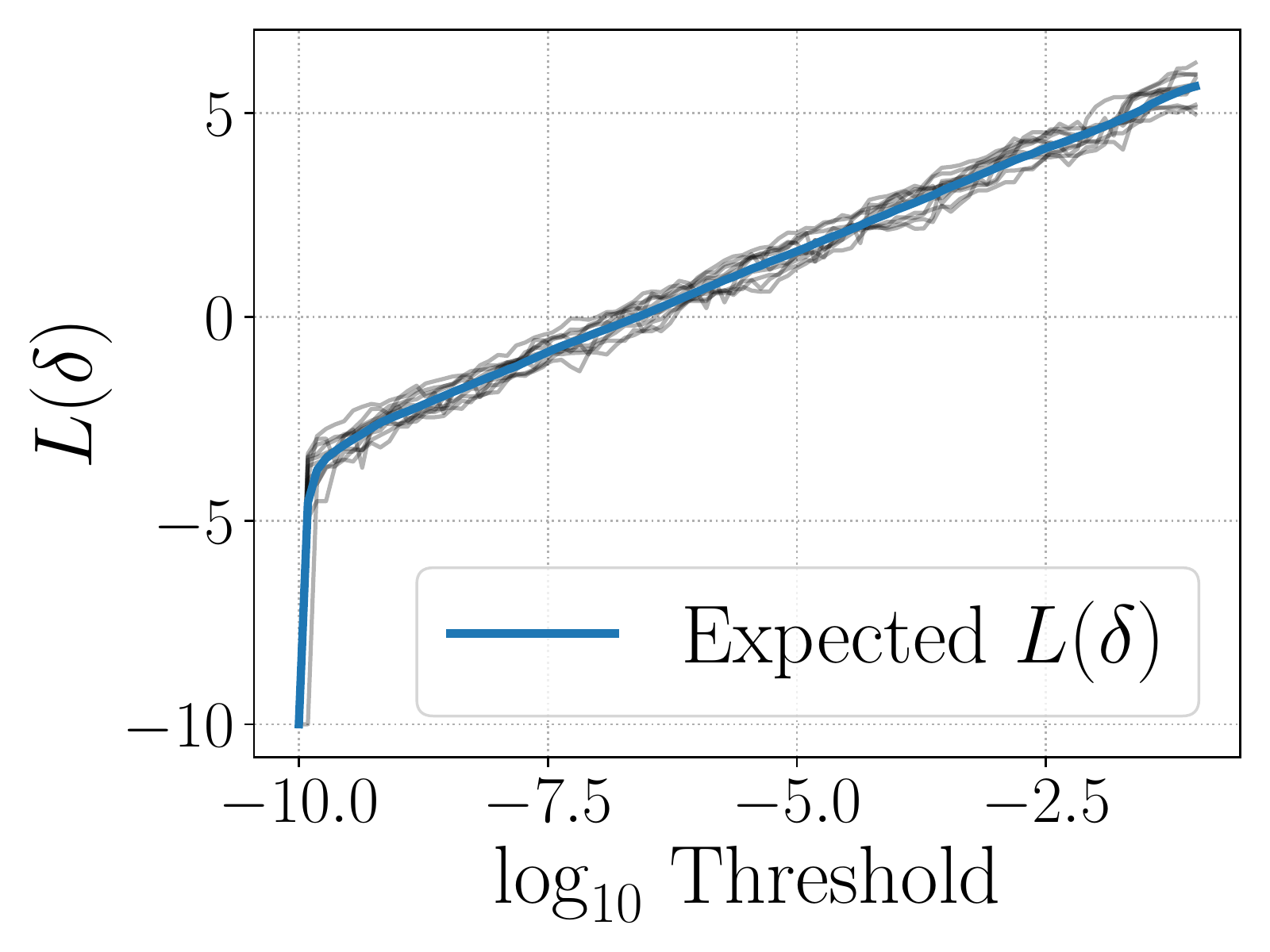}
    \caption{Monte Carlo $B(\delta)$}
  \end{subfigure}
  \caption{The use of Ruppert averaging in Neal's funnel distribution to adaptively set the convergence threshold to achieve six decimal digits of similarity compared to a transition kernel with a threshold of $1\times 10^{-10}$. We show a Monte Carlo approximation to $B(\delta)$, which appears smooth, monotonically increasing. The value of $\delta$ satisfying $B(\delta)=0$ is approximately $\delta = 1\times 10^{-6.5}$, which is somewhat greater than the value of $1\times 10^{-7}$ produced by Ruppert averaging.}
  \label{fig:neal-funnel-dual-averaging}
\end{figure}

Neal's funnel distribution \citep{10.1214/aos/1056562461} is a hierarchical distribution constructed as follows,
\begin{align}
  v &\sim \mathrm{Normal}(0, 9) \\
  x_i \vert v &\sim\mathrm{Normal}(0, e^{-v}) ~~\mathrm{for}~i=1,\ldots, 10.
\end{align}
One sees by inspection that this distribution is trivial to sample analytically. However, the purpose of Neal's funnel distribution is to provide an example of a distribution which HMC struggles to sample. Indeed, for large values of $v$, the conditional distribution $x_i\vert v$ becomes increasingly concentrated near zero, producing the eponymous funnel shape. Without preconditioning, HMC is unable to penetrate this narrow funnel. Neal's funnel distribution is also challenging because it represents a distribution in which no global preconditioning is apparent. Therefore, in applying RMHMC to this task, we follow \citet{softabs} and employ the SoftAbs Riemannian metric. The SoftAbs metric is constructed as follows. Let $\mathbf{H}(x, v)$ denote the Hessian of the joint density of $(x, v)$. Let $\mathbf{H}(x, v) = \mathbf{Q} \Lambda\mathbf{Q}^\top$ be an eigen-decomposition of $\mathbf{H}$, where $\mathbf{Q}\equiv\mathbf{Q}(x, v)$ is the matrix of eigen-vectors and $\Lambda\equiv\Lambda(x, v)$ is a diagonal matrix of eigen-values. The Hessian of Neal's funnel distribution is not positive definite and it is therefore inadmissible as a Riemannian metric. The SoftAbs metric is constructed from the Hessian by smoothly transforming the eigen-values of the Hessian according to,
\begin{align}
  \tilde{\lambda}_{i}(x, v) &= \alpha \mathrm{coth}(\alpha\lambda_i(x, v)) \\
  \tilde{\Lambda}(x, v) &= \mathrm{diag}(\tilde{\lambda}_1(x, v),\ldots,\tilde{\lambda}_{11}(x, v)),
\end{align}
where $\alpha > 0$ is a tunable parameter controlling the smoothness of the SoftAbs transformation. Indeed, the transformation $\lambda\mapsto \alpha\mathrm{coth}(\alpha\lambda)$ is a smooth approximation to the absolute value function. The SoftAbs metric is then $\mathbf{G}(x, v) = \mathbf{Q}(x, v)\tilde{\Lambda}(x, v)\mathbf{Q}(x, v)^\top$. In our experiments we set $\alpha = 10^4$.

In our experiments we consider RMHMC with varying thresholds and with an integration step-size set to $\epsilon = 0.2$ and a maximum number of integration steps equal to twenty-five. We also compare RMHMC against HMC with eight integration steps and an integration step-size $\epsilon\in \set{0.001, 0.01, 0.1, 0.2}$; the parameters of RMHMC and HMC were chosen based off the discussion in \citet{softabs}. When assessing ergodicity in Neal's funnel distribution, results are reported in \cref{subfig:neal-funnel-ergodicity}; one observes that the weakest threshold produces a chain whose similarity to the target distribution is approximately the same as HMC with step-size $0.1$ or $0.2$, yielding around 1.5 digits of similarity in the Kolmogorov-Smirnov statistics along a randomly chosen subspace. When the threshold is decreased to $1\times10^{-2}$, around 2.5 digits of similarity are obtained for a randomly chosen one-dimensional subspace. All of the thresholds smaller than $1\times 10^{-2}$ produce nearly indistinguishable measures of ergodicity as measured by the Kolmogorov-Smirnov statistic along a random subspace. Although employing RMHMC with a threshold of $1\times10^{-3}$ typically exhibits only two digits of reversibility and volume preservation, the transition kernels exhibit a similarity of around 2.5 digits. This level of performance can be achieved with three or four fixed point iterations on average for each implicit update, compared with the eleven or twelve required by a stronger convergence tolerance of $1\times 10^{-9}$, which offers negligible benefits in terms of ergodicity.

We apply the Ruppert averaging procedure in Neal's funnel distribution in order to identify a threshold that produces, on average, six ($\kappa=6$) decimal digits of similarity with a numerical integrator whose convergence threshold is $1\times 10^{-10}$. We show the results of this procedure in \cref{fig:neal-funnel-dual-averaging}. The sequence of $\bar{L}_n$ stabilizes at zero by iteration 100. The sequence of $\bar{\delta}_n$ has converged by iteration 500.

Neal's funnel distribution offers one of the most convincing examples of the benefit of a Riemannian approach to MCMC. We illustrate this phenomenon in \cref{fig:neal-funnel-ess}, which compares HMC with variable step-sizes against RMHMC with variable thresholds. For both the variables $(x_1,\ldots,x_{10})$ and the hierarchical variance $v$, the ESS of the Riemannian methods are orders of magnitude larger than the MCMC procedures without preconditioning.

\subsection{Stochastic Volatility Model}\label{subsec:experiment-stochastic-volatility-model}

\begin{figure}[t!]
  \centering
  \includegraphics[width=\textwidth]{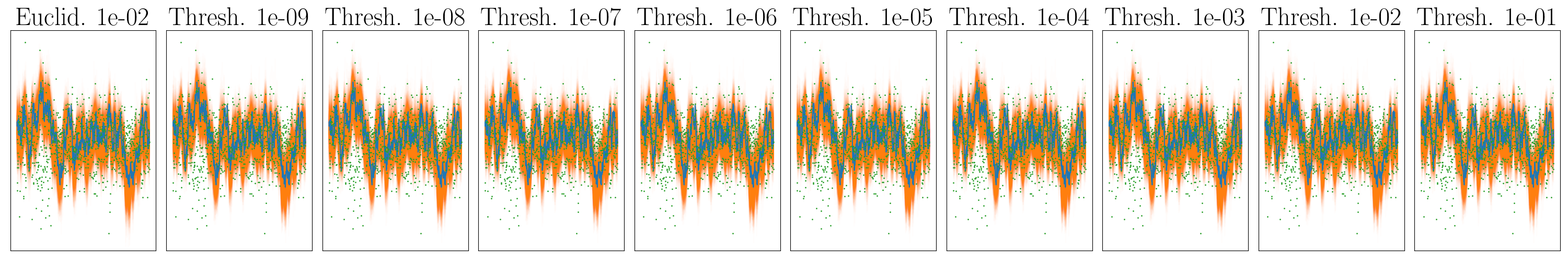}
  \caption{Visualization of the posterior distribution over the stochastic
    volatilities wherein the process hyperparameters $(\sigma^2,\phi,\beta)$ are
    sampled by RMHMC with variable threshold. Visually, the posterior
    distributions are indistinguishable.}
  \label{fig:stochastic-volatility-stochastic-volatility}
\end{figure}

\begin{figure}[t!]
  \centering
  \begin{subfigure}[t]{0.32\textwidth}
    \includegraphics[width=\textwidth]{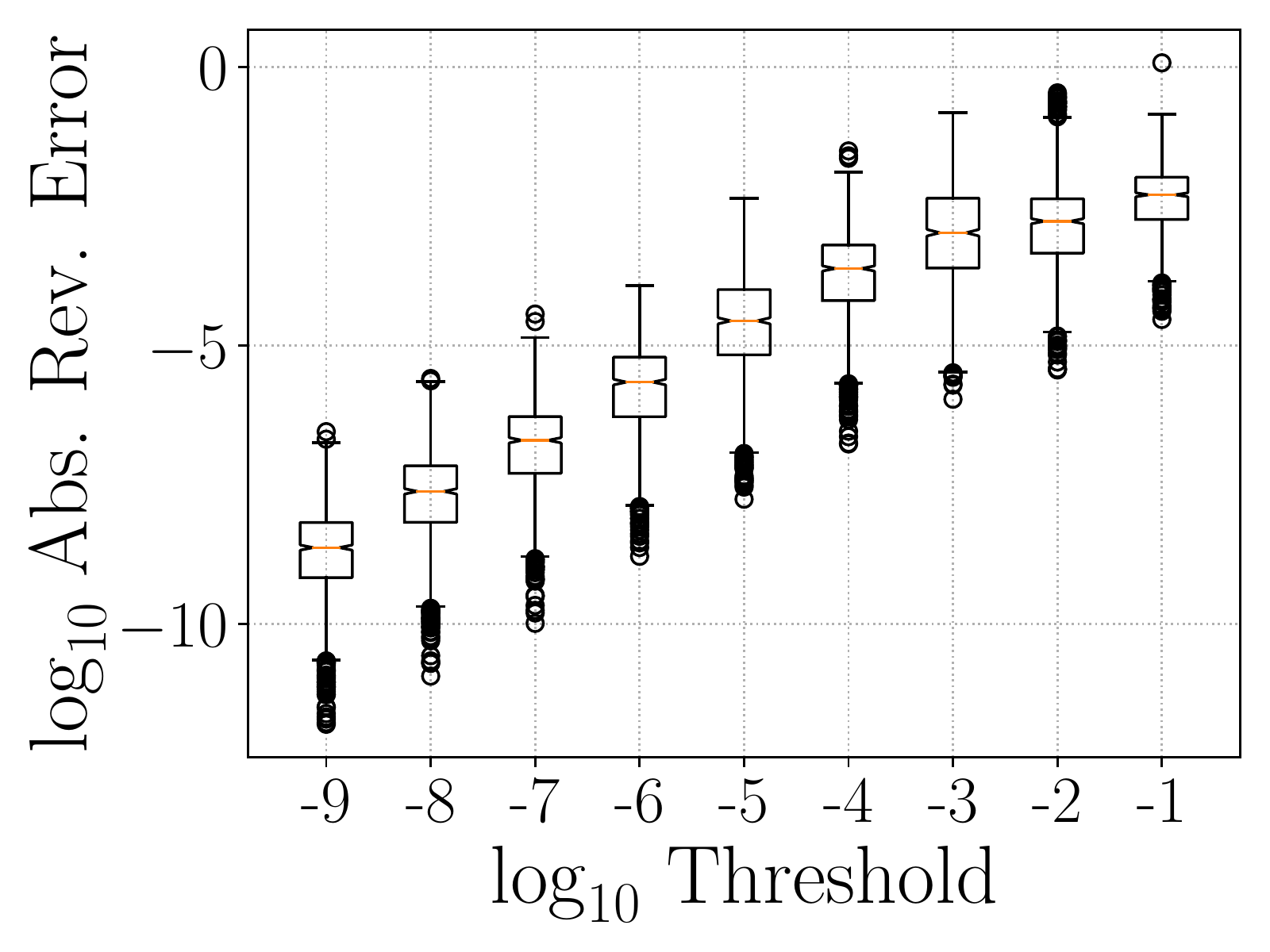}
    \caption{$\log_{10}$ reversibility error}
    \label{subfig:stochastic-volatility-reversibility}
  \end{subfigure}
  ~
  \begin{subfigure}[t]{0.32\textwidth}
    \includegraphics[width=\textwidth]{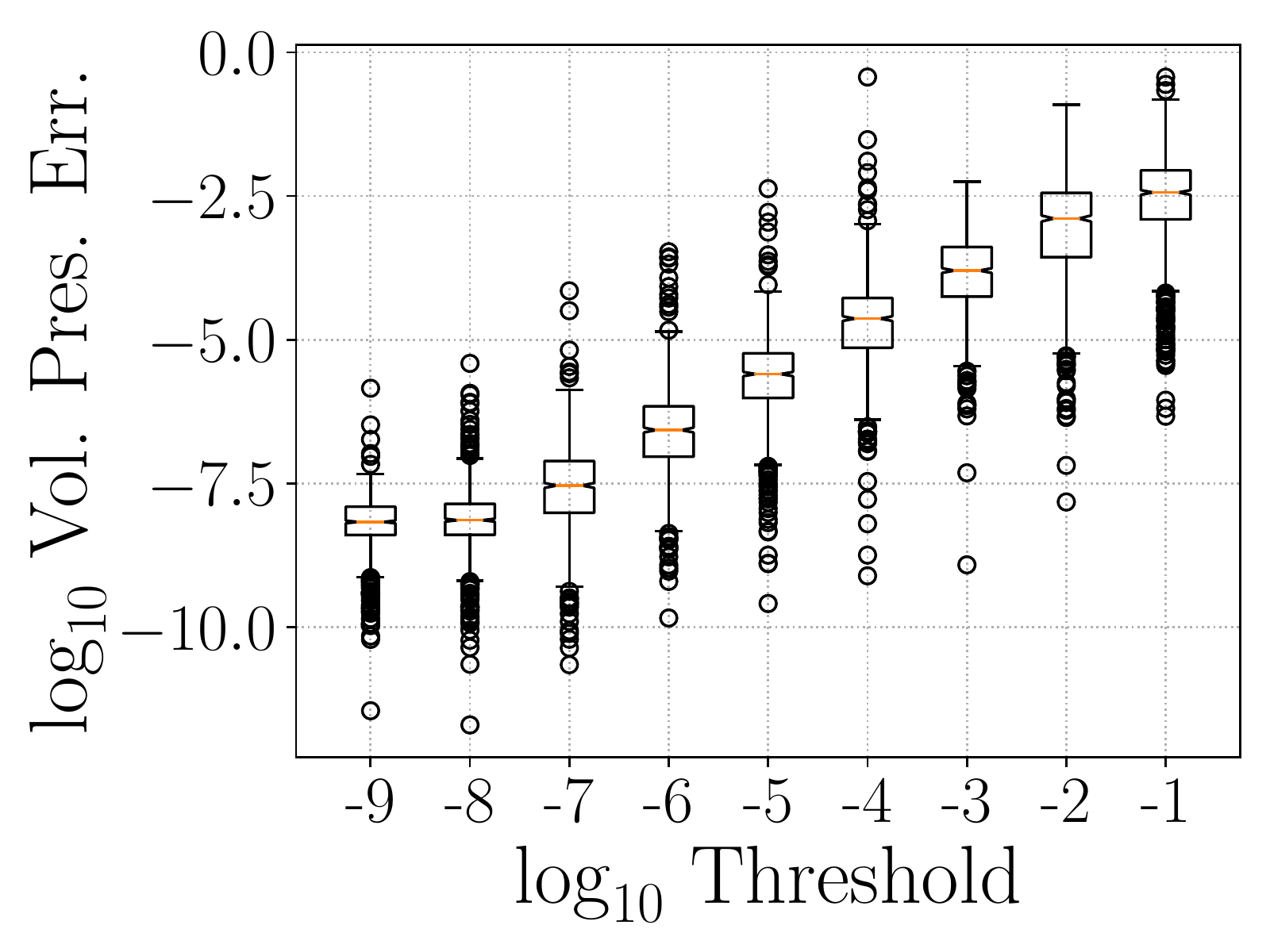}
    \caption{$\log_{10}$ volume-preservation error}
    \label{subfig:stochastic-volatility-jacobian-determinant}
  \end{subfigure}
  ~
  \begin{subfigure}[t]{0.32\textwidth}
    \includegraphics[width=\textwidth]{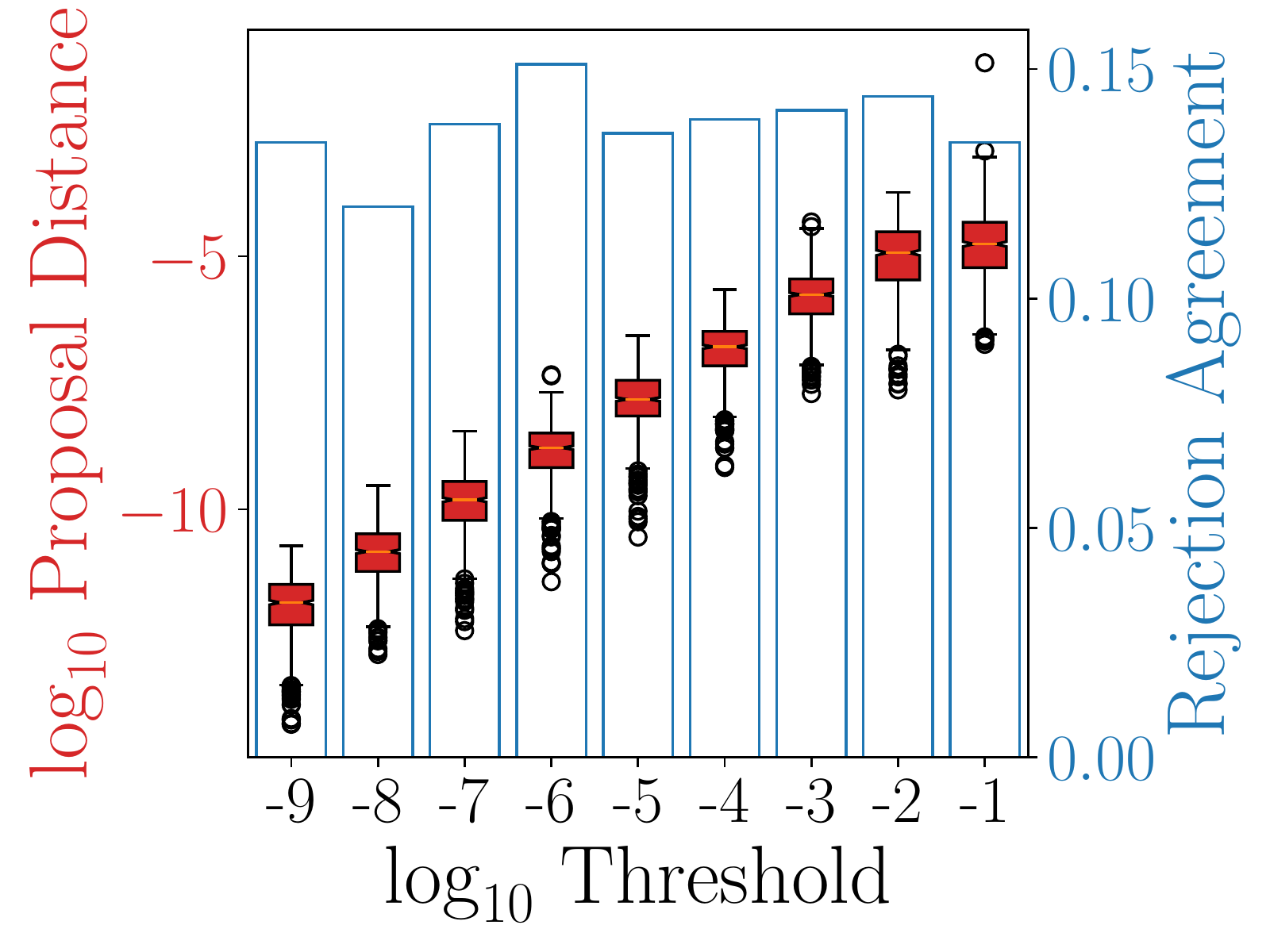}
    \caption{$\log_{10}$ transition kernel difference}
    \label{subfig:stochastic-volatility-transition-difference}
  \end{subfigure}
  \caption{Visualization of the error in reversibility (see \cref{subfig:stochastic-volatility-reversibility}), error in volume-preservation (see \cref{subfig:stochastic-volatility-jacobian-determinant}), and the number of decimal digits of similarity in transition kernels (see \cref{subfig:stochastic-volatility-transition-difference}) for variable thresholds in the stochastic volatility posterior.}
\end{figure}

\begin{figure}[t!]
  \centering
  \begin{subfigure}[t]{0.49\textwidth}
    \includegraphics[width=\textwidth]{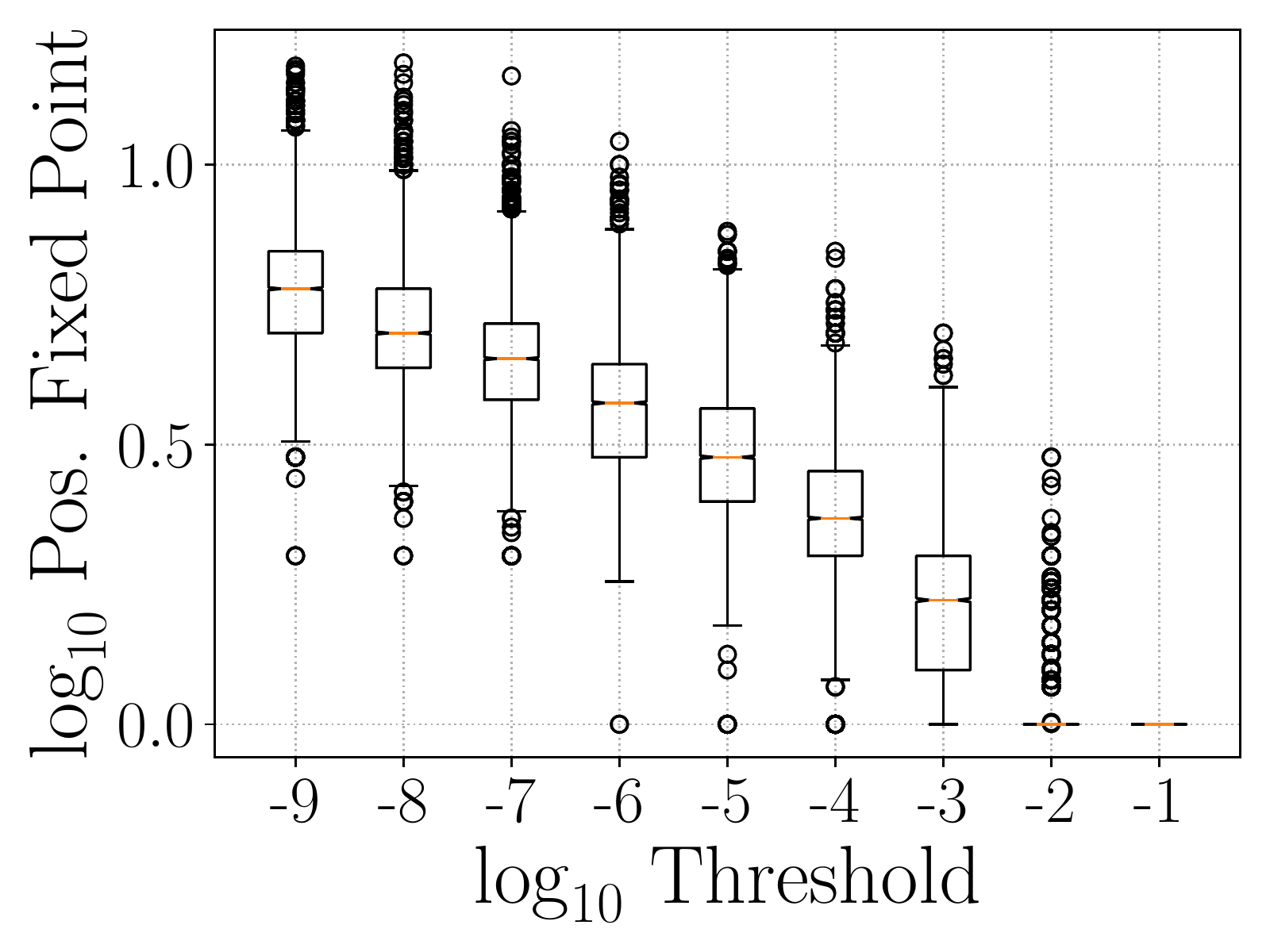}
    \caption{Position fixed point iterations}
    \label{subfig:stochastic-volatility-fixed-point-position}
  \end{subfigure}
  ~
  \begin{subfigure}[t]{0.49\textwidth}
    \includegraphics[width=\textwidth]{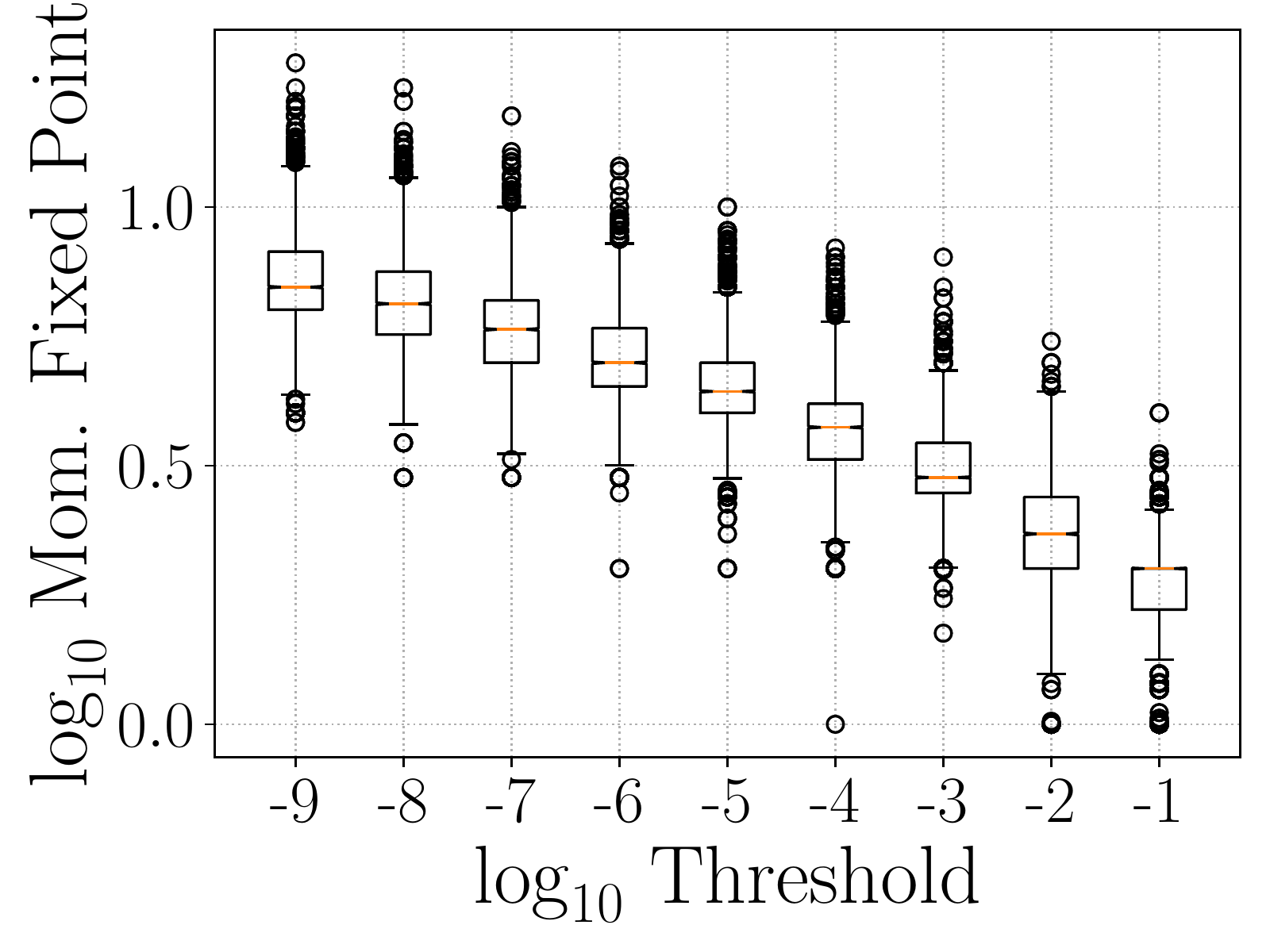}
    \caption{Momentum fixed point iterations}
    \label{subfig:stochastic-volatility-fixed-point-momentum}
  \end{subfigure}

  \caption{Visualization of the number of fixed point iterations required to compute the implicit updates to position and momentum required by the generalized leapfrog integrator in the stochastic volatility model.}
\end{figure}

\begin{figure}[t!]
  \centering
  \begin{subfigure}[t]{0.32\textwidth}
    \includegraphics[width=\textwidth]{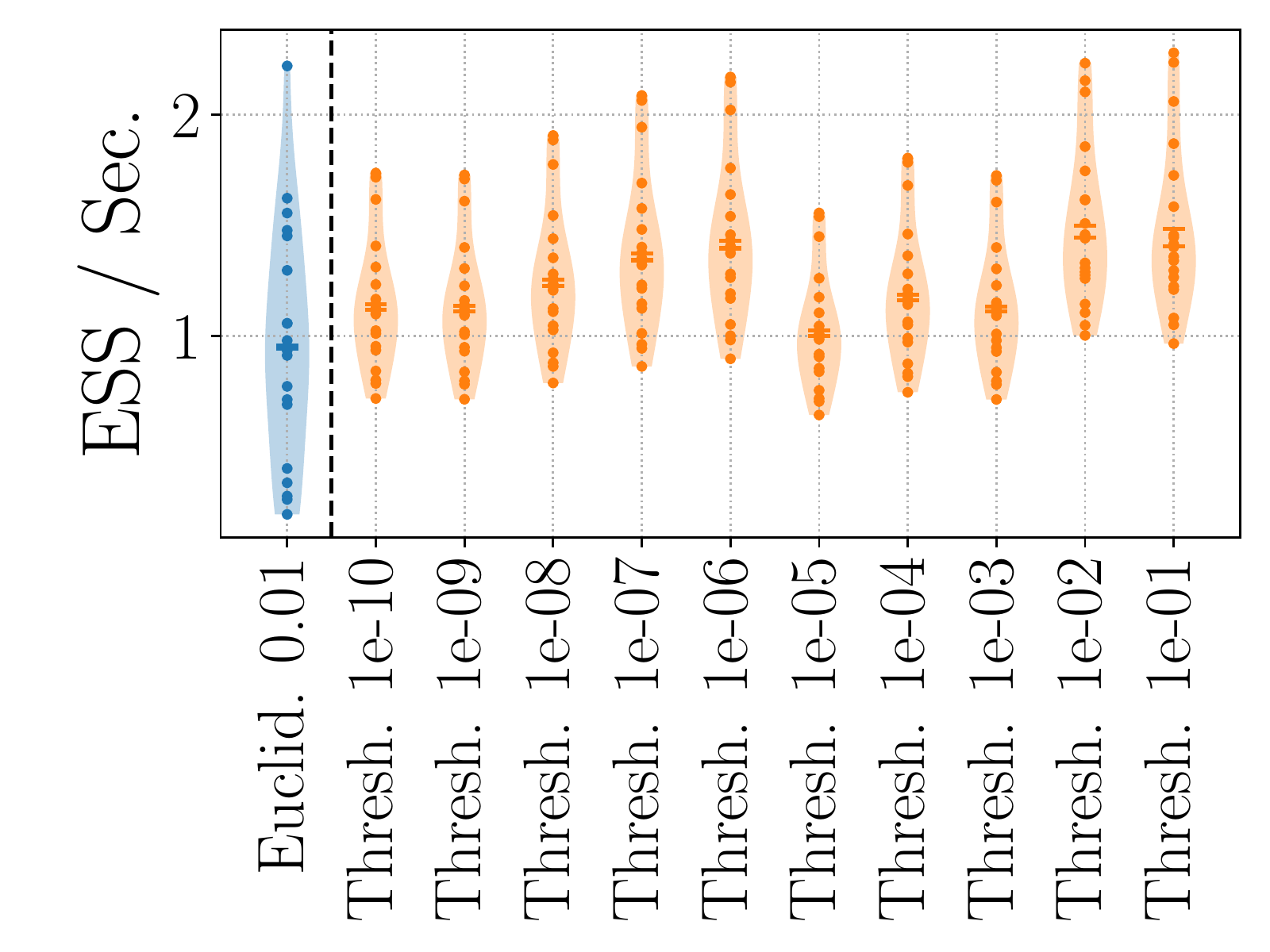}
    \caption{$\phi$}
  \end{subfigure}
  ~
  \begin{subfigure}[t]{0.32\textwidth}
    \includegraphics[width=\textwidth]{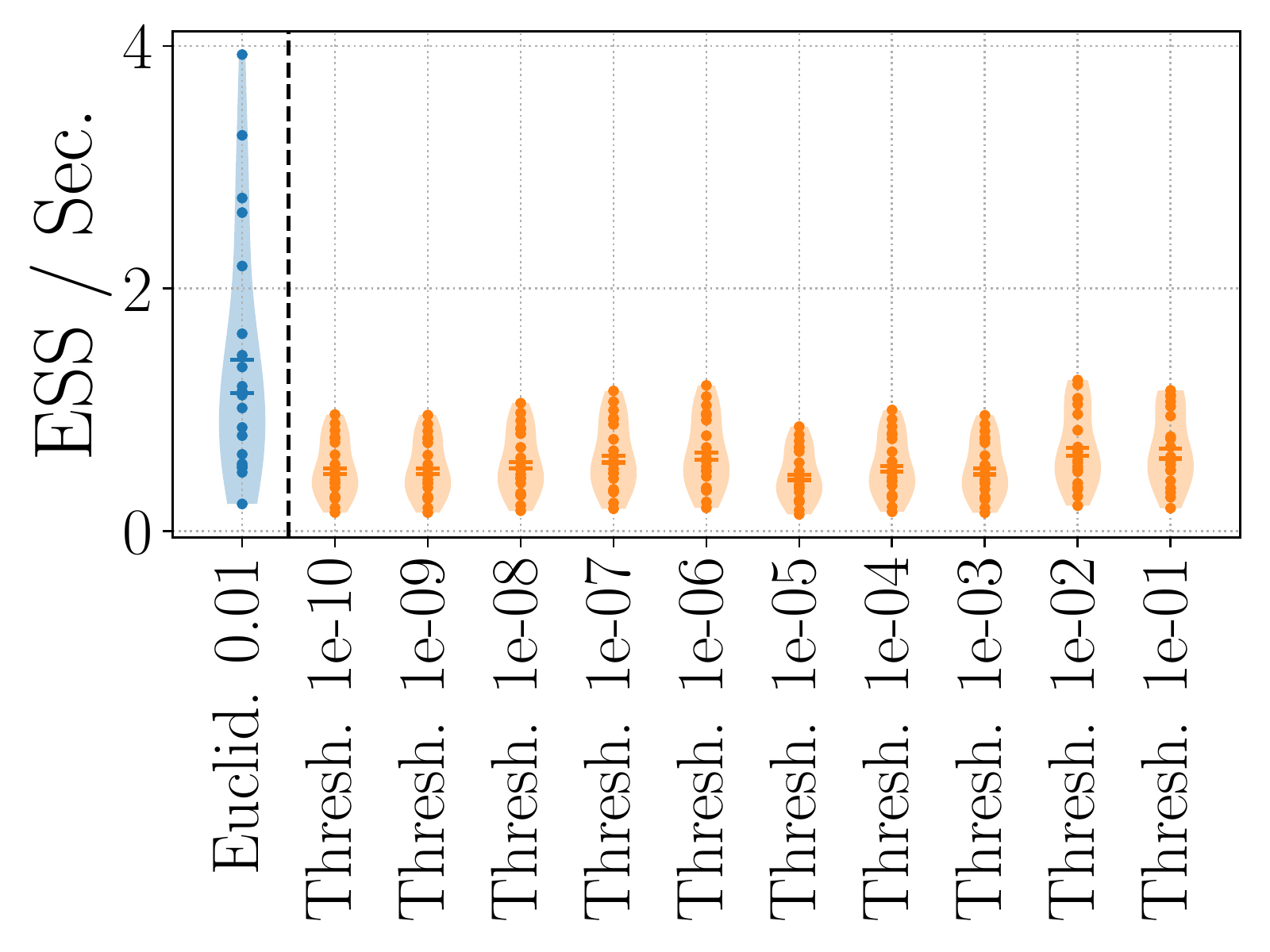}
    \caption{$\beta$}
  \end{subfigure}
  ~
  \begin{subfigure}[t]{0.32\textwidth}
    \includegraphics[width=\textwidth]{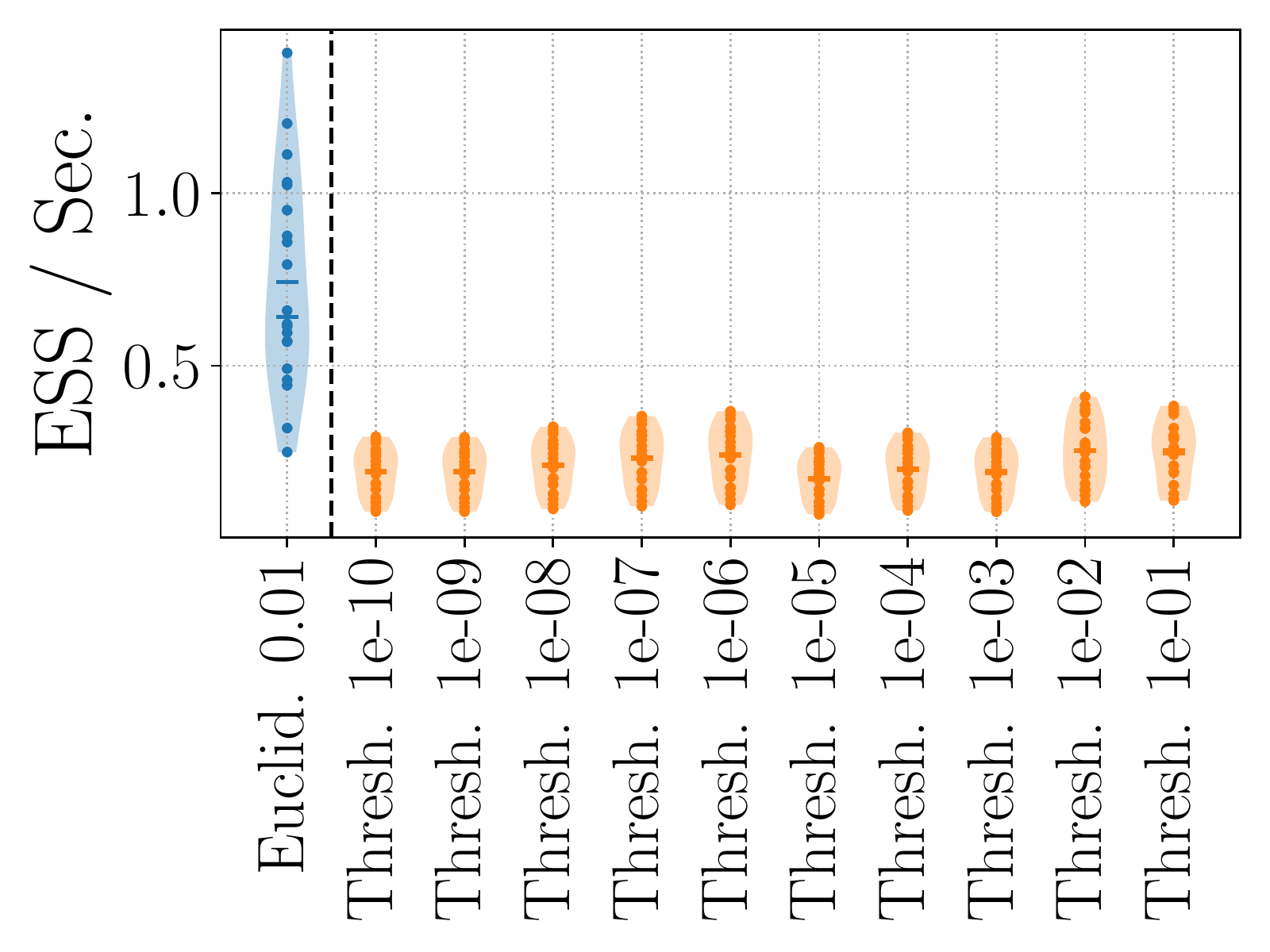}
    \caption{$\sigma^2$}
  \end{subfigure}
  \caption{Visualization of the ESS per second for the three latent parameters of the stochastic volatility model.}
  \label{fig:stochastic-volatility-ess-per-second}
\end{figure}

Stochastic volatility models are random processes which are characterized by the randomness inherent in their variance (or ``volatility''). We consider the following generative model,
\begin{align}
  x_1 \vert \phi, \sigma^2 &\sim \mathrm{Normal}(0, \sigma^2 / (1-\phi^2)) \\
  x_{t+1} \vert x_{t},\phi,\sigma^2 &\sim \mathrm{Normal}(\phi x_t, \sigma^2) ~~\mathrm{for}~t=1,\ldots, T-1 \\
  y_t \vert \beta, x_t &\sim \mathrm{Normal}(0, \beta^2 \exp(x_t)) ~~\mathrm{for}~t=1,\ldots, T
\end{align}
with priors $\sigma^2 \sim\mathrm{Inv-}\chi^2(10, 1/20)$, $(\phi+1)/2\sim\mathrm{Beta}(20, 3/2)$ (so that $\phi\in (-1, +1)$), and the prior density over $\beta$ is proportional to $1/\beta$. The Bayesian inference task is to infer the posterior distribution of $(x_1,\ldots, x_T, \sigma^2, \phi, \beta)$ given observations $(y_1,\ldots,y_T)$. In our experiments we set $T=1,000$, producing a posterior of dimensionality $1,003$. Following \citet{rmhmc}, we employ a Metropolis-within-Gibbs-like strategy by alternating between sampling the distribution $(x_1,\ldots,x_T)\vert \sigma^2,\phi,\beta, (y_1,\ldots, y_T)$ and sampling the distribution $(\phi, \beta,\sigma^2)\vert (x_1,\ldots, x_T), (y_1,\ldots, y_T)$. In the former case, the Fisher information metric is constant with respect to $(x_1,\ldots,x_T)$, thereby allowing us to use the usual leapfrog integrator to produce samples; moreover, the metric has a special tri-diagonal structure, fascilitating the use of specialized numerical linear algebra routines. However, the Fisher information metric of the distribution $(\phi, \beta,\sigma^2)\vert (x_1,\ldots, x_T), (y_1,\ldots, y_T)$ depends on $(\phi, \beta, \sigma^2)$, thereby producing a non-separable Hamiltonian and necessitating the use of the generalized leapfrog integrator. In order to respect the constraints $\sigma^2 > 0$ and $\phi \in (-1, +1)$, we define auxiliary variables $(\gamma, \alpha)$ and employ the smooth, invertible transformations $\sigma^2 = \exp(\gamma)$ and $\phi=\mathrm{tanh}(\alpha)$. The Fisher information metric for the transformed variables is,
\begin{align}
  \mathbf{G}(\gamma,\alpha,\beta) = \begin{pmatrix}
    2T & 2\phi & 0 \\
    2\phi & 2\phi^2 + (T-1)(1-\phi^2) & 0 \\
    0 & 0 & 2T / \beta^2
  \end{pmatrix}.
\end{align}
In sampling the posterior of $(\sigma^2,\phi)$, we use six integration steps with a step-size of 0.5. We seek to draw 100,000 observations from the posterior.

We visualize the posterior over the stochastic volatilities for variable thresholds in \cref{fig:stochastic-volatility-stochastic-volatility}. Visually, the posterior distributions are indistinguishable; this conclusion is reinforced by the close similarity of the posterior marginals of over $(\phi,\beta,\sigma^2)$, wherein only the largest threshold $1\times 10^{-1}$ shows any dissimilarity, which is nonetheless minor. This similarity is quantified in our assessment of the similarity of the Markov chain transition kernel, wherein we see that the threshold $1\times 10^{-1}$ enjoys nearly five decimal digits of similarity relative to the transition kernel with threshold $1\times 10^{-10}$. In \cref{subfig:stochastic-volatility-fixed-point-position,subfig:stochastic-volatility-fixed-point-momentum} we visualize the number of fixed point iterations required by the generalized leapfrog method by convergence tolerance. We observe that for a threshold of $1\times 10^{-1}$, only one or two fixed point iterations are required to resolve the implicit updates of the position and momentum variables, respectively, which compares favorably to the six or seven required by the generalized leapfrog method with a convergence tolerance of $1\times 10^{-9}$. In \cref{fig:stochastic-volatility-ess-per-second} we evaluate the effective sample size per second of the Riemannian and non-Riemannian HMC variants. On in the case of the variable $\phi$ does RMHMC offer any benefits, while sampling efficiency is degraded by using RMHMC on the variables $\beta$ and $\sigma^2$. This occurs due to the relative computational burden of RMHMC, which cannot always be compensated for, in terms of time-normalized metrics, by the geometric advantages.

Is the similarity of the posterior over $(\phi,\beta,\sigma^2)$ a result of a step-size that is sufficiently small so as to enable near-perfect simulation of Hamilton's equations of motion? One piece of evidence against this hypothesis is that the acceptance rate of the Markov chain is approximately eighty-seven percent; thereby showing the the numerical trajectory does not conserve the Hamiltonian, as near-perfect simulation of the underlying equations of motion must.

\subsection{Log-Gaussian Cox-Poisson Model}

\begin{figure}[t!]
  \centering
  \includegraphics[width=\textwidth]{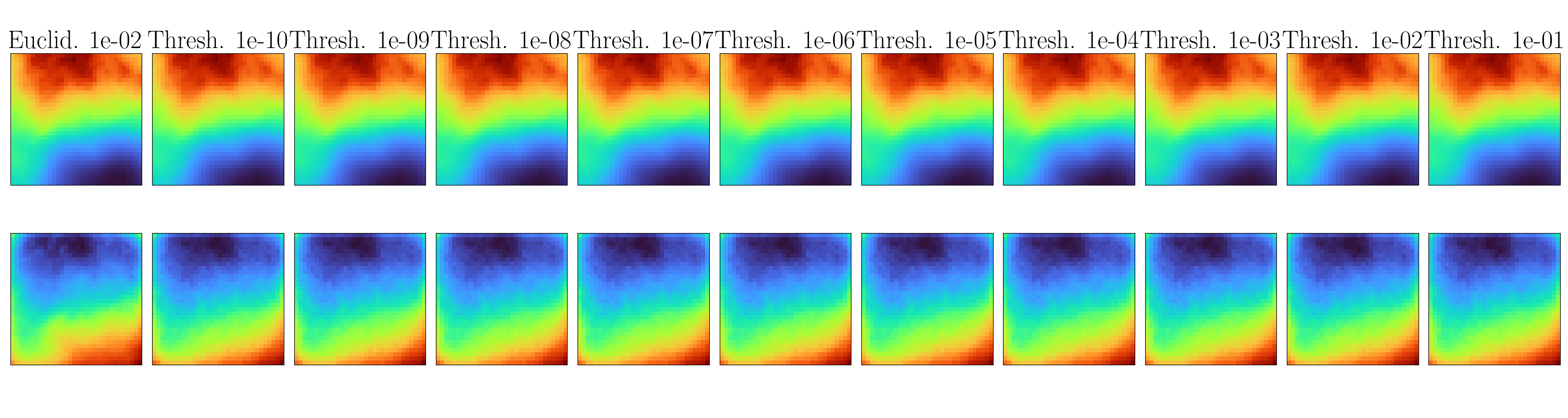}
  \caption{Visualization of the posterior mean (in the first row) and posterior standard deviation (in the second row) of the spatial latent process wherein the kernel parameters $(\sigma^2,\beta)$ are sampled by RMHMC with variable thresholds. Within the Riemannian algorithms with varying thresholds, these quantities are visually indistinguishable. However, one does observe differences in the posterior standard deviation when comparing RMHMC to Euclidean HMC.}
  \label{fig:cox-poisson-cox-poisson}
\end{figure}

\begin{figure}[t!]
  \centering
  \begin{subfigure}[t]{0.49\textwidth}
    \includegraphics[width=\textwidth]{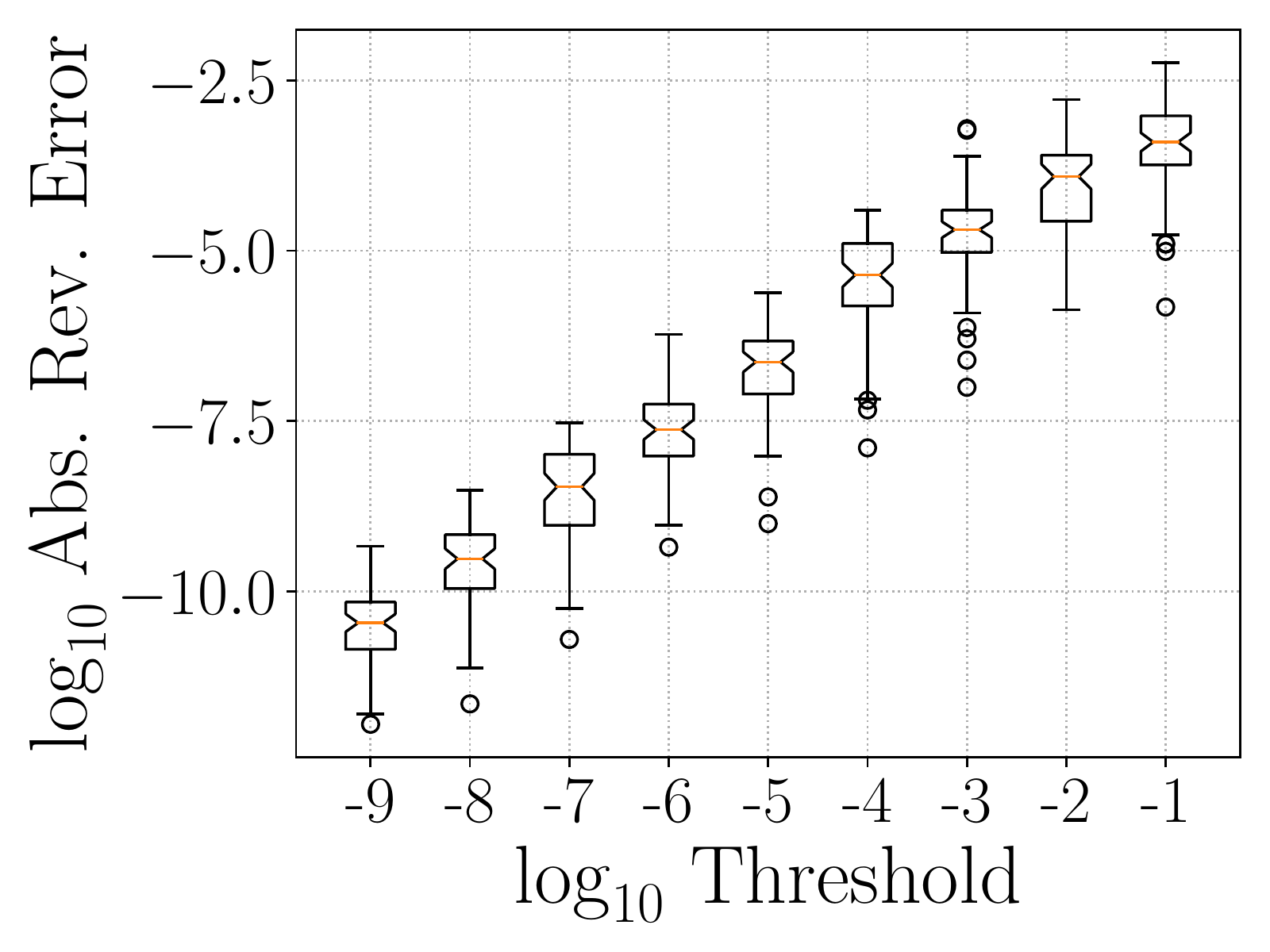}
    \caption{Error in Reversibility}
    \label{subfig:cox-poisson-reversibility}
  \end{subfigure}
  ~
  \begin{subfigure}[t]{0.49\textwidth}
    \includegraphics[width=\textwidth]{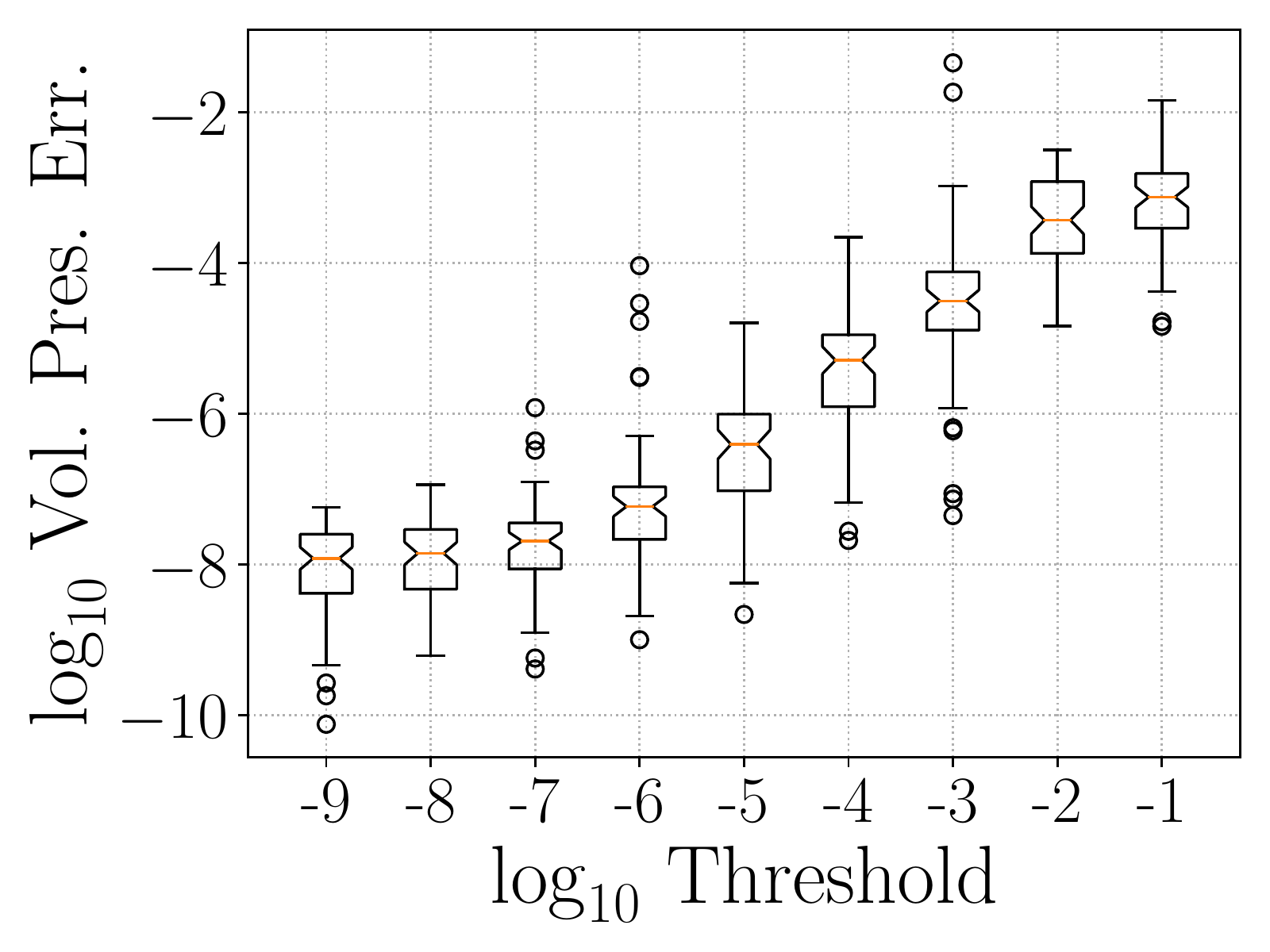}
    \caption{Error in Volume-Preservation}
    \label{subfig:cox-poisson-jacobian-determinant}
  \end{subfigure}
  
  \begin{subfigure}[t]{0.49\textwidth}
    \includegraphics[width=\textwidth]{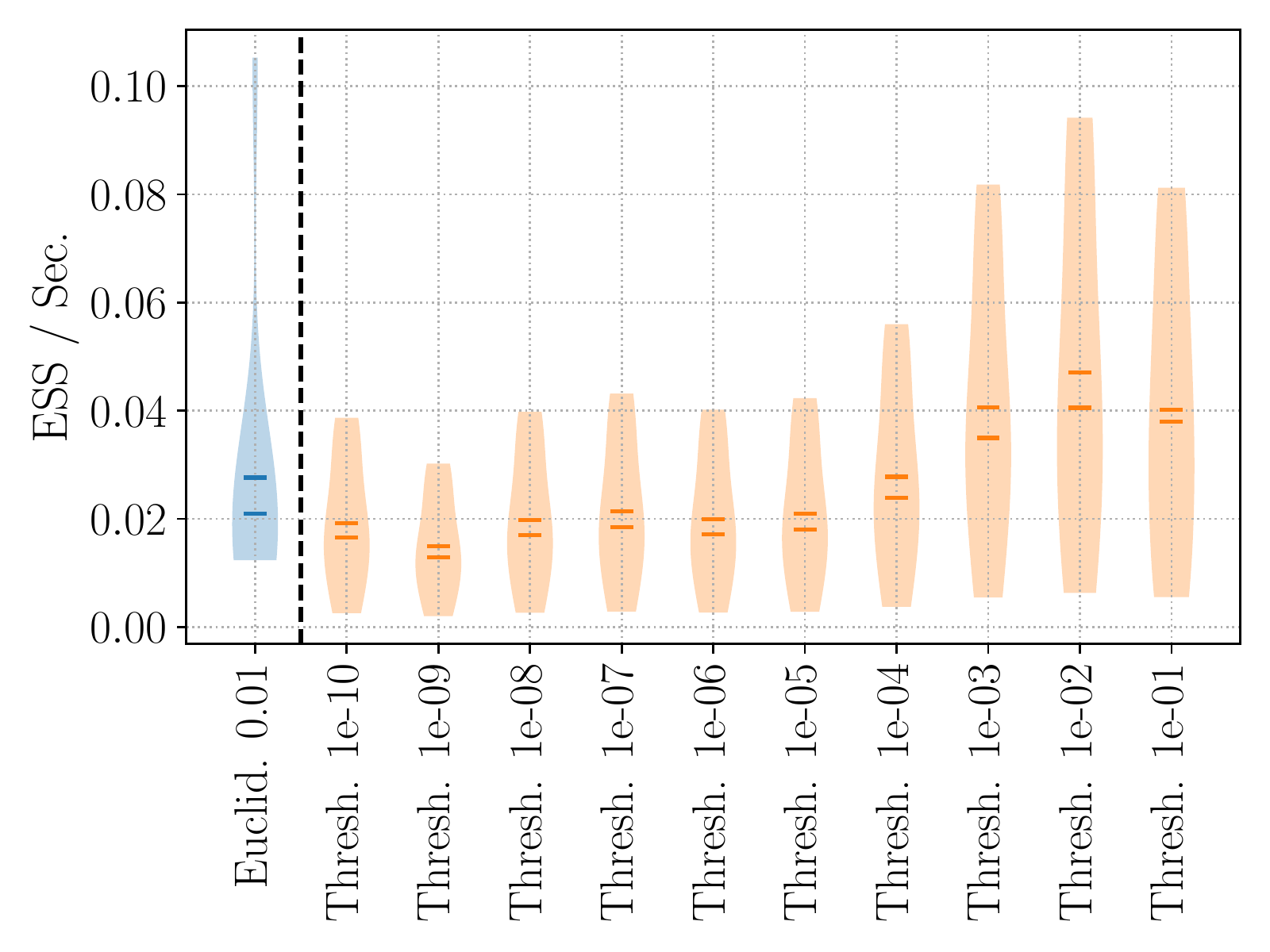}
    \caption{$\beta$}
    \label{subfig:cox-poisson-beta-ess-per-second}
  \end{subfigure}
  ~
  \begin{subfigure}[t]{0.49\textwidth}
    \includegraphics[width=\textwidth]{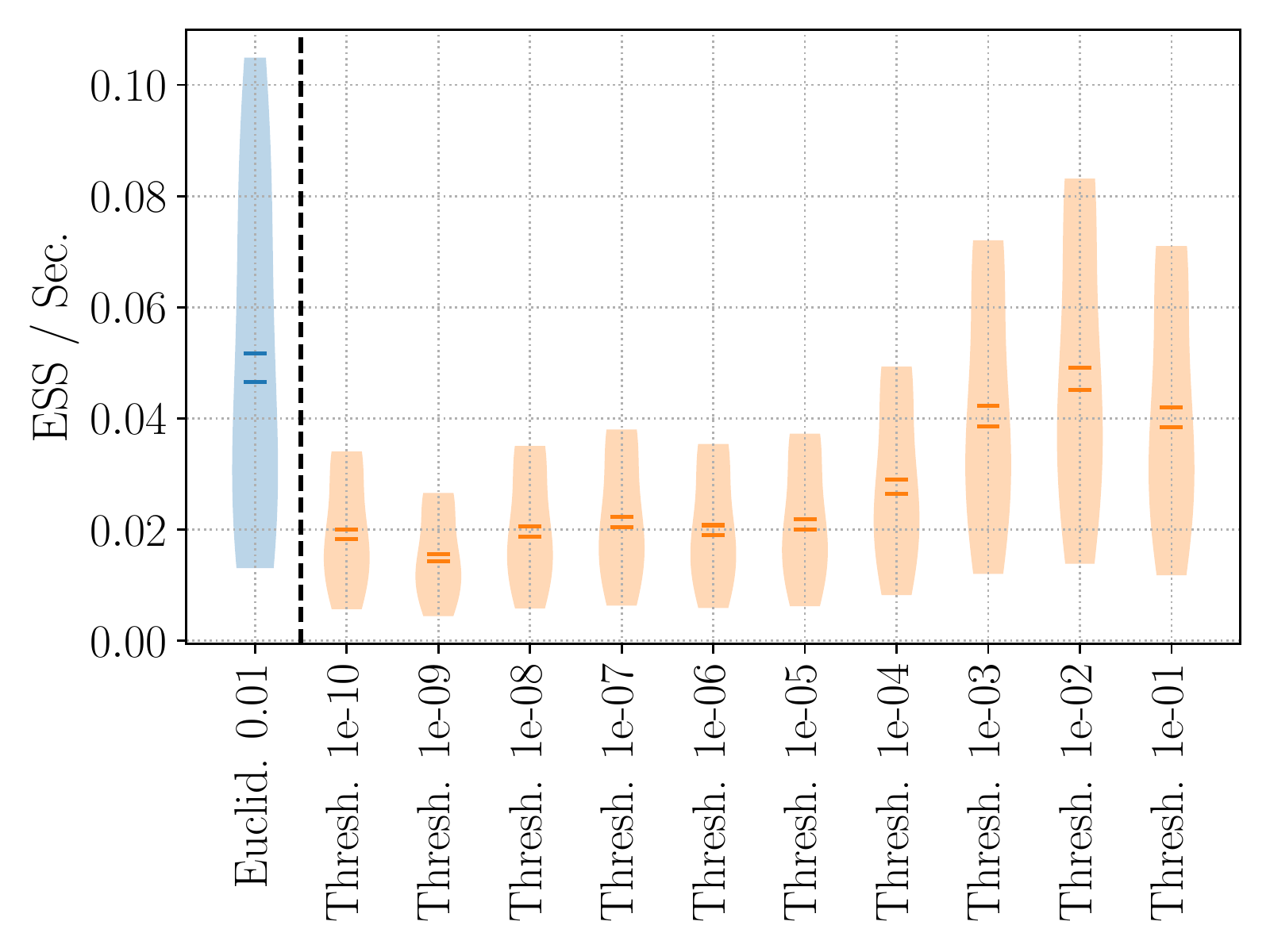}
    \caption{$\sigma$}
    \label{subfig:cox-poisson-sigma-ess-per-second}
  \end{subfigure}

  \caption{Visualization of the error in reversibility (see
    \cref{subfig:cox-poisson-reversibility}) and error in
    volume-preservation (see
    \cref{subfig:cox-poisson-jacobian-determinant}) in the log-Gaussian Cox-Poisson posterior. We also illustrate the ESS per second for sampling the hyperparameters of the Cox-Poisson model and a comparison against a Euclidean HMC method.}
\end{figure}

\begin{figure}[t!]
  \centering
  \begin{subfigure}[t]{0.32\textwidth}
    \includegraphics[width=\textwidth]{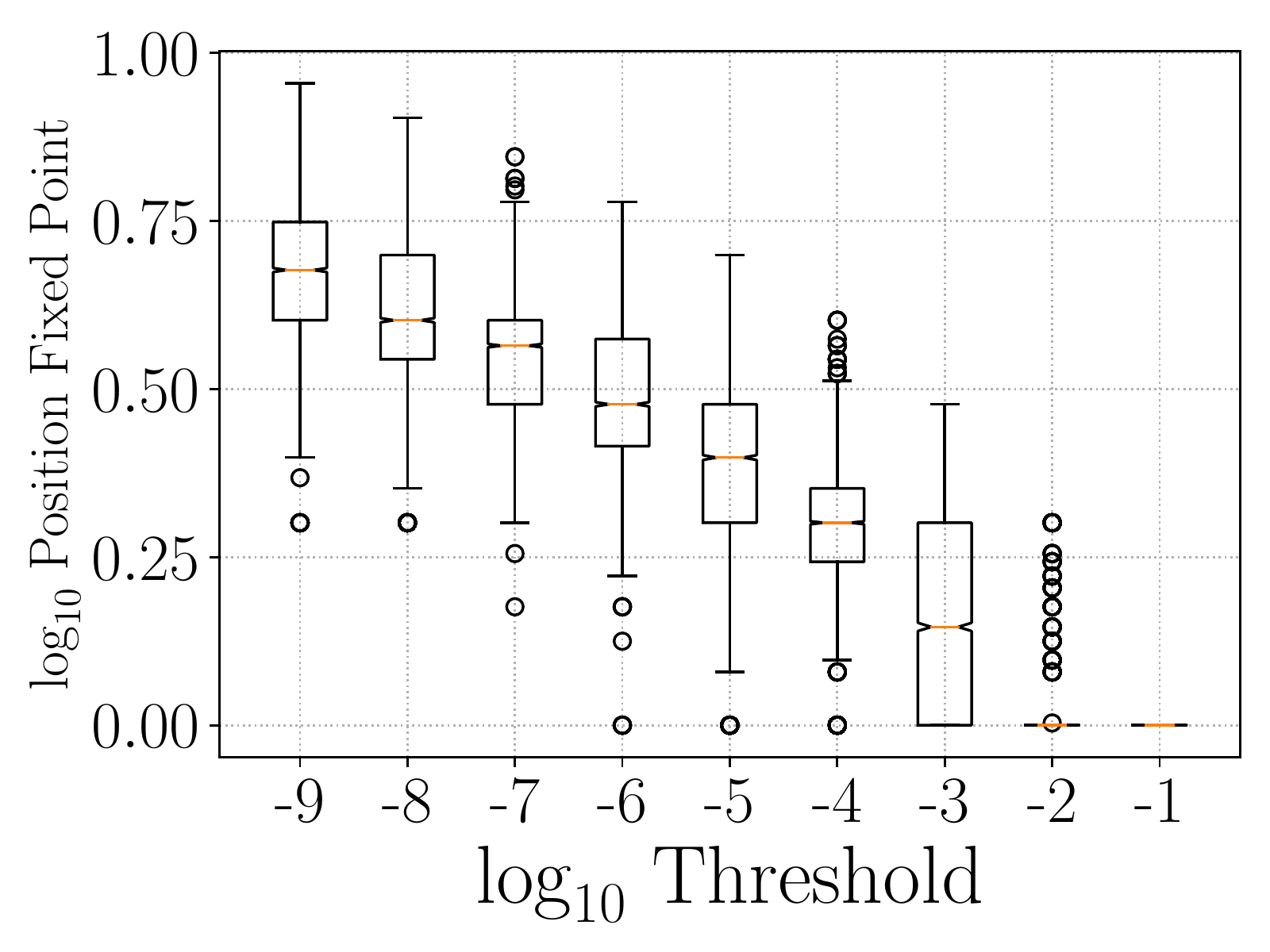}
    \caption{Position fixed point iterations}
    \label{subfig:cox-poisson-fixed-point-position}
  \end{subfigure}
  ~
  \begin{subfigure}[t]{0.32\textwidth}
    \includegraphics[width=\textwidth]{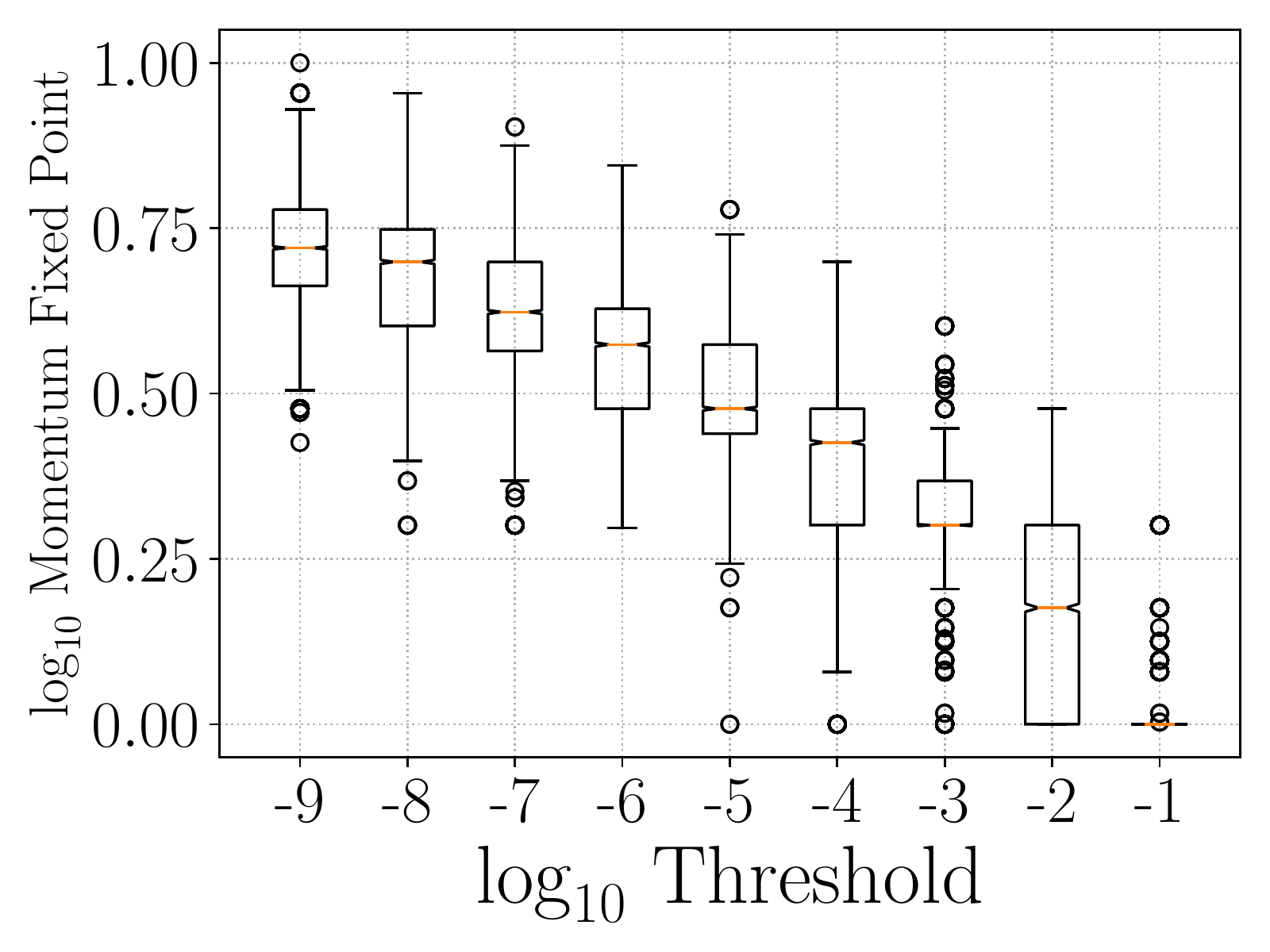}
    \caption{Momentum fixed point iterations}
    \label{subfig:cox-poisson-fixed-point-momentum}
  \end{subfigure}
  ~
  \begin{subfigure}[t]{0.32\textwidth}
    \includegraphics[width=\textwidth]{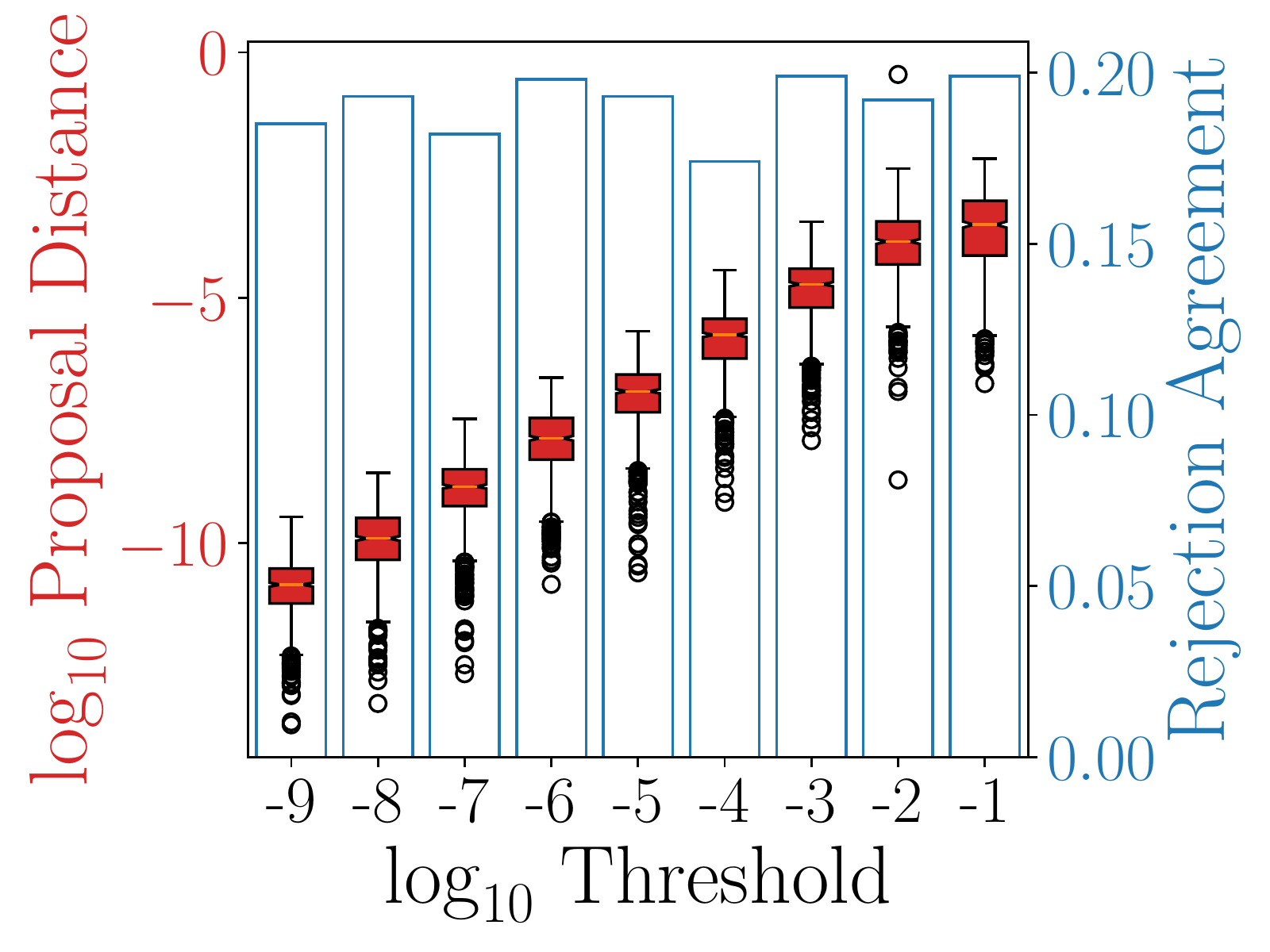}
    \caption{Difference in Transition Kernel}
    \label{subfig:cox-poisson-transition-difference}
  \end{subfigure}
  \caption{Visualization of the number of fixed point iterations required to compute the implicit updates to position and momentum required by the generalized leapfrog integrator in the log-Gaussian Cox-Poisson model. The number of decimal digits of similarity in transition kernels are shown in  \cref{subfig:cox-poisson-transition-difference} for variable thresholds.}
\end{figure}

The log-Gaussian Cox-Poisson model allows us to model count data within a spatial grid. In particular, consider a $N\times N$ grid on the unit square. Within the $(i, j)$ sub-region, the counts of some quantity of interest are denoted $y_{i,j}$. We model these count observations as following a Poisson distribution whose rate is spatially correlated according to a Gaussian process. Formally, we consider the following generative procedure:
\begin{align}
  \beta &\sim \mathrm{Normal}(2, 1/2) \\
  \sigma^2 &\sim\mathrm{Normal}(2, 1/2) \\
  \Sigma_{(i, j), (i', j')} \vert \sigma^2, \beta &= \sigma^2 \exp\paren{-\frac{\sqrt{(i-i')^2 + (j-j')^2}}{N \beta}} \\
  \mathrm{vec}(\mathbf{x}) \vert \Sigma &\sim \mathrm{MultivariateNormal}(\mu\mathbf{1}, \Sigma) \\
  y_{i, j}\vert x_{i, j} &\sim \mathrm{Poisson}(\exp(x_{i, j}) / N^2),
\end{align}
where $\mathrm{vec} : \R^{N\times N} \to\R^{N^2}$ by concatenating columns into a single vector.
The Bayesian inference task is to infer both the Gaussian process posterior $\mathbf{x}$ and the posterior distribution of $(\beta,\sigma^2)$ given observations of the Poisson process. Following \citet{rmhmc}, we employ a Metropolis-within-Gibbs-like strategy and alternate between sampling the posterior of $\mathbf{x}\vert (y_{i, j})_{i,j=1}^N, \sigma^2,\beta$ and the posterior of $(\sigma^2,\beta) \vert \mathbf{x}$. In generating data from this model, we set $\sigma^2 = 1.91$, $\beta = 1/33$, and $\mu=\log 126 - 1.91 / 2$. In our experiments we set $N=32$ so that the total dimensionality of the posterior is $32^2 + 2=1026$. Similarly to the situation in \cref{subsec:experiment-stochastic-volatility-model}, the $\mathbf{x}\vert (y_{i, j})_{i,j=1}^N, \sigma^2,\beta$ has a constant Fisher information metric; however, the conditional distribution of the hyperparameters $(\sigma^2,\beta) \vert \mathbf{x}$ depends on $\sigma^2$ and $\beta$, thereby necessitating the use of the generalized leapfrog integrator. Once again, to respect the constraints $\sigma^2 > 0$ and $\beta>0$, we employ the transformation $\sigma^2=\exp(\phi_1)$ and $\beta=\exp(\phi_2)$. Since the conditional distribution of $\mathrm{vec}(\mathbf{x})$ given $\phi_1$ and $\phi_2$ is simply a multivariate Gaussian, the Fisher information is
\begin{align}
  \mathbf{G}(\phi_1,\phi_2) &= \frac{1}{2}\begin{pmatrix}
    N^2 & \Omega(\phi_1,\phi_2) \\
    \Omega(\phi_1,\phi_2) & \Lambda(\phi_1,\phi_2)
  \end{pmatrix} \\
  \Omega(\phi_1,\phi_2) &= \mathrm{trace}\paren{\Sigma^{-1}(\phi_1,\phi_2) \frac{\partial}{\partial \phi_2} \Sigma(\phi_1,\phi_2)} \\
  \Lambda(\phi_1,\phi_2) &= \mathrm{trace}\paren{\Sigma^{-1}(\phi_1,\phi_2) \frac{\partial}{\partial \phi_2} \Sigma(\phi_1,\phi_2)\Sigma^{-1}(\phi_1,\phi_2) \frac{\partial}{\partial \phi_2} \Sigma(\phi_1,\phi_2)}.
\end{align}
In sampling from the posterior distribution of $(\sigma^2,\beta)$ we use the generalized leapfrog integrator with six integration steps and a step-size of 0.5. We seek to draw 5,000 samples after an initial burn-in period of 1,000 samples.

We visualize the posterior mean and standard deviation of the log-Gaussian Cox-Poisson process in \cref{fig:cox-poisson-cox-poisson}. As in the case of \cref{subsec:experiment-stochastic-volatility-model}, we observe few detectable differences within the first two moments of the posterior. We also visualize the marginal posteriors of the parameters $(\sigma^2,\beta)$, which shows substantial overlap, regardless of the threshold used in generating the samples. When assessing the degree to which reversibility and volume preservation are violated, we observe that the worst-case behavior of the generalized leapfrog integrator still exhibits error only in the third or fourth decimal digit. The similarity of transition kernels also reveals that the worst case difference among transitions still maintains approximately three digits of similarity with a transition kernel whose threshold is $1\times 10^{-10}$. When contextualized in terms of computational effort, one sees that a threshold of $1\times 10^{-1}$ requires only a single fixed point iteration on average for both the position and momentum variables, compared with three or four required by a threshold of $1\times 10^{-9}$. Since computing the Riemannian metric requires the inverse of $\Sigma$, which is a $N^2\times N^2$ matrix, one infers that the computational complexity of computing the Riemannian metric for the Cox-Poisson model scales as $\mathcal{O}(N^6)$; therefore, one desires few fixed point iterations, particularly in the implicit update to position, for which the Riemannian metric must be recomputed at each iteration.

We evaluate the ESS per second for the Log-Gaussian Cox-Poisson model in \cref{subfig:cox-poisson-sigma-ess-per-second,subfig:cox-poisson-beta-ess-per-second}. Of the two hyperparameters in the Cox-Poisson model, $\beta$ is the more challenging, having the smallest time-normalized ESS. Even with the most conservative convergence threshold, RMHMC outperforms the Euclidean algorithm; as a result, RMHMC has superior minimum ESS per second relative to Euclidean HMC.
As in \cref{subsec:experiment-stochastic-volatility-model}, one questions if the similarity of these posteriors is attributable to the choice of a very small step-size. However, the acceptance rate of the model parameters $(\sigma^2, \beta)$ is approximately eighty-five percent, thereby indicating that the step-size of the numerical integrator is not so small as to imply near-perfect conservation of the Hamiltonian energy.

\subsection{Fitzhugh-Nagumo Differential Equation Posterior}\label{subsec:experiment-fitzhugh-nagumo}

\begin{figure}[t!]
  \centering
  \includegraphics[width=\textwidth]{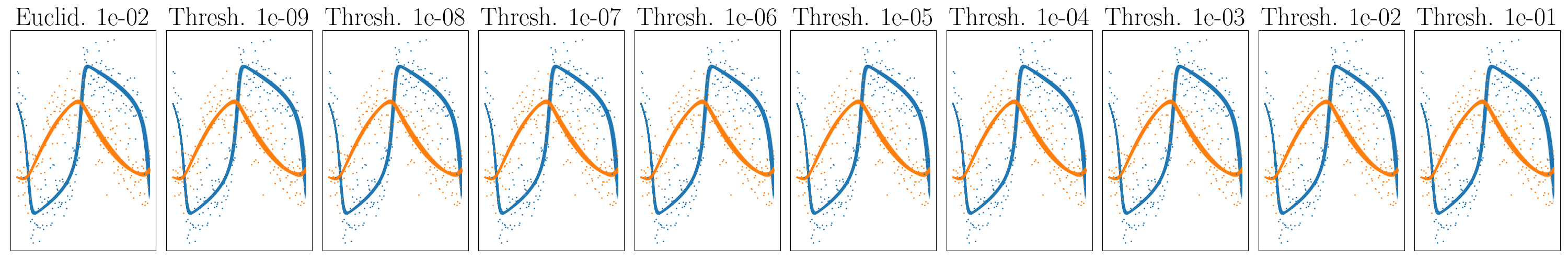}
  \caption{Visualization of the posterior mean of $v_t$ (in blue) and $r_t$ (in orange) in the Fitzhugh-Nagumo posterior distribution. The posterior means are visually indistinguishable irrespective of the convergence tolerance used in RMHMC.}
  \label{fig:fitzhugh-nagumo-fitzhugh-nagumo}
\end{figure}

\begin{figure}[t!]
  \centering
  \begin{subfigure}[t]{0.49\textwidth}
    \includegraphics[width=\textwidth]{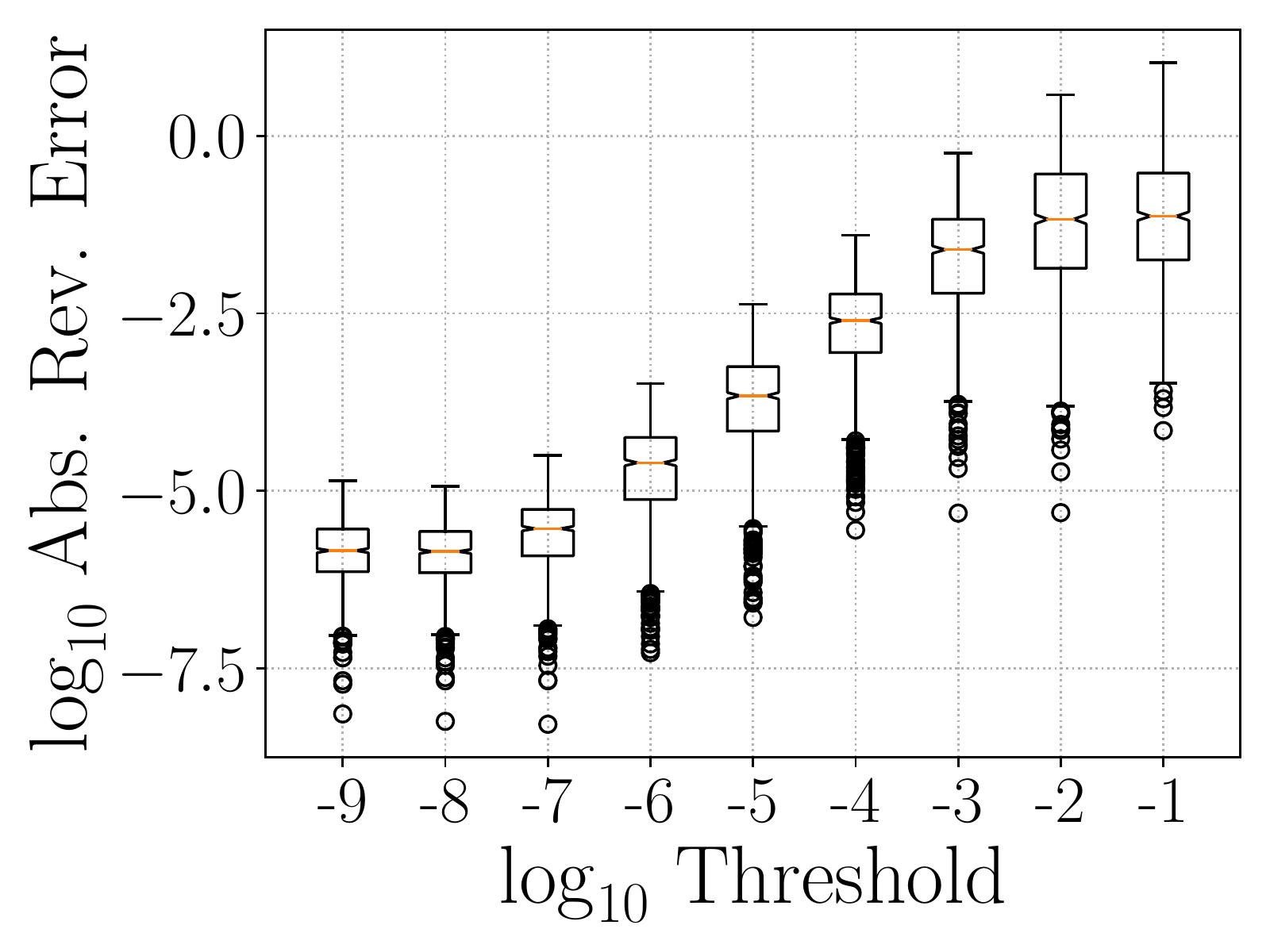}
    \caption{Error in Reversibility}
    \label{subfig:fitzhugh-nagumo-reversibility}
  \end{subfigure}
  ~
  \begin{subfigure}[t]{0.49\textwidth}
    \includegraphics[width=\textwidth]{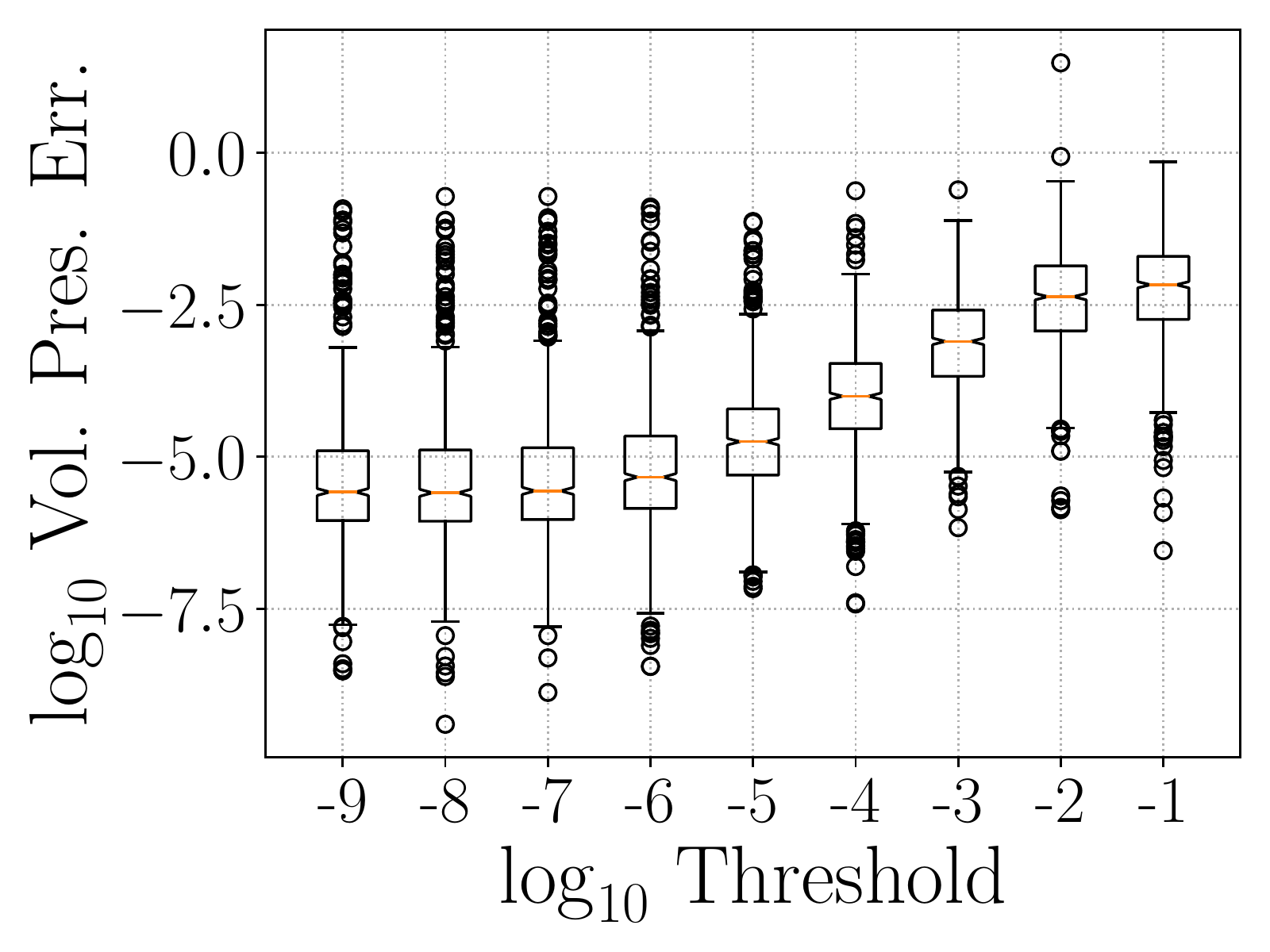}
    \caption{Error in Volume-Preservation}
    \label{subfig:fitzhugh-nagumo-jacobian-determinant}
  \end{subfigure}
  
  \begin{subfigure}[t]{\textwidth}
    \includegraphics[width=\textwidth]{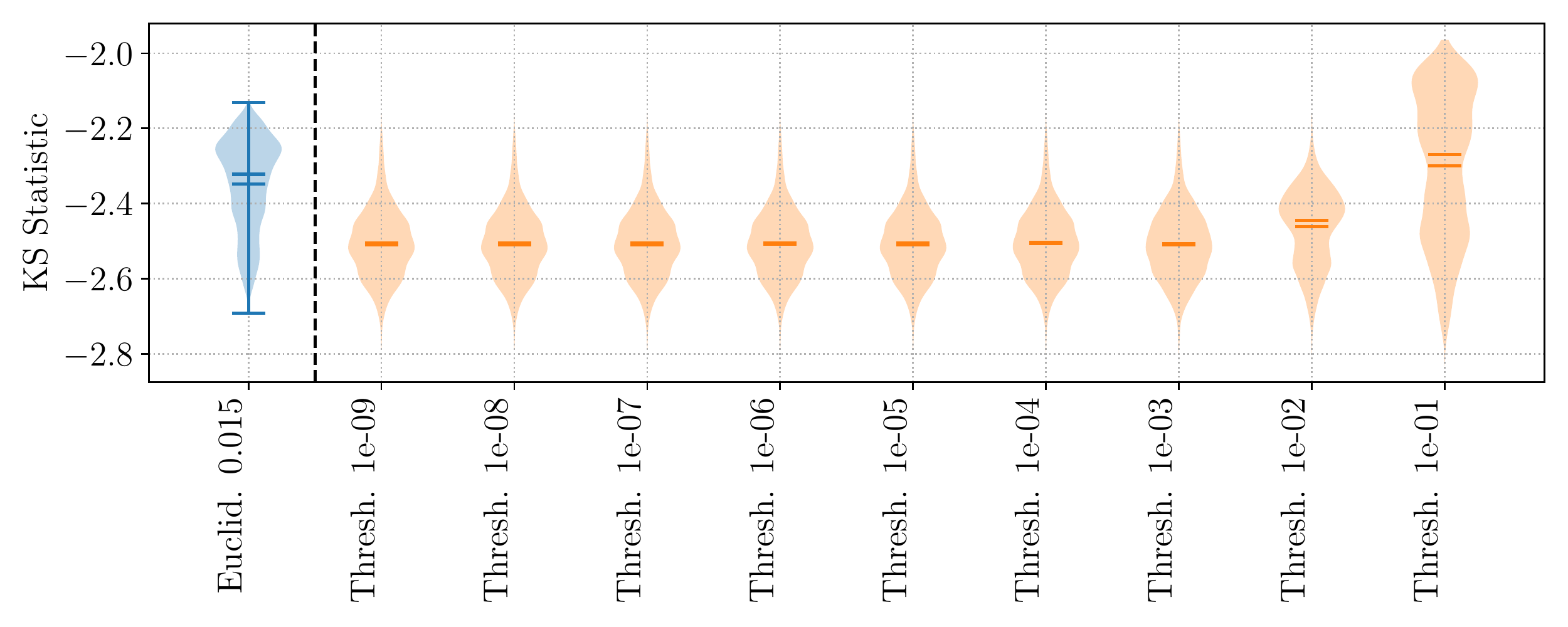}
    \caption{Ergodicity of RMHMC and HMC}
    \label{subfig:fitzhugh-nagumo-ergodicity}
  \end{subfigure}
  \caption{Visualization of the computational effort required to sample with RMHMC from the Fitzhugh-Nagumo posterior. We show the number of fixed point iterations required to compute the two implicit steps of the generalized leapfrog integrator as well as the distribution of Kolmogorov-Smirnov statistics over variable thresholds.}
\end{figure}

\begin{figure}[t!]
  \centering
  \begin{subfigure}[t]{0.32\textwidth}
    \includegraphics[width=\textwidth]{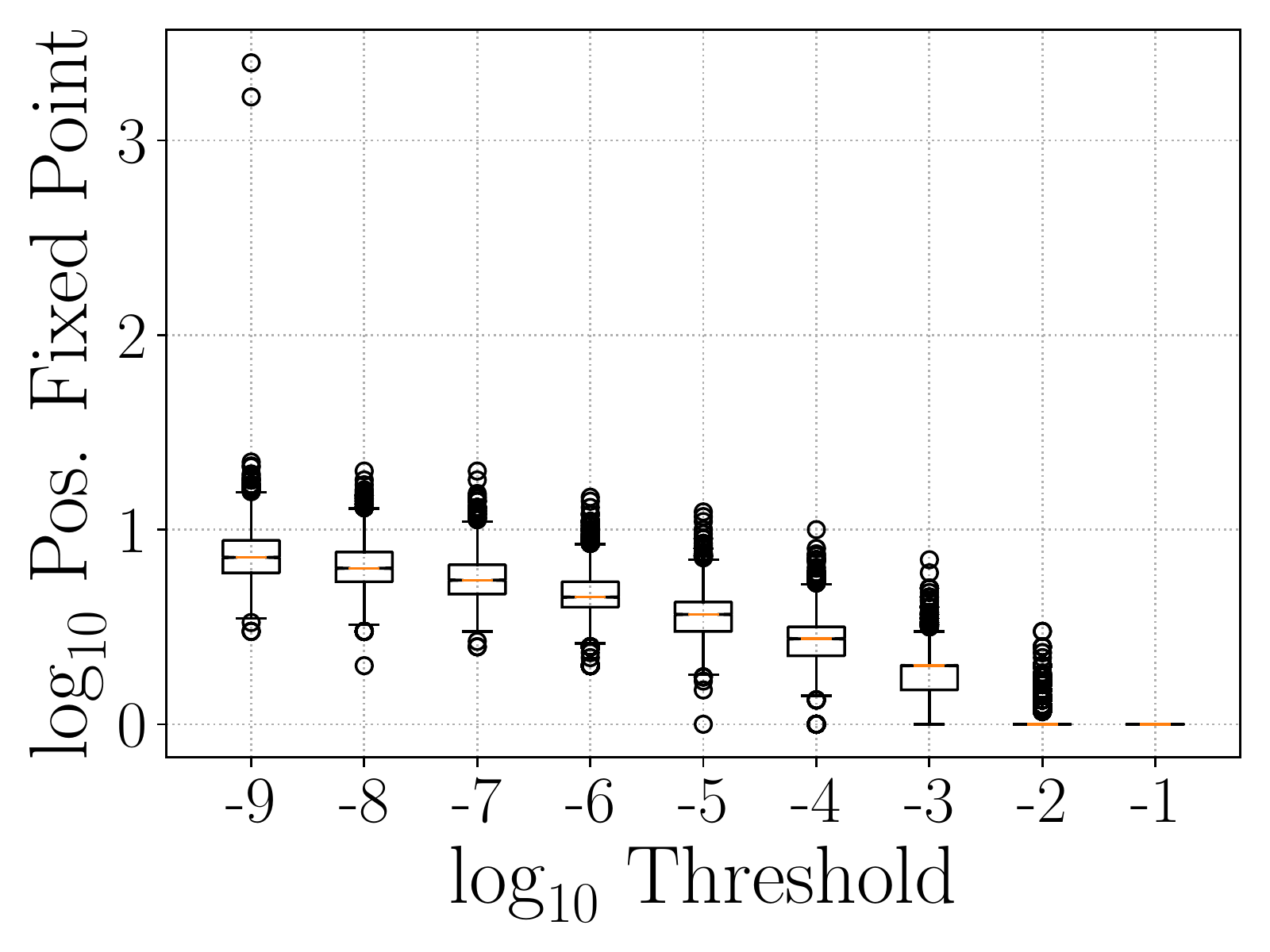}
    \caption{Number of fixed point iterations for position variable.}
    \label{subfig:fitzhugh-nagumo-fixed-point-position}
  \end{subfigure}
  ~
  \begin{subfigure}[t]{0.32\textwidth}
    \includegraphics[width=\textwidth]{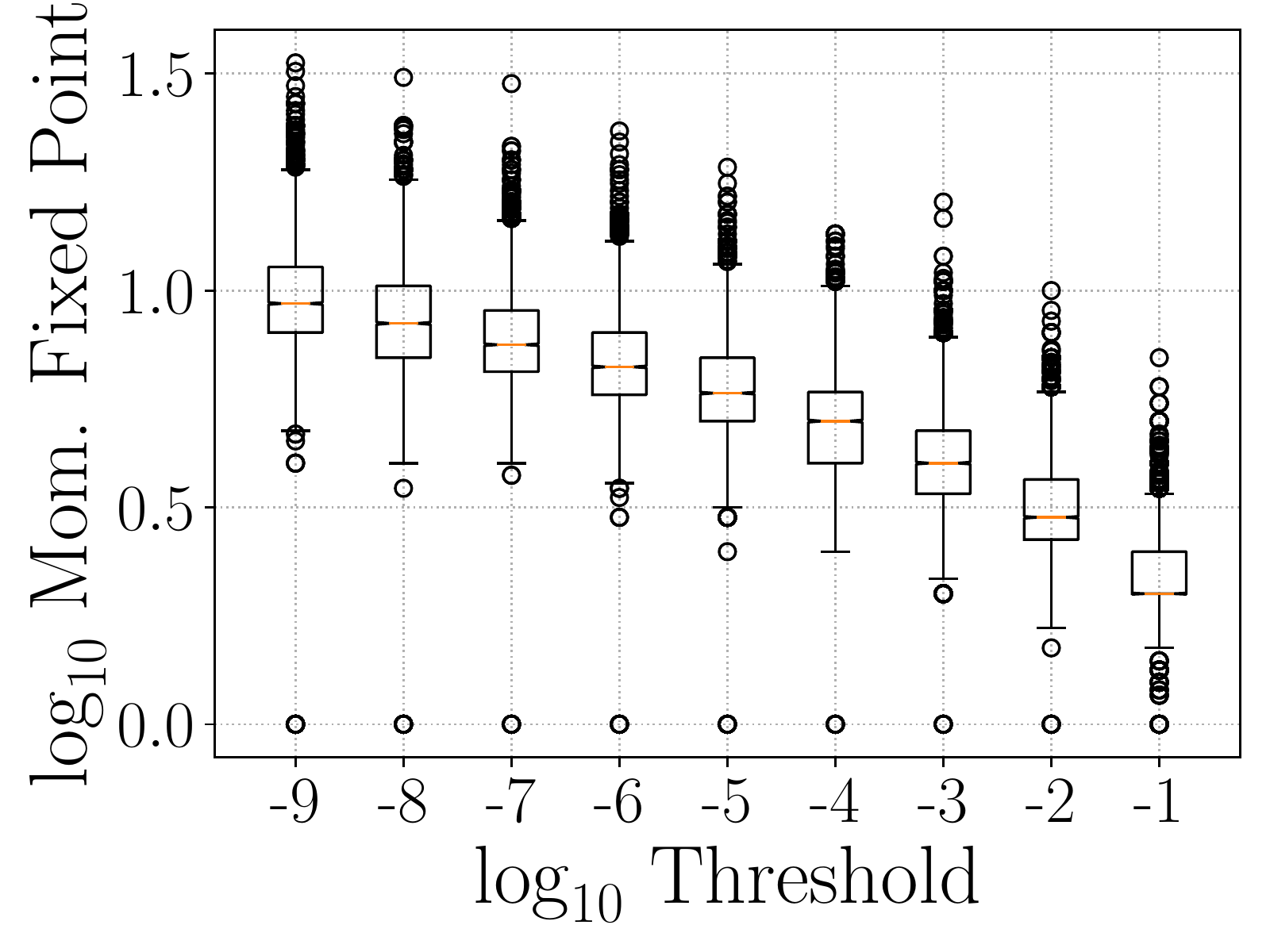}
    \caption{Number of fixed point iterations for momentum variable.}
    \label{subfig:fitzhugh-nagumo-fixed-point-momentum}
  \end{subfigure}
  ~
  \begin{subfigure}[t]{0.32\textwidth}
    \includegraphics[width=\textwidth]{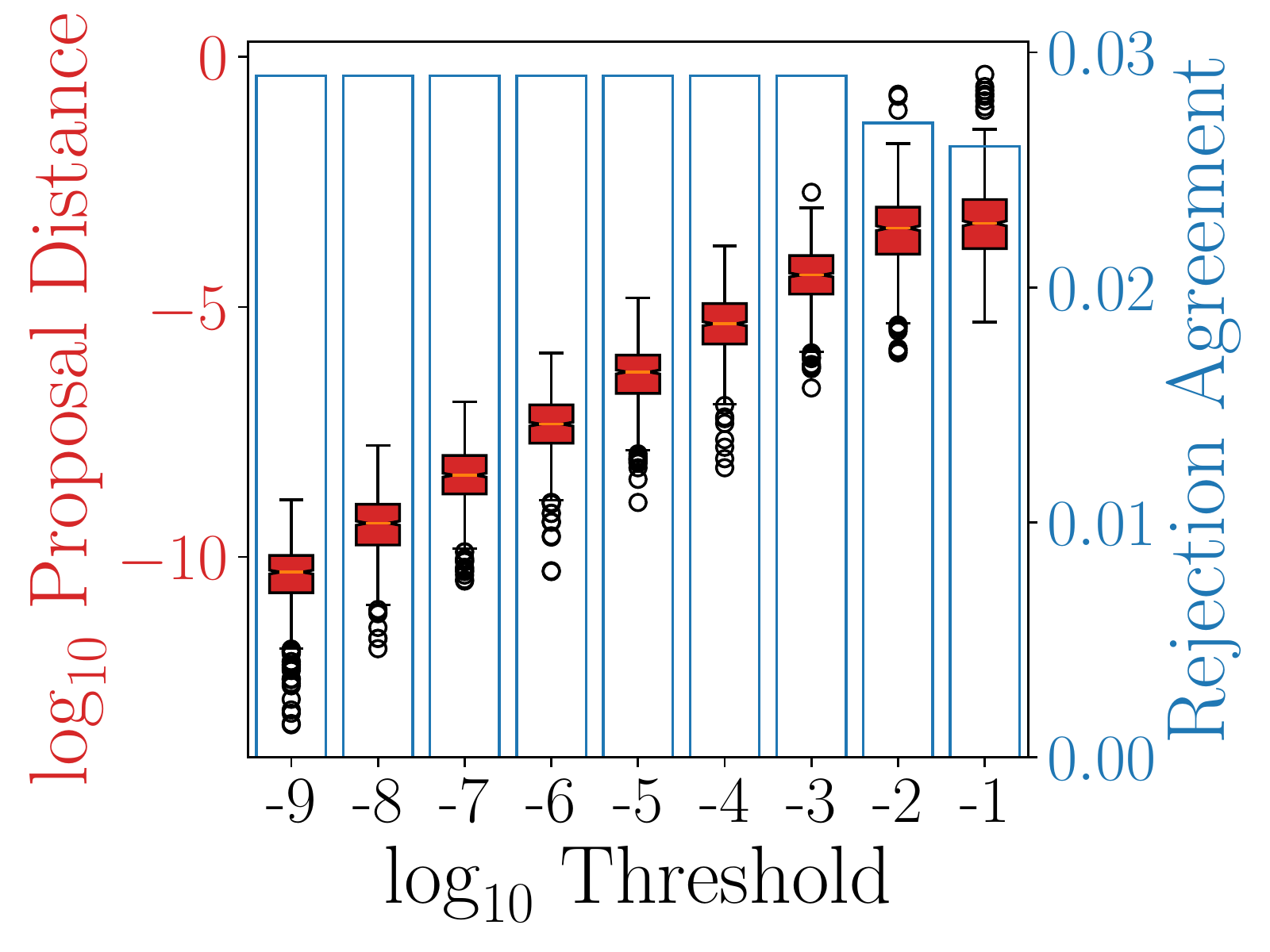}
    \caption{Difference in Transition Kernel}
    \label{subfig:fitzhugh-nagumo-transition-difference}
  \end{subfigure}
  \caption{Visualization of the error in reversibility (see
    \cref{subfig:fitzhugh-nagumo-reversibility}), error in
    volume-preservation (see
    \cref{subfig:fitzhugh-nagumo-jacobian-determinant}), and the number of
    decimal digits of similarity in transition kernels (see
    \cref{subfig:fitzhugh-nagumo-transition-difference}) for variable
    thresholds in the Fitzhugh-Nagumo posterior distribution.}
\end{figure}

\begin{figure}[t!]
  \centering
  \begin{subfigure}[t]{0.49\textwidth}
    \includegraphics[width=\textwidth]{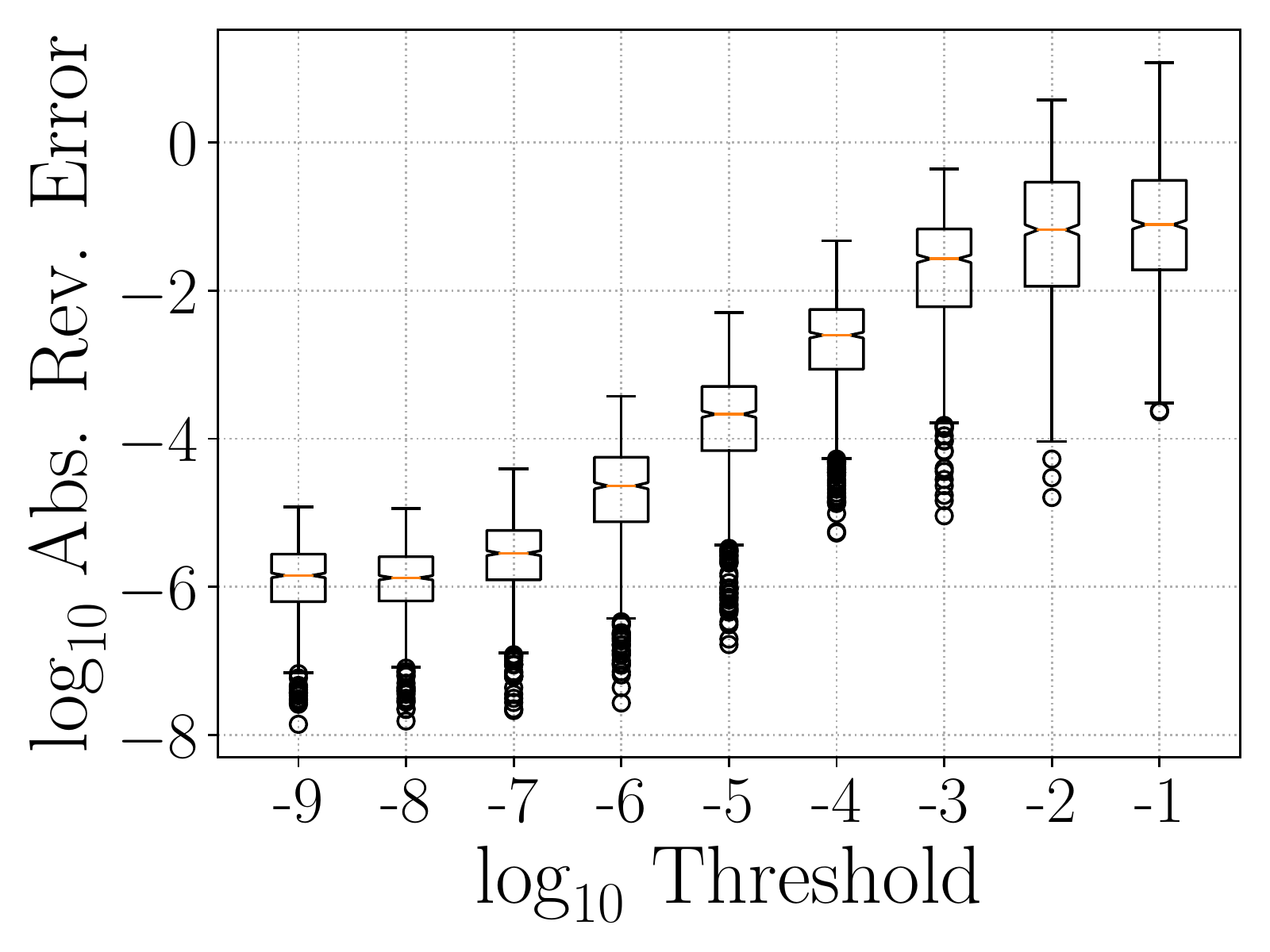}
    \caption{Error in Reversibility}
  \end{subfigure}
  ~
  \begin{subfigure}[t]{0.49\textwidth}
    \includegraphics[width=\textwidth]{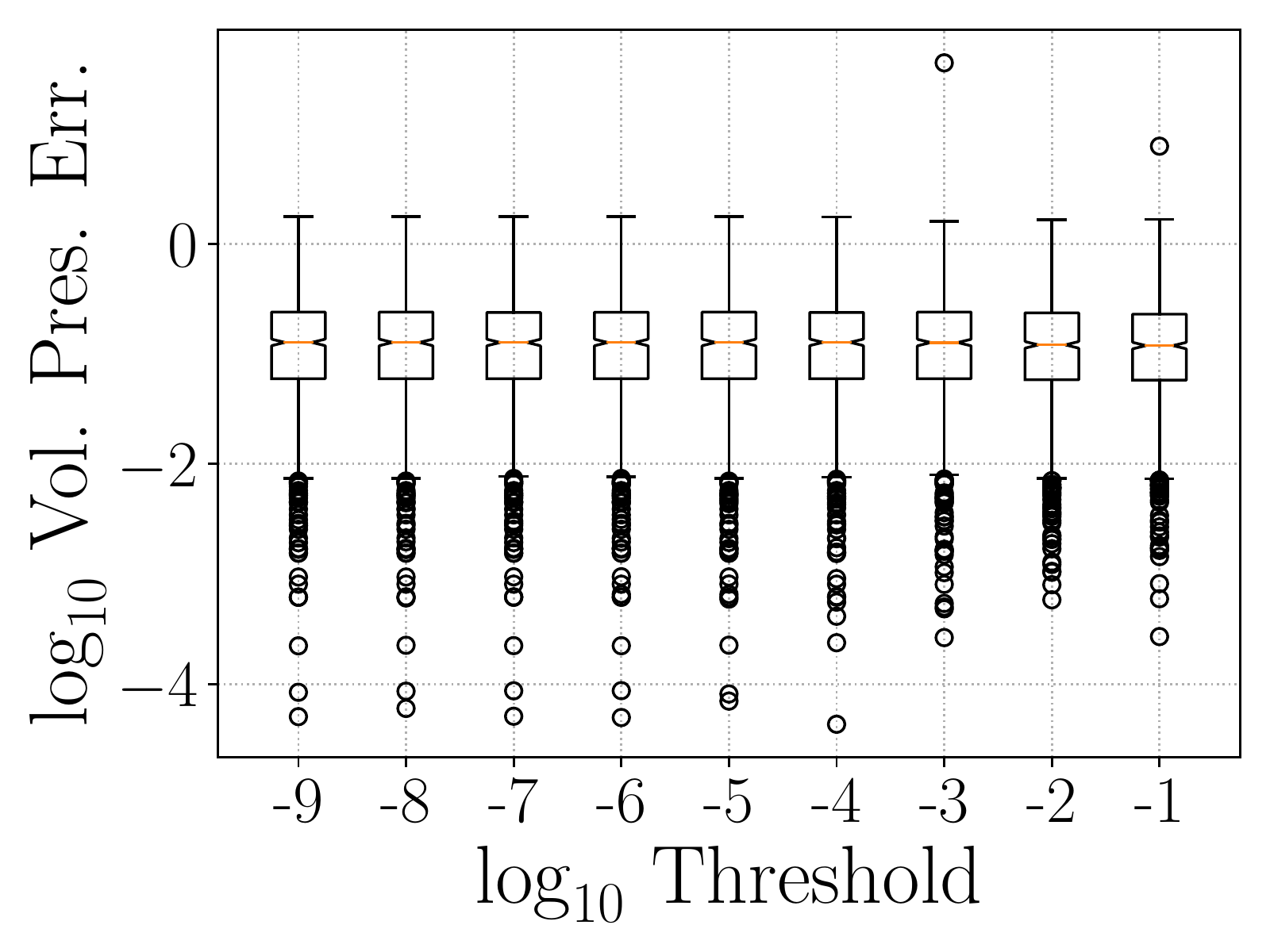}
    \caption{Error in Volume-Preservation}
  \end{subfigure}
  
  \begin{subfigure}[t]{\textwidth}
    \includegraphics[width=\textwidth]{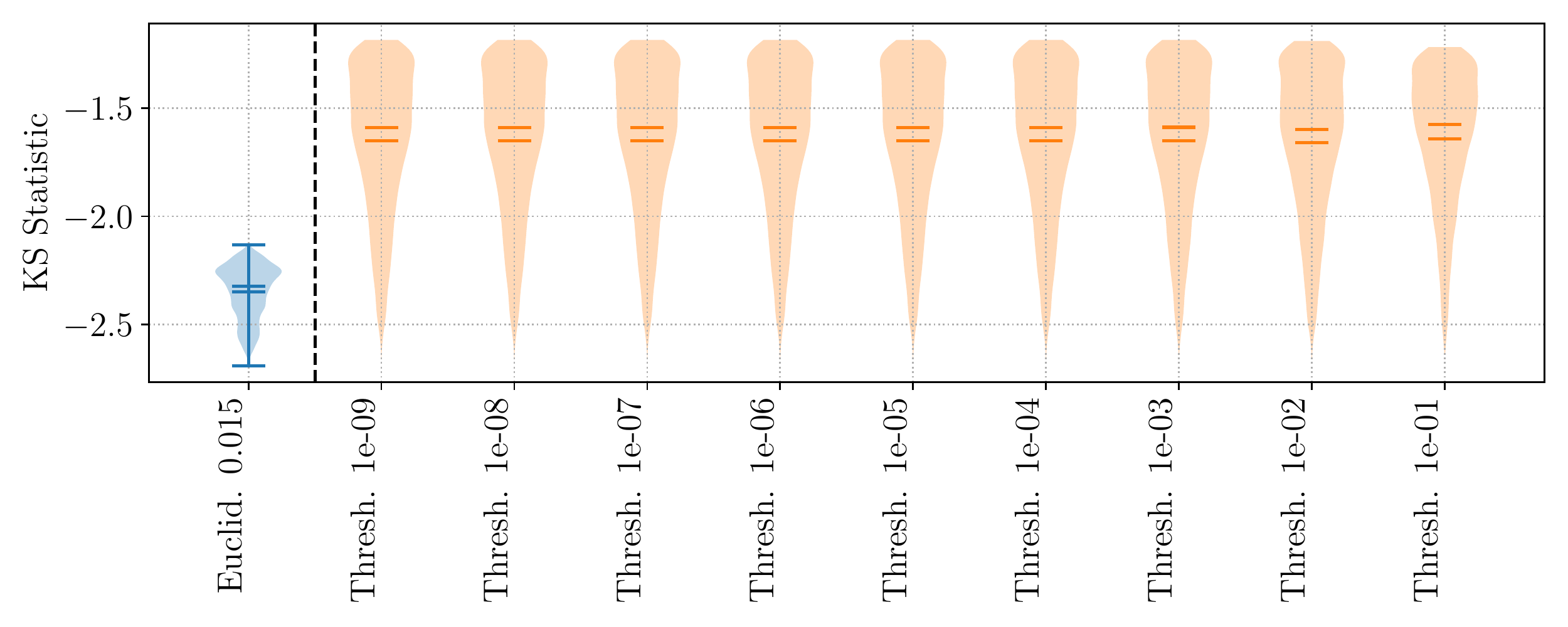}
    \caption{Ergodicity of RMHMC with incorrectly specified partial derivatives}
    \label{subfig:fitzhugh-nagumo-incorrect-ergodicity}
  \end{subfigure}
  \caption{Visualization of how a failure to respect the symmetry of partial derivatives yields severely degraded sample quality; indeed, the target distribution may be warped. Although reversibility can be reduced by a decreasing threshold, the property of volume-preservation cannot be recovered irrespective of threshold. As a consequence, the RMHMC algorithm is no longer ergodic for the target distribution, which is reflected in the degraded ergodicity measure.}
  \label{fig:fitzhugh-nagumo-incorrect-metrics}
\end{figure}

\begin{figure}[t!]
  \centering
  \begin{subfigure}[t]{0.32\textwidth}
    \includegraphics[width=\textwidth]{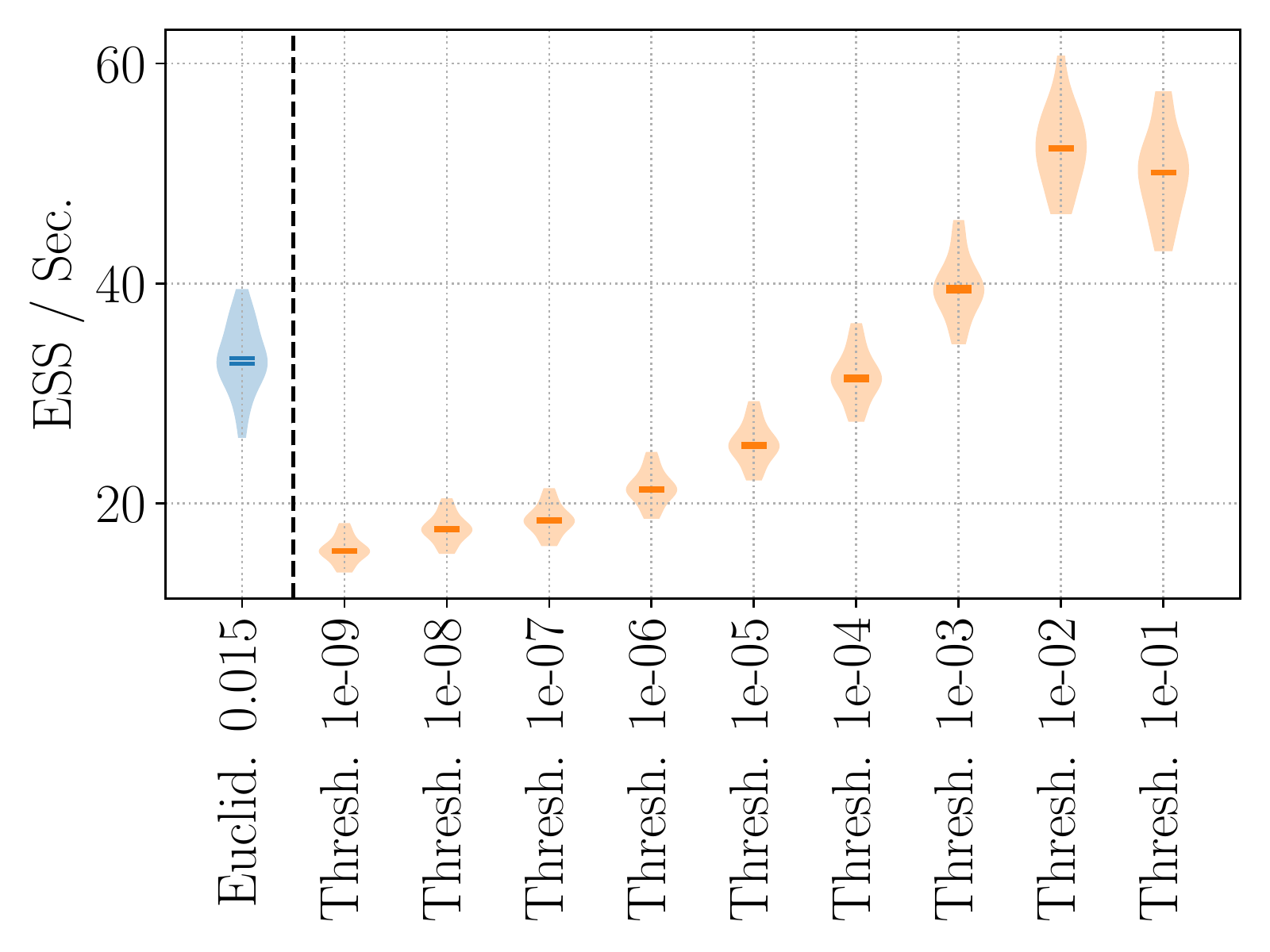}
    \caption{$a$}
  \end{subfigure}
  ~
  \begin{subfigure}[t]{0.32\textwidth}
    \includegraphics[width=\textwidth]{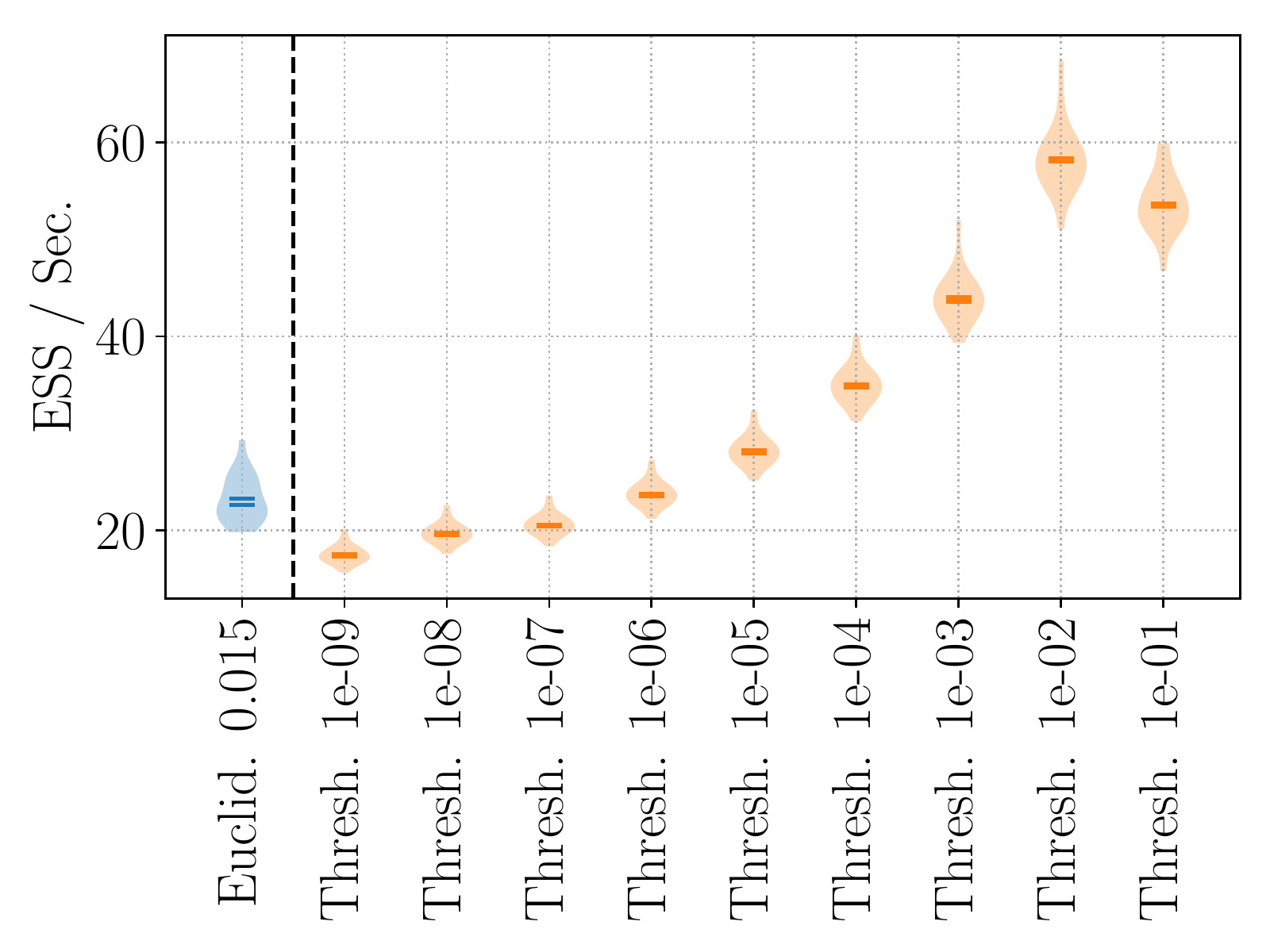}
    \caption{$b$}
  \end{subfigure}
  ~
  \begin{subfigure}[t]{0.32\textwidth}
    \includegraphics[width=\textwidth]{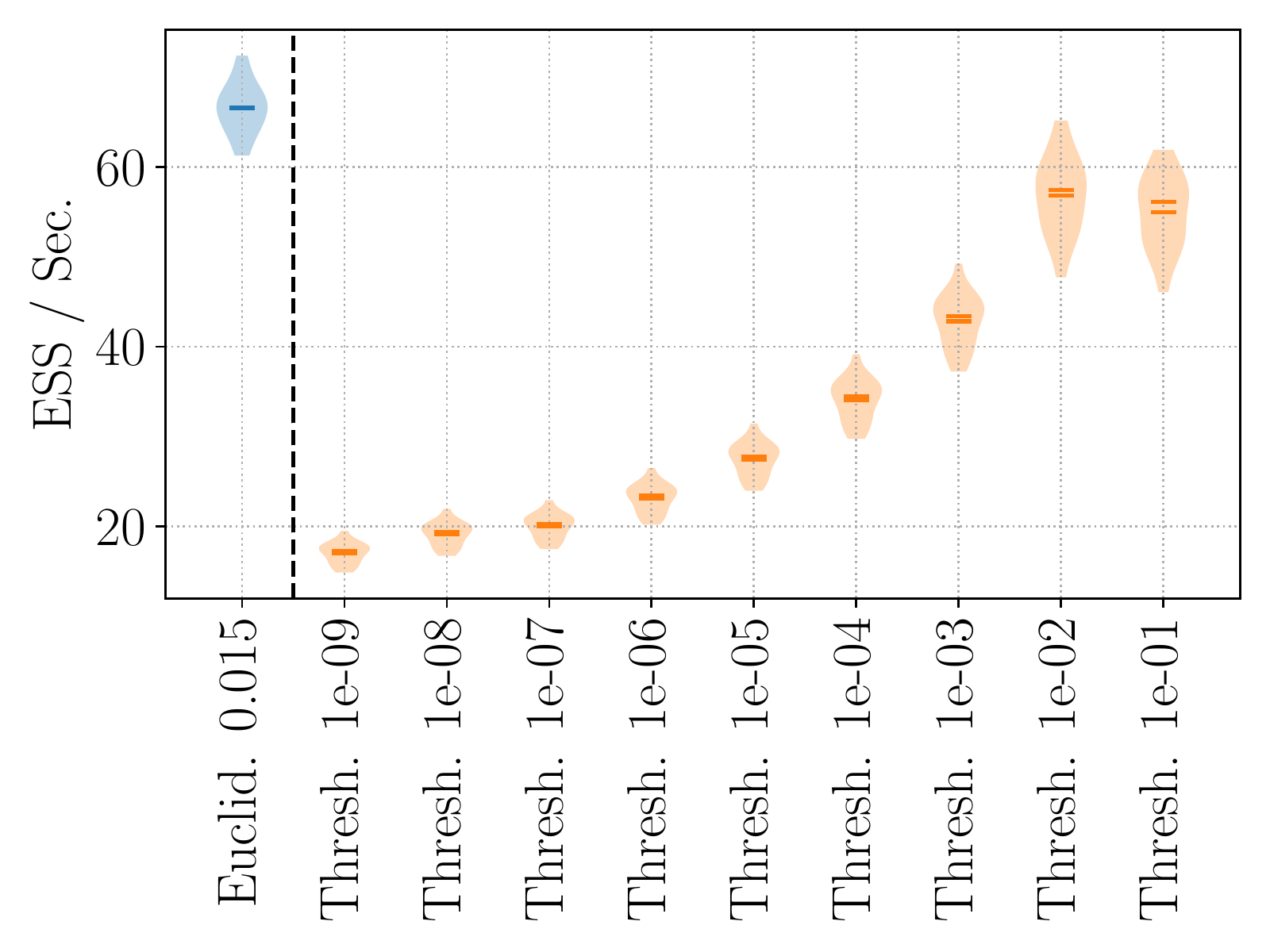}
    \caption{$c$}
  \end{subfigure}
  \caption{Visualization of the effective sample sizes (ESS) of the variables in the Fitzhugh-Nagumo posterior. Distributions over ESS are computed by splitting a Markov chain of length 100,000 into twenty contiguous sequences of length 5,000. We observe that the effective sample size for RMHMC is in excess of 5,000, indicating super-efficient sampling.}
  \label{fig:fitzhugh-nagumo-ess}
\end{figure}

\begin{figure}[t!]
  \begin{subfigure}[t]{0.49\textwidth}
    \centering
    \includegraphics[width=\textwidth]{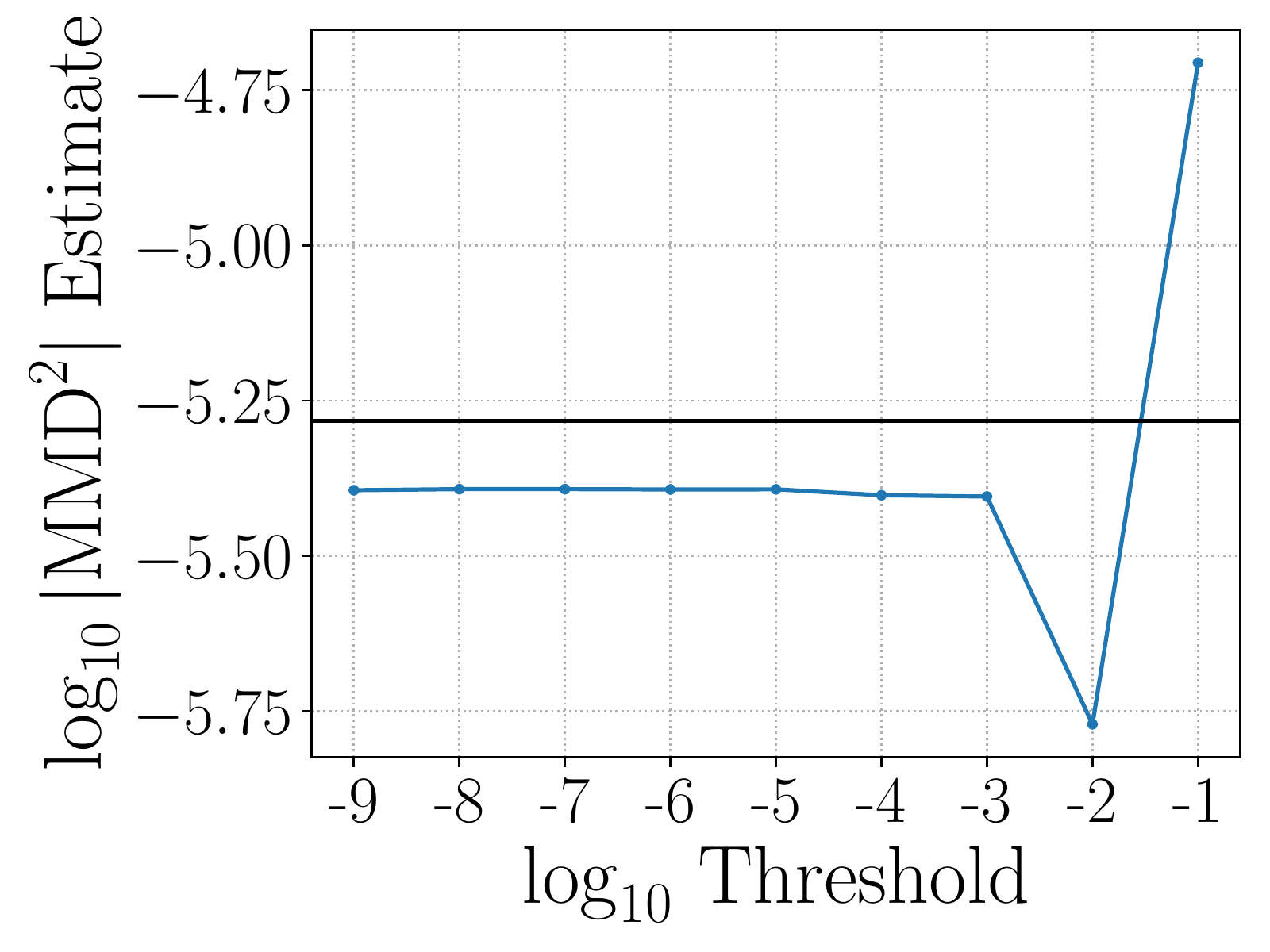}
    \caption{$\mathrm{MMD}_u^2$}
  \end{subfigure}
  ~
  \begin{subfigure}[t]{0.49\textwidth}
    \centering
    \includegraphics[width=\textwidth]{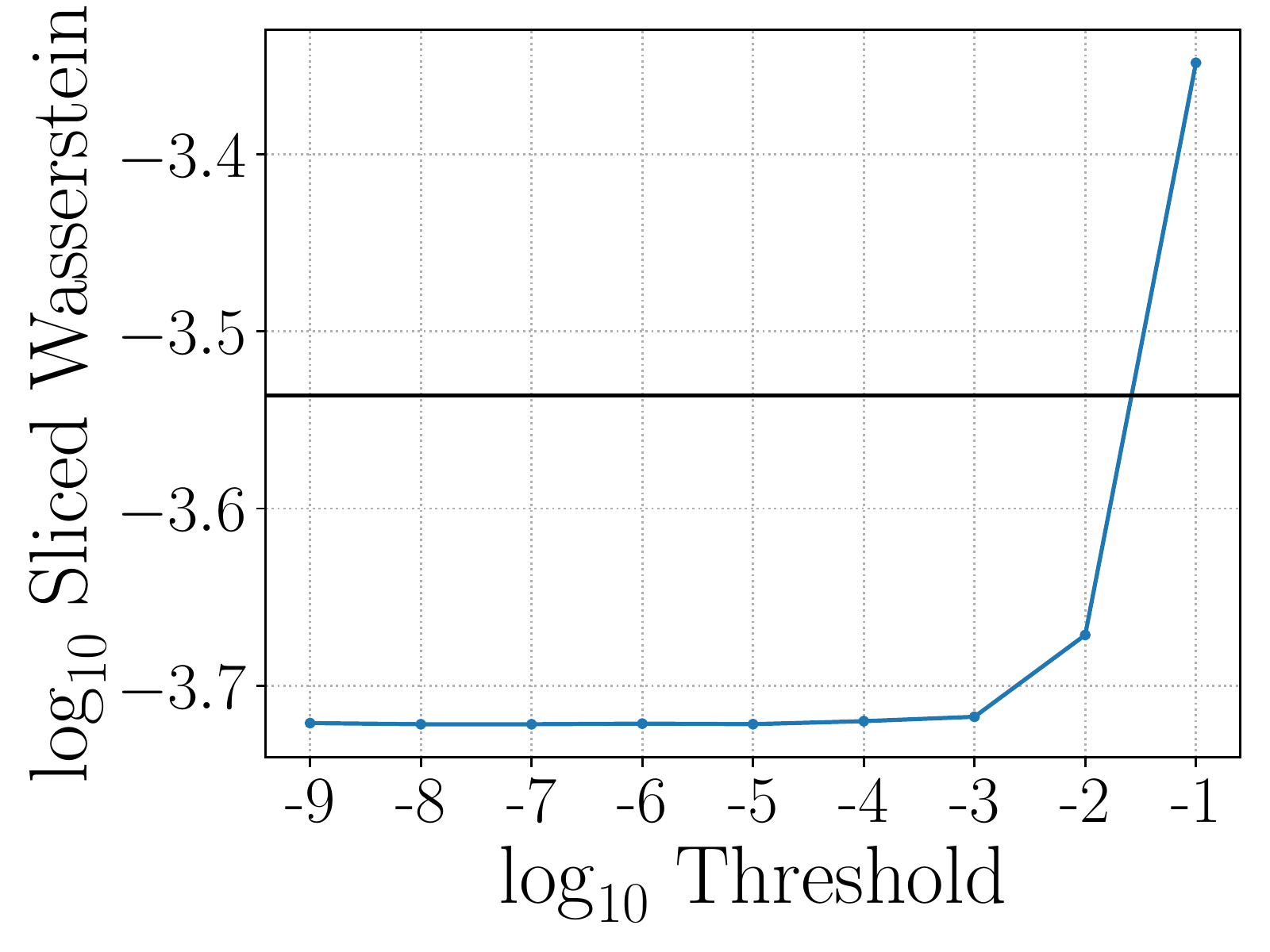}
    \caption{Sliced Wasserstein}
  \end{subfigure}
  \caption{Additional measures of ergodicity in the Fitzhugh-Nagumo differential equation model. In this example, the $\mathrm{MMD}_u^2$ statistic indicates that a threshold of $1\times 10^{-2}$ has the smallest value of the squared maximum mean discrepancy. This is in contrast to the Kolmogorov-Smirnov and sliced Wasserstein metrics, which indicate the ergodicity improves to a threshold of $1\times 10^{-3}$. In the MMD statistic we use a kernel bandwidth of 0.088.}
  \label{fig:fitzhugh-nagumo-extra-ergodicity}
\end{figure}

\begin{figure}[t!]
  \begin{subfigure}[t]{0.32\textwidth}
    \centering
    \includegraphics[width=\textwidth]{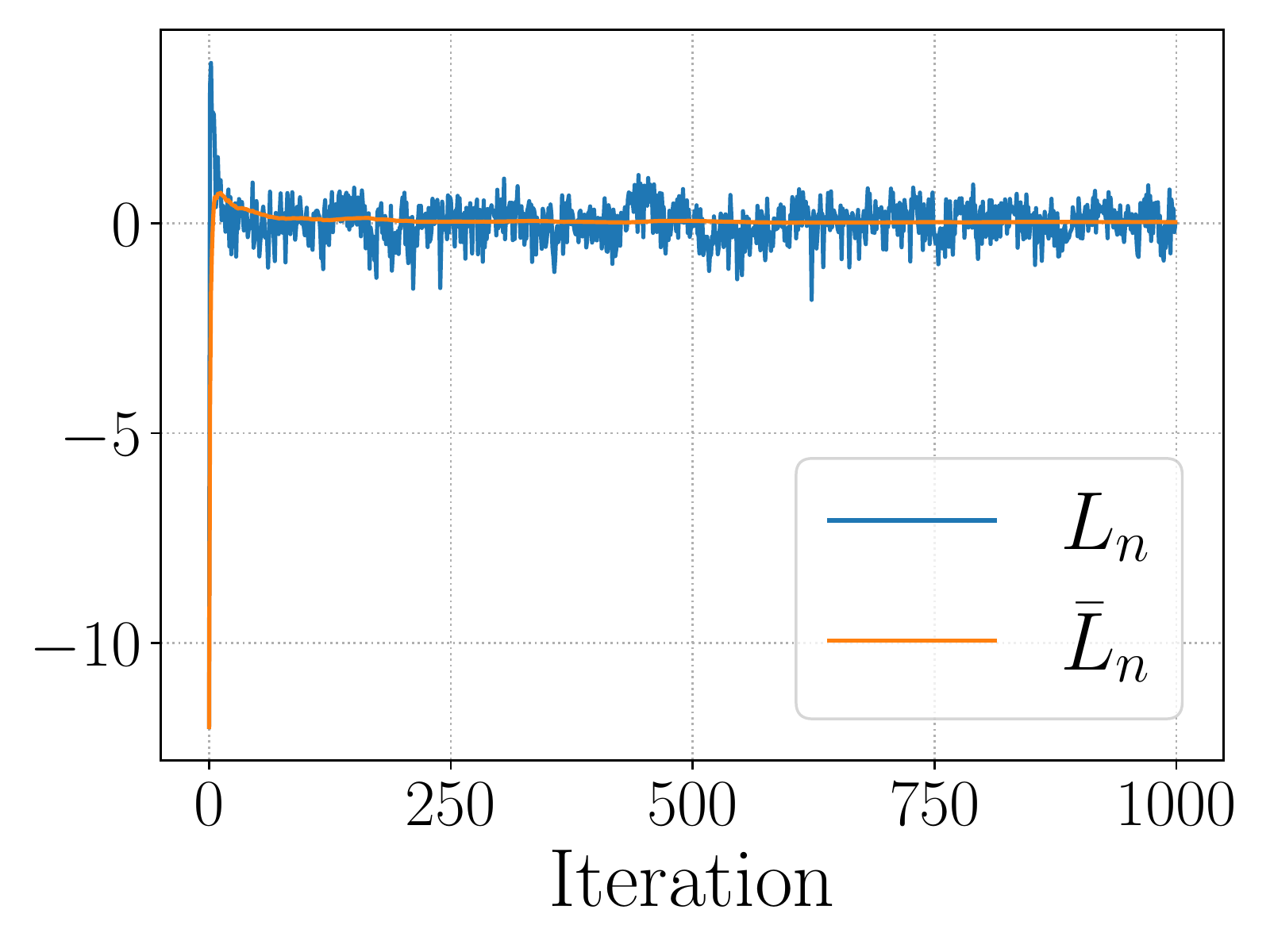}
    \caption{$L_n$ Sequence}
  \end{subfigure}
  ~
  \begin{subfigure}[t]{0.32\textwidth}
    \centering
    \includegraphics[width=\textwidth]{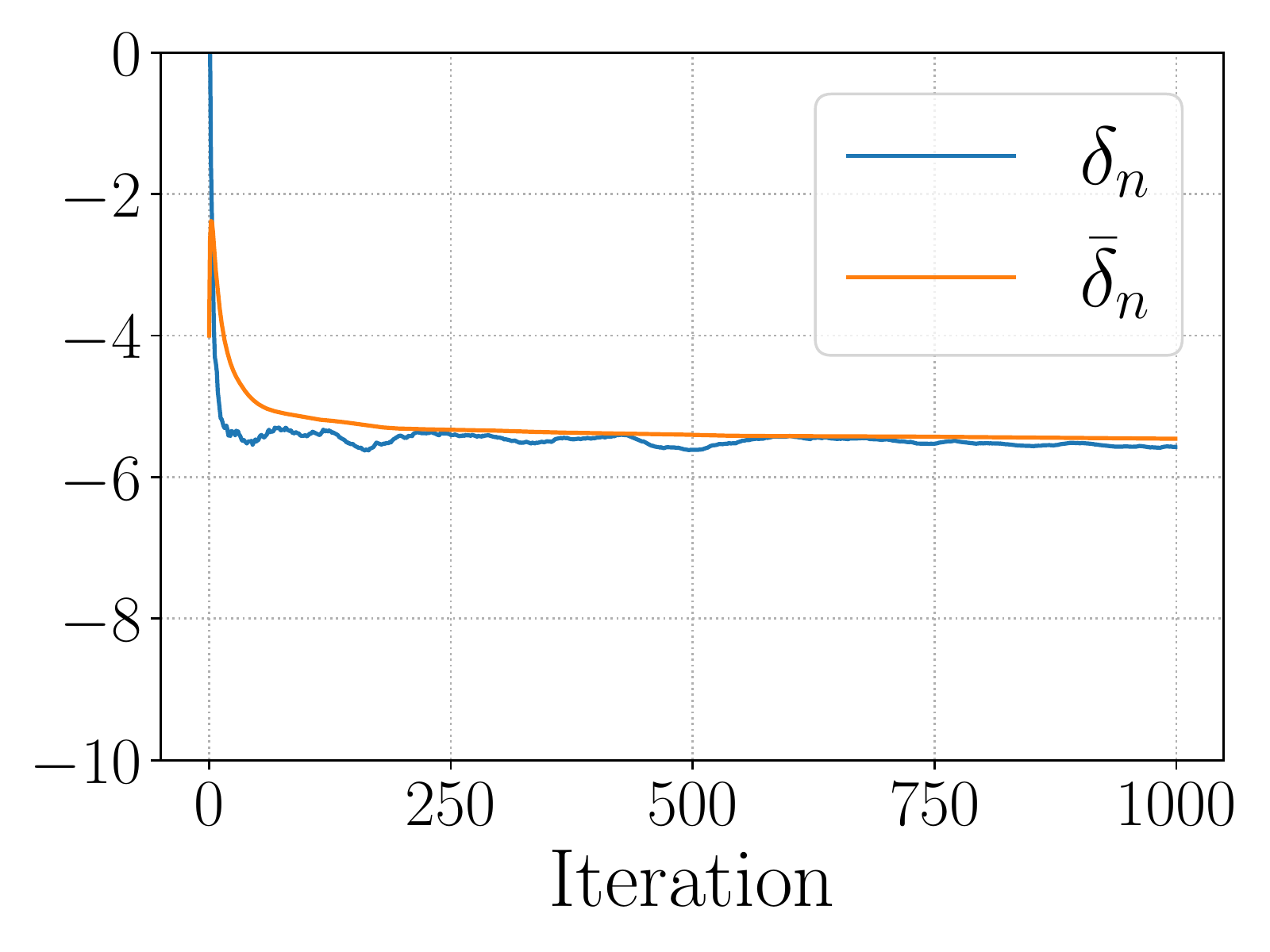}
    \caption{$\delta_n$ Sequence}
  \end{subfigure}
  ~
  \begin{subfigure}[t]{0.32\textwidth}
    \centering
    \includegraphics[width=\textwidth]{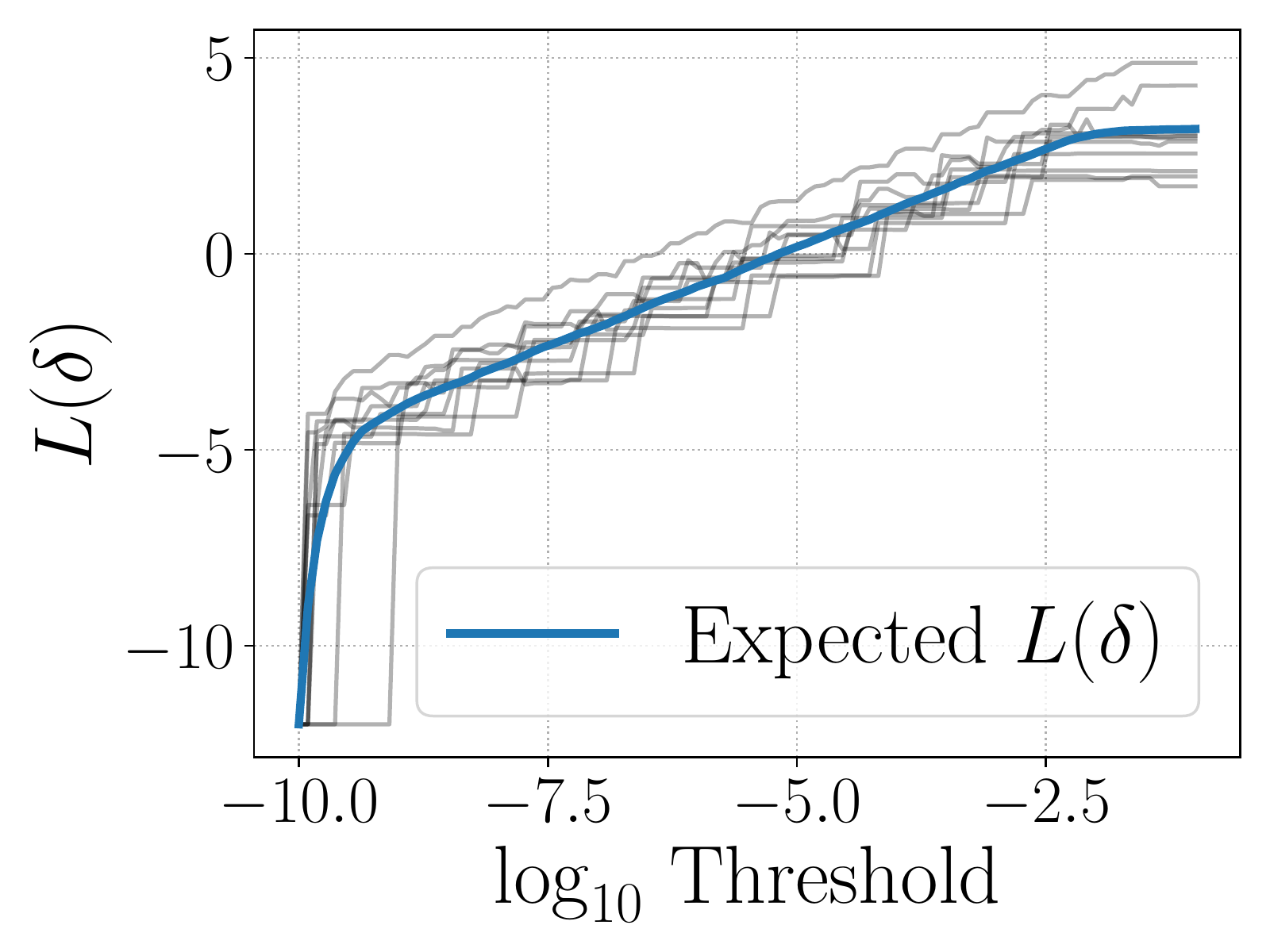}
    \caption{Monte Carlo $B(\delta)$}
  \end{subfigure}
  \caption{The use of Ruppert averaging in the Fitzhugh-Nagumo posterior distribution to adaptively set the convergence threshold to achieve four decimal digits of similarity compared to a transition kernel with a threshold of $1\times 10^{-10}$.  We show a Monte Carlo approximation to $B(\delta)$, which appears smooth, monotonically increasing. The value of $\delta$ satisfying $B(\delta)=0$ is approximately $\delta = 1\times 10^{-5.3}$, which is approximately the value of produced by Ruppert averaging.}
  \label{fig:fitzhugh-nagumo-dual-averaging}
\end{figure}

The Fitzhugh-Nagumo differential equation is a two-dimensional ordinary differential equation of the form,
\begin{align}
  \label{eq:fitzhugh-nagumo-v} \dot{v}_t &= c\paren{v_t - \frac{v_t^3}{3} + r_t} \\
  \label{eq:fitzhugh-nagumo-r} \dot{r}_t &= -\paren{\frac{v_t - a + b r_t}{c}},
\end{align}
where $(a, b, c)$ are parameters of the system. Given the initial condition $v_0 = -1$ and $r_0 = +1$, consider observing $\tilde{v}_{t_k} = v_{t_k} + \epsilon_{v, k}$ and $\tilde{r}_{t_k} = r_{t_k} + \epsilon_{r, k}$ where $\epsilon_{v, k}, \epsilon_{r, k}\overset{\mathrm{i.i.d.}}{\sim} \mathrm{Normal}(0, \sigma^2)$ for $k=1,\ldots, 200$ and where $(t_1,\ldots, t_{200})$ are 200 equally-spaced points in the interval $[0, 10)$. The Bayesian inference task at hand is to sample from the posterior distribution of $(a, b, c)\vert ((\tilde{v}_1, \tilde{r}_1), \ldots, (\tilde{v}_{200}, \tilde{r}_{200}))$ when $a$, $b$, and $c$ are equipped with independent standard normal priors.

Collectively denoting the parameters of the Fitzhugh-Nagumo differential equation model by $q=(a, b, c)$, it follows from the general expression for the Fisher information of a multivariate normal that the Riemannian metric formed by the sum of the Fisher information of the log-likelihood and the negative Hessian of the log-prior assumes the following form,
\begin{align}
  \label{eq:fitzhugh-nagumo-metric} \mathbf{G}_{ij}(q) = \frac{1}{\sigma^2}\sum_{k=1}^{200} \paren{\frac{\partial v_{t_k}}{\partial q_i}\frac{\partial v_{t_k}}{\partial q_j} + \frac{\partial r_{t_k}}{\partial q_i}\frac{\partial r_{t_k}}{\partial q_j}} + \mathbf{1}\set{i=j}
\end{align}
We must give some additional details about how this metric is computed in practice. One first observes that there is no immediate closed-form expression for the partial derivatives $\partial v_{t_k}/\partial q_i$ or $\partial r_{t_k}/\partial q_i$ based on the model description. However, we may leverage implicit differentiation of \cref{eq:fitzhugh-nagumo-v,eq:fitzhugh-nagumo-r} in order to deduce sensitivity equations \citep{Arriola2009} for these quantities. For instance,
\begin{align}
  \frac{\mathrm{d}}{\mathrm{d}t} \frac{\partial v_t}{\partial a} &= \frac{\partial}{\partial a} \frac{\mathrm{d} v_t}{\mathrm{d}t}  \\
  &= c\paren{\frac{\partial r_t}{\partial a} - (v_t^2 - 1)\frac{\partial v_t}{\partial a}}
\end{align}
In order to obtain initial conditions for these sensitivity dynamics, one observes that the initial condition of the Fitzhugh-Nagumo system $(v_0, r_0) = (-1, +1)$ does not depend on $a$, $b$, or $c$. The RMHMC algorithm requires the derivatives of the Riemannian metric; applying the chain rule to \cref{eq:fitzhugh-nagumo-metric} yields the following equations,
\begin{align}
  \begin{split}
    \frac{\partial}{\partial q_l} \mathbf{G}_{ij}(q) &= \frac{1}{\sigma^2} \sum_{k=1}^{200} \left( \frac{\partial^2 v_{t_k}}{\partial q_l\partial q_i}\frac{\partial v_{t_k}}{\partial q_j} + \frac{\partial v_{t_k}}{\partial q_i} \frac{\partial^2 v_{t_k}}{\partial q_l\partial q_j} \right. \\
    &\qquad \left. +~\frac{\partial^2 r_{t_k}}{\partial q_l\partial q_i}\frac{\partial r_{t_k}}{\partial q_j} + \frac{\partial r_{t_k}}{\partial q_i} \frac{\partial^2 r_{t_k}}{\partial q_l\partial q_j} \right)
  \end{split}
\end{align}
To compute these derivatives requires additional sensitivity equations: the second partial derivatives of the states. However, this presents no additional
conceptual difficulty since we may apply implicit differentiation a second time to obtain the required system of differential equations. For instance,
\begin{align}
  \frac{\mathrm{d}}{\mathrm{d}t} \frac{\partial^2 v_t}{\partial a^2} &= c\paren{\frac{\partial^2 r_t}{\partial a^2} - (v_t^2 - 1) \frac{\partial^2 v_t}{\partial a^2} - 2v_t \paren{\frac{\partial v_t}{\partial a}}^2} \\
  \frac{\mathrm{d}}{\mathrm{d}t} \frac{\partial^2 v_t}{\partial b\partial a} &= c\paren{\frac{\partial^2 r_t}{\partial a\partial b} - 2v_t\frac{\partial v_t}{\partial a} \frac{\partial v_t}{\partial b} - (v_t^2 - 1) \frac{\partial^2 v_t}{\partial a\partial b}} \\
  \label{eq:fitzhugh-nagumo-v-a-c} \frac{\mathrm{d}}{\mathrm{d}t} \frac{\partial^2 v_t}{\partial c\partial a} &= c\paren{\frac{\partial^2 r_t}{\partial a\partial c} - (v_t^2 - 1) \frac{\partial^2 v_t}{\partial a\partial c}} + \frac{\partial r_t}{\partial a} - 2c \frac{\partial v_t}{\partial c} \frac{\partial v_t}{\partial a} v_t + (v_t^2 - 1) \frac{\partial}{\partial a} v_t.
\end{align}

We see, therefore, that unlike the other distributions considered thus far, the values of the log-posterior, its gradient, and Riemannian metric of the Fitzhugh-Nagumo model are not available in closed-form. Instead, these quanities are approximated by numerically solving initial value problems involving sensitivities of the required quantities. In our implementation, we solve these initial value problems using SciPy's {\tt odeint} function with its default parameters. In total there are twenty initial value problems to be solved in the Fitzhugh-Nagumo posterior: the two equations for the Fitzhugh-Nagumo dynamics in \cref{eq:fitzhugh-nagumo-v,eq:fitzhugh-nagumo-r}, six equations for the first-order sensitivities of $v_t$ and $r_t$ with respect to the parameters, and twelve equations for the second order sensitivities. As an interesting consequence of this approximation, the computed ``derivatives'' of the Hamiltonian are no longer exact up to machine precision due to accumulating errors associated to the numerical solution of the initial value problems. As demonstrated in \cref{app:reversibility-volume-preservation-generalized-leapfrog}, the proof that the generalized leapfrog integrator conserves volume is predicated on the symmetry of the partial derivatives of the Hamiltonian, which may be violated when analytical expressions for derivatives are supplanted by numerical solutions to differential equations.

We can apply the Ruppert averaging method in order to find a threshold in the Fitzhugh-Nagumo differential equation posterior that produces a numerical integrator with four decimal digits of similarity. \Cref{fig:fitzhugh-nagumo-dual-averaging} shows convergence of the sequences $\bar{L}_n$ and $\bar{\delta}_n$. The sequence $\bar{\delta}_n$ by iteration 100 and the sequence $\bar{L}_n$ has converged by the five-hundredth iteration.

We follow \citet{rmhmc} and set a number of integration steps equal to six and use an integration step-size of $0.5$ in our experiments. For the case of HMC, we follow \citet{pmlr-v70-tripuraneni17a} and use ten integration steps and a step-size of $0.015$, which produces an acceptance rate of around ninety-percent. We see in \cref{subfig:fitzhugh-nagumo-transition-difference} that a threshold of $1\times 10^{-2}$ is sufficent to obtain around 2.5 or 3 decimal digits of similarity relative to a transition kernel with threshold $1\times 10^{-10}$. In terms of ergodicity, we observe that RMHMC with a threshold of $1\times 10^{-1}$ produces samples that are arguably of lesser quality than a HMC baseline; see \cref{subfig:fitzhugh-nagumo-ergodicity}. However, a threshold of $1\times 10^{-2}$ appears to produce an improvement over HMC and for thresholds less than $1\times 10^{-2}$ there is no evident ergodicity advantage. One notes that the the Riemannian metric and its gradients constitute an expensive metric to compute, even though the parameter space of the posterior is only three-dimensional. This is because the Riemannian metric and its gradients require computing solutions to initial value problems. Therefore, the computational burden of computing fixed point solutions is significant in the Fitzhugh-Nagumo posterior, particularly in the implicit update of the position variable, for which we must recompute the metric at each iteration. 

Violations of volume preservation may appear in other, more subtle ways, which would raise no immediate indication of error unless one knew to look for it. As a concrete example, we consider a transcription error in the specification of the forward sensitivity equations necessary to compute the derivatives of the Riemannian metric used in the Fitzhugh-Nagumo model. Specifically, we consider replacing the expression in \cref{eq:fitzhugh-nagumo-v-a-c} as follows:
\begin{align}
  \frac{\mathrm{d}}{\mathrm{d}t} \frac{\partial^2 v_t}{\partial c\partial a} &\overset{\mathrm{wrong}}{=} c\paren{\frac{\partial^2 r_t}{\partial a\partial c} - (v_t^2 - 1) \frac{\partial^2 v_t}{\partial a\partial c}} + \frac{\partial r_t}{\partial a} - 2c \frac{\partial v_t}{\partial c} \frac{\partial v_t}{\partial a} v_t + 1 - v_t^2 \frac{\partial v_t}{\partial a},
\end{align}
and we incorrectly set,
\begin{align}
    \frac{\mathrm{d}}{\mathrm{d}t} \frac{\partial^2 v_t}{\partial c\partial b} &\overset{\mathrm{wrong}}{=} -2cv_t \frac{\partial v_t}{\partial c}\frac{\partial v_t}{\partial a} + 1 - v_t^2 \frac{\partial v_t}{\partial b} - c\paren{v_t^2 - 1} \frac{\partial^2 v_t}{\partial c\partial b} + \frac{\partial r_t}{\partial b} + \frac{\partial r_t}{\partial c\partial b} \\
    \frac{\mathrm{d}}{\mathrm{d}t} \frac{\partial^2 v_t}{\partial^2 c} &\overset{\mathrm{wrong}}{=} -2cv_t \frac{\partial v_t}{\partial c}\frac{\partial v_t}{\partial c} + 2 - v_t^2 \frac{\partial v_t}{\partial c} - c\paren{v_t^2 - 1} \frac{\partial^2 v_t}{\partial^2 c} - v_t^2 \frac{\partial v_t}{\partial c} + 2\frac{\partial r_t}{\partial c} + c\frac{\partial^2 r_t}{\partial^2 c}
\end{align}
The result of this modification is that the symmetry of partial derivatives is violated; therefore, no convergence threshold can be used in the generalized leapfrog integrator to produce a volume preserving proposal. In \cref{app:stationary-distribution-non-volume-preserving} we give a theoretical treatment of what occurs when the Metropolis-Hastings correction is applied without properly accounting for the change in volume due to the proposal. We analyze the effect of this modification in \cref{fig:fitzhugh-nagumo-incorrect-metrics}. We see that although reversibility may be reduced via a diminished threshold, it is not possible to produce a volume-preserving proposal in the presence of mis-specified sensitivity equations that break the symmetry of partial derivatives. In terms of ergodicity,  the failure to account for substantial changes in volume has destroyed the stationarity property of the desired target distribution and the ergodicity metric we employ reveals degraded performance relative to a correct implementation of the Fitzhugh-Nagumo sensitivity equations. It is worth observing that despite the incorrectly specified derivatives, volume preservation is still preserved to around one decimal digit, which explains why samples produced by this incorrect procedure, while certainly degraded, are not absolutely awful, as shown in \cref{subfig:fitzhugh-nagumo-incorrect-ergodicity} (around 1.5 decimal digits of similarity with the target posterior as measured by Kolmogorov-Smirnov statistics along random one-dimensional sub-spaces).

When assessing ergodicity of the RMHMC algorithm in the Fitzhugh-Nagumo model, we use rejection sampling to generate $100,000$ independent samples as a point of comparison. To apply rejection sampling, we use a uniform distribution in a cube centered at the posterior mode and whose side lengths are ten times the marginal standard deviations computed from a Laplace approximation to the posterior at the mode. Similar to the conclusion of \citet{rmhmc}, we find the Euclidean HMC performs competitively with RMHMC, though RMHMC does exhibit a distribution of Kolmogorov-Smirnov statistics somewhat more tightly concentrated near zero. The RMHMC algorithm also yields samples that are less auto-correlated, producing a larger ESS, as shown in \cref{fig:fitzhugh-nagumo-ess}.

\subsection{Multi-Scale Phenomena in a Multivariate Student-t Distribution}\label{subsec:experiment-student-t}

\begin{figure}[t!]
  \centering
  \begin{subfigure}[t]{0.49\textwidth}
    \includegraphics[width=\textwidth]{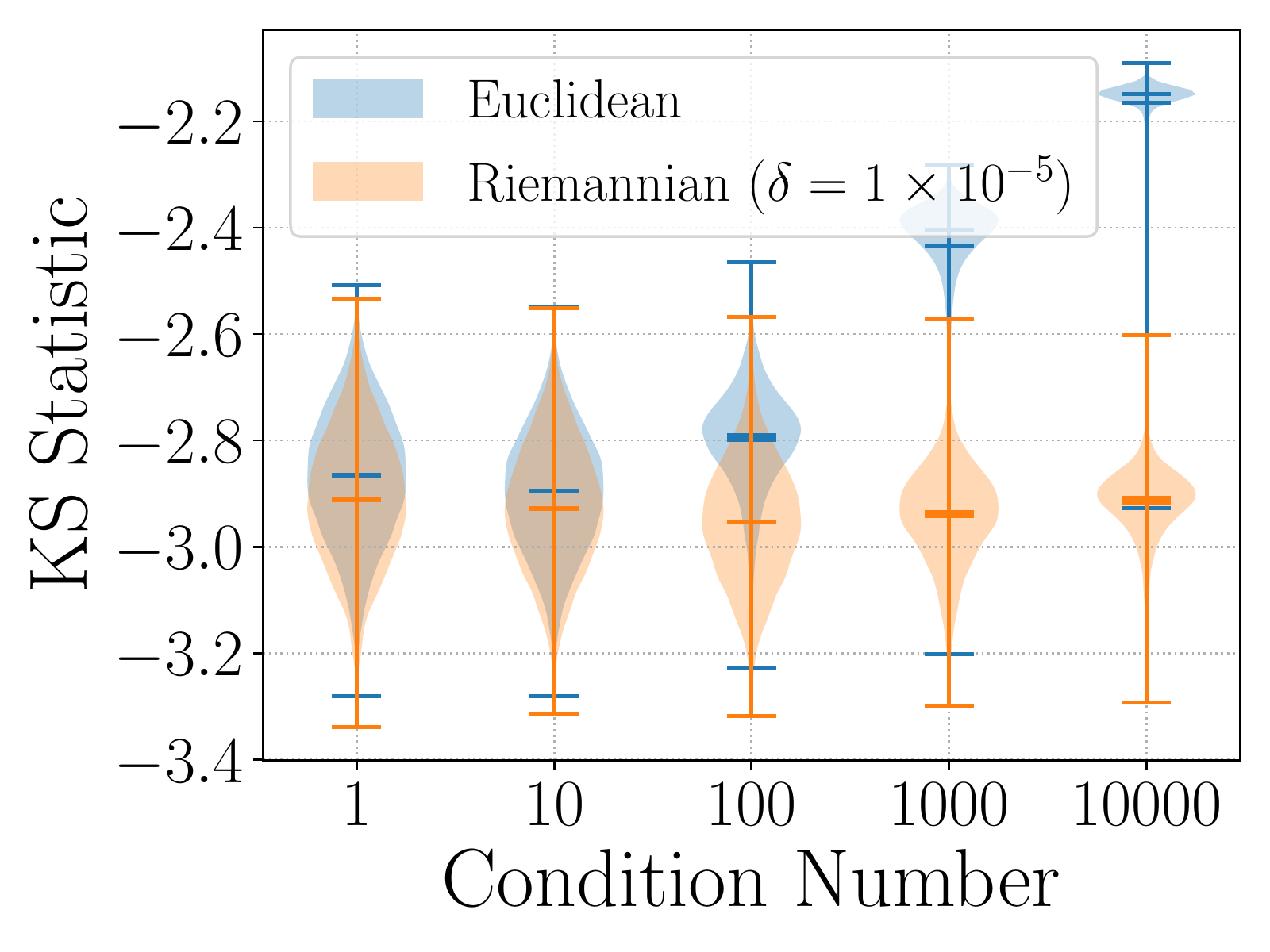}
    \caption{Error in Reversibility}
    \label{subfig:t-kolmogorov-smirnov-scale}
  \end{subfigure}
  ~
  \begin{subfigure}[t]{0.49\textwidth}
    \includegraphics[width=\textwidth]{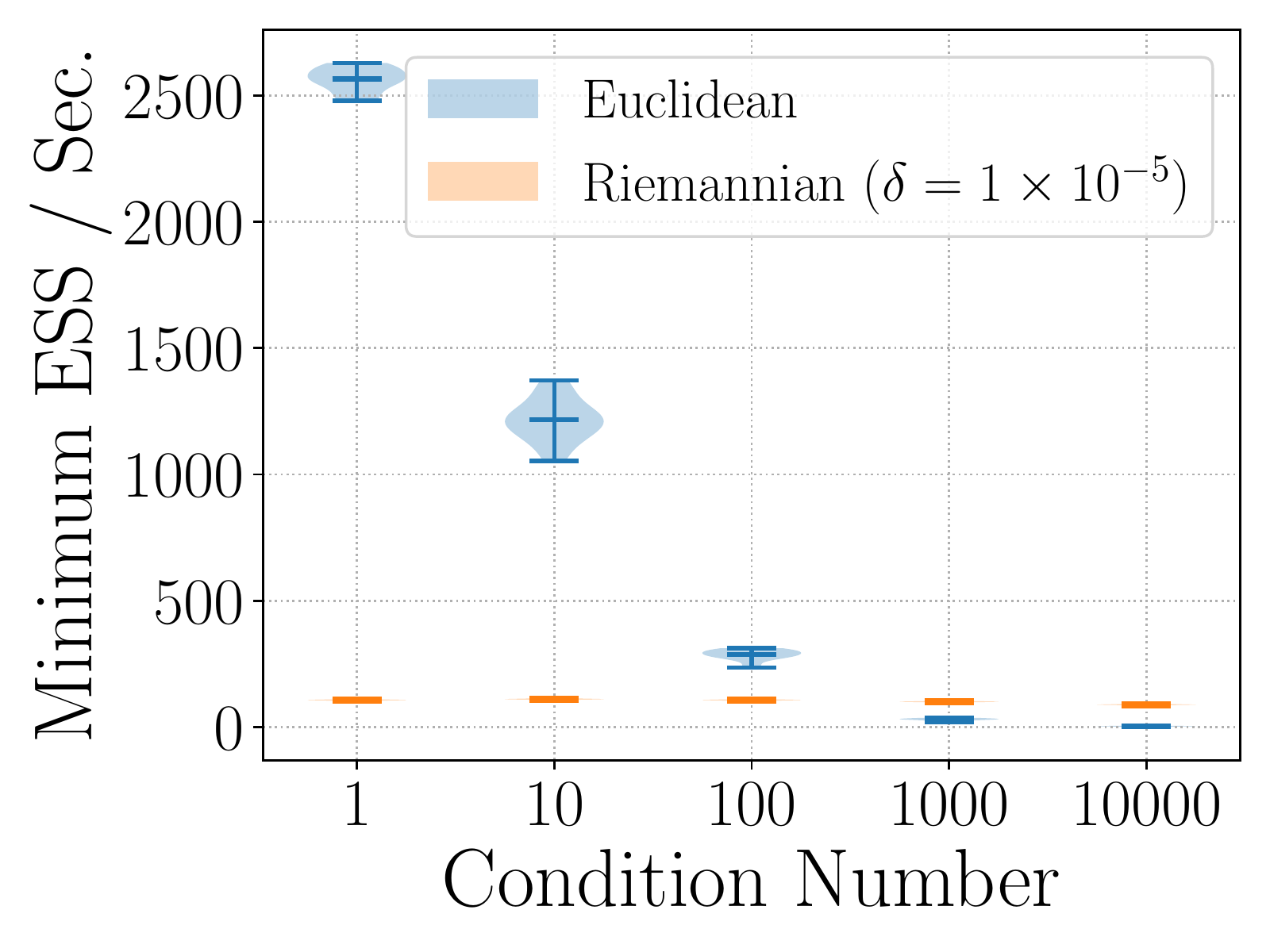}
    \caption{Error in Volume-Preservation}
    \label{subfig:t-ess-per-second-scale}
  \end{subfigure}

  \caption{Visualization of how multiscale properties of a target multivariate Student-$t$ distribution affect the ergodicity and computational efficiency of the Euclidean HMC and RMHMC algorithms. In the prescence of severe multiscale attributes of the posterior, Euclidean methods of HMC exhibit poor sampling performance by these metrics.}
  \label{fig:t-multiscale-phenomena}
\end{figure}

\begin{figure}[t!]
  \centering
  \begin{subfigure}[t]{0.32\textwidth}
    \includegraphics[width=\textwidth]{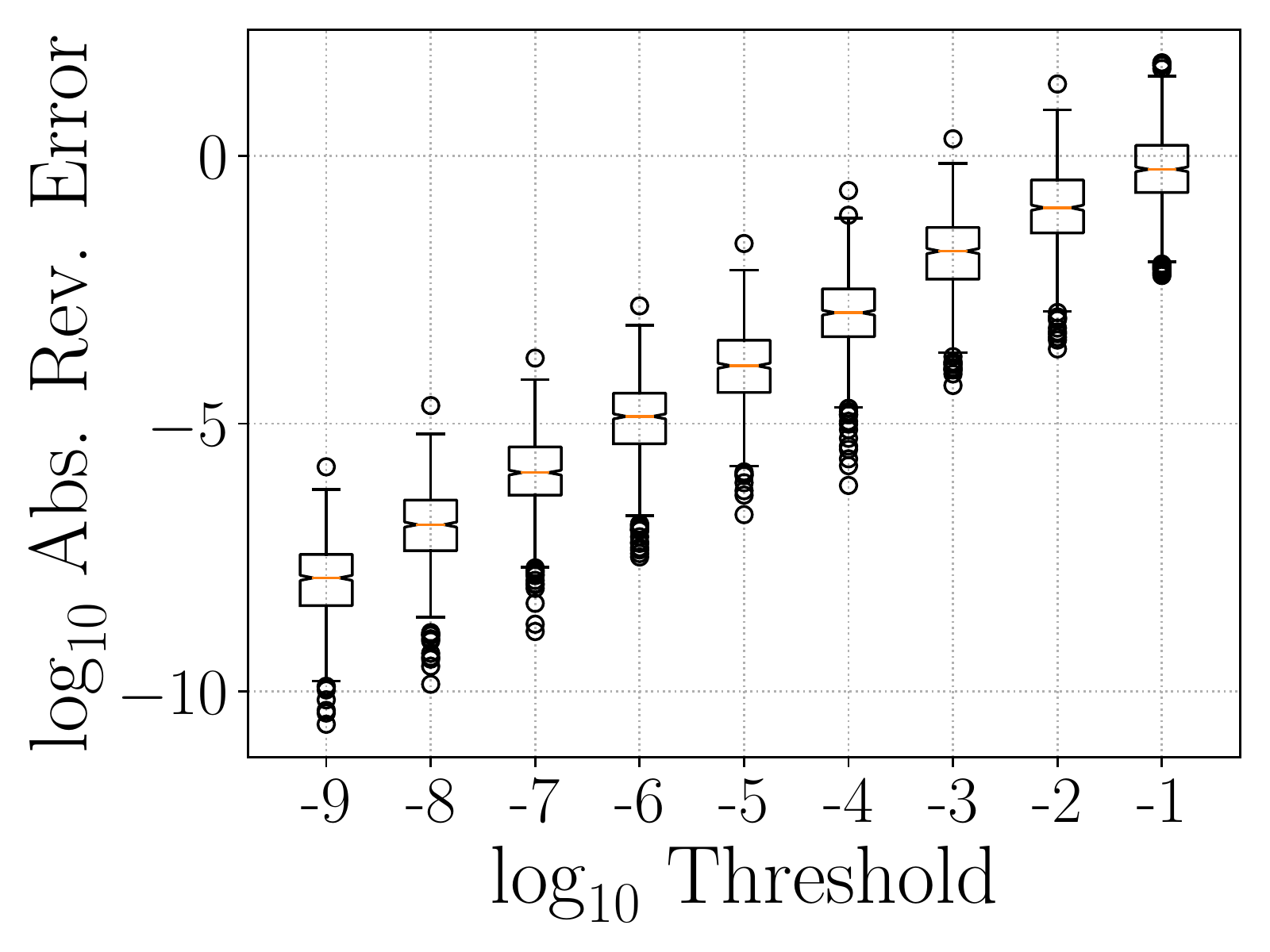}
    \caption{Error in Reversibility}
    \label{subfig:t-reversibility}
  \end{subfigure}
  ~
  \begin{subfigure}[t]{0.32\textwidth}
    \includegraphics[width=\textwidth]{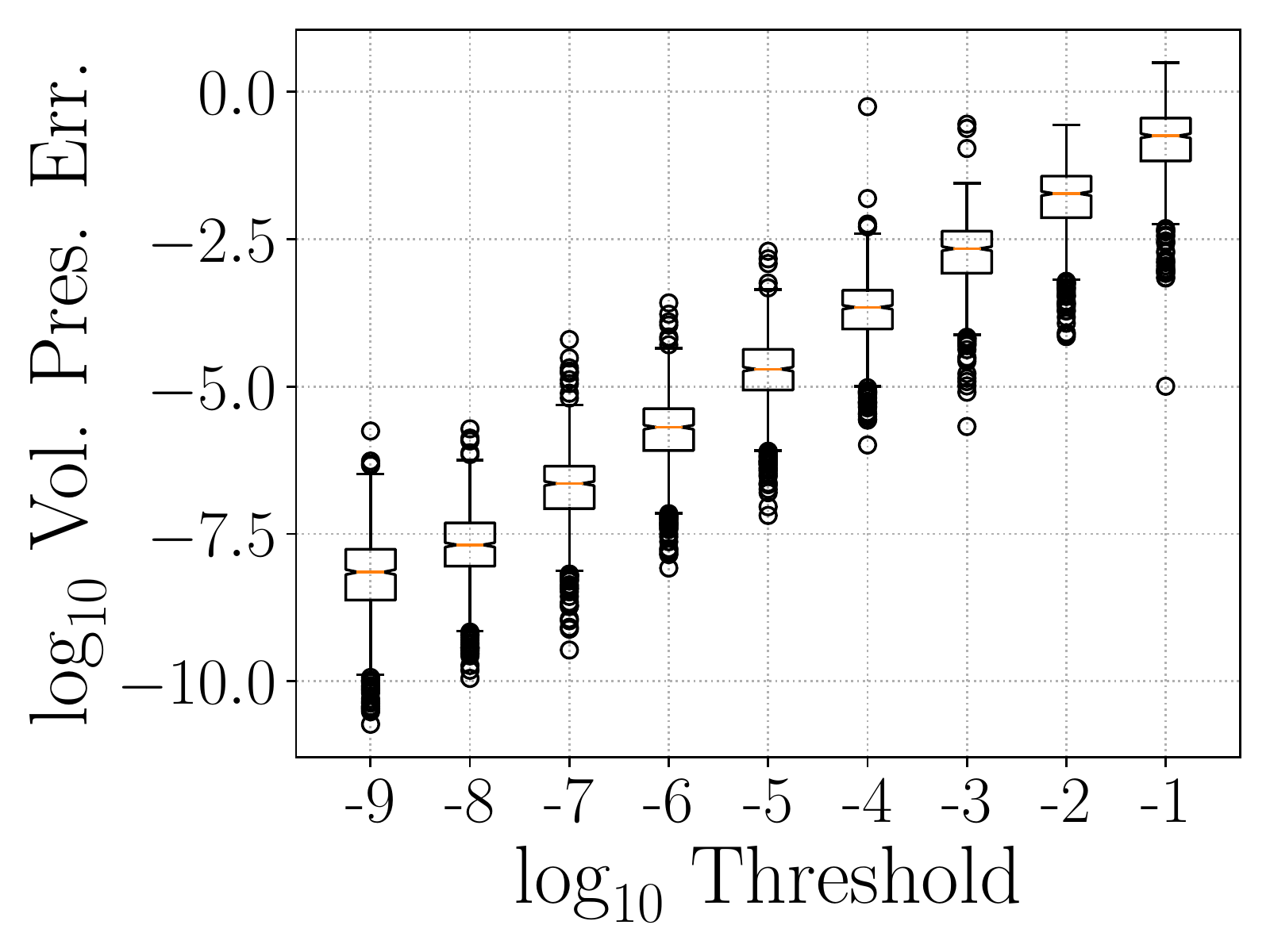}
    \caption{Error in Volume-Preservation}
    \label{subfig:t-jacobian-determinant}
  \end{subfigure}
  ~
  \begin{subfigure}[t]{0.32\textwidth}
    \includegraphics[width=\textwidth]{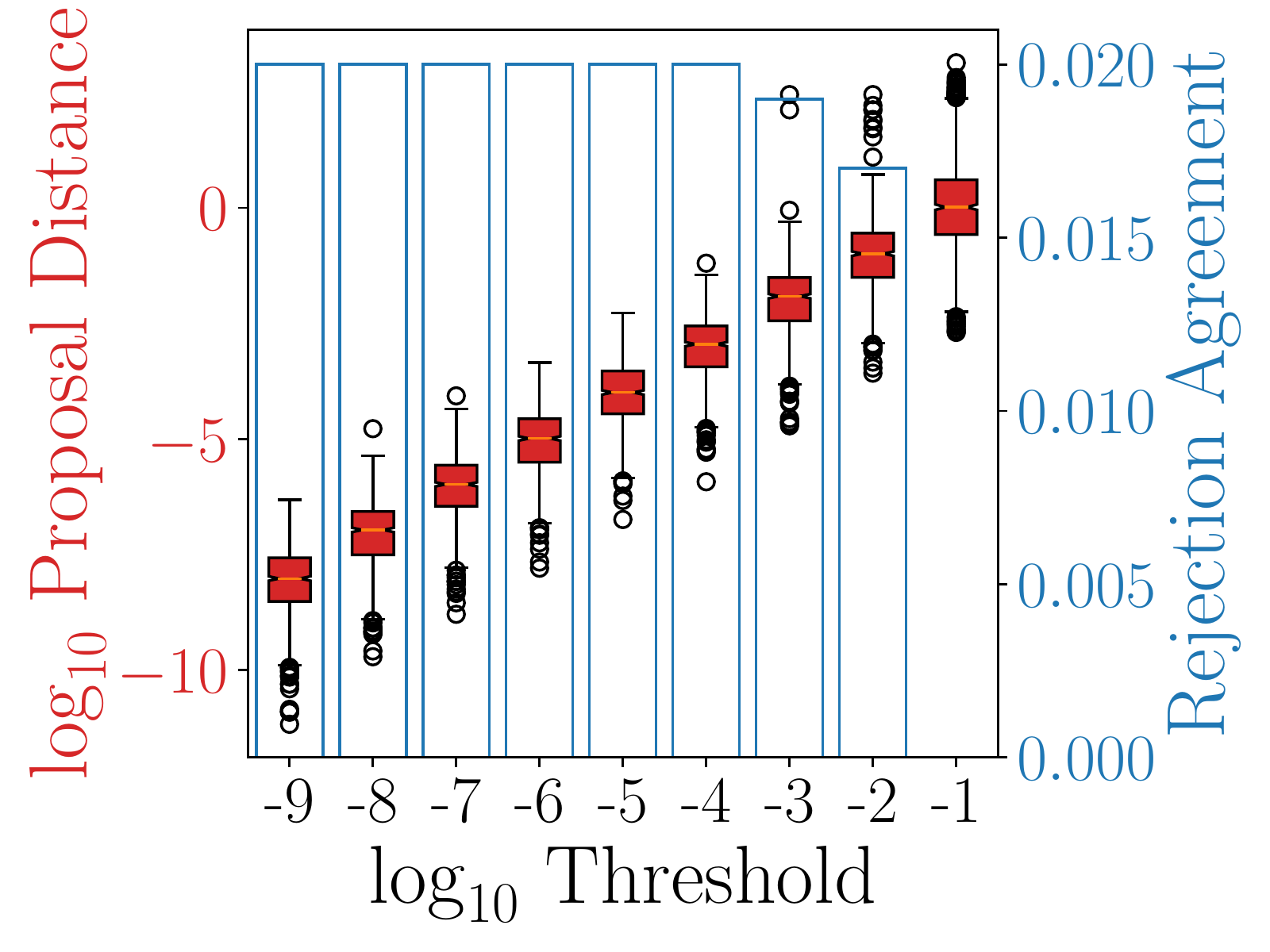}
    \caption{Difference in Transition Kernel}
    \label{subfig:t-transition-difference}
  \end{subfigure}
  \caption{Visualization of the error in reversibility (see \cref{subfig:t-reversibility}), error in volume-preservation (see \cref{subfig:t-jacobian-determinant}), and the number of decimal digits of similarity in transition kernels (see \cref{subfig:t-transition-difference}) for variable thresholds in a multivariate Student-$t$ distribution with $\sigma^2 = 10,000$.}
\end{figure}

\begin{figure}[t!]
  \centering
  \begin{subfigure}[t]{\textwidth}
  \includegraphics[width=\textwidth]{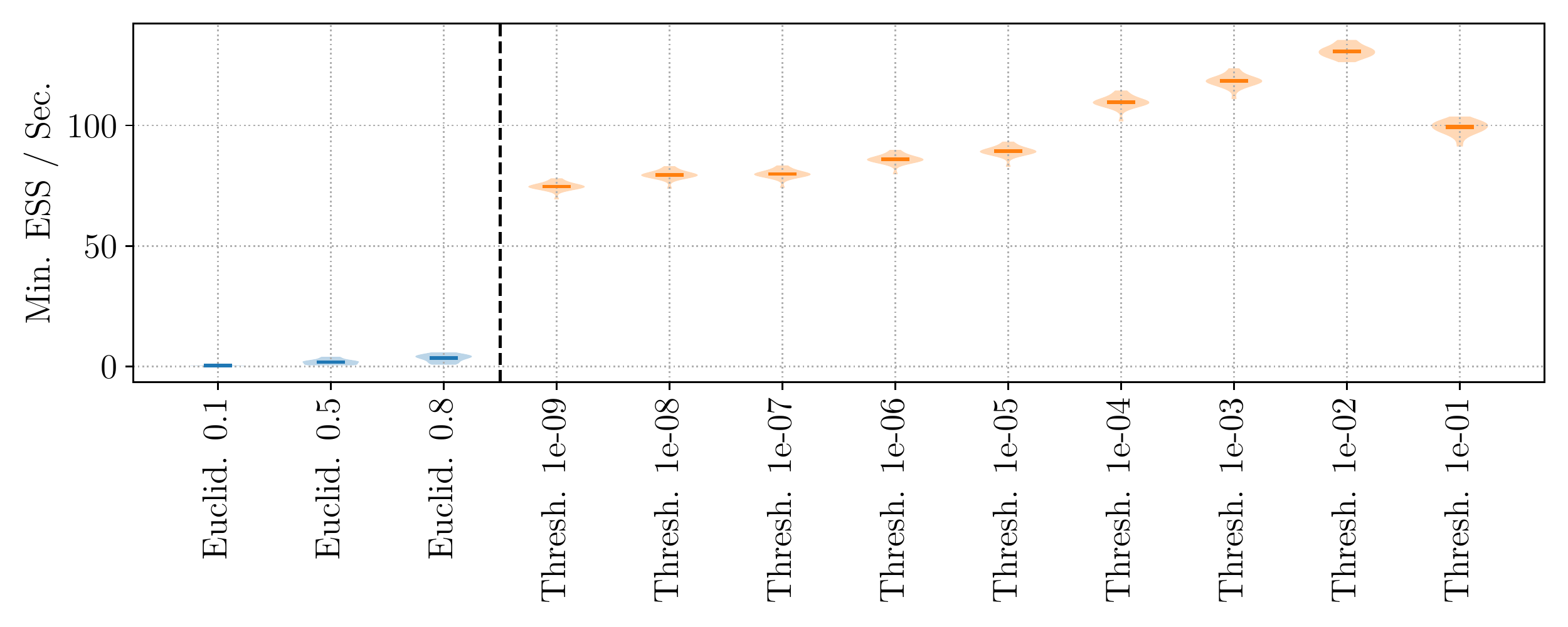}
  \caption{Minimum ESS per second in a multivariate Student-$t$ distribution.}
  \label{subfig:t-ess-per-second}
  \end{subfigure}
  
  \begin{subfigure}[t]{\textwidth}
  \includegraphics[width=\textwidth]{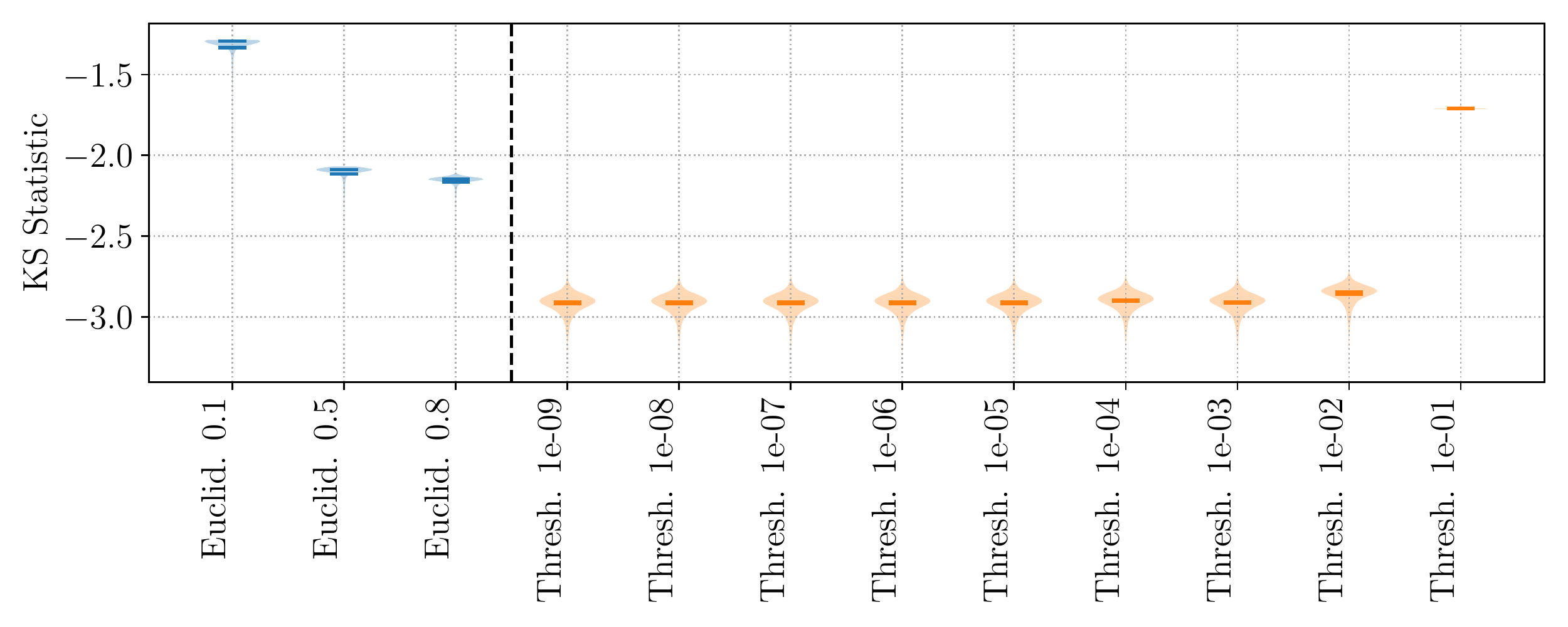}
  \caption{Sample ergodicity in a multivariate Student-$t$ distribution.}
  \label{subfig:t-ergodicity}
  \end{subfigure}
  
  \caption{Visualization of the sample quality of the variables in a multivariate Student-$t$ distribution as measured by the ESS per second (\cref{subfig:t-ess-per-second}) and the distribution of Kolmogorov-Smirnov (KS) statistics over random one-dimensional subspaces (\cref{subfig:t-ergodicity}). Distributions over ESS are computed by splitting a Markov chain of length 1,000,000 into twenty contiguous sequences of length 50,000.}
  \label{fig:t-ess}
\end{figure}

\begin{figure}[t!]
  \begin{subfigure}[t]{0.32\textwidth}
    \centering
    \includegraphics[width=\textwidth]{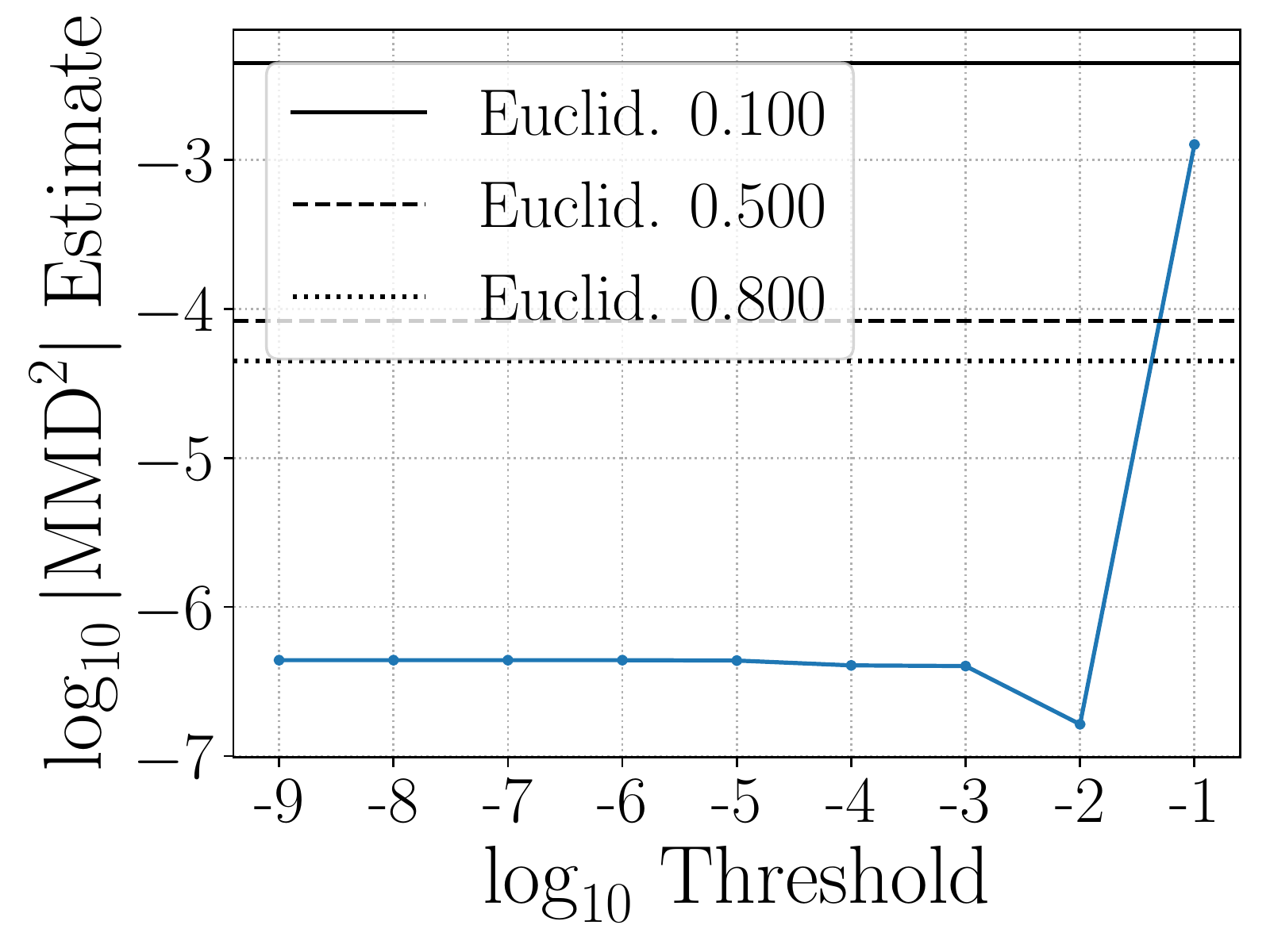}
    \caption{$\mathrm{MMD}_u^2$}
  \end{subfigure}
  ~
  \begin{subfigure}[t]{0.32\textwidth}
    \centering
    \includegraphics[width=\textwidth]{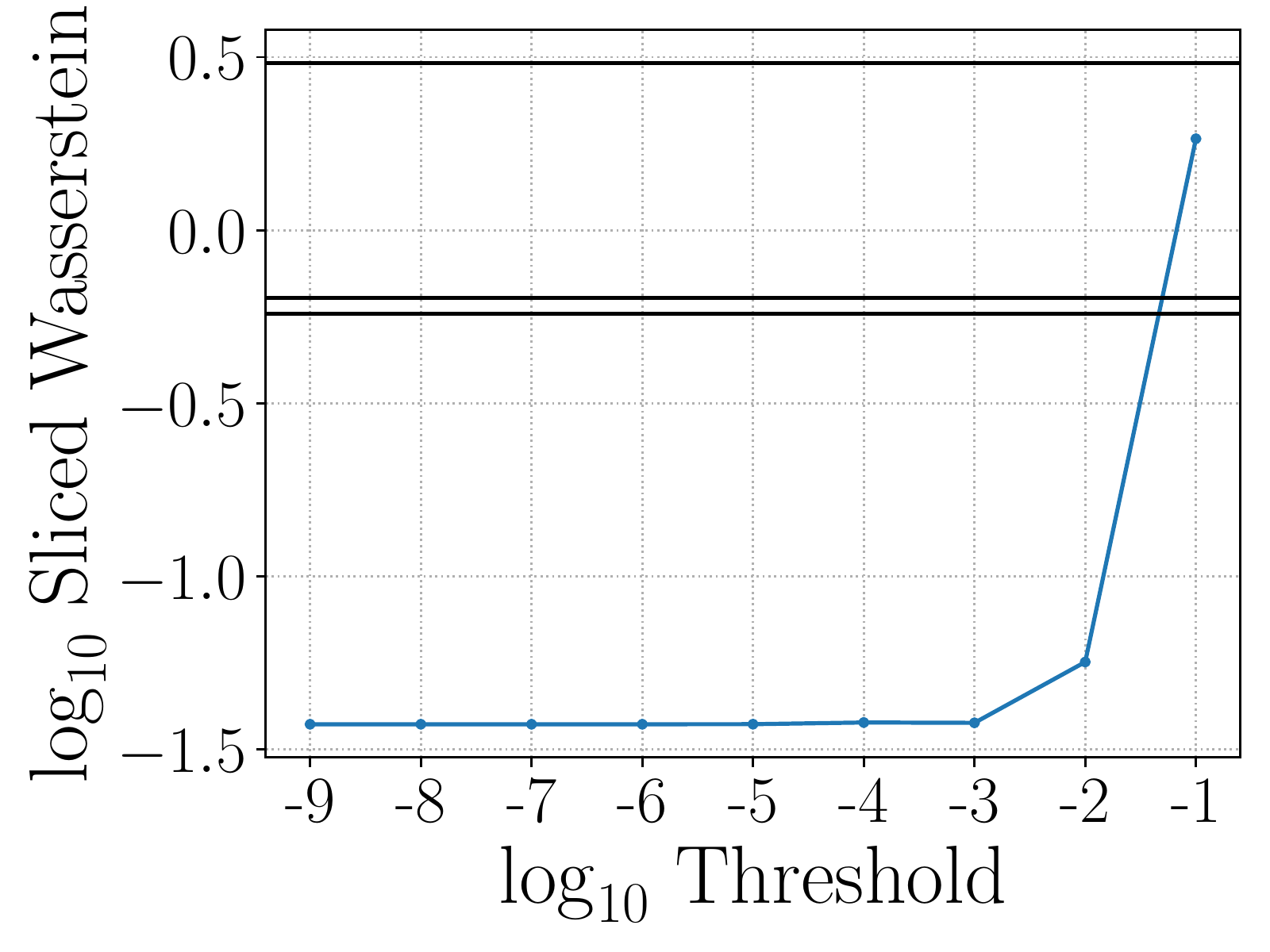}
    \caption{Sliced Wasserstein}
  \end{subfigure}
  ~
  \begin{subfigure}[t]{0.32\textwidth}
    \centering
    \includegraphics[width=\textwidth]{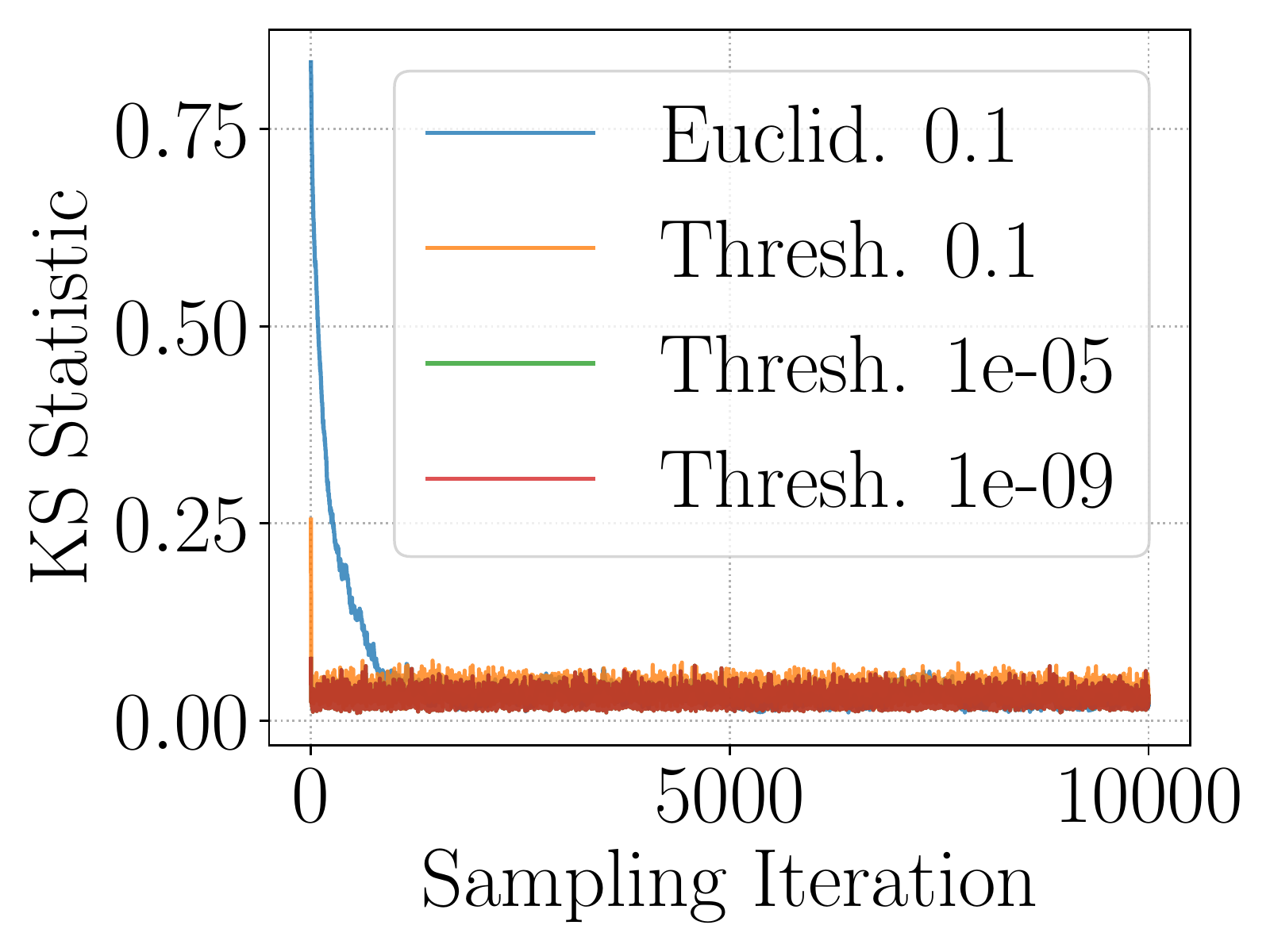}
    \caption{Ergodic Convergence}
  \end{subfigure}
  \caption{Additional measures of ergodicity in the multivariate Student-$t$ distribution. In the presence of the multiscale phenomena, the Euclidean HMC algorithms cannot achieve the same level of ergodicity as the Riemannian algorithms except in the case wherein a very large convergence tolerance is used in RMHMC. When $\sigma^2=1\times 10^4$, we also show convergence of the dimension with largest variation, whose marginal is $t(5, \sigma^2)$. The Riemannian methods exhibit faster convergence compared to HMC. For $\sigma^2=1\times 10^4$, the MMD statistic uses a bandwidth of 110.72 in the squared exponential kernel.}
  \label{fig:t-extra-ergodicity}
\end{figure}

\begin{figure}[t!]
  \begin{subfigure}[t]{0.32\textwidth}
    \centering
    \includegraphics[width=\textwidth]{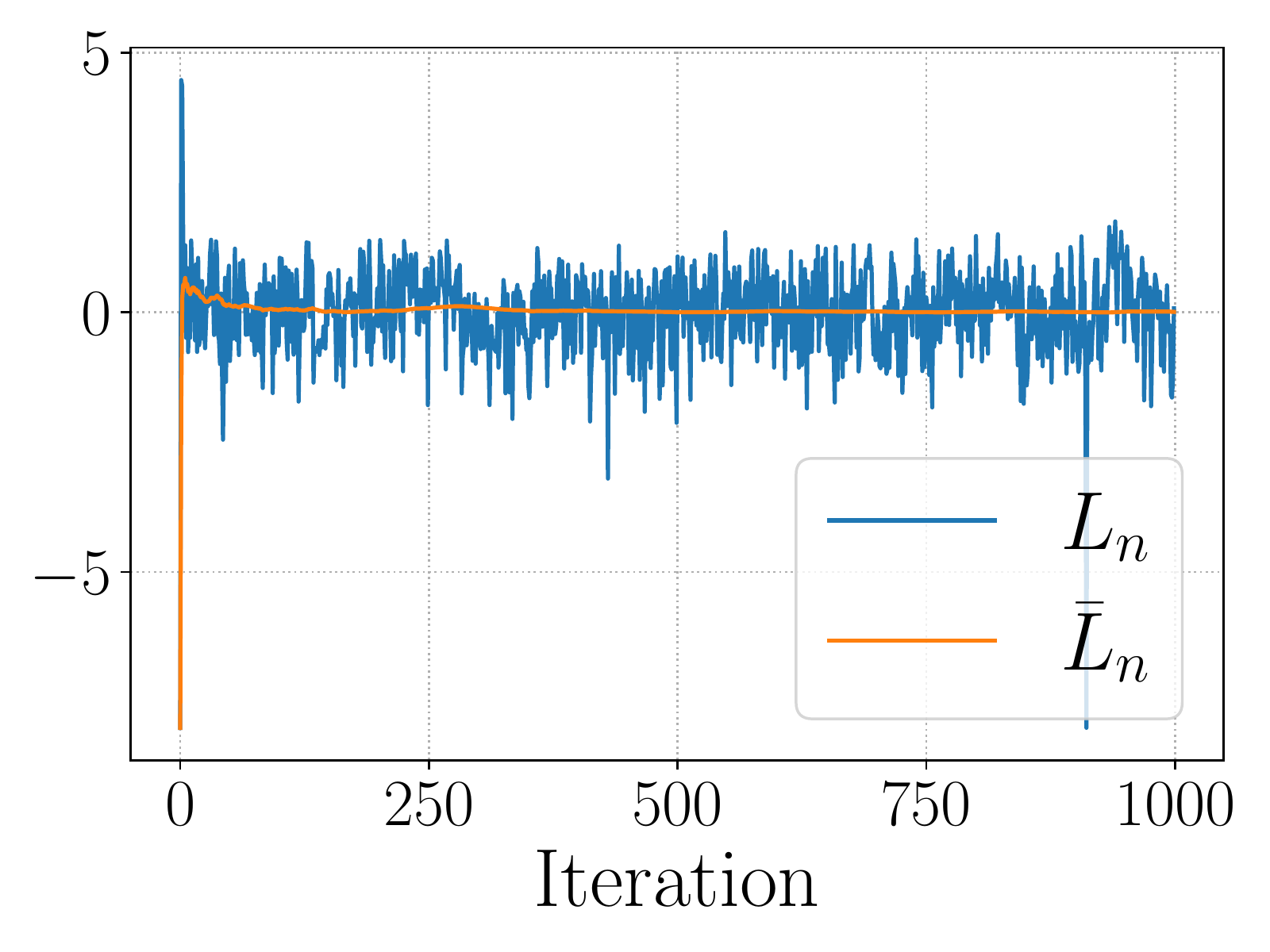}
    \caption{$L_n$ Sequence}
  \end{subfigure}
  ~
  \begin{subfigure}[t]{0.32\textwidth}
    \centering
    \includegraphics[width=\textwidth]{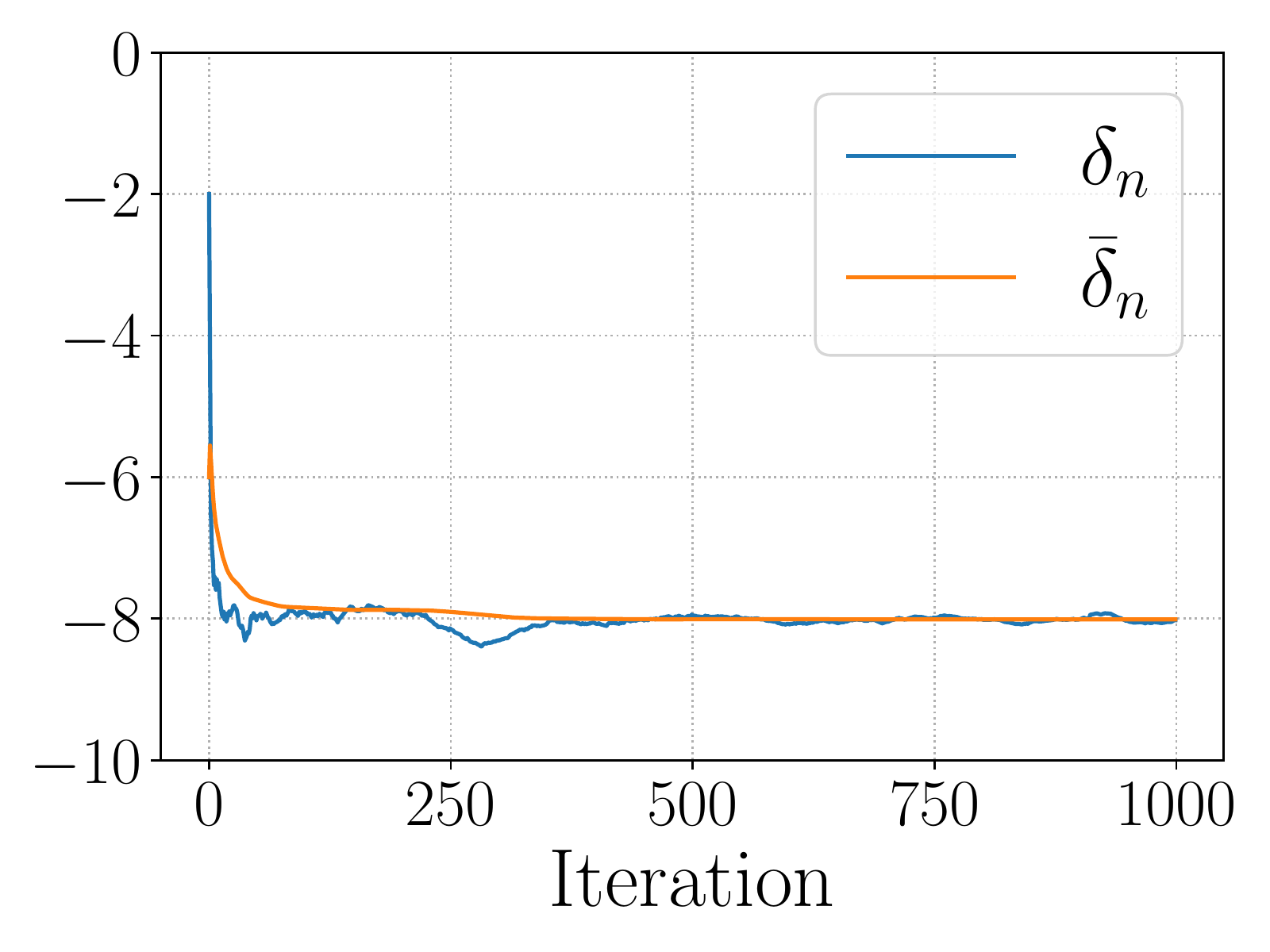}
    \caption{$\delta_n$ Sequence}
  \end{subfigure}
  ~
  \begin{subfigure}[t]{0.32\textwidth}
    \centering
    \includegraphics[width=\textwidth]{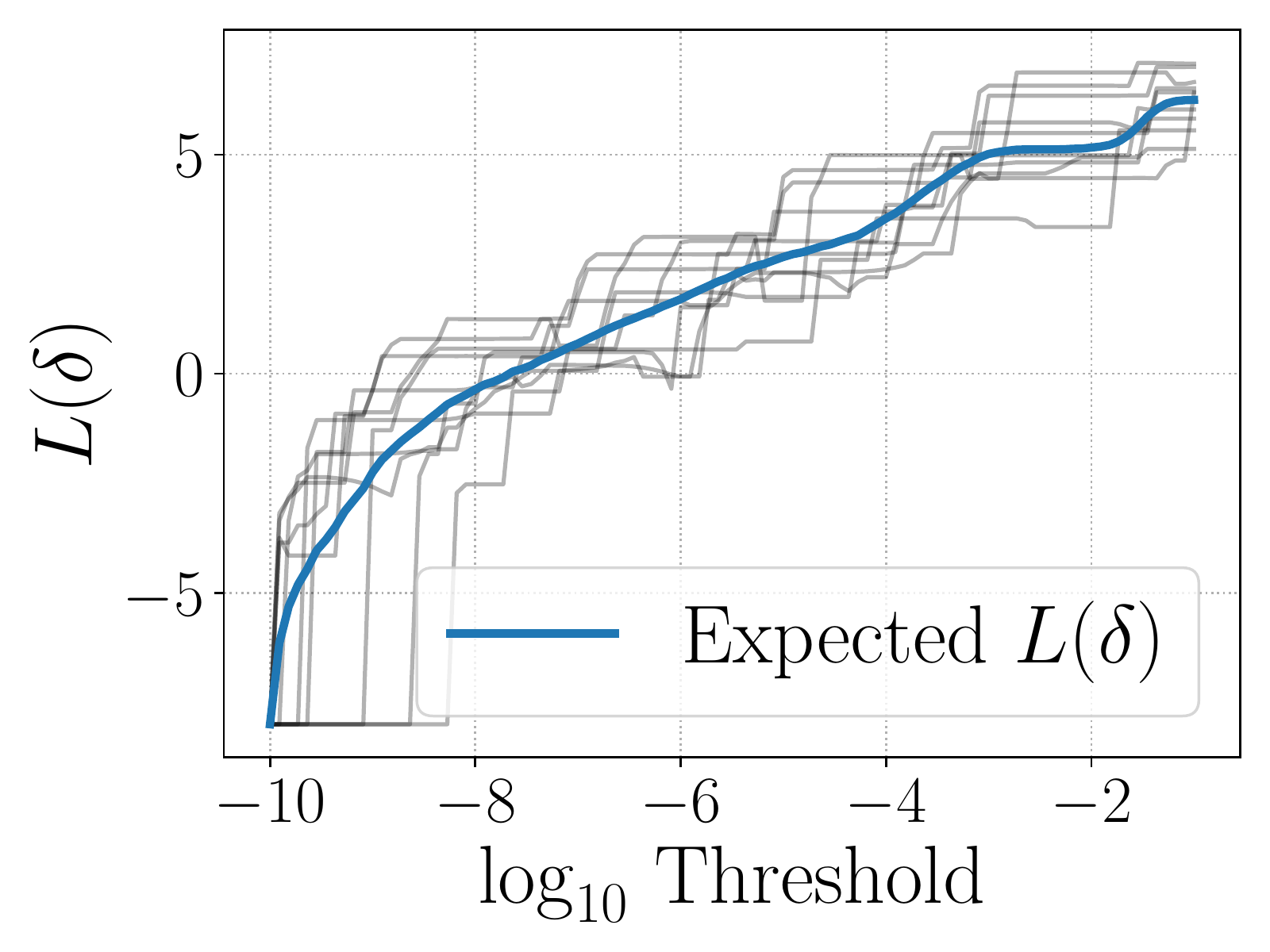}
    \caption{Monte Carlo $B(\delta)$}
  \end{subfigure}
  \caption{The use of Ruppert averaging in the multiscale Student-$t$ distribution to adaptively set the convergence threshold to achieve eight decimal digits of similarity compared to a transition kernel with a threshold of $1\times 10^{-10}$. We show a Monte Carlo approximation to $B(\delta)$, which appears smooth, monotonically increasing. The value of $\delta$ satisfying $B(\delta)=0$ is approximately $\delta = 1\times 10^{-7.83}$, which is somewhat greater than the value of $1\times 10^{-8}$ produced by Ruppert averaging. We note that the function $B(\delta)$ is approximately constant over the interval $[-3, -2]$; this could present difficulty in optimizing thresholds in this range since it violates the fourth assumption for convergence of the Robbins-Monro algorithm.}
  \label{fig:student-t-dual-averaging}
\end{figure}

Following \citet{Lan_2015}, we next consider sampling an $m$-dimensional Student-$t$ distribution with $\nu$ degrees-of-freedom whose log-density function is,
\begin{align}
    \mathcal{L}(q) = -\frac{\nu + m}{2} \log\paren{1 + \frac{1}{\nu} q^\top \Sigma^{-1} q},
\end{align}
where $\Sigma\in\R^{m\times m}$ is a positive definite matrix. The Hessian of the log-density of the multivariate Student-$t$ is,
\begin{align}
    \mathbf{H}(q) = -\frac{\nu + m}{(1 + q^\top \Sigma^{-1} q / \nu) \nu} \Sigma^{-1} + \frac{2(\nu + m)}{\nu^2 (1 + q^\top \Sigma^{-1} q / \nu)^2} (\Sigma^{-1} q) (\Sigma^{-1} q)^\top.
\end{align}
The negative Hessian of the log-density is not positive definite everywhere; however, instead of employing the SoftAbs procedure as in \cref{subsec:experiment-neal-funnel} to produce a positive definite metric, we instead consider the following metric:
\begin{align}
    \mathbf{G}(q) = \frac{\nu + m}{\nu(1 + q^\top \Sigma^{-1} q / \nu)} \Sigma^{-1}.
\end{align}
This metric can be motivated by computing the negative Hessian of the multivariate Student-$t$ log-density and keeping only the term that is guaranteed to be positive definite. The covariance of the Student-$t$ distribution is computed from the matrix $\Sigma$ and the degrees-of-freedom. In particular, when $q$ is drawn from a multivariate Student-$t$ distribution, $\mathrm{Cov}(q) = (\nu / (\nu - 2)) \Sigma$. In our experiments we set $\nu = 5$, $m=20$, and $\Sigma = \mathrm{diag}(1, \ldots, 1, \sigma^2)$; we consider values of $\sigma^2 \in \set{1\times 10^0, 1\times 10^1, 1\times 10^2, 1\times 10^3, 1\times 10^4}$. By examining variable values of $\sigma^2$, we force $\Sigma$ to incorporate increasingly severe multiscale phenomena; this can be related to the condition number of the matrix $\Sigma$, which is the ratio of the largest eigenvalue to the smallest eigenvalue. We expect Euclidean variants of HMC to struggle in the presence of multiscale distributions \citep{pourzanjani2019implicit,Lan_2015}.

For the Euclidean methods we consider twenty integration steps and a step-size in $\set{0.1, 0.5, 0.8}$. For Riemannian methods we also consider twenty integration steps and a fixed step-size of $0.3$. In \cref{fig:t-multiscale-phenomena}, we visualize the effect of the condition number on the ergodicity and effective sample size per second metrics. We show results for a Riemannian method with threshold fixed at $1\times 10^{-5}$ and report results for the best performing Euclidean algorithm among the step-sizes considered. For condition numbers of 100, 1,000, and 10,000, we see that the Euclidean algorithms exhibit degraded performance. Whereas the effective sample size per second of the Riemannian methods is effectively constant as a function of the condition number, Euclidean methods are able to generate fewer effective samples per second as the covariance matrix becomes increasingly ill-conditioned. This is also reflected in our measure of ergodicity, wherein large condition numbers result in worse sampling from the target distribution.

In the sequel, we fix $\sigma^2= 1\times 10^4$. \Cref{subfig:t-ess-per-second} shows the time-normalized ESS; we observe that RMHMC, irrespective of convergence threshold, has superior sampling efficiency. In \cref{subfig:t-ergodicity} we show the sensitivity of the ergodicity of RMHMC to the choice of threshold and compare it against the Euclidean algorithms with variable step-size. In sampling from this distribution, we observe that a threshold of $1\times 10^{-2}$ is sufficient to achieve a measure of ergodicity that is comparable to a threshold of $1\times 10^{-9}$. We further note that the Riemannian algorithm with the largest step-size achieves a measure of ergodicity comparable to the {\it best} Euclidean HMC implementation. In \cref{subfig:t-reversibility} we visualize the reversibility of the proposal: we observe that a threshold of $1\times 10^{-2}$ exhibits around one decimal digit of error in reversibility. In \cref{subfig:t-ess-per-second} we repeat this analysis for the effective sample size per second. Since smaller convergence thresholds require strictly greater computational effort, we observe that, for thresholds less than or equal to $1\times 10^{-2}$, the effective sample size per second is a decreasing function of threshold.

\subsection{Alternatives to Fixed Point Iteration: Newton's Method}\label{subsec:alternatives-newtons-method}
  
\begin{figure}[t!]
  \begin{subfigure}[t]{0.3\textwidth}
    \centering
    \includegraphics[width=\textwidth]{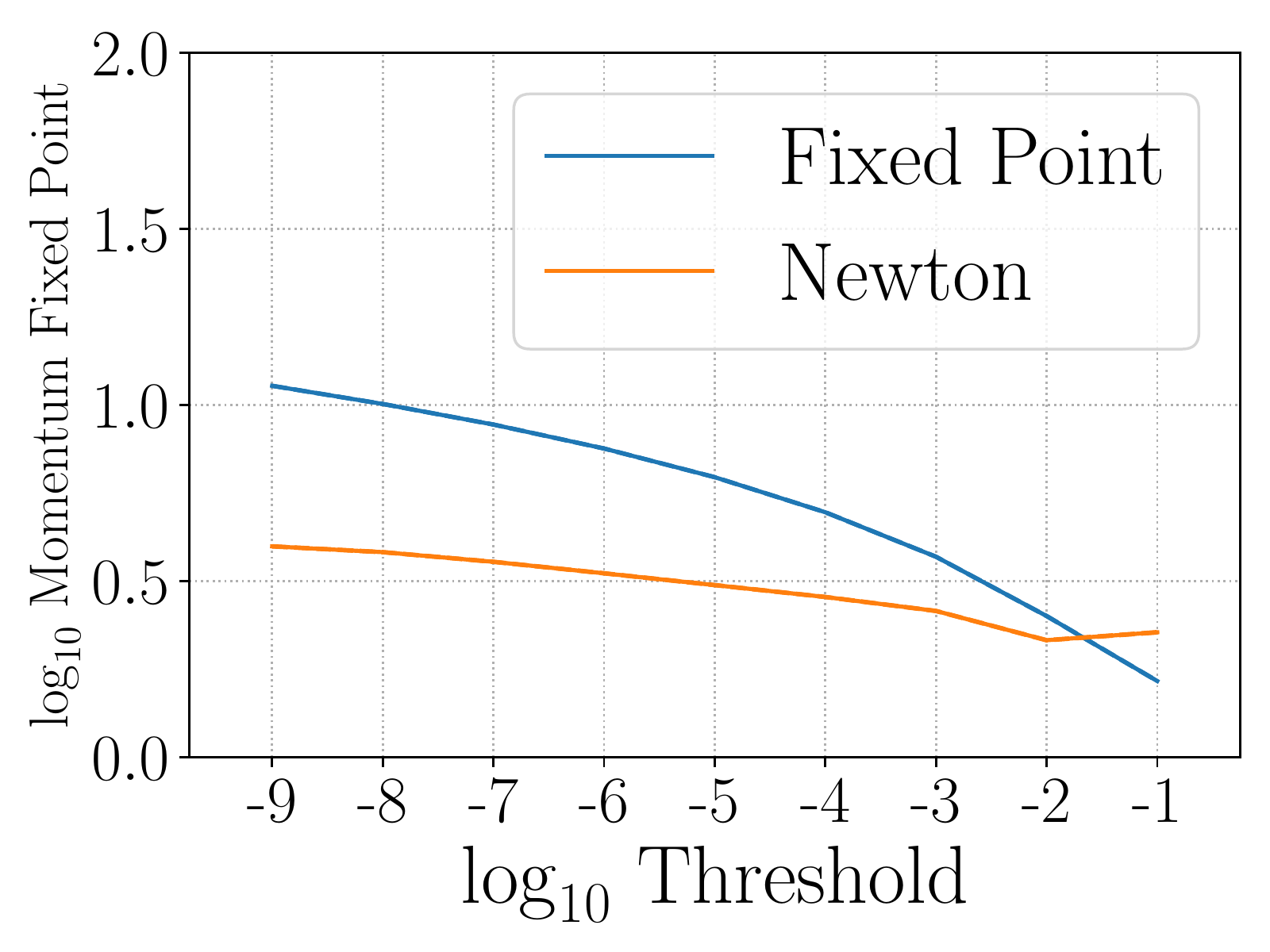}
    \caption{Banana}
    \label{subfig:fixed-point-vs-newton-banana}
  \end{subfigure}
  ~
  \begin{subfigure}[t]{0.3\textwidth}
    \centering
    \includegraphics[width=\textwidth]{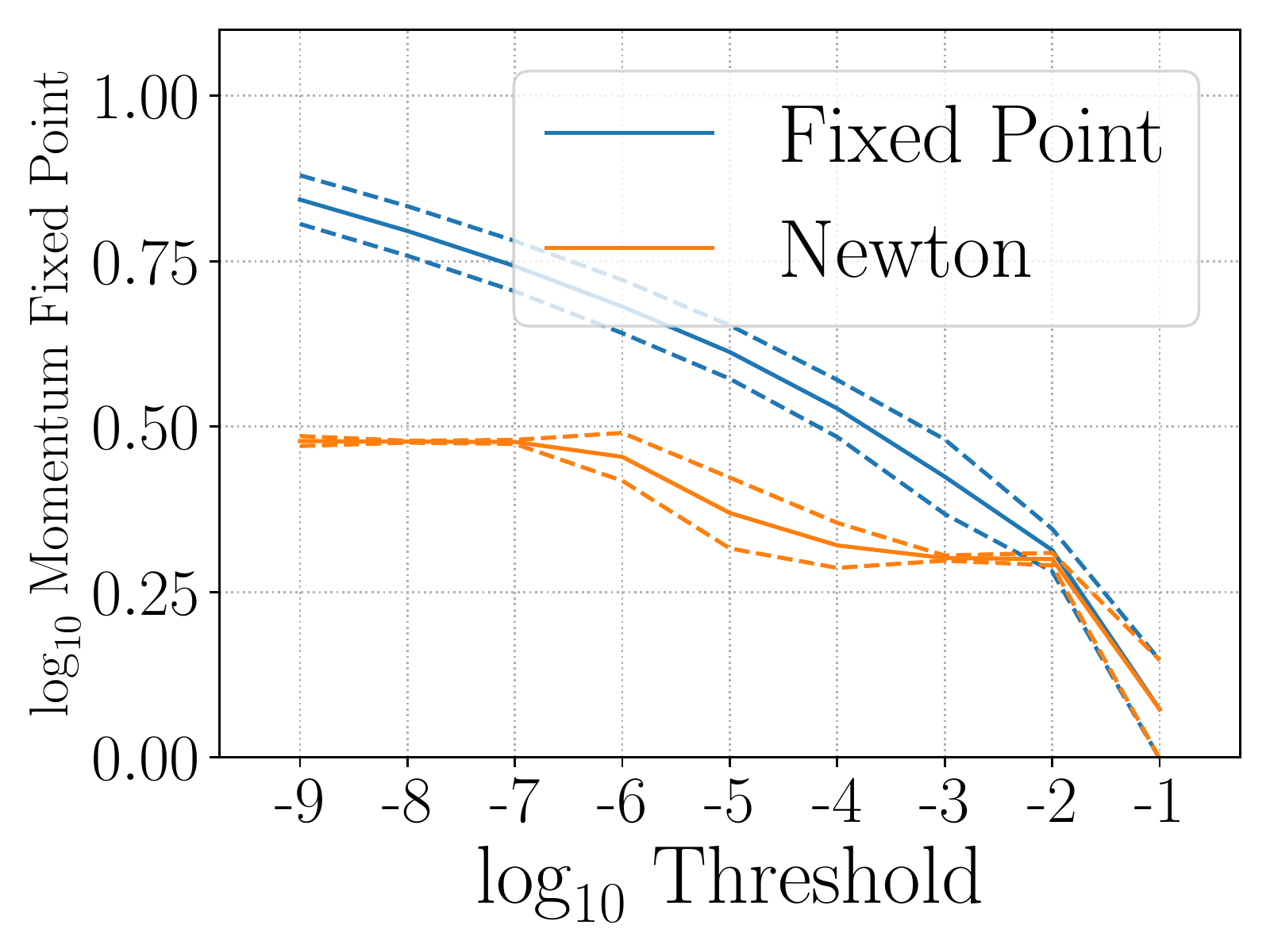}
    \caption{Logistic Regression}
    \label{subfig:fixed-point-vs-newton-logistic}
  \end{subfigure}
  ~
  \begin{subfigure}[t]{0.3\textwidth}
    \centering
    \includegraphics[width=\textwidth]{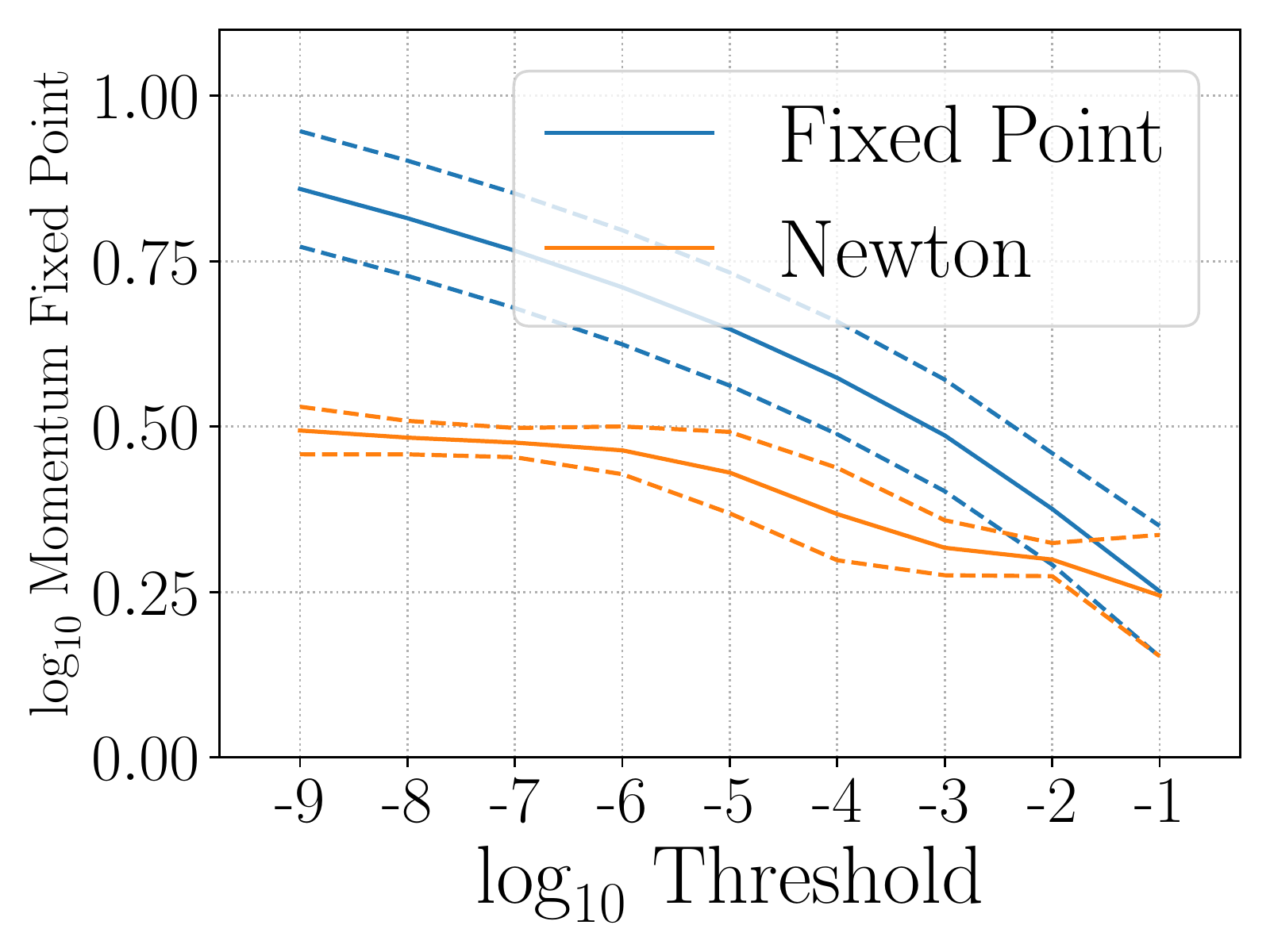}
    \caption{Stochastic Volatility}
    \label{subfig:fixed-point-vs-newton-stochastic-volatility}
  \end{subfigure}
  
  \begin{subfigure}[t]{0.3\textwidth}
    \centering
    \includegraphics[width=\textwidth]{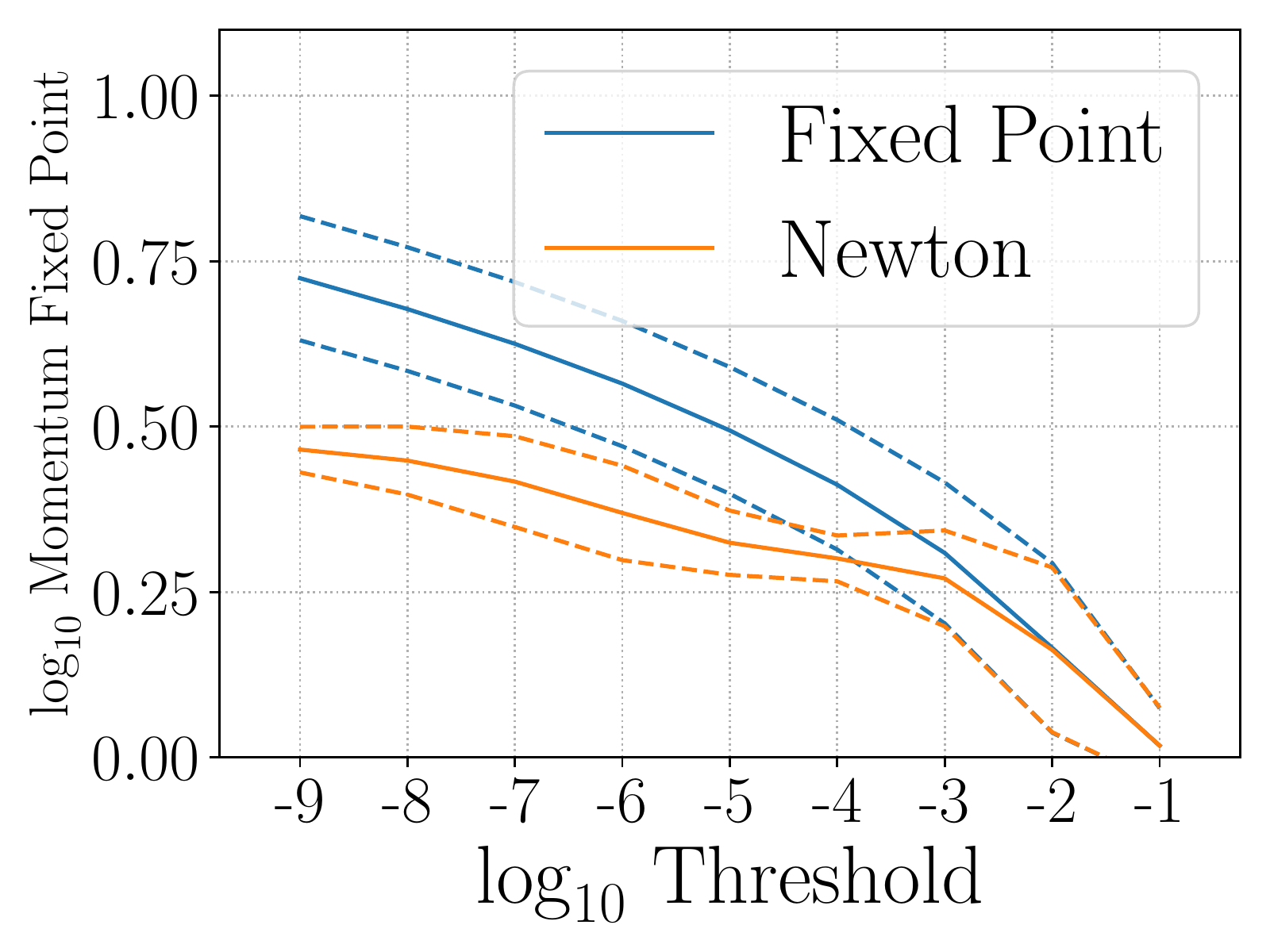}
    \caption{Cox-Poisson}
    \label{subfig:fixed-point-vs-newton-cox-poisson}
  \end{subfigure}
  ~
  \begin{subfigure}[t]{0.3\textwidth}
    \centering
    \includegraphics[width=\textwidth]{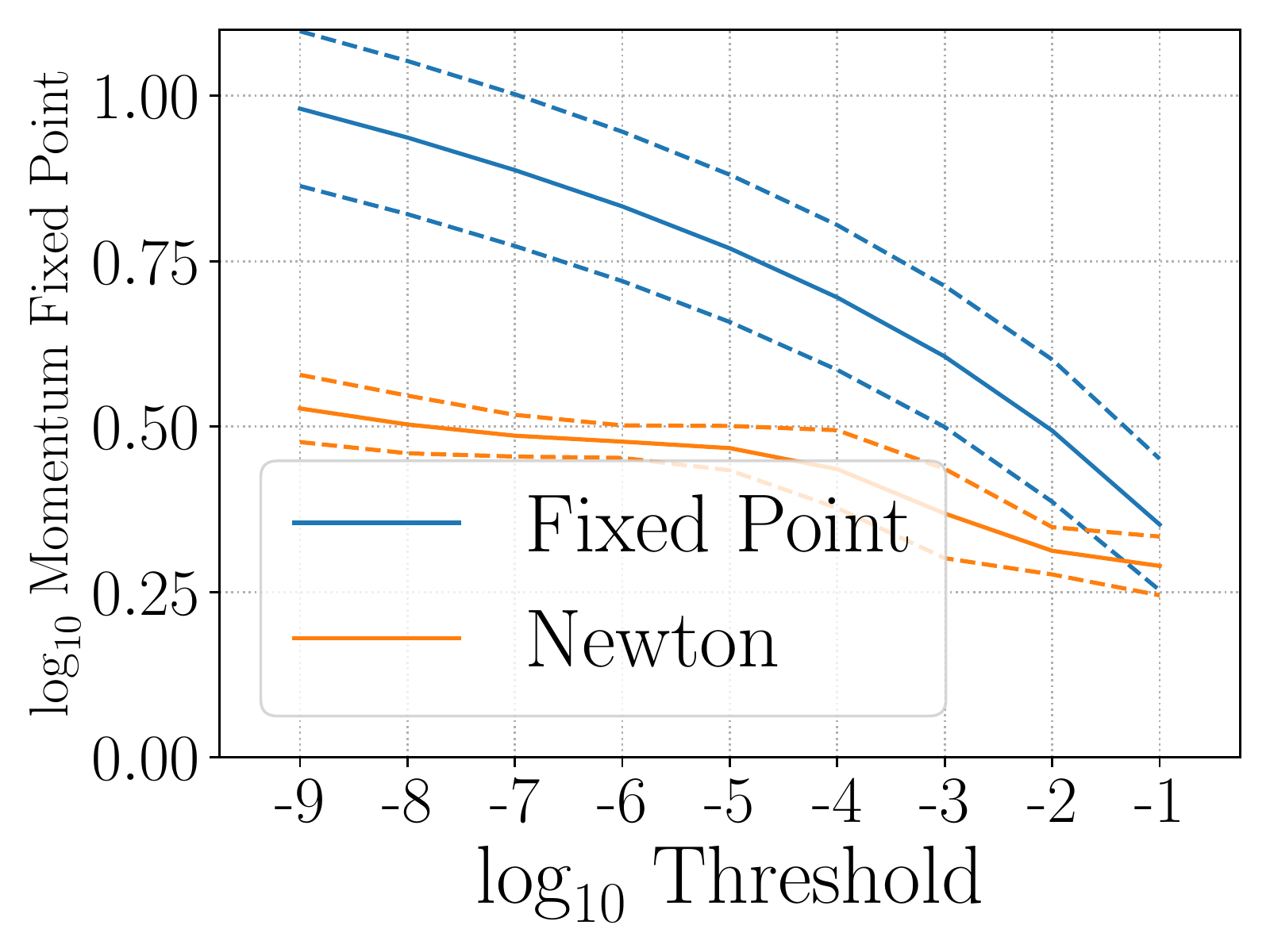}
    \caption{Fitzhugh-Nagumo}
    \label{subfig:fixed-point-vs-newton-fitzhugh-nagumo}
  \end{subfigure}
  ~
  \begin{subfigure}[t]{0.3\textwidth}
    \centering
    \includegraphics[width=\textwidth]{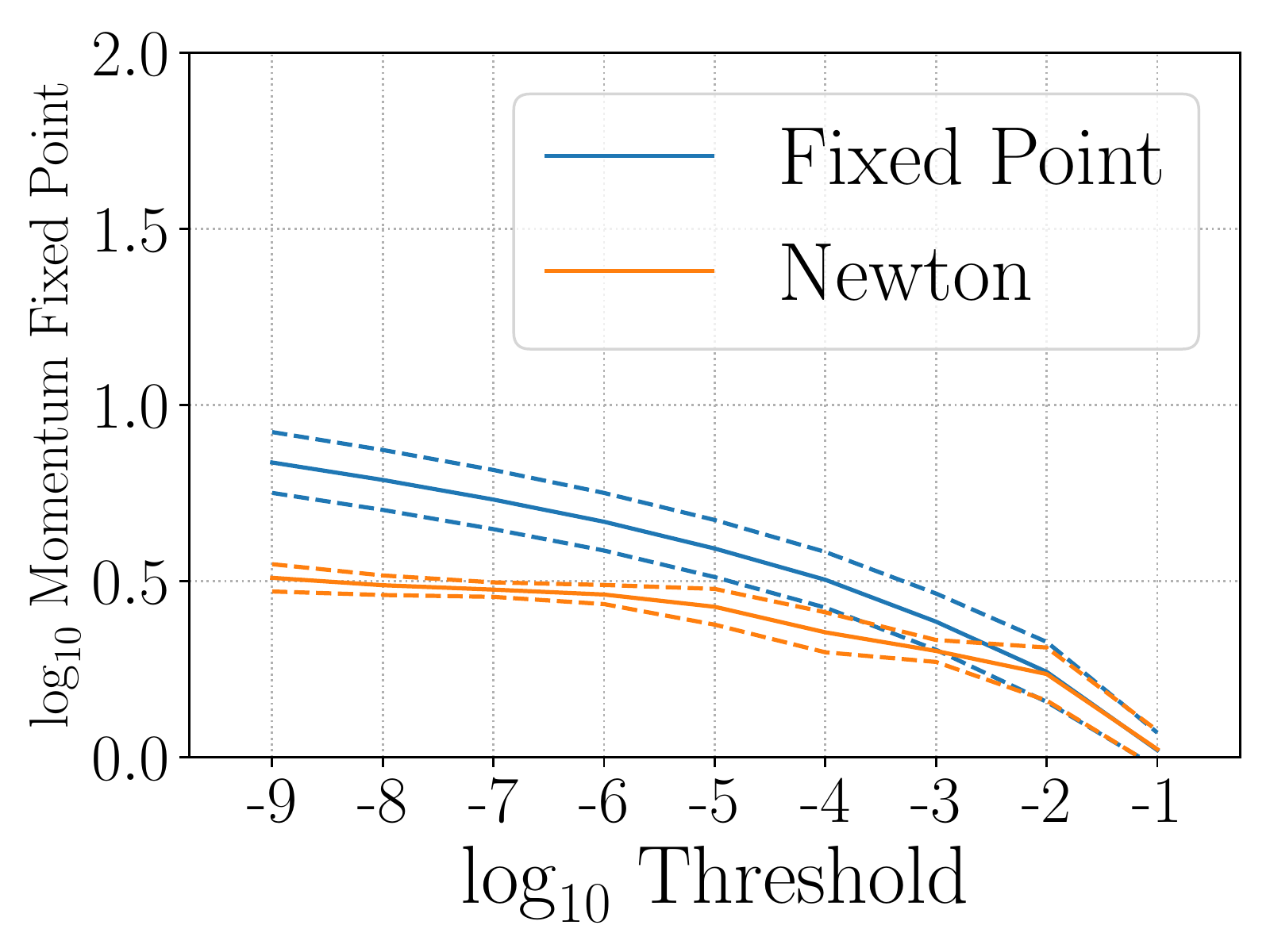}
    \caption{Student-$t$}
    \label{subfig:fixed-point-vs-newton-student-t}
  \end{subfigure}

  \caption{We compute the number of iterations required to resolve the implicit updates to the momentum variable when using either fixed point iteration or Newton's method. We report the mean number of iterations with error bars showing dispersion about the mean. In each circumstance, Newton's method converges faster than fixed point iterations, reflecting its superior order of convergence. Only around three iterations of Newton's method are required on average in each example. We observe that in many cases, Newton's method appears to be less sensitive to the convergence tolerance.}
  \label{fig:fixed-point-vs-newton-iterations}
\end{figure}

\begin{figure}[t!]
  \centering
  \begin{subfigure}[t]{\textwidth}
    \includegraphics[width=\textwidth]{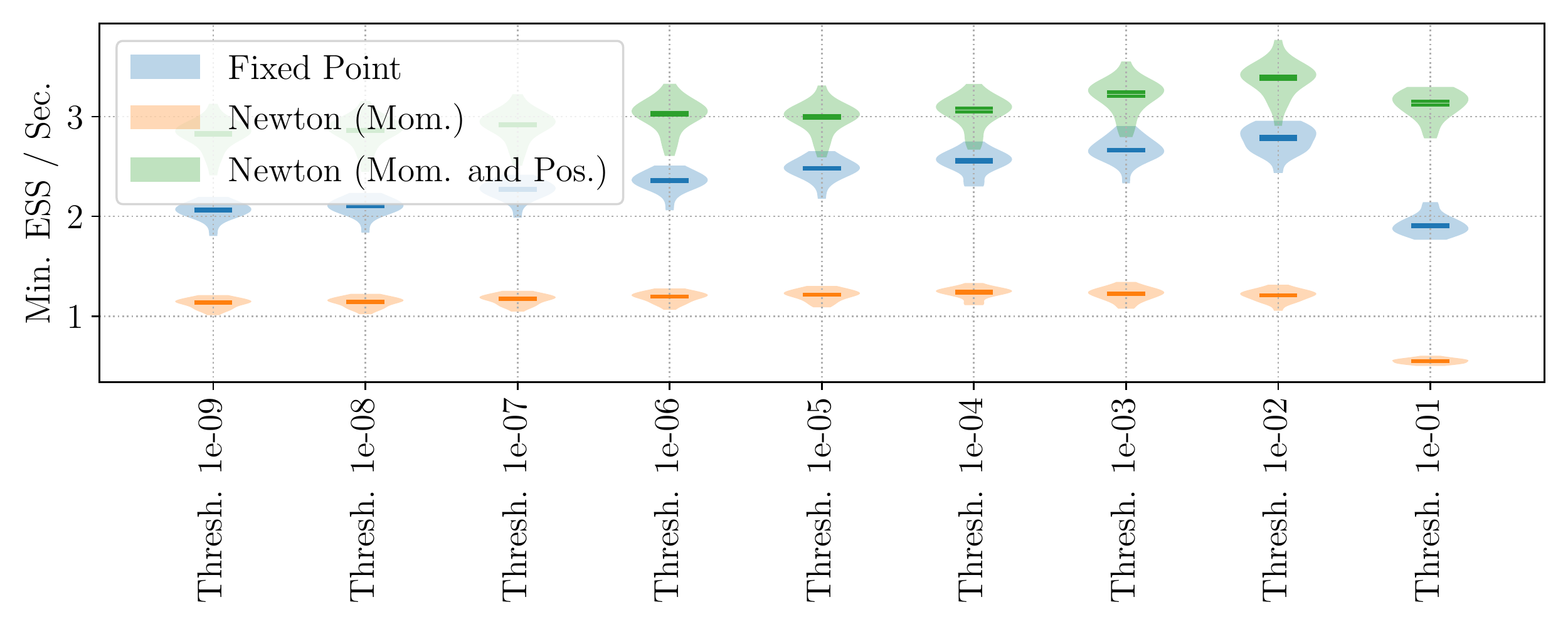}
    \caption{Banana}
    \label{subfig:fixed-point-vs-newton-ess-wose-banana}
  \end{subfigure}

  \begin{subfigure}[t]{\textwidth}
    \includegraphics[width=\textwidth]{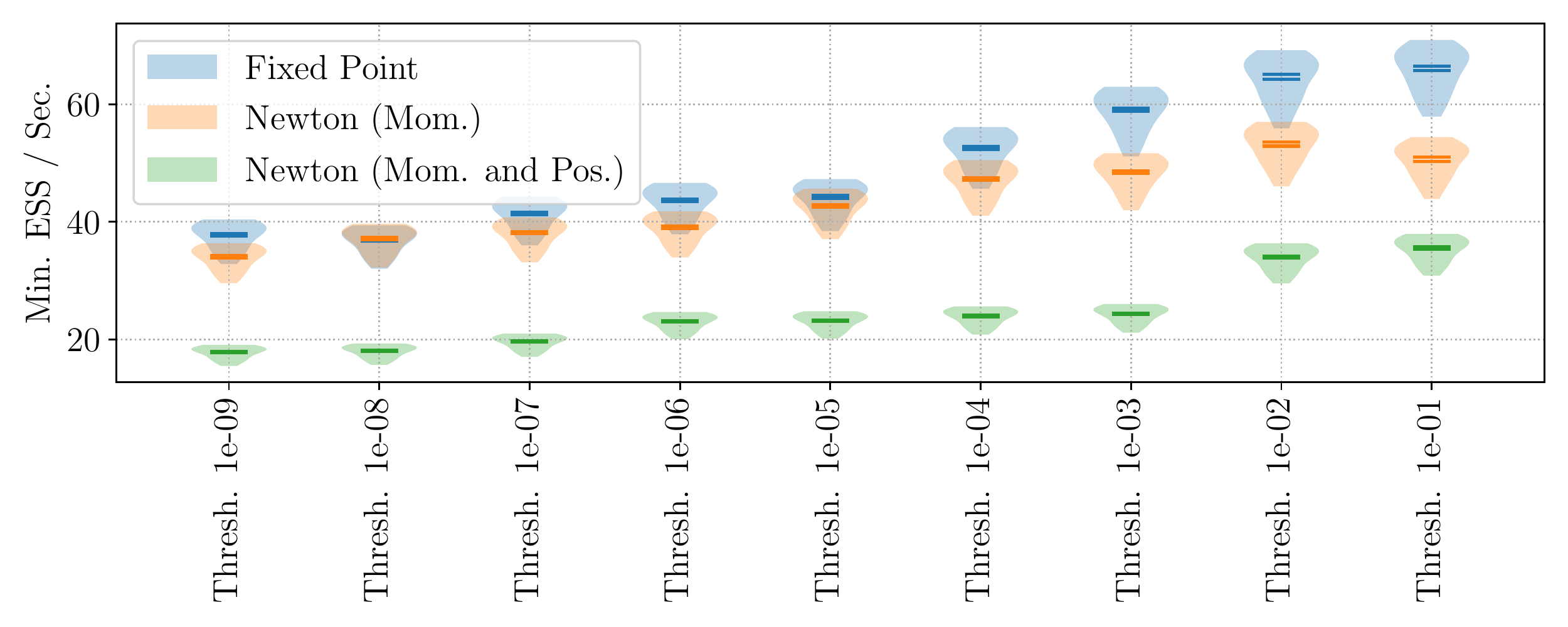}
    \caption{Logistic Regression}
  \end{subfigure}

  \begin{subfigure}[t]{\textwidth}
    \includegraphics[width=\textwidth]{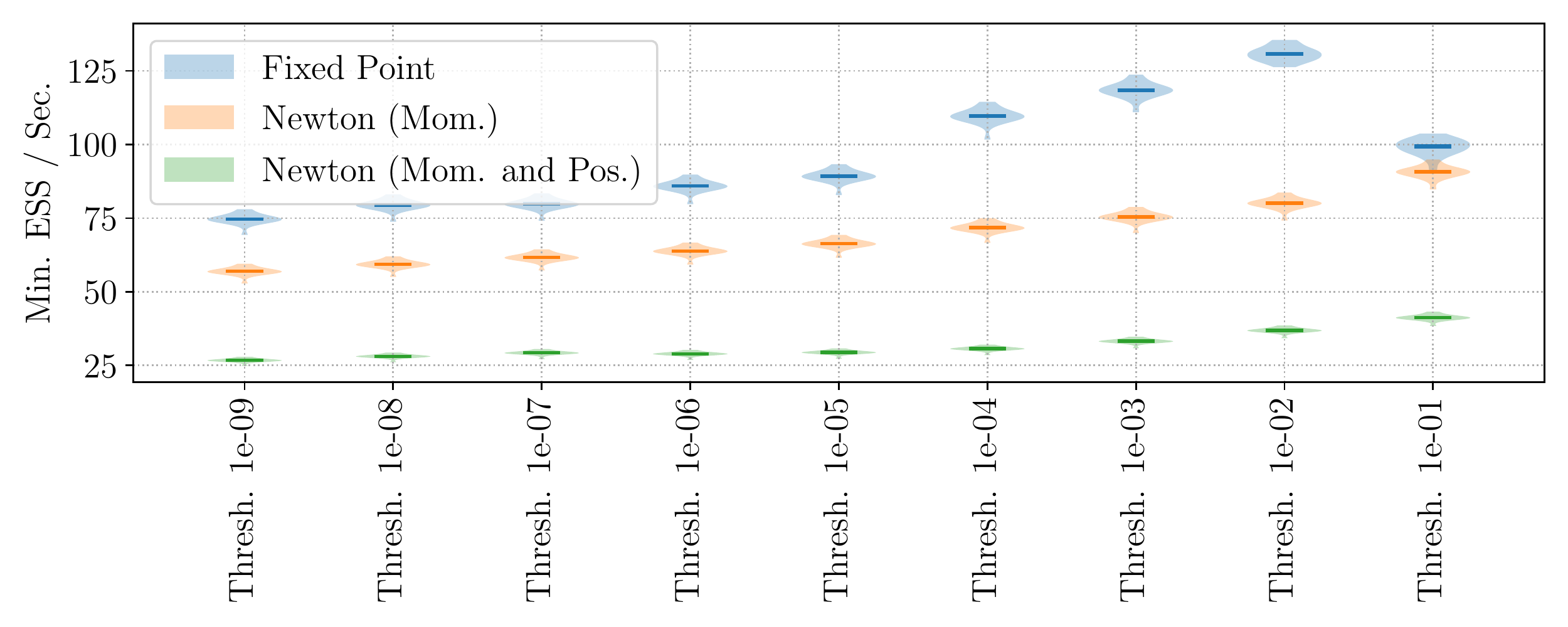}
    \caption{Student-$t$}
  \end{subfigure}
  
  \caption{Three circumstances are shown wherein the use of Newton's method to resolve the implicit update to the momentum resulted in degraded sampling efficiency due to the higher computational burden of Newton's method compared to fixed point iteration. In the banana-shaped distribution, although the posterior is only two-dimensional, significant computation is wasted while searching for solutions to the momentum update that do not exist.}
  \label{fig:fixed-point-vs-newton-ess-worse}
\end{figure}
\begin{figure}[t!]
  \centering
  \begin{subfigure}[t]{0.32\textwidth}
    \includegraphics[width=\textwidth]{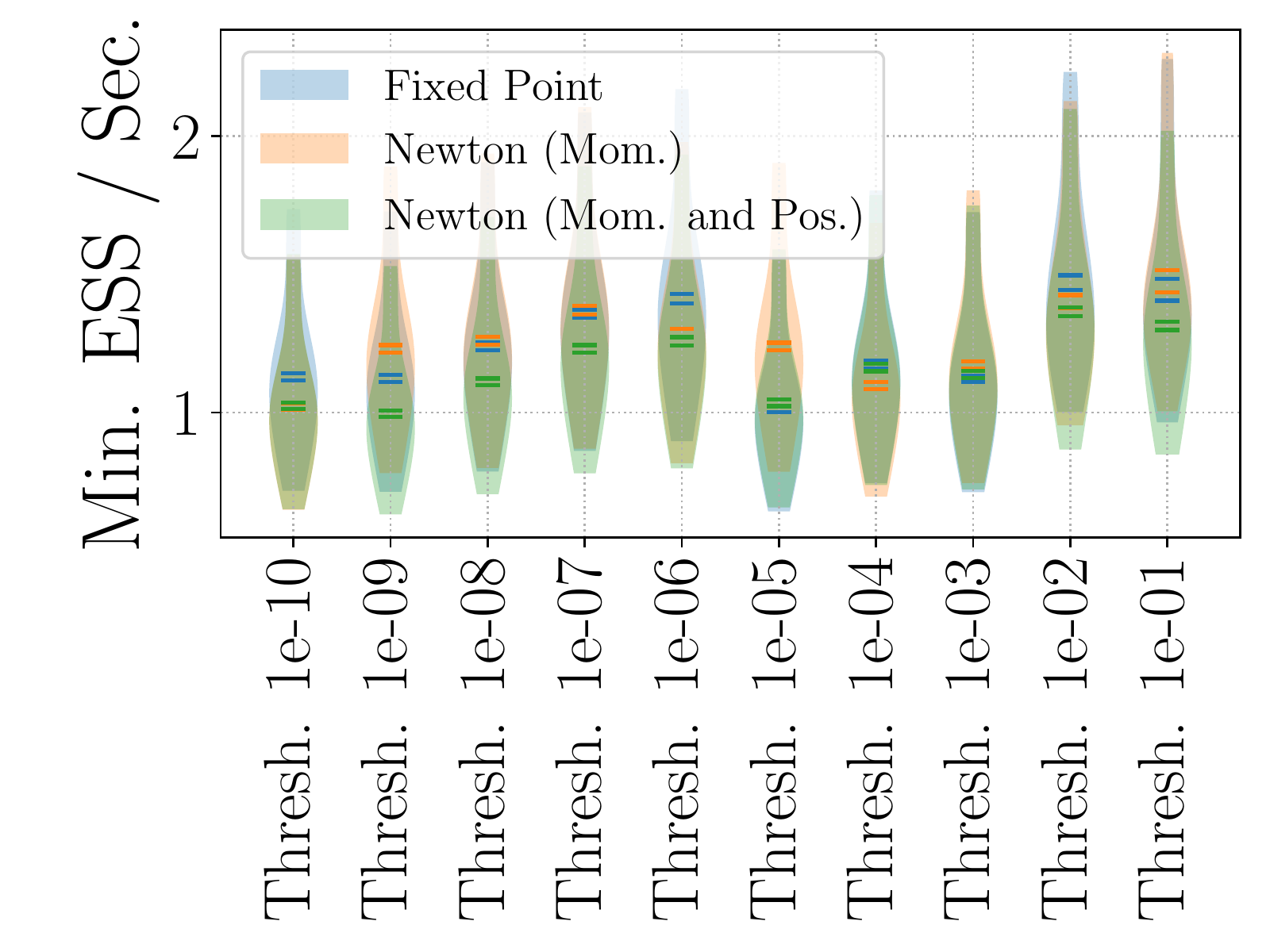}
    \caption{(SV) $\phi$}
  \end{subfigure}
  ~
  \begin{subfigure}[t]{0.32\textwidth}
    \includegraphics[width=\textwidth]{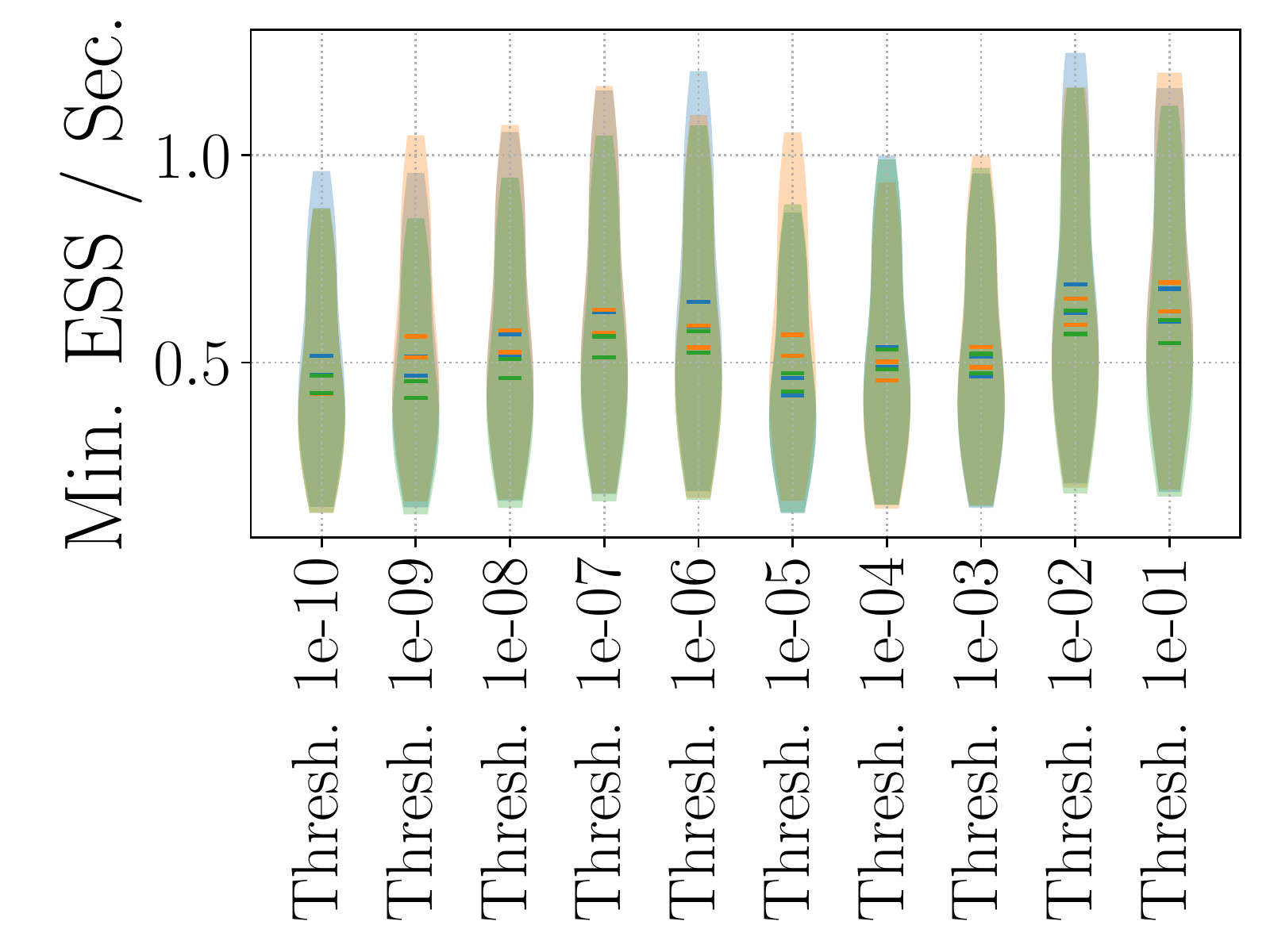}
    \caption{(SV) $\beta$}
  \end{subfigure}
  ~
  \begin{subfigure}[t]{0.32\textwidth}
    \includegraphics[width=\textwidth]{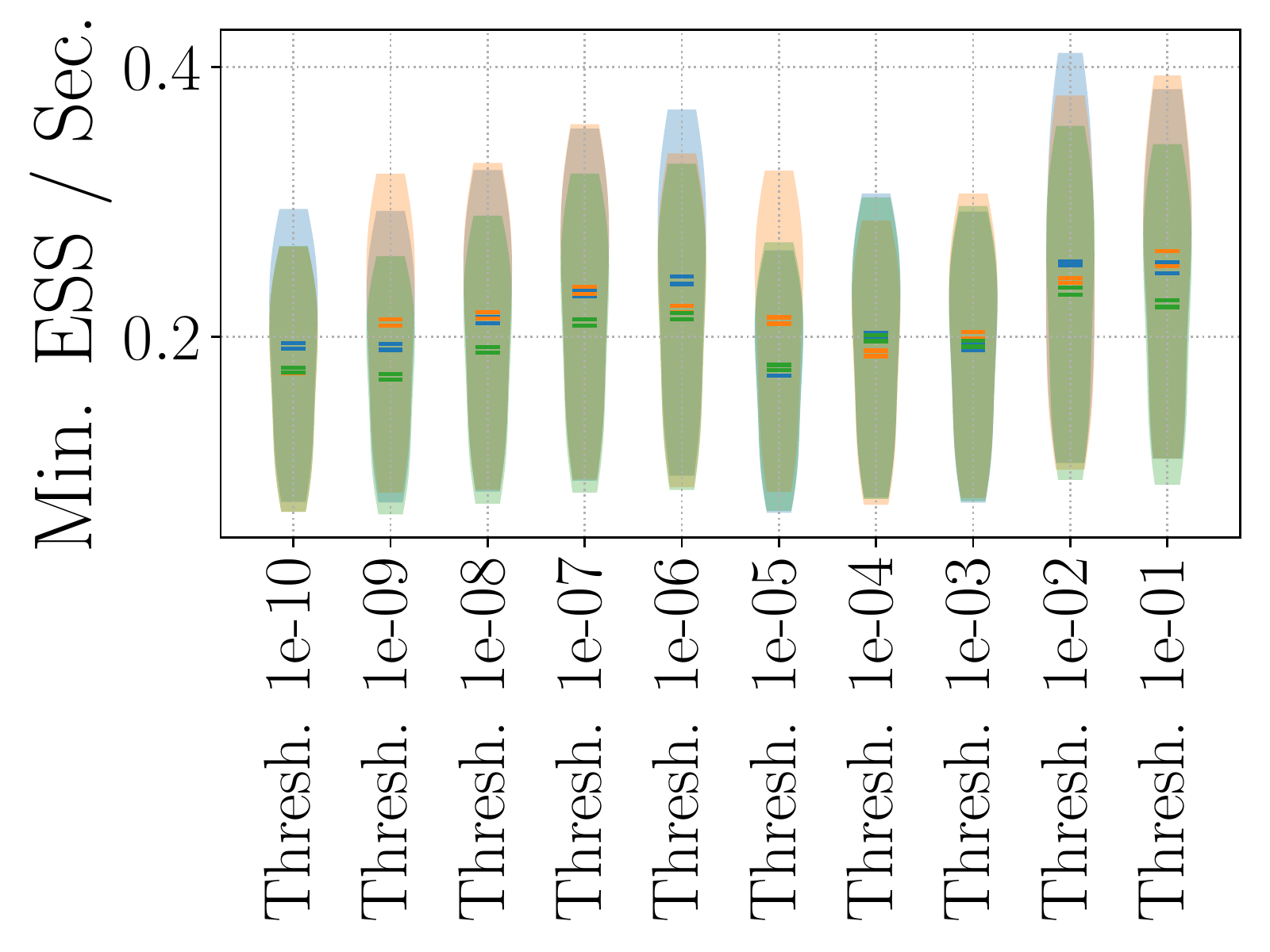}
    \caption{(SV) $\sigma^2$}
  \end{subfigure}
  
  \begin{subfigure}[t]{0.49\textwidth}
    \includegraphics[width=\textwidth]{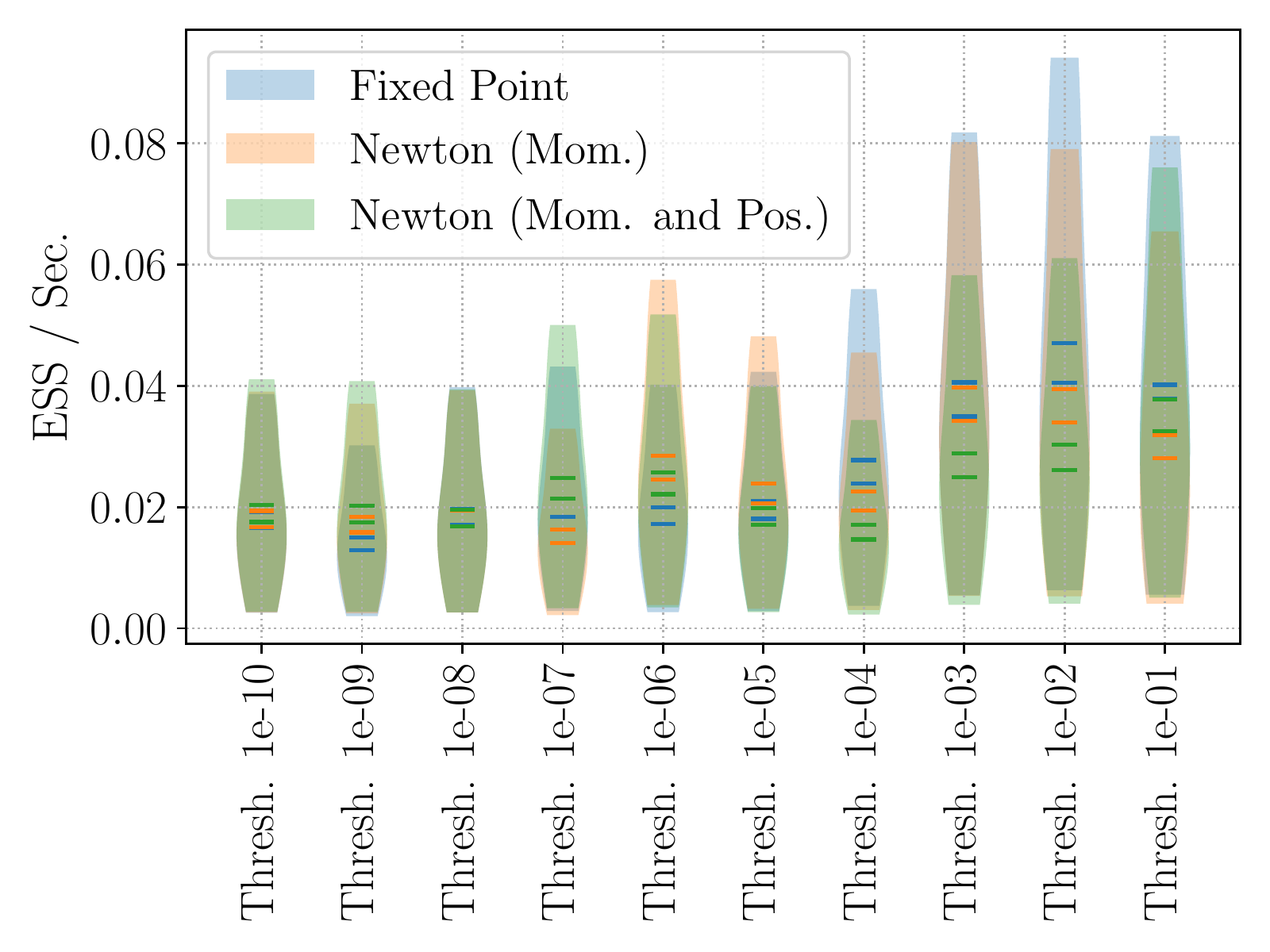}
    \caption{(CP) $\beta$}
  \end{subfigure}
  ~
  \begin{subfigure}[t]{0.49\textwidth}
    \includegraphics[width=\textwidth]{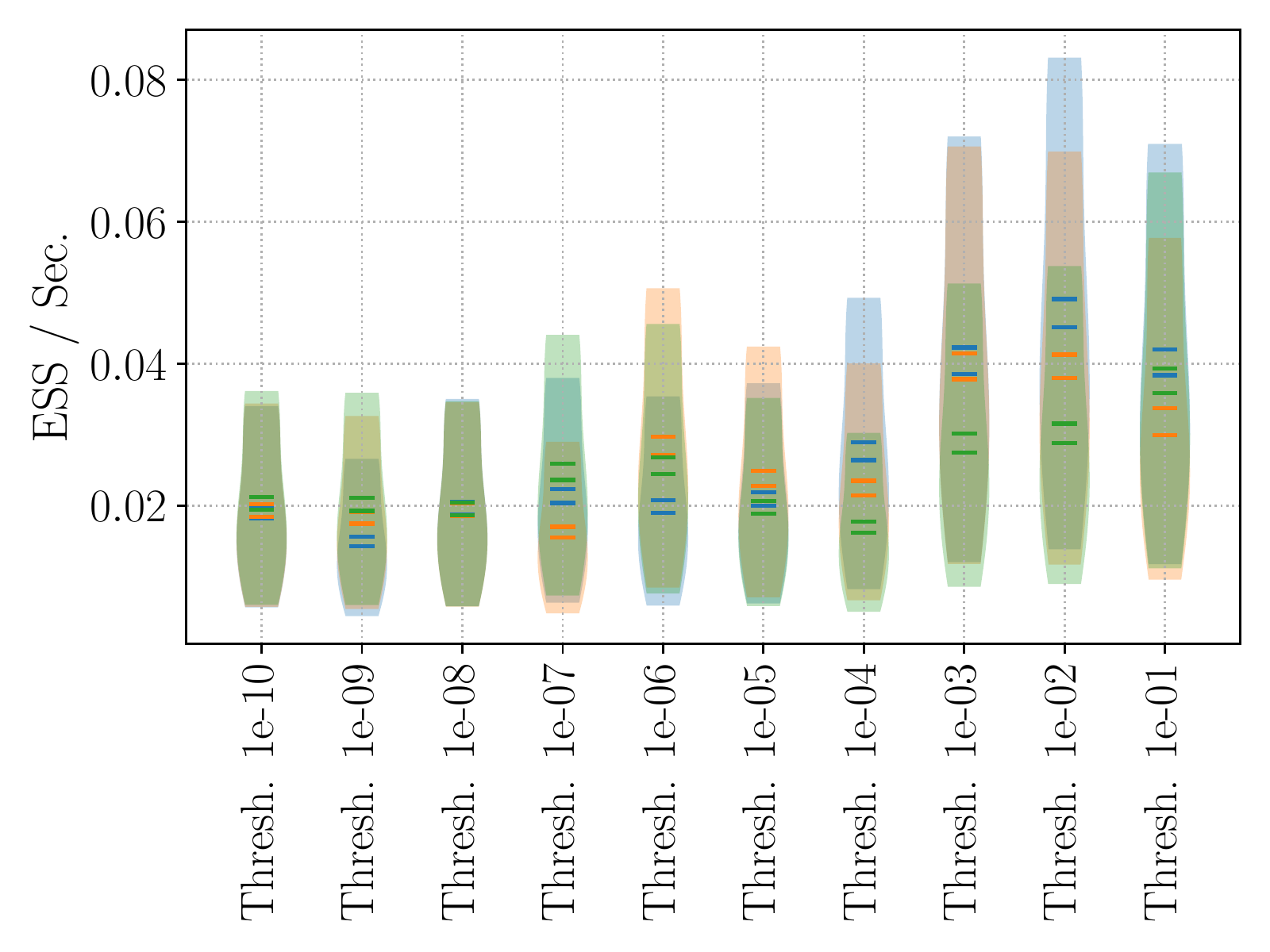}
    \caption{(CP) $\sigma$}
  \end{subfigure}
  
  \begin{subfigure}[t]{0.32\textwidth}
    \includegraphics[width=\textwidth]{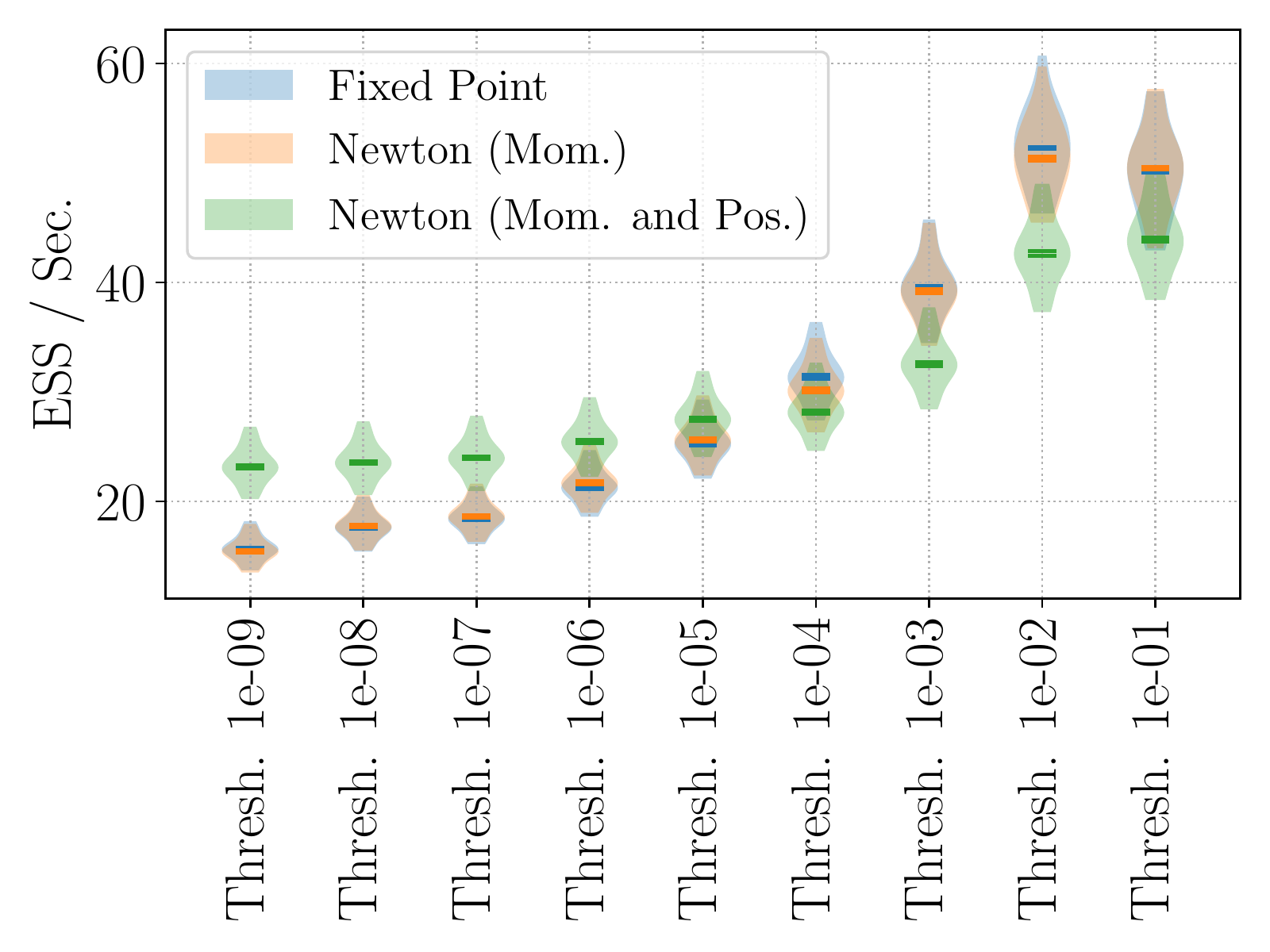}
    \caption{(FN) $a$}
  \end{subfigure}
  ~
  \begin{subfigure}[t]{0.32\textwidth}
    \includegraphics[width=\textwidth]{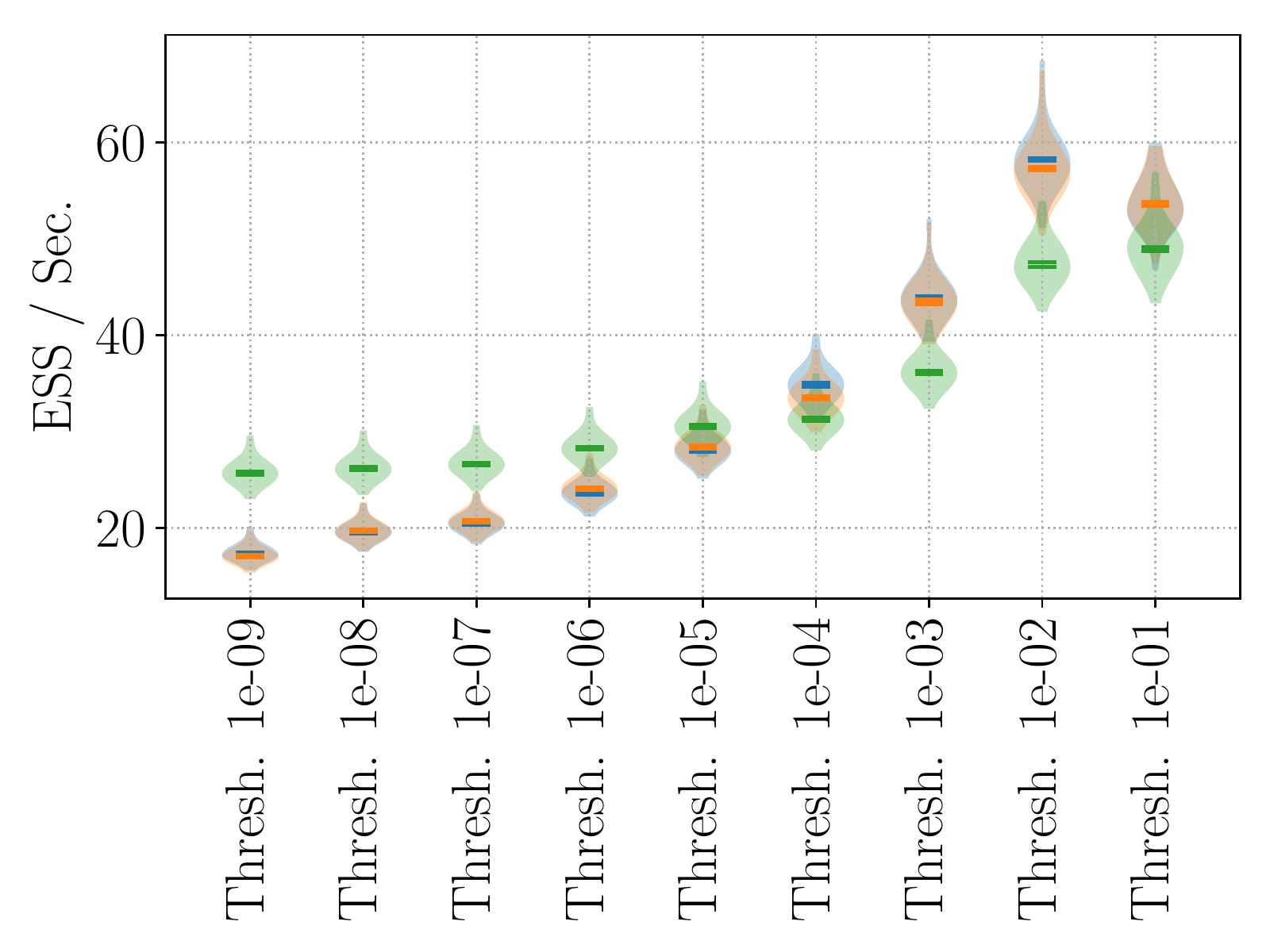}
    \caption{(FN) $b$}
  \end{subfigure}
  ~
  \begin{subfigure}[t]{0.32\textwidth}
    \includegraphics[width=\textwidth]{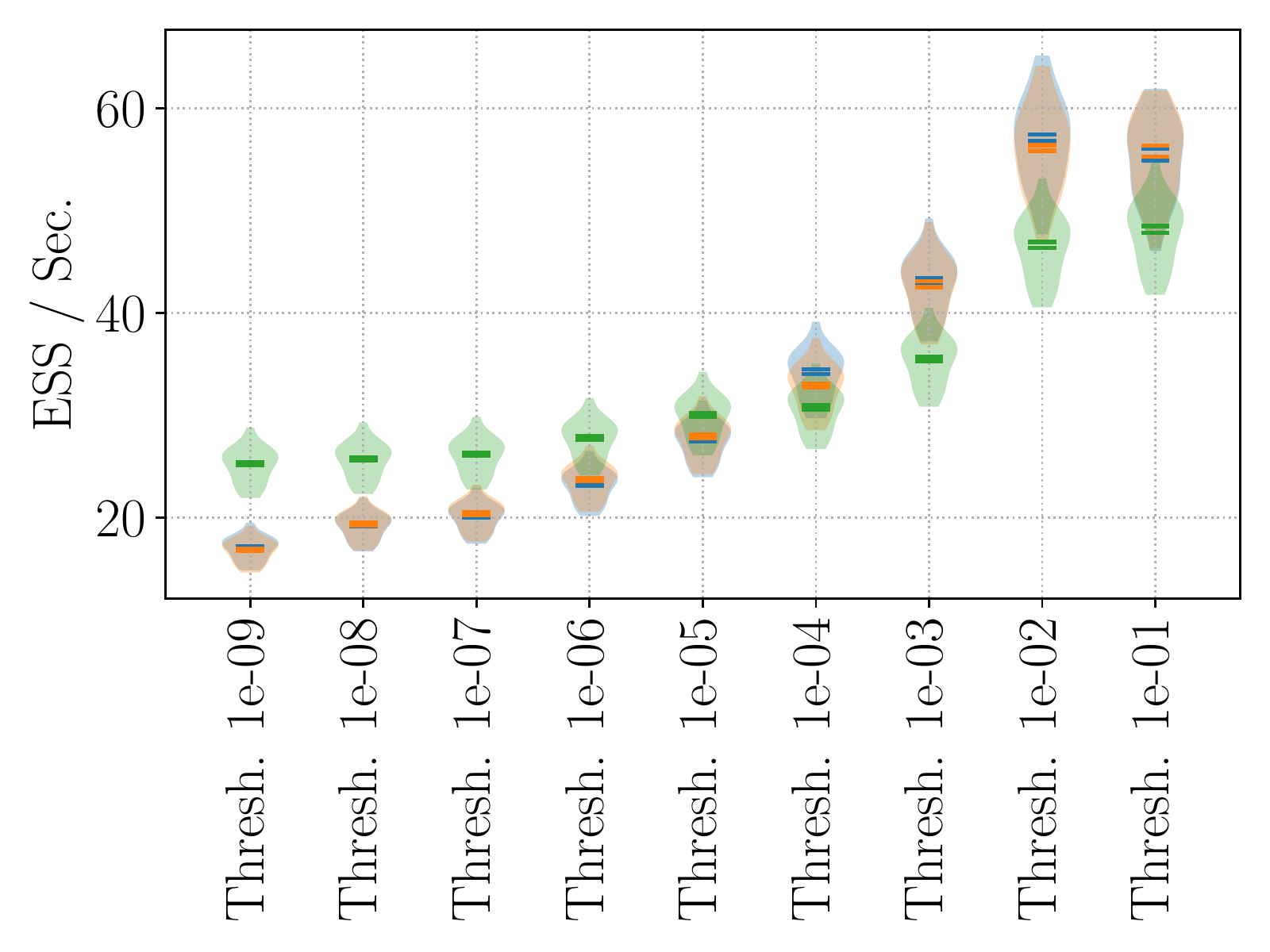}
    \caption{(FN) $c$}
  \end{subfigure}  
  
  \caption{Three circumstances in which Newton's method produced performance comparable to fixed point iteration when resolving the implicit update to the momentum. In these cases, Newton's method may actually be the preferred method, since it is somewhat less sensitive to the convergence tolerance. In the first row we show results for the stochastic volatility model (SV); the second row shows the log-Gauss Cox-Poisson model (CP); the third row shows results for the Fitzhugh-Nagumo differential equation model (FN).}
  \label{fig:fixed-point-vs-newton-ess-same}
\end{figure}
\begin{figure}[t!]
  \begin{subfigure}[t]{0.3\textwidth}
    \centering
    \includegraphics[width=\textwidth]{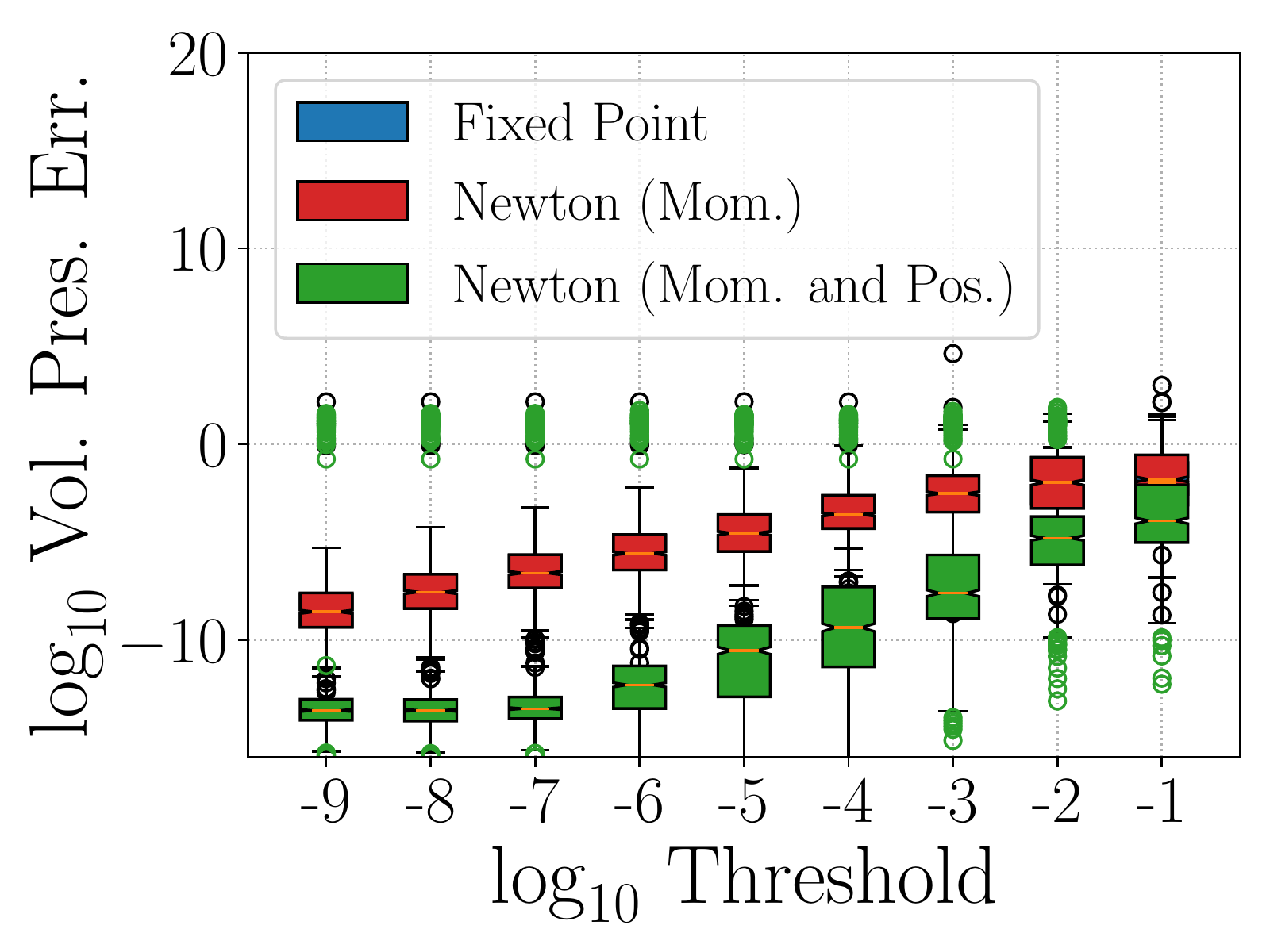}
    \caption{Banana}
  \end{subfigure}
  ~
  \begin{subfigure}[t]{0.3\textwidth}
    \centering
    \includegraphics[width=\textwidth]{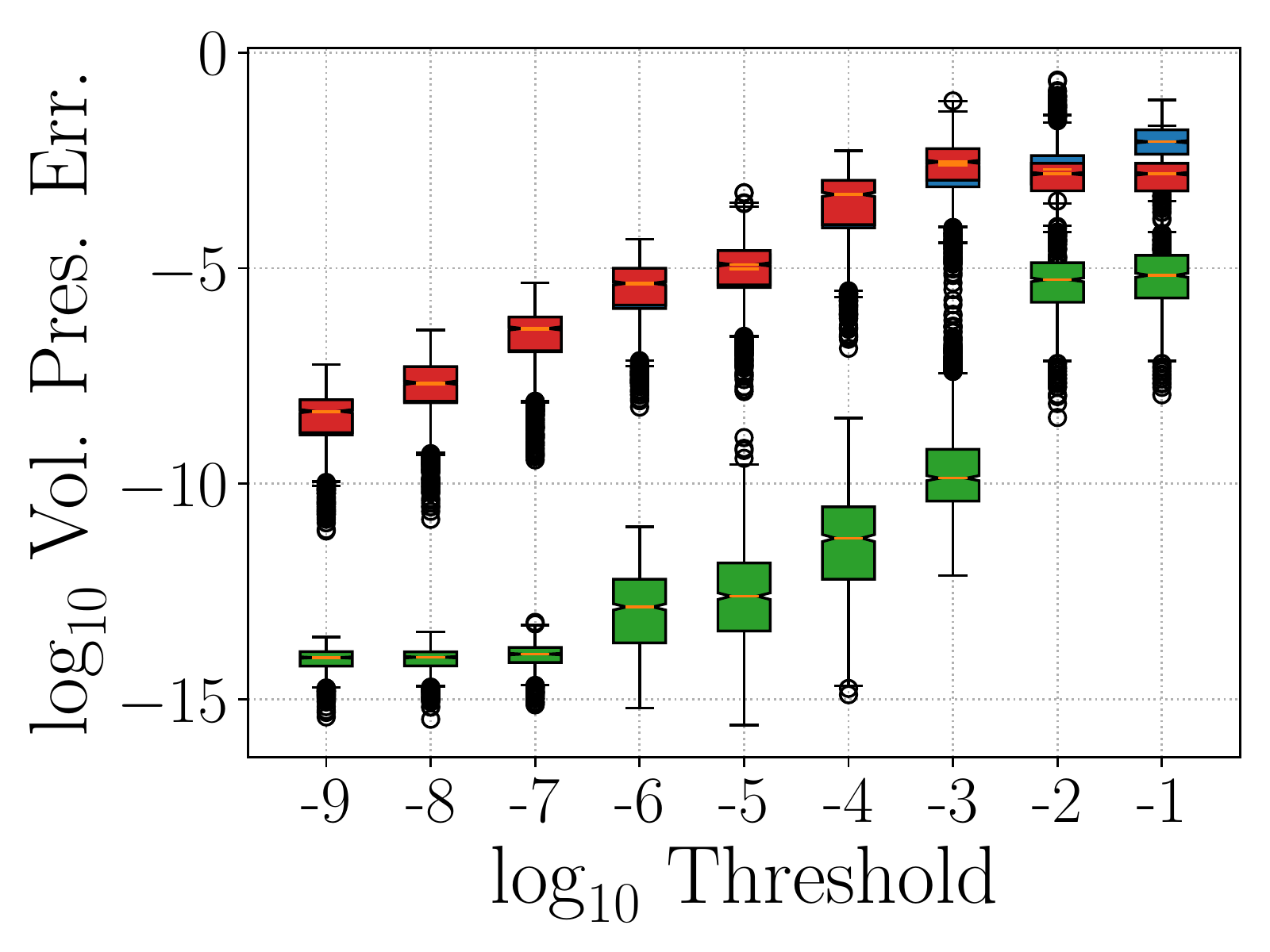}
    \caption{Logistic Regression}
  \end{subfigure}
  ~
  \begin{subfigure}[t]{0.3\textwidth}
    \centering
    \includegraphics[width=\textwidth]{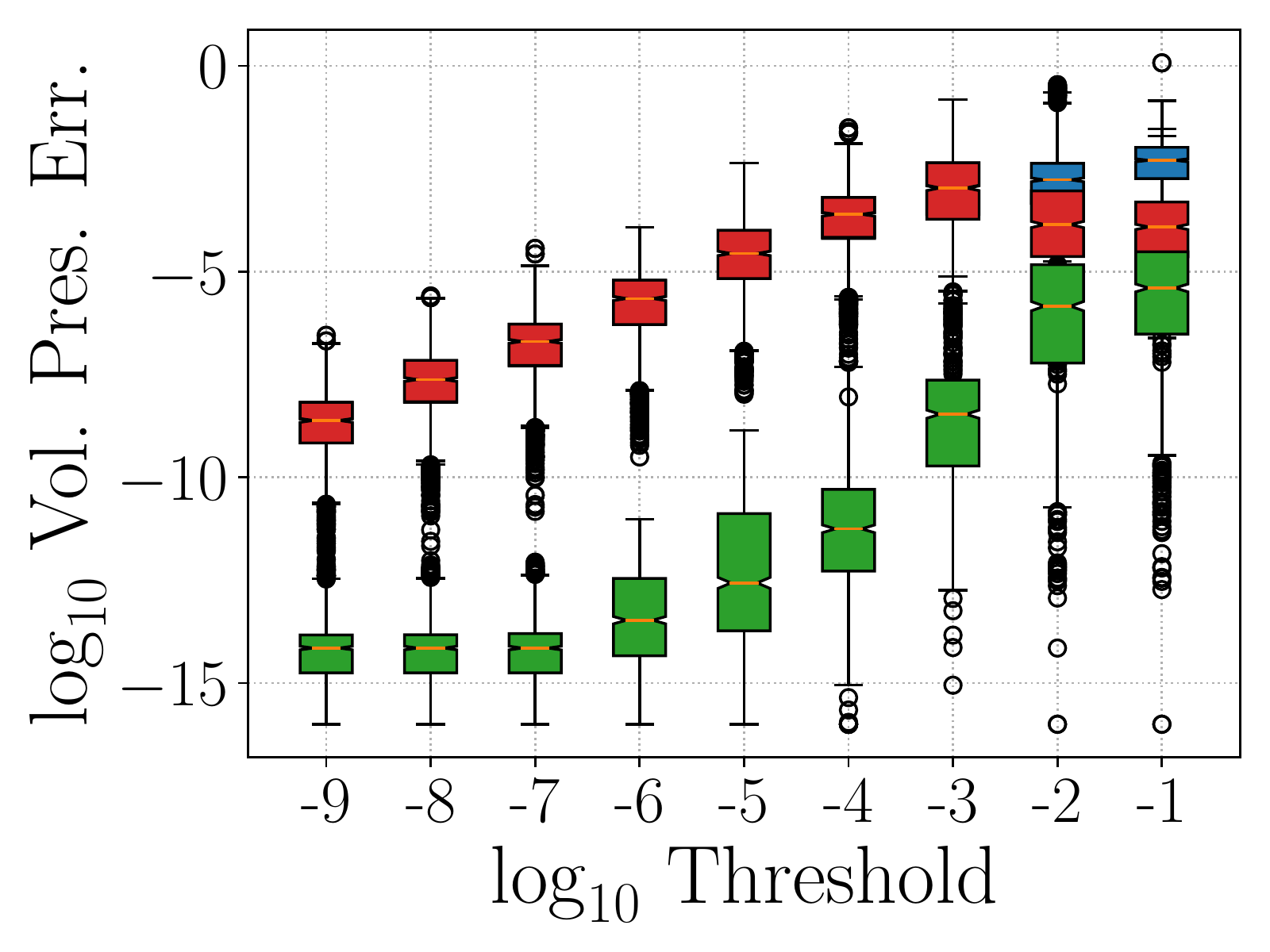}
    \caption{Stochastic Volatility}
  \end{subfigure}
  
  \begin{subfigure}[t]{0.3\textwidth}
    \centering
    \includegraphics[width=\textwidth]{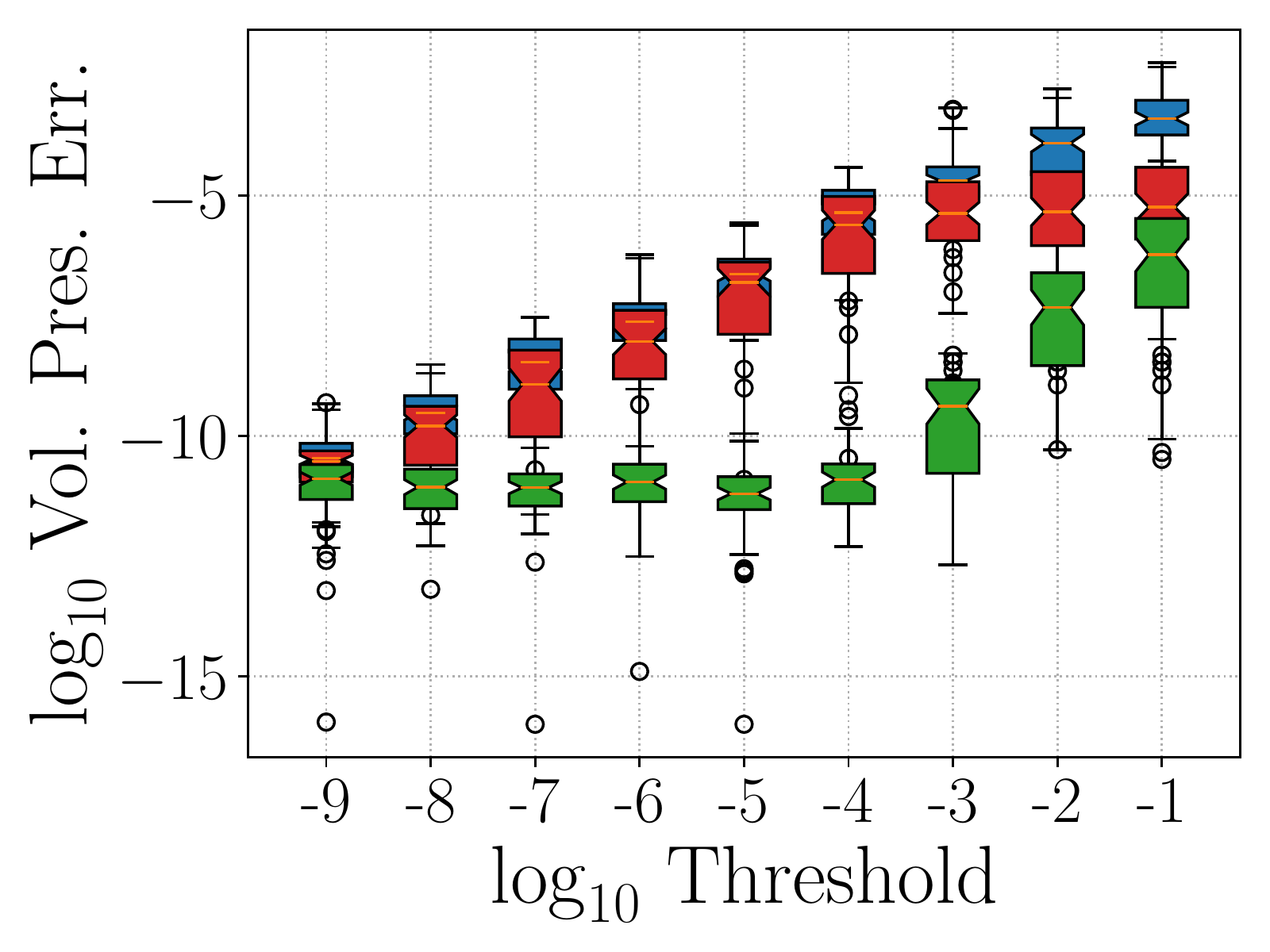}
    \caption{Cox-Poisson}
  \end{subfigure}
  ~
  \begin{subfigure}[t]{0.3\textwidth}
    \centering
    \includegraphics[width=\textwidth]{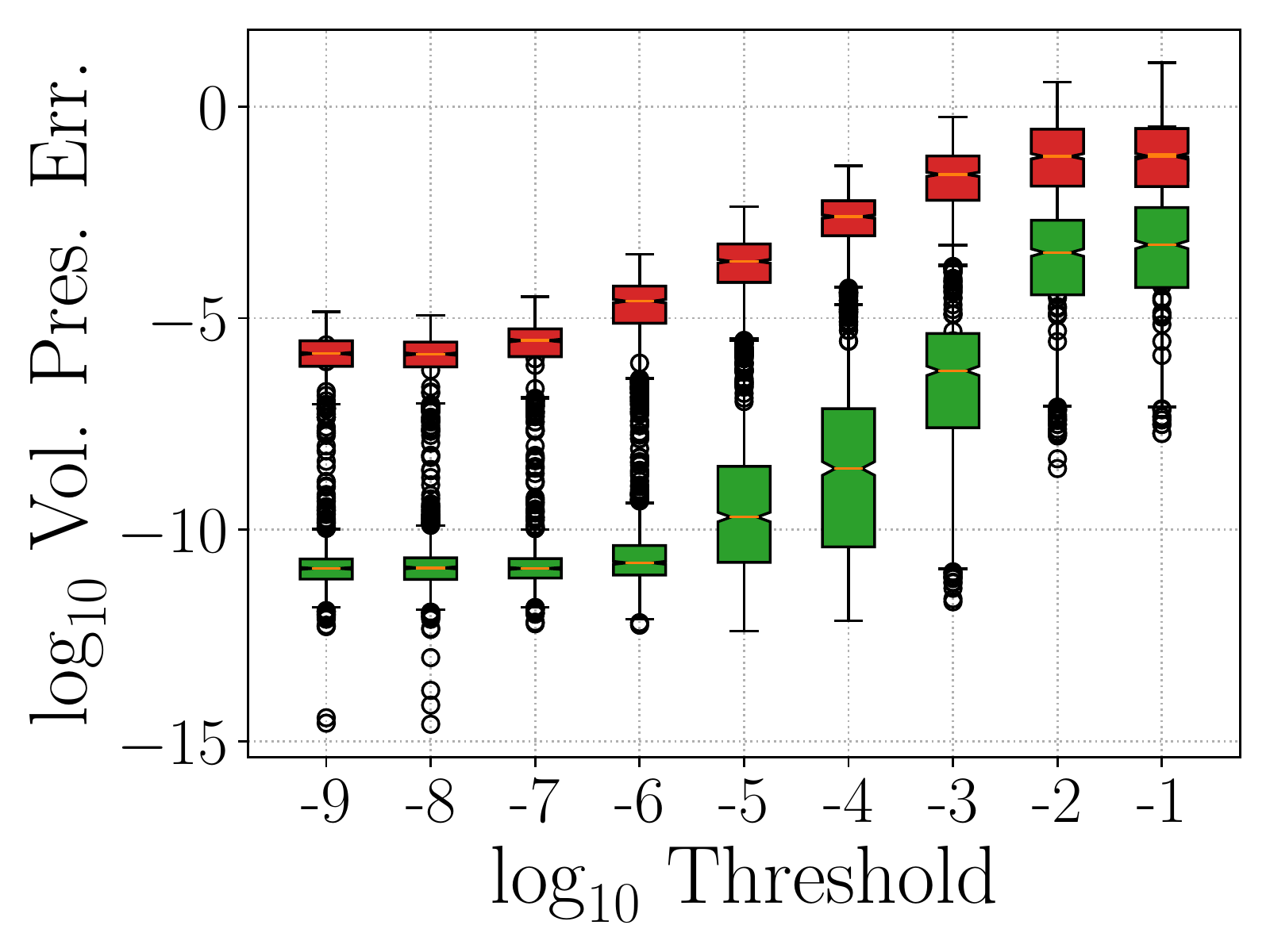}
    \caption{Fitzhugh-Nagumo}
  \end{subfigure}
  ~
  \begin{subfigure}[t]{0.3\textwidth}
    \centering
    \includegraphics[width=\textwidth]{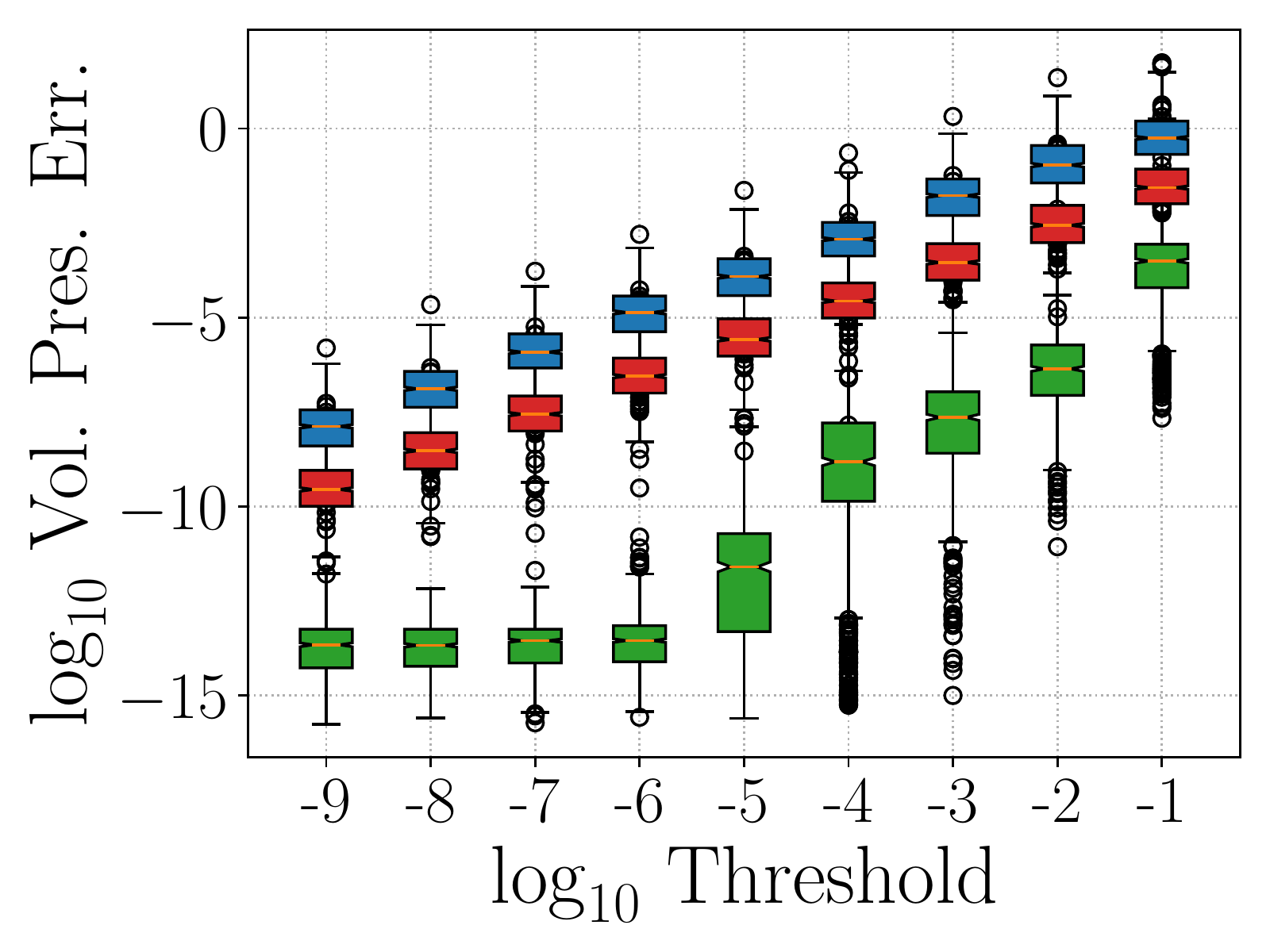}
    \caption{Student-$t$}
  \end{subfigure}

  \caption{We illustrate the dependency of the reversibility metric on the procedure used to resolve the implicit updates to momentum and position. In most examples, employing Newton's method to update both the position and the momentum variables in the generalized leapfrog integrator produced faster convergence to, and better respect for, reversibility of the proposal up to numerical precision.}
  \label{fig:fixed-point-vs-newton-reversibility}
\end{figure}

\begin{figure}[t!]
  \begin{subfigure}[t]{0.3\textwidth}
    \centering
    \includegraphics[width=\textwidth]{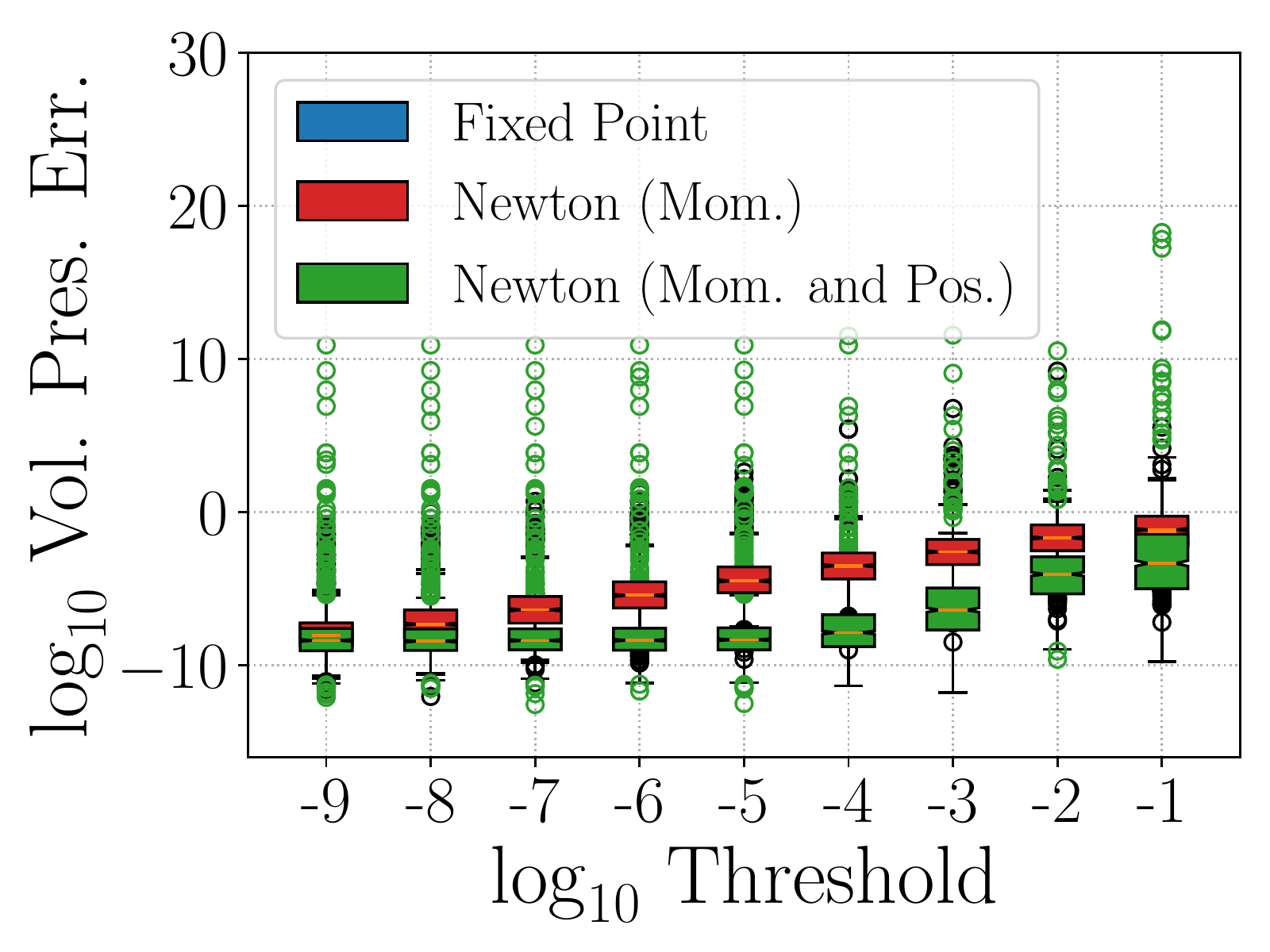}
    \caption{Banana}
  \end{subfigure}
  ~
  \begin{subfigure}[t]{0.3\textwidth}
    \centering
    \includegraphics[width=\textwidth]{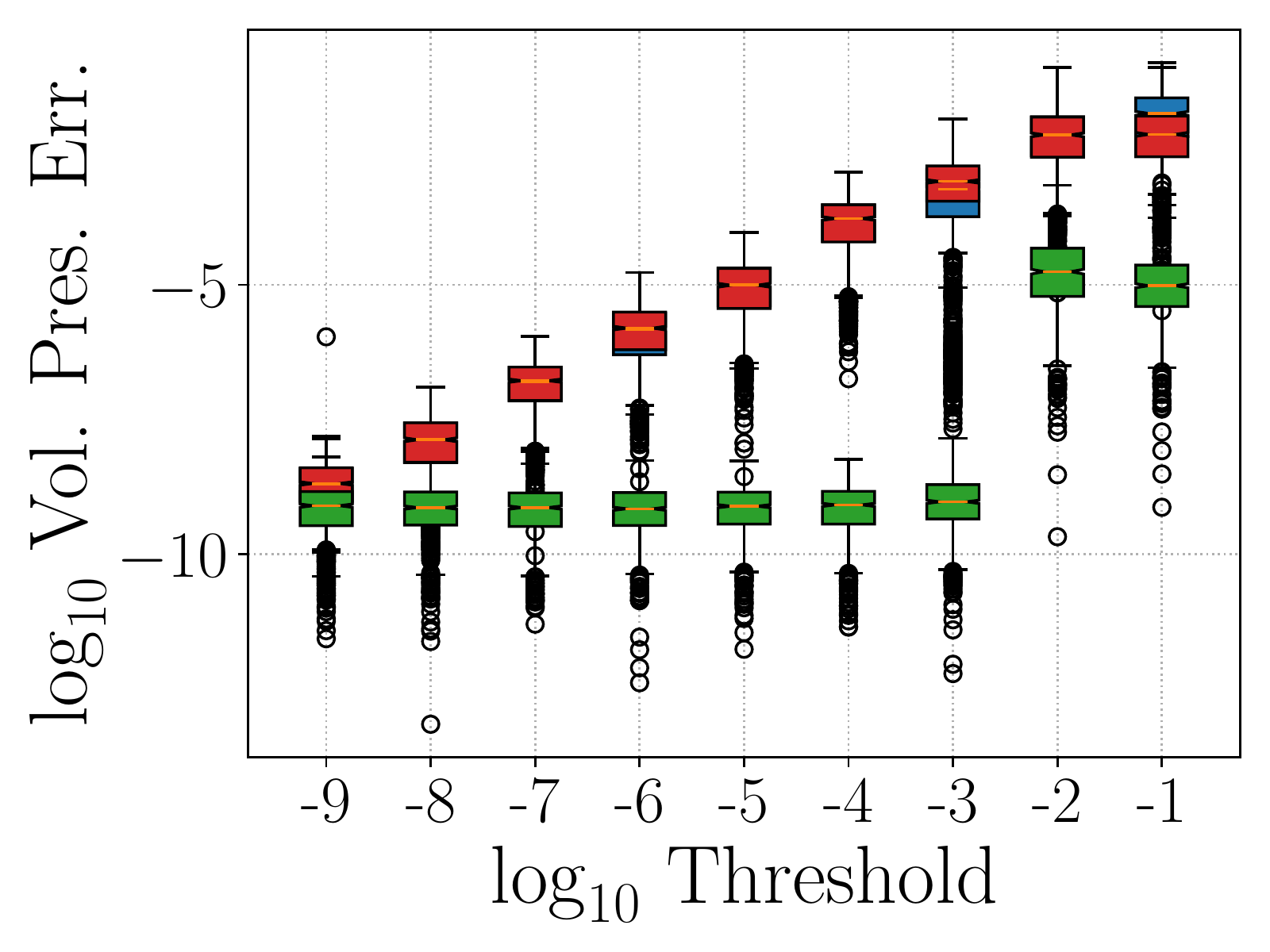}
    \caption{Logistic Regression}
  \end{subfigure}
  ~
  \begin{subfigure}[t]{0.3\textwidth}
    \centering
    \includegraphics[width=\textwidth]{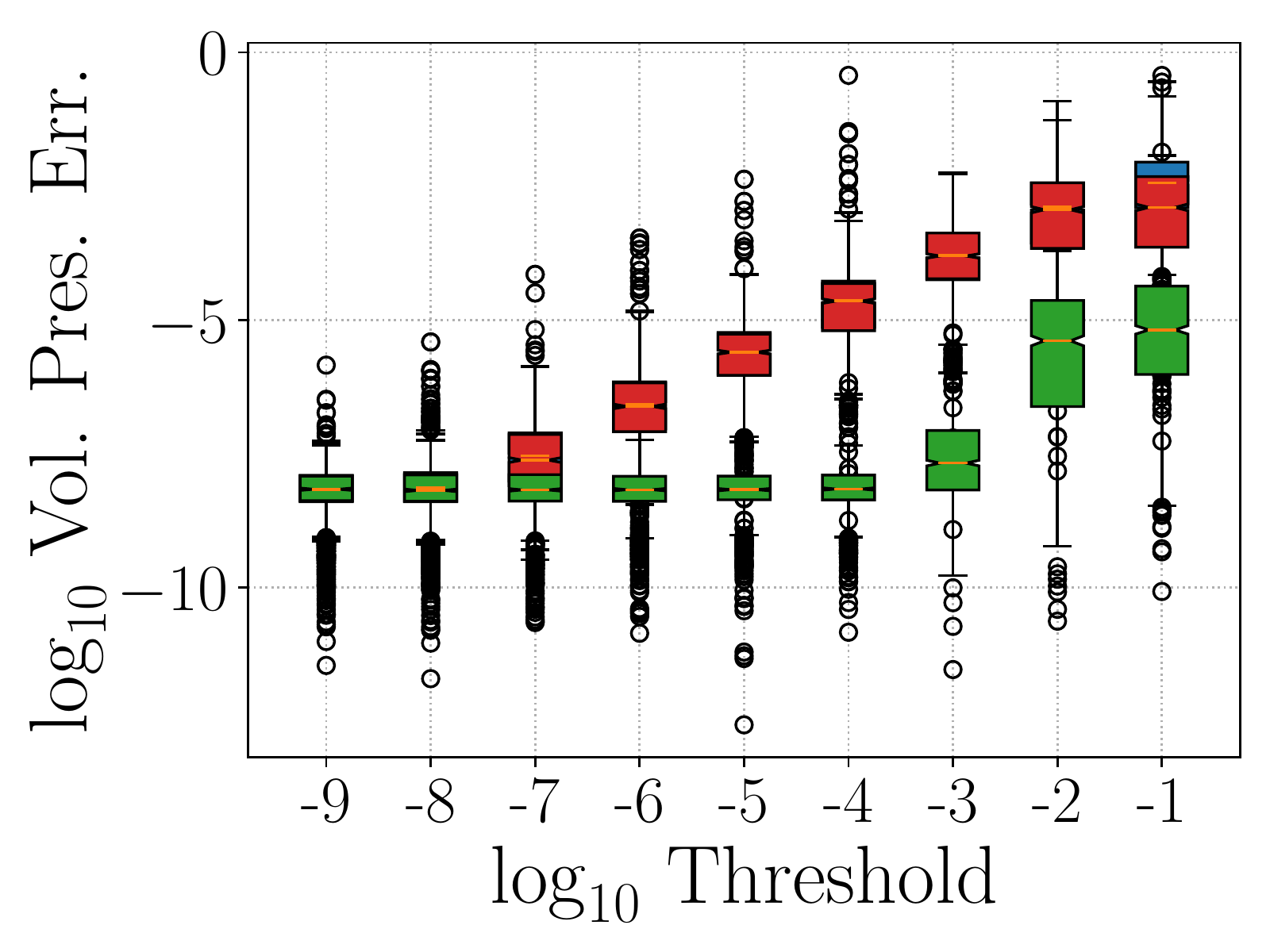}
    \caption{Stochastic Volatility}
  \end{subfigure}
  
  \begin{subfigure}[t]{0.3\textwidth}
    \centering
    \includegraphics[width=\textwidth]{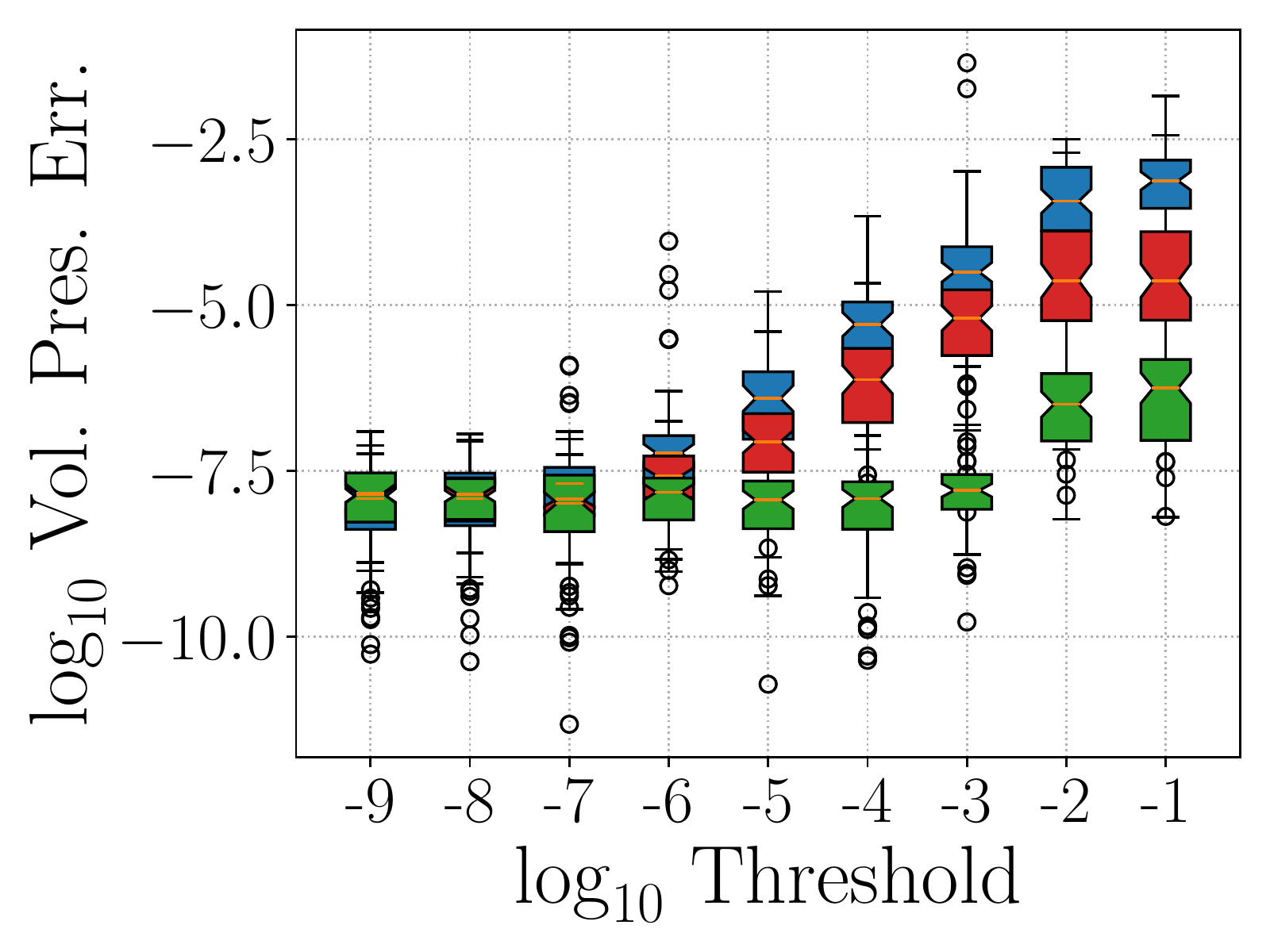}
    \caption{Cox-Poisson}
  \end{subfigure}
  ~
  \begin{subfigure}[t]{0.3\textwidth}
    \centering
    \includegraphics[width=\textwidth]{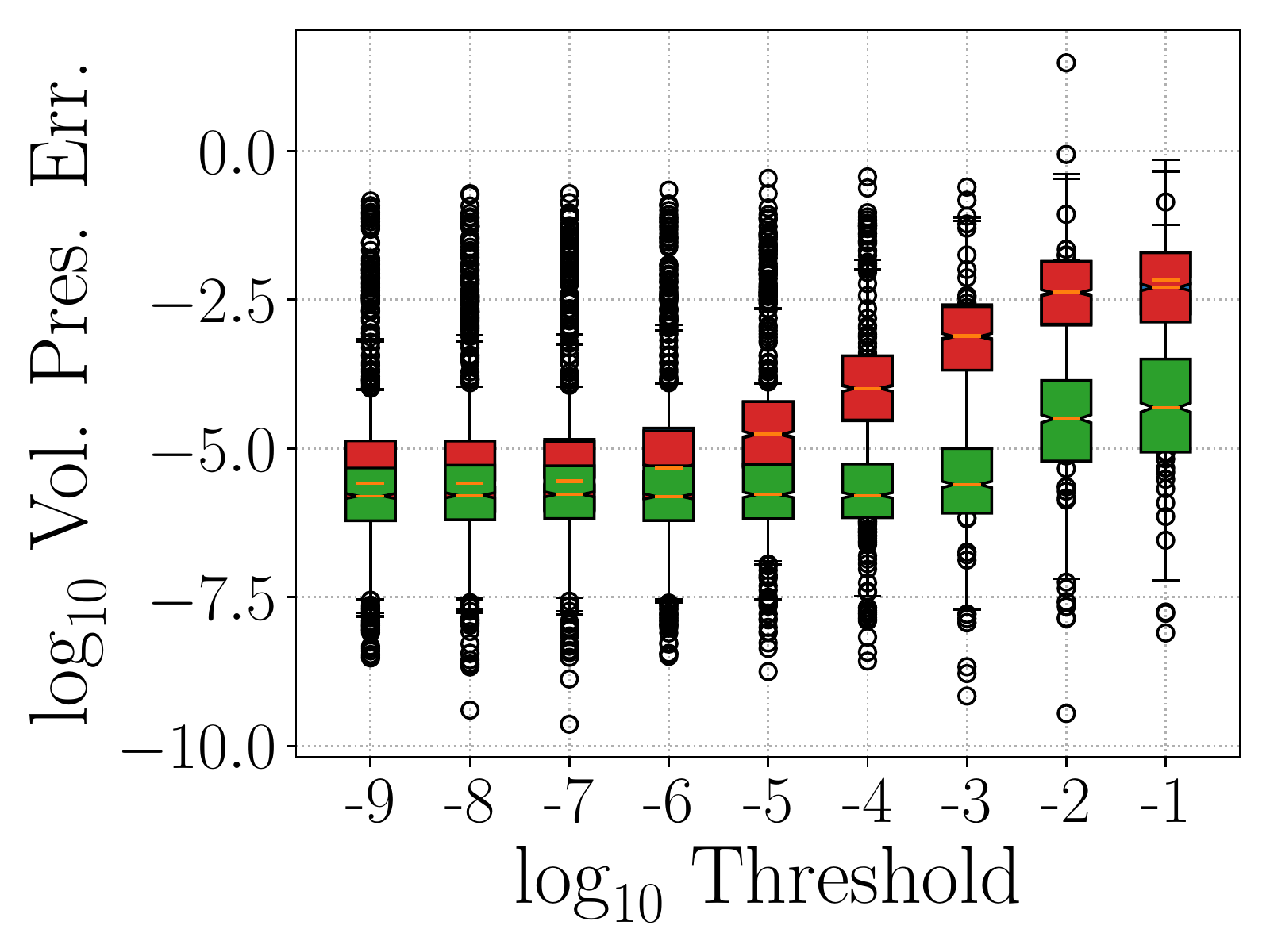}
    \caption{Fitzhugh-Nagumo}
  \end{subfigure}
  ~
  \begin{subfigure}[t]{0.3\textwidth}
    \centering
    \includegraphics[width=\textwidth]{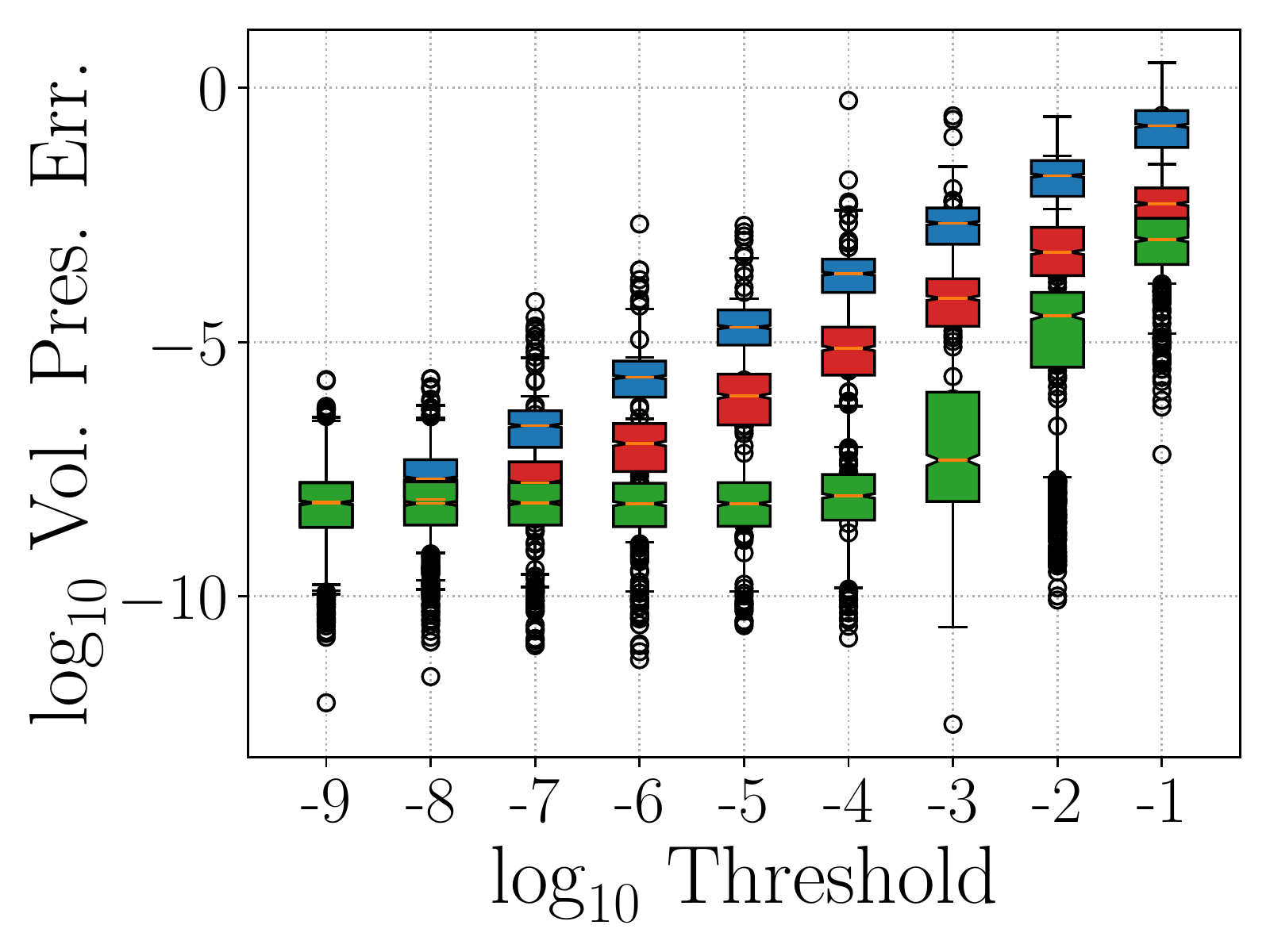}
    \caption{Student-$t$}
  \end{subfigure}

  \caption{We illustrate the dependency of the Jacobian determinant metric on the procedure used to resolve the implicit updates to momentum and position. Employing Newton's method to update both the position and the momentum variables in the generalized leapfrog integrator yields faster convergence toward a volume-preserving proposal.}
  \label{fig:fixed-point-vs-newton-jacobian-determinant}
\end{figure}

\begin{figure}[t!]
  \begin{subfigure}[t]{0.3\textwidth}
    \centering
    \includegraphics[width=\textwidth]{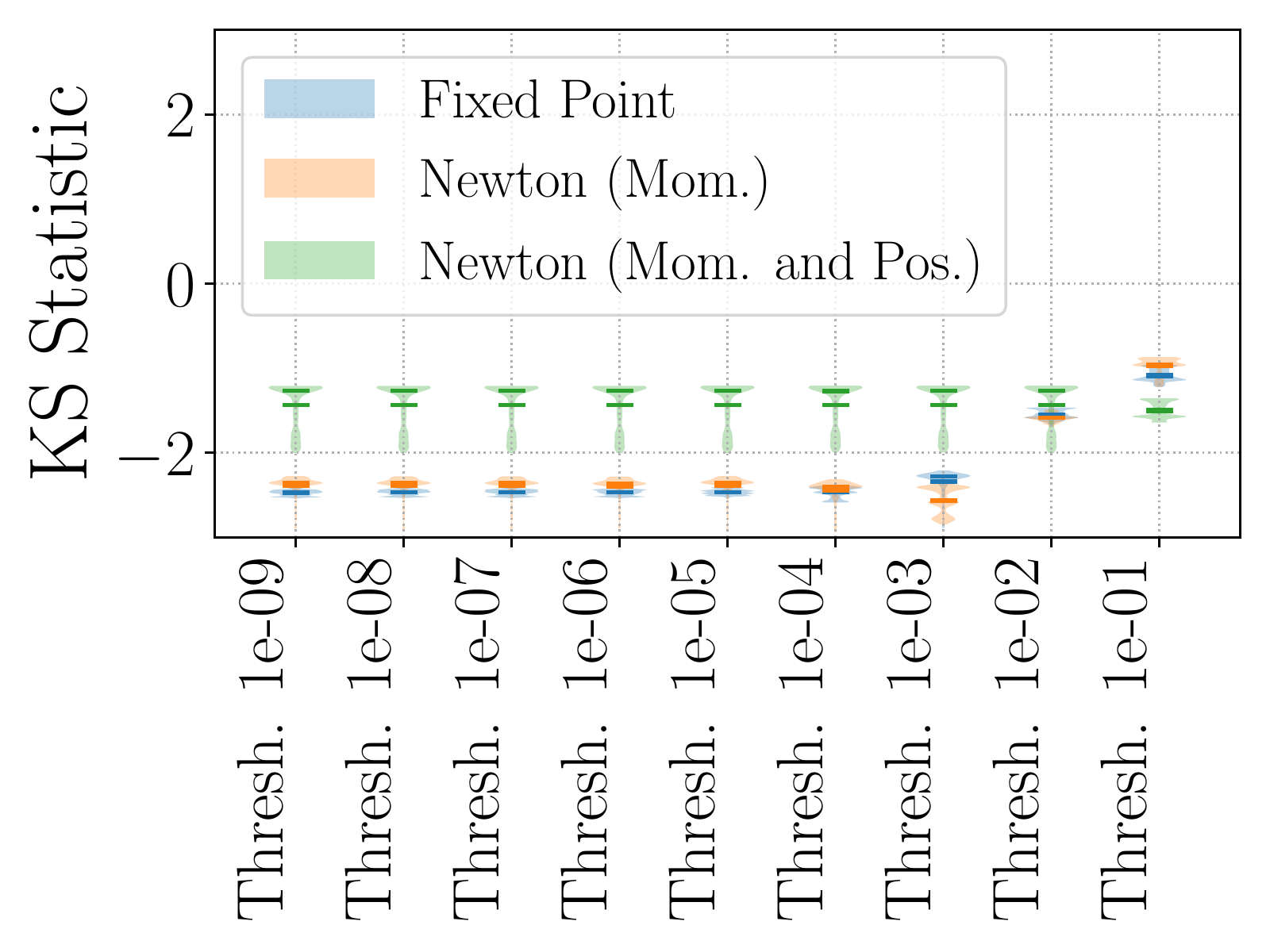}
    \caption{Banana}
  \end{subfigure}
  ~
  \begin{subfigure}[t]{0.3\textwidth}
    \centering
    \includegraphics[width=\textwidth]{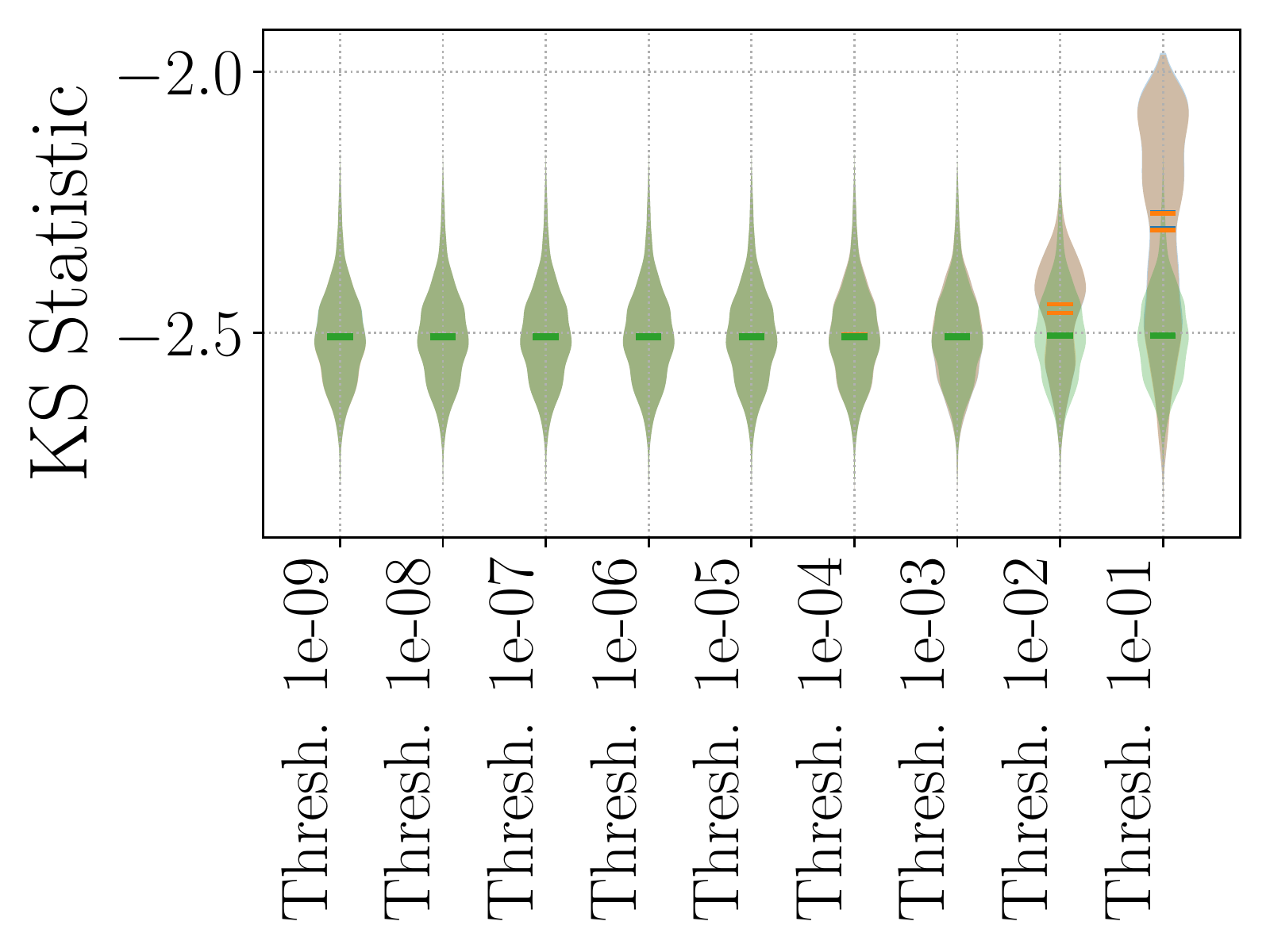}
    \caption{Fitzhugh-Nagumo}
  \end{subfigure}
  ~
  \begin{subfigure}[t]{0.3\textwidth}
    \centering
    \includegraphics[width=\textwidth]{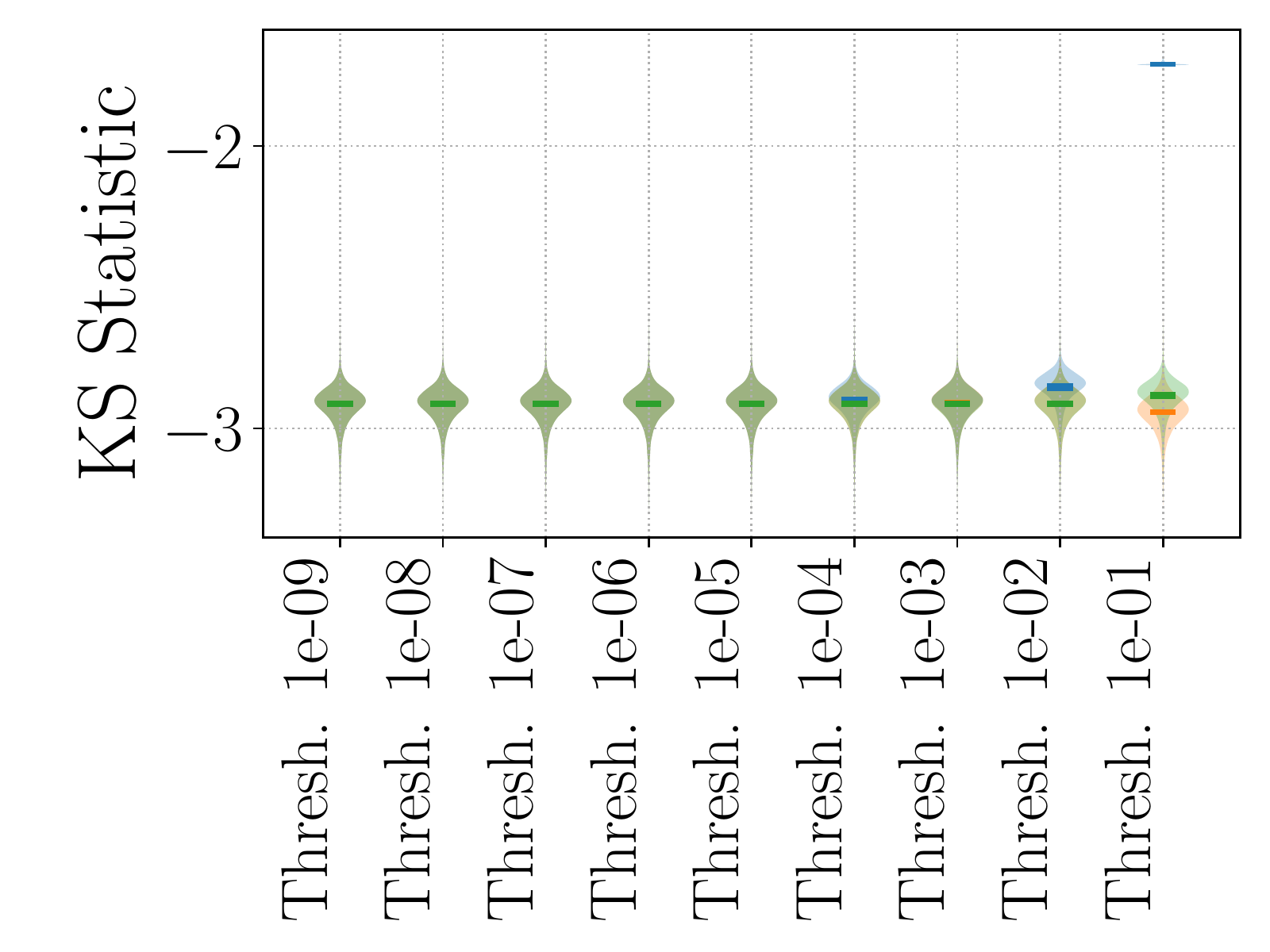}
    \caption{Student-$t$}
  \end{subfigure}

  \caption{We visualize the ergodicity properties for the three implementations of RMHMC. In the casae where Newton's method is employed for resolving the implicit updates to momentum and position, the ergodicity of the method is essentially constant with respect to the convergence tolerance. However, this is not desirable in the case of the banana-shaped posterior distribution, wherein ergodicity failures result from severe violations of reversibility and volume preservation in some cases.}
  \label{fig:fixed-point-vs-newton-ergodicity}
\end{figure}

As discussed in \cref{subsec:measuring-computational-effort}, the implicit updates for position differs fundamentally from the implicit equation to update momentum in the generalized leapfrog integrator: the update to momentum does not require recomputing quantities of the posterior. In contrast, the implicit update to the position must recompute the Riemannian metric at each iteration. Therefore, to the extent that posterior quantities are the most expensive computations in RMHMC, the momentum fixed point iteration can be resolved relatively cheaply. Therefore, if the Jacobian of the fixed point equation can be found, then it is conceivable to solve these implicit relationships via Newton's method and enjoy a faster order of convergence than fixed point iteration. Recall that the fixed equation for the momentum variable in the generalized leapfrog is
\begin{align}
  \bar{p} = p_n - \frac{\epsilon}{2} \nabla_q H(q_n, \bar{p})
\end{align}
which we can represent as a root-finding problem (in the variable $\bar{p}$) by rearranging as,
\begin{align}
    g(\bar{p}) = \bar{p} - p_n + \frac{\epsilon}{2} \nabla_q H(q_n, \bar{p}) = 0.
\end{align}
The Jacobian is therefore $\nabla g(\bar{p}) = \mathrm{Id} + \frac{\epsilon}{2} \nabla_p\nabla_q H(q_n, \bar{p})$. For Hamiltonians of the form \cref{eq:riemannian-hamiltonian}, we have
\begin{align}
  \nabla_p\nabla_q H(q, p) = \mathbf{G}^{-1}(q) \nabla_q \mathbf{G}(q) \mathbf{G}^{-1}(q) \bar{p},
\end{align}
which immediately gives a computable Jacobian to employ in Newton's method. Although Newton's method has a superior order of convergence than fixed point iteration, we observe that the two procedures do not have the same computational complexity. In particular, Newton's method requires that we solve the linear system, $\nabla g(\bar{p}) s = g(\bar{p})$ at each step; in general, the matrix $\nabla g(\bar{p})$ does not have any special structure, so solving this linear system has cubic complexity. We expect, therefore, that resolving the implicit update to momentum via Newton's method must balance the faster order of convergence with the greater computational burden.
  
We may also consider Newton iterations to resolve the implicit update to the position variable in RMHMC with the generalized leapfrog integrator. The implicit update to position in \cref{eq:generalized-leapfrog-position} and for Hamiltonians of the form \cref{eq:riemannian-hamiltonian} is,
\begin{align}
  q_{n+1} = q_n + \frac{\epsilon}{2}\paren{\mathbf{G}^{-1}(q_{n+1})\bar{p} + \mathbf{G}^{-1}(q_n) \bar{p}}.
\end{align}
This can be represented as a root-finding problem by defining the function,
\begin{align}
  g(q_{n+1}) = q_{n+1} - q_n - \frac{\epsilon}{2} \mathbf{G}^{-1}(q_{n+1})\bar{p} - \frac{\epsilon}{2} \mathbf{G}^{-1}(q_n) \bar{p},
\end{align}
whose Jacobian is,
\begin{align}
  \nabla g(q_{n+1}) = \mathrm{Id} - \frac{\epsilon}{2} \mathbf{G}^{-1}(q_{n+1}) \nabla \mathbf{G}(q_{n+1}) \mathbf{G}^{-1}(q_{n+1}) \bar{p}.
\end{align}
We therefore consider three variations of RMHMC. The first uses fixed point iterations in order to resolve both of the implicit updates to the momentum variable. The second method updates the momentum variable using Newton iterations and employs fixed point iterations for the implicit update to position. The third variation uses Newton's method for both the implicit updates.
  
In \cref{fig:fixed-point-vs-newton-iterations} we show the the average number of iterations required to resolve the implicit update to momentum when utilizing either fixed point iterations or Newton's method. We observe that Newton's method requires fewer iterations and produces a somewhat flatter curve as the convergence threshold is varied. This can be a desirable property since it eliminates some of the sensitivity of the RMHMC algorithm to variations in the convergence threshold parameter. In \cref{fig:fixed-point-vs-newton-ess-worse} we show three examples wherein the use of Newton's method degrades performance as measured by the effective sample size per second. On the other hand, in \cref{fig:fixed-point-vs-newton-ess-same}, we show three case studies where fixed point iterations and Newton's method produced compared effective sample sizes per second.
  
Examining in detail \cref{subfig:fixed-point-vs-newton-ess-wose-banana}, it may appear that applying Newton's method to both implicit updates produced a {\it benefit} in the ESS per second in the banana-shaped posterior. However, as our subsequent analysis shows, this is a mistaken conclusion. We must have confidence that the substitution of the fixed point iteration solver by Newton's method has not degraded the detailed balance condition. Therefore, we contextualize these measures of sampling efficiency by also measuring ergodicity (via the Kolmogorov-Smirnov procedure along random one-dimensional subspaces as described in \cref{subsubsec:measuring-ergodicity}) and satisfaction of the detailed balance condition. In \cref{fig:fixed-point-vs-newton-reversibility} we show the reversibility as a function of threshold for the three variations of the RMHMC procedure we consider. In many cases, Newton's method produces integrators that much more closely respect reversibility of the proposal operator; this is attributable to the second-order convergence properties of Newton's method. Indeed, in terms of reversibility, employing Newton's method on the momentum update only is virtually indistinguishable from an implementation using on fixed point iterations. However, a notable exception to the superiority of Newton's method for both position and momentum is the banana-shaped distribution, for which we see there is a notable proportion of transitions that utterly fail to respect reversibility. Similarly, in \cref{fig:fixed-point-vs-newton-jacobian-determinant}, we show how the three variations of RMHMC respect volume preservation. When employing Newton's method to resolve the implicit updates to momentum and position, the voume preservation is insensitive to the convergence threshold over many orders of magnitude. In the case of the banana-shaped distribution, we see again that there are a substantial proportion of transitions whose estimated Jacobian determinant is vastly different from unity. We note that RMHMC using only fixed point iterations or with Newton's method only applied to the momentum update are visually indistinguishable in many cases.
  
As noted previously, the principle interest in MCMC is not increasingly close satisfaction of the conditions implying detailed balance. Instead, we are interested in the closeness of the Markov chain samples to the target distribution. To investigate this property in the context of the three variations of RMHMC, we show in \cref{fig:fixed-point-vs-newton-ergodicity} the Kolmogorov-Smirnov statistic along randomly sampled one-dimensional sub-spaces for the banana-shaped distribution, the Fitzhugh-Nagumo posterior, and the multiscale Student-$t$ distribution. Our analysis of reversibility and volume preservation gives cause to believe that applying Newton's method to both momentum and position yields a sampler fraught with peril in the case of the banana-shaped distribution: indeed, this concern proves to be true, as this variation of RMHMC reveals severely degraded ergodicity relative to the competing implementations. In the case of the Fitzhugh-Nagumo posterior or the multiscale Student-$t$ distribution, the case for Newton's method is more favorable. In both cases, the ergodicity of RMHMC is essentially constant as a function of the threshold; this can be constrasted with an implementation based on fixed point iterations, for which the convergence threshold will affect ergodicity. In cases wherein Newton's method produces an integrator that satisfies the conditions of reversibility and volume preservation, its use can eliminate much of the sensitivity of the RMHMC algorithm to the selection of the convergence tolerance.

\section{Discussion and Conclusion}

This work has examined the role of convergence thresholds in Riemannian manifold Hamiltonian Monte Carlo. The integrators used in implementations of RMHMC depend on non-zero convergence tolerances, which will affect the volume preservations and reversibility properties of the integrator. Because reversibility and volume preservation are critical ingredients in the proofs of detailed balance of HMC, it is necessary to understand how these properties degrade in the presence of non-zero convergence tolerances. We would like to emphasize the following empirical observations from our case studies:
\begin{enumerate}
    \item The RMHMC algorithms tend to produce samples with better ergodicity and a larger effective sample sizes than Eulidean methods, though the $\mathcal{O}(m^3)$ computational effort represents a heavy burden. We find that the question of whether or not RMHMC has a better effective sample size {\it per second} depends on the convergence threshold being used. Very small thresholds such as $1\times 10^{-9}$ tend to be associated to the best reversibility and volume preservation, but also the largest number of fixed point iterations at each step of the generalized leapfrog integrator.
    \item We observe that the empirical sample quality, as measured by Kolmogorov-Smirnov statistics along random one-dimensional subspaces and related measures, is sensitive to the choice of convergence threshold only to a point. This implies that there exist diminishing returns to using a convergence threshold beyond some critical value. While divining the value of this critical threshold would seem challenging without the kind of analysis conducted herein, its existence suggests that proper adaptation of the convergence threshold can produce important improvements in computational efficiency.
    \item The correctness of derivatives assumes a special significance in RMHMC. In Euclidean HMC, an incorrectly specified derivative can be expected to slow mixing but not to invalidate the correctness of the MCMC procedure itself. This no longer holds in the case of RMHMC, where detailed balance depends on volume preservation and volume preservation depends on the symmetry of partial derivatives. While the use of certain automatic differentiation (AD) tools may alleviate some of these concerns, we make the additional observations that (i) metrics such as the Fisher information (which is the {\it expectation} of the negative Hessian of the log-likelihood) may not be readily obtained from AD, and (ii) implementation errors in the AD rules of even the most popular libraries are corrected on a regular basis. At the present time, we suggest that at least some minimal effort be made to ensure that volume preservation holds in a given application.
    \item We have proposed a method based on Ruppert averaging by which one may select a convergence tolerance based on specifying a desired number of decimal digits of similarity with a numerical integrator with a stringent convergence threshold. If a practitioner can hypothesize a particular minimal scale of the posterior distribution, then this quantity can guide the selection of the convergence tolerance such that differences (on average) between the numerical integrators are beneath the scale of the posterior. By tuning the convergence tolerance of the method, one can obtain improvements in computational expediency without significant detectable differences in either the ergodicity or effective sample size properties of the RMHMC procedure.
    \item We have investigated the suitability of employing Newton's method as an alternative to fixed point iteration of RMHMC, with particular attention given to the benefits this alternative may have for robustness to threshold selection. We find that Newton's method, consistent with its stronger position as a second-order solver, enjoys more rapid convergence toward numerical reversibility and volume preservation under our measures. We additionally find that, within the scope of convergence thresholds and experimental designs here, the ergodic properties of RMHMC implemented with Newton's method are less sensitive to the choice of threshold. However, we also illuminated a circumstance in the banana-shaped posterior where Newton's method actually produced degraded sampling behavior.
\end{enumerate}

This work has examined the role of the convergence threshold in RMHMC and we have suggested mechanisms by which to monitor the detailed balance conditions, to tune the convergence parameter to a given problem, and we have suggested a revised implementation of RMHMC based on Newton's method which can exhibit greater robustness to the convergence tolerance. We hope that this work raises interest and awareness around this aspect of the computation in RMHMC.

\bibliographystyle{abbrvnat}
\bibliography{thebib}

%%%%%%%%%%%%%%%%%%%%%%%%%%%%%%%%%%%%%%%%%%%%%%
%% Example with multiple Appendixes:        %%
%%%%%%%%%%%%%%%%%%%%%%%%%%%%%%%%%%%%%%%%%%%%%%
\begin{appendix}
  \section{Proofs of Reversibility and Volume Preservation of Hamiltonian Mechanics}
  \begin{proposition}
    Let $H :\R^m\times\R^m\to\R$ be a smooth Hamiltonian for which $-\nabla_p
    H(q, -p) = \nabla_pH(q, p)$ and $\nabla_q H(q, p) = \nabla_qH(q, -p)$. Let
    $(q_t, p_t)$ be solutions to Hamilton's equations of motion given in
    \cref{eq:hamiltonian-position-evolution,eq:hamiltonian-momentum-evolution}.
    Then, for fixed $\tau\in\R$, the trajectories $\tilde{q}_s = q_{\tau-s}$ and
    $\tilde{p}_s = -p_{\tau - s}$ are also solutions to Hamilton's equations of
    motion.
  \end{proposition}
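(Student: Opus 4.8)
The plan is to proceed by direct substitution: I would verify that the reversed-and-negated trajectory $(\tilde q_s, \tilde p_s) = (q_{\tau-s}, -p_{\tau-s})$ satisfies Hamilton's equations \cref{eq:hamiltonian-position-evolution,eq:hamiltonian-momentum-evolution} by differentiating in $s$ and invoking the two symmetry hypotheses. Since $(q_t, p_t)$ is assumed to solve the equations of motion, the only work is careful application of the chain rule (which introduces a sign from the time reversal $t = \tau - s$) together with the momentum-reversal conditions $-\nabla_p H(q, -p) = \nabla_p H(q, p)$ and $\nabla_q H(q, p) = \nabla_q H(q, -p)$.

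First I would handle the position equation. Differentiating, $\frac{\mathrm{d}}{\mathrm{d}s}\tilde q_s = -\dot q_{\tau-s} = -\nabla_p H(q_{\tau-s}, p_{\tau-s})$, where the last equality uses that $(q_t,p_t)$ solves \cref{eq:hamiltonian-position-evolution}. On the other hand, the target right-hand side is $\nabla_p H(\tilde q_s, \tilde p_s) = \nabla_p H(q_{\tau-s}, -p_{\tau-s})$, and the first hypothesis rewritten as $\nabla_p H(q, -p) = -\nabla_p H(q, p)$ shows this equals $-\nabla_p H(q_{\tau-s}, p_{\tau-s})$. The two expressions agree, so $\tilde q_s$ satisfies the position equation.

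Next I would handle the momentum equation analogously. Differentiating, $\frac{\mathrm{d}}{\mathrm{d}s}\tilde p_s = \dot p_{\tau-s} = -\nabla_q H(q_{\tau-s}, p_{\tau-s})$ by \cref{eq:hamiltonian-momentum-evolution}. The target right-hand side is $-\nabla_q H(\tilde q_s, \tilde p_s) = -\nabla_q H(q_{\tau-s}, -p_{\tau-s})$, and the second hypothesis $\nabla_q H(q, p) = \nabla_q H(q, -p)$ lets me replace $-p_{\tau-s}$ by $p_{\tau-s}$ in the argument, yielding exactly $-\nabla_q H(q_{\tau-s}, p_{\tau-s})$. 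Thus $\tilde p_s$ satisfies the momentum equation as well, completing the verification.

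There is no serious obstacle here; the proof is a short computation. The only point demanding care is the sign bookkeeping: the reparametrization $s \mapsto \tau - s$ contributes a factor of $-1$ under differentiation in each equation, and this factor must be matched against the sign produced by the respective symmetry hypothesis. I would simply present the two displayed chains of equalities above and remark that, since both Hamilton's equations are satisfied by $(\tilde q_s, \tilde p_s)$, this pair constitutes a solution, establishing the claimed time-reversal symmetry of the flow.
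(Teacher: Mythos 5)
Your proof is correct and is essentially identical to the paper's: both verify the position and momentum equations by direct differentiation, matching the sign from the time reparametrization $t=\tau-s$ against the sign supplied by the momentum-negation hypotheses. No further comment is needed.
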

  \begin{proof}
    By direct calculation,
    \begin{align}
      \frac{\mathrm{d}}{\mathrm{d}s} \tilde{q}_s &= -\dot{q}_{\tau-s} \\
      &= -\nabla_p H(q_{\tau-s}, p_{\tau-s}) \\
      &= -\nabla_p H(\tilde{q}_s, -\tilde{p}_{s}) \\
      &= \nabla_p H(\tilde{q}_s, \tilde{p}_s)
    \end{align}
    and
    \begin{align}
      \frac{\mathrm{d}}{\mathrm{d}s} \tilde{p}_s &= \dot{p}_{\tau-s} \\
      &= -\nabla_q H(q_{\tau-s}, p_{\tau-s}) \\
      &= -\nabla_q H(\tilde{q}_s, \tilde{p}_s).
    \end{align}
  \end{proof}

  \section{Proofs of Reversibility and Volume Preservation of the Generalized Leapfrog Integrator}\label{app:reversibility-volume-preservation-generalized-leapfrog}
  \begin{proposition}
    Let $H :\R^m\times\R^m\to\R$ be a smooth Hamiltonian for which $-\nabla_p
    H(q, -p) = \nabla_pH(q, p)$ and $\nabla_q H(q, p) = \nabla_qH(q, -p)$. The generalized leapfrog integrator is reversible under negation of the
    momentum variable.
  \end{proposition}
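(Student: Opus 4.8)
The plan is to establish reversibility by proving directly that $\mathbf{F}\circ\Phi_\epsilon\circ\mathbf{F}\circ\Phi_\epsilon = \mathrm{Id}$, where $\Phi_\epsilon$ denotes one step of the exact generalized leapfrog integrator of \cref{alg:moral-leapfrog} and $\mathbf{F}(q,p) = (q,-p)$ is the momentum-flip operator. Equivalently, the goal is to show that running the flipped trajectory undoes the three updates of the integrator one at a time. First I would fix an initial state $(q_n, p_n)$, label the intermediate and final quantities produced by $\Phi_\epsilon$ as $p_{n+1/2}$, $q_{n+1}$, $p_{n+1}$ exactly as in \cref{eq:moral-leapfrog-momentum-implicit,eq:moral-leapfrog-position,eq:moral-leapfrog-momentum-explicit}, and record the two symmetry hypotheses in the rearranged forms $\nabla_p H(q,-p) = -\nabla_p H(q,p)$ and $\nabla_q H(q,-p) = \nabla_q H(q,p)$, which will be invoked repeatedly.

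The core of the argument is a substitution check. After applying $\mathbf{F}$ to $(q_{n+1}, p_{n+1})$, I would run $\Phi_\epsilon$ on $(q_{n+1}, -p_{n+1})$ and verify that the candidate trajectory whose intermediate momentum is $-p_{n+1/2}$, whose position is $q_n$, and whose final momentum is $-p_n$ satisfies each of the three defining relations. For the first (implicit) half-step, substituting $-p_{n+1/2}$ into \cref{eq:moral-leapfrog-momentum-implicit} and using $\nabla_q H(q_{n+1},-p_{n+1/2}) = \nabla_q H(q_{n+1},p_{n+1/2})$ together with the forward-pass explicit update \cref{eq:moral-leapfrog-momentum-explicit} reduces the equation to an identity; for the (implicit) position full-step I would substitute $q_n$ into \cref{eq:moral-leapfrog-position}, apply $\nabla_p H(\cdot,-p_{n+1/2}) = -\nabla_p H(\cdot,p_{n+1/2})$, and match against the forward position update; and for the final explicit half-step I would use the momentum symmetry once more together with the forward implicit relation \cref{eq:moral-leapfrog-momentum-implicit} to obtain $-p_n$. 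This would show $\Phi_\epsilon\circ\mathbf{F}\circ\Phi_\epsilon(q_n, p_n) = (q_n, -p_n)$, and a final application of $\mathbf{F}$ returns $(q_n, p_n)$, establishing the claim.

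I expect the main obstacle to be organizational rather than analytical: because the three updates are traversed in reverse during the second pass, the implicit momentum step of the return trajectory must be shown to undo the explicit momentum step of the forward trajectory, and the explicit step of the return trajectory must undo the implicit step of the forward one, so one must invoke the correct symmetry identity at each stage without conflating the momentum and position symmetries. A secondary technical point is that the relations \cref{eq:moral-leapfrog-momentum-implicit,eq:moral-leapfrog-position} define $p_{n+1/2}$ and $q_{n+1}$ only as solutions of fixed-point equations, so the substitution argument implicitly relies on uniqueness of these solutions; for the exact integrator this holds whenever $\epsilon$ is small enough that the defining maps are contractions, and I would note this explicitly so that the verified candidate is in fact the unique output of $\Phi_\epsilon$ on the flipped state.
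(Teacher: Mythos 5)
Your proposal is correct and follows essentially the same route as the paper's proof: substitute the candidate reversed trajectory $(-p_{n+1/2},\, q_n,\, -p_n)$ into the three defining relations of the integrator applied at $(q_{n+1}, -p_{n+1})$, invoking the momentum symmetry of $\nabla_q H$ for the two momentum half-steps and the antisymmetry of $\nabla_p H$ for the position step. Your closing remark about uniqueness of the fixed-point solutions is a point the paper's proof silently elides, so including it is a small but genuine improvement rather than a departure.
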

  \begin{proof}
  Given an initial position in phase-space $(q_n, p_n)$, the generalized leapfrog integrator computes the following series of updates:
  \begin{align}
      \label{eq:generalized-leapfrog-first-momentum} p_{n+1/2} &= p_n - \frac{\epsilon}{2} \nabla_q H(q_n, p_{n+1/2}) \\
      \label{eq:generalized-leapfrog-position} q_{n+1} &= q_n + \frac{\epsilon}{2}\paren{\nabla_p H(q_n, p_{n+1/2}) + \nabla_p H(q_{n+1}, p_{n+1/2})} \\
      \label{eq:generalized-leapfrog-final-momentum} p_{n+1} &= p_{n+1/2} - \frac{\epsilon}{2} \nabla_qH(q_{n+1}, p_{n+1/2})).
  \end{align}
  Now we apply the integrator a second time from the initial position $(q_{n+1}, -p_{n+1})$. This produces the following update to the momentum:
  \begin{align}
      p_{n+1+1/2} &= -p_{n+1} - \frac{\epsilon}{2} \nabla_q H(q_{n+1}, p_{n+1+1/2})
  \end{align}
  But by rearranging \cref{eq:generalized-leapfrog-final-momentum} and using the condition $\nabla_q H(q, p) = \nabla_qH(q, -p)$ we have,
  \begin{align}
      -p_{n+1/2} = -p_{n+1} -\frac{\epsilon}{2} \nabla_qH(q_{n+1}, -p_{n+1/2}),
  \end{align}
  so we obtain $p_{n+1+1/2} = -p_{n+1/2}$. Now we compute the update to the momentum variable and use the fact that $-\nabla_p
    H(q, -p) = \nabla_pH(q, p)$ to obtain,
    \begin{align}
        q_{n+2} &= q_{n+1} + \frac{\epsilon}{2} \paren{\nabla_p H(q_{n+1}, -p_{n+1/2}) + \nabla_p H(q_{n+2}, -p_{n+1/2})} \\
        &= q_{n+1} - \frac{\epsilon}{2} \paren{\nabla_p H(q_{n+1}, p_{n+1/2}) + \nabla_p H(q_{n+2}, p_{n+1/2})}.
    \end{align}
    By rearranging \cref{eq:generalized-leapfrog-position}, we see that $q_{n+2} = q_{n}$ solves this implicit relation. Finally we compute the explicit update to the momentum variable,
    \begin{align}
        p_{n+2} &= -p_{n+1/2} - \frac{\epsilon}{2} \nabla_qH(q_n, -p_{n+1/2}) \\
        &= -p_{n+1/2} - \frac{\epsilon}{2} \nabla_qH(q_n, p_{n+1/2}) \\
        &=-p_n
    \end{align}
    by rearranging \cref{eq:generalized-leapfrog-first-momentum}.
  \end{proof}
  \begin{proposition}
    The generalized leapfrog integrator is volume preserving.
  \end{proposition}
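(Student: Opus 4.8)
The plan is to decompose the generalized leapfrog map into its three constituent updates --- the implicit momentum half-step \cref{eq:generalized-leapfrog-first-momentum}, the implicit position full-step \cref{eq:generalized-leapfrog-position}, and the explicit momentum half-step \cref{eq:generalized-leapfrog-final-momentum} --- and to show that the product of their Jacobian determinants equals unity. Since the Jacobian of a composition factors as the product of Jacobians, and since a smooth map is volume preserving if and only if its Jacobian determinant is unit (as recorded in \cref{sec:analytical-apparatus}), this will suffice. The point worth stressing at the outset is that \emph{none} of the three maps is individually volume preserving; the cancellation emerges only after the three determinants are multiplied together and the symmetry of the mixed partial derivatives is invoked.

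First I would observe that each of the three maps leaves one half of the phase-space coordinates fixed while updating the other half, so its Jacobian is block-triangular and its determinant collapses to the determinant of a single diagonal block. For the first momentum update, differentiating the implicit relation $p_{n+1/2} = p_n - \frac{\epsilon}{2}\nabla_q H(q_n, p_{n+1/2})$ with respect to $p_n$ (at fixed $q_n$) gives $\partial p_{n+1/2}/\partial p_n = \paren{\mathrm{Id}_m + \frac{\epsilon}{2}\nabla_p\nabla_q H(q_n, p_{n+1/2})}^{-1}$, so the first factor is $\mathrm{det}\paren{\mathrm{Id}_m + \frac{\epsilon}{2}\nabla_p\nabla_q H(q_n, p_{n+1/2})}^{-1}$. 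The explicit third update contributes, by direct differentiation, the factor $\mathrm{det}\paren{\mathrm{Id}_m - \frac{\epsilon}{2}\nabla_p\nabla_q H(q_{n+1}, p_{n+1/2})}$.

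The position step is the delicate one, since $q_{n+1}$ appears implicitly on both sides of \cref{eq:generalized-leapfrog-position}. Differentiating with respect to $q_n$ at fixed $p_{n+1/2}$ produces an equation for $M = \partial q_{n+1}/\partial q_n$ of the form $\paren{\mathrm{Id}_m - \frac{\epsilon}{2}\nabla_q\nabla_p H(q_{n+1}, p_{n+1/2})} M = \mathrm{Id}_m + \frac{\epsilon}{2}\nabla_q\nabla_p H(q_n, p_{n+1/2})$, whose determinant is the ratio $\mathrm{det}\paren{\mathrm{Id}_m + \frac{\epsilon}{2}\nabla_q\nabla_p H(q_n, p_{n+1/2})} \big/ \mathrm{det}\paren{\mathrm{Id}_m - \frac{\epsilon}{2}\nabla_q\nabla_p H(q_{n+1}, p_{n+1/2})}$. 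Multiplying the three factors, I would then invoke the symmetry of second partials, $\nabla_q\nabla_p H = \paren{\nabla_p\nabla_q H}^\top$, which forces $\mathrm{det}\paren{\mathrm{Id}_m + c\,\nabla_q\nabla_p H} = \mathrm{det}\paren{\mathrm{Id}_m + c\,\nabla_p\nabla_q H}$ for any scalar $c$ at any fixed phase-space point. The numerator coming from the position step then cancels the first-update factor (both evaluated at $(q_n, p_{n+1/2})$), while the denominator cancels the third-update factor (both evaluated at $(q_{n+1}, p_{n+1/2})$), leaving a product equal to one.

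I expect the principal obstacle to be bookkeeping rather than conceptual: correctly setting up the implicit differentiation of the position update --- carefully tracking which arguments depend on $q_n$ and keeping the mixed Hessian evaluated at the two distinct points $(q_n, p_{n+1/2})$ and $(q_{n+1}, p_{n+1/2})$ apart --- and then pairing each surviving factor with its partner so the cancellation is manifest. It is also worth flagging, as the surrounding discussion anticipates, that the argument hinges entirely on the symmetry $\nabla_q\nabla_p H = \paren{\nabla_p\nabla_q H}^\top$, and it collapses precisely when that symmetry fails, which is the mechanism behind the pathology examined in \cref{subsec:experiment-fitzhugh-nagumo}.
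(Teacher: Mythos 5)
Your argument is correct, and it is a genuinely different route from the one the paper takes. The paper works with coordinate one-forms and wedge products, showing directly that $\mathrm{d}q_{n+1}\wedge\mathrm{d}p_{n+1} = \mathrm{d}q_n\wedge\mathrm{d}p_n$; that calculation establishes the stronger property of symplecticness (as the paper notes parenthetically), of which volume preservation is a corollary. You instead factor the integrator into its three constituent maps, observe that each has a block-triangular Jacobian, extract the three diagonal-block determinants by implicit differentiation, and cancel them pairwise using $\mathrm{det}\paren{\mathrm{Id}_m + c\,\nabla_q\nabla_p H} = \mathrm{det}\paren{\mathrm{Id}_m + c\,\nabla_p\nabla_q H}$, which holds because the two mixed Hessians are transposes of one another. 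The bookkeeping you flag as the main risk checks out: the implicit momentum half-step contributes $\mathrm{det}\paren{\mathrm{Id}_m + \frac{\epsilon}{2}\nabla_p\nabla_q H(q_n,p_{n+1/2})}^{-1}$, the position step contributes the ratio you state, and the explicit half-step contributes $\mathrm{det}\paren{\mathrm{Id}_m - \frac{\epsilon}{2}\nabla_p\nabla_q H(q_{n+1},p_{n+1/2})}$, so the product is unity. What your approach buys is elementarity (only linear algebra and implicit differentiation, no differential forms) and an unusually transparent view of \emph{where} the symmetry of partial derivatives enters: it is exactly the license to cancel a $\nabla_q\nabla_p$ factor against a $\nabla_p\nabla_q$ factor evaluated at the same phase-space point, which is precisely the mechanism that breaks in \cref{subsec:experiment-fitzhugh-nagumo}. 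What it gives up is symplecticness, which the paper's wedge-product computation delivers at no extra cost; but the proposition as stated asks only for volume preservation, so your proof is complete for the claim at hand.
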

  \begin{proof}
    One can prove that the generalized leapfrog integrator is volume preserving
    by employing an analysis of differential forms. We refer the interested
    reader to \citet{leimkuhler_reich_2005} for an introduction to this
    technique. The first step is to compute the coordinate 1-forms of the
    integrator.
    \begin{align}
      \mathrm{d}p_{n+1/2} &= \mathrm{d}p_n -\frac{\epsilon}{2} \nabla_q\nabla_q H(q_n, p_{n+1/2}) ~\mathrm{d}q_n - \frac{\epsilon}{2}\nabla_p\nabla_q H(q_n, p_{n+1/2}) ~\mathrm{d}p_{n+1/2} \\
      \mathrm{d}q_{n+1} &= \mathrm{d}q_n + \frac{\epsilon}{2}\paren{\nabla_q\nabla_p H(q_n, p_{n+1/2}) ~\mathrm{d}q_n + \nabla_q\nabla_p H(q_{n+1}, p_{n+1/2}) ~\mathrm{d}q_{n+1}} + (**) ~\mathrm{d}p_{n+1/2} \\
      \mathrm{d}p_{n+1} &= \mathrm{d}p_{n+1/2} - \frac{\epsilon}{2} \nabla_q\nabla_q H(q_{n+1}, p_{n+1/2})~\mathrm{d}q_{n+1} - \frac{\epsilon}{2} \nabla_p\nabla_q H(q_{n+1}, p_{n+1/2}) ~\mathrm{d}p_{n+1/2}
    \end{align}
    Next we use the coordinate 1-forms to compute wedge products:
    \begin{align}
      \mathrm{d}q_n \wedge\mathrm{d}p_{n+1/2} &= \mathrm{d}q_n\wedge\mathrm{d}p_n - \frac{\epsilon}{2} \mathrm{d}q_n\wedge\nabla_p\nabla_q H(q_n, p_{n+1/2}) ~\mathrm{d}p_{n+1/2} \\
      \begin{split}
        \mathrm{d}q_{n+1} \wedge\mathrm{d}p_{n+1/2} &= \mathrm{d}q_n \wedge\mathrm{d}p_{n+1/2} + \frac{\epsilon}{2} \nabla_q\nabla_p H(q_n, p_{n+1/2}) ~\mathrm{d}q_n \wedge\mathrm{d}p_{n+1/2} \\
        &\qquad + \frac{\epsilon}{2} \nabla_q\nabla_p H(q_{n+1}, p_{n+1/2}) ~\mathrm{d}q_{n+1} \wedge\mathrm{d}p_{n+1/2} \\
      \end{split} \\
      \mathrm{d}q_{n+1} \wedge\mathrm{d}p_{n+1/2} &= \mathrm{d}q_n\wedge\mathrm{d}p_n + \frac{\epsilon}{2} \nabla_q\nabla_p H(q_{n+1}, p_{n+1/2})~\mathrm{d}q_{n+1}\wedge\mathrm{d}p_{n+1/2}\\
      \mathrm{d}q_{n+1} \wedge\mathrm{d}p_{n+1} &= \mathrm{d}q_{n+1} \wedge \mathrm{d}p_{n+1/2} - \mathrm{d}q_{n+1} \wedge \frac{\epsilon}{2}\nabla_p\nabla_q H(q_{n+1}, p_{n+1/2}) ~\mathrm{d}p_{n+1/2} \\
      \label{eq:generalized-leapfrog-volume-preservation} \mathrm{d}q_{n+1} \wedge\mathrm{d}p_{n+1} &= \mathrm{d}q_n\wedge\mathrm{d}p_n,
    \end{align}
    which establishes volume preservation. (Actually, this proof establishes an
    even stronger property known as {\it symplecticness}, which is important in
    the theory of Hamiltonian mechanics, but which we have eschewed in our
    discussion.) The fact that $\mathrm{d}q_{n+1} \wedge \nabla_p\nabla_q
    H(q_{n+1}, p_{n+1/2}) ~\mathrm{d}p_{n+1/2} = \nabla_q\nabla_p H(q_{n+1},
    p_{n+1/2})~\mathrm{d}q_{n+1}\wedge\mathrm{d}p_{n+1/2}$ is used in
    \cref{eq:generalized-leapfrog-volume-preservation}, which is predicated on
    the symmetry of partial derivatives. One therefore sees that one has
    required the symmetry of partial derivatives in order to establish the
    volume-preservation property. One concludes, therefore, that one cannot
    apply the generalized leapfrog method to {\it arbitrary} vector fields in
    position and momentum and expect to obtain a volume-preserving numerical
    integrator.
  \end{proof}

  \section{The Stationary Distribution of a Reversible Proposal with Non-Unit Jacobian Determinant}\label{app:stationary-distribution-non-volume-preserving}
  \begin{proposition}
    Let $\Phi : \R^m\to\R^m$ be a smooth function that is self-inverse; that is, $\Phi = \Phi^{-1}$. Denote the Jacobian determinant of $\Phi$ at $z\in \R^m$ by $J_\Phi(z)$. Then
    \begin{align}
      J_{\Phi}(\Phi(z)) = \frac{1}{J_{\Phi}(z)}.
    \end{align}
  \end{proposition}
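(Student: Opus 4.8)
The plan is to exploit the self-inverse hypothesis $\Phi \circ \Phi = \mathrm{Id}$ together with the chain rule for the total derivative. Writing $D\Phi(z)$ for the Jacobian matrix of $\Phi$ at $z$, so that $J_\Phi(z) = \det D\Phi(z)$, the strategy is to differentiate the relation $\Phi(\Phi(z)) = z$ and then pass to determinants.

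First I would apply the chain rule to the composition $\Phi \circ \Phi$. Since $\Phi$ is smooth, the total derivative of the composite at $z$ factors as $D\Phi(\Phi(z))\, D\Phi(z)$. On the other hand, the hypothesis $\Phi = \Phi^{-1}$ gives $\Phi \circ \Phi = \mathrm{Id}$, whose total derivative at every point is the $m \times m$ identity matrix $\mathrm{Id}_m$. Equating the two expressions yields the matrix identity $D\Phi(\Phi(z))\, D\Phi(z) = \mathrm{Id}_m$.

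Next I would take determinants of both sides and invoke the multiplicativity of the determinant, $\det(AB) = \det(A)\det(B)$, to obtain $J_\Phi(\Phi(z))\, J_\Phi(z) = 1$. This single scalar equation already carries the conclusion: because the product equals one, the factor $J_\Phi(z)$ cannot vanish, so I may divide through to arrive at $J_\Phi(\Phi(z)) = 1 / J_\Phi(z)$.

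There is no genuine obstacle here; the only point requiring a moment's care is the non-vanishing of $J_\Phi(z)$ needed to justify the division, and this is supplied for free by the derived relation $J_\Phi(\Phi(z))\, J_\Phi(z) = 1$ rather than requiring a separate appeal to the inverse function theorem. The argument is thus purely a chain-rule-plus-determinant computation.
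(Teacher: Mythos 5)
Your proof is correct and follows essentially the same route as the paper: differentiate the identity coming from self-inversion, take determinants, and use multiplicativity. The paper phrases the first step as an appeal to the inverse function theorem applied to $\Phi^{-1}\circ\Phi=\mathrm{Id}$ and substitutes $\Phi^{-1}=\Phi$ at the end, whereas you substitute at the start and differentiate $\Phi\circ\Phi=\mathrm{Id}$ directly; the computation is the same, and your observation that the non-vanishing of $J_\Phi(z)$ falls out of the product identity is a nice touch.
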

  \begin{proof}
    By the inverse function theorem
    \begin{align}
      \mathrm{Id} = \nabla \Phi^{-1}(\Phi(z)) \cdot \nabla \Phi(z).
    \end{align}
    Taking the determinant on both sides gives,
    \begin{align}
      & 1 = J_{\Phi^{-1}}(\Phi(z)) J_\Phi(z) \\
      \implies& J_{\Phi^{-1}}(\Phi(z)) = \frac{1}{J_\Phi(z)}.
    \end{align}
    The fact that $\Phi=\Phi^{-1}$ then yields the desired conclusion.
  \end{proof}
  \begin{proposition}
    Let $\pi : \R^m\to\R_+$ be a density and let $\Phi : \R^m\to\R^m$ be a smooth, self-inverse function. Consider a Markov chain at state $z\in\R^m$ and transitions to state $z' = \Phi(z)$ with probability $\min\set{1, \pi(\Phi(z)) / \pi(z)}$. Denote this Markov chain transition operator by
    \begin{align}
      \Psi(z) = \begin{cases} \Phi(z) & ~\mathrm{w.p.}~\frac{\pi(\Phi(z))}{\pi(z)} \\ z &~\mathrm{otherwise.} \end{cases}
    \end{align}
    Then the stationary distribution of the chain is
    \begin{align}
      \bar{\pi}(z) \propto \pi(z) \cdot \sqrt{\abs{J_\Phi(z)}}.
    \end{align}
  \end{proposition}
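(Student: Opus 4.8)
The plan is to prove that the transition operator $\Psi$ satisfies detailed balance with respect to the claimed density $\bar{\pi}$, from which stationarity follows immediately (exactly as in the theorem of \cref{sec:preliminaries}, upon taking the whole space as one of the sets). Writing the acceptance probability as $a(z) = \min\set{1, \pi(\Phi(z))/\pi(z)}$, the kernel is a mixture of a Dirac mass at $\Phi(z)$ with weight $a(z)$ and a Dirac mass at $z$ with weight $1 - a(z)$. The rejection component (the chain remaining at $z$) contributes symmetrically to both sides of any detailed-balance identity and may be discarded; the entire content of the argument therefore lies in balancing the ``jump'' component. Concretely, for Borel sets $A$ and $B$ I would compare the mass flowing from $A$ into $B$ under $\Phi$ against the mass flowing from $B$ into $A$.

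First I would write the forward flow as $\int_A \bar{\pi}(z)\, a(z)\, \mathbf{1}\set{\Phi(z)\in B}\,\mathrm{d}z$ and the reverse flow as $\int_B \bar{\pi}(z)\, a(z)\, \mathbf{1}\set{\Phi(z)\in A}\,\mathrm{d}z$. In the reverse integral I would perform the change of variables $z = \Phi(w)$, whose Jacobian determinant is $\abs{J_\Phi(w)}$ and which, because $\Phi$ is a smooth involution, satisfies $\Phi(z) = w$. This transforms the reverse flow into $\int_A \bar{\pi}(\Phi(w))\, a(\Phi(w))\, \mathbf{1}\set{\Phi(w)\in B}\,\abs{J_\Phi(w)}\,\mathrm{d}w$. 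Matching integrands over $A\cap\Phi^{-1}(B)$ then reduces the whole problem to the pointwise identity
\begin{align}
    \bar{\pi}(z)\, a(z) = \bar{\pi}(\Phi(z))\, a(\Phi(z))\, \abs{J_\Phi(z)}.
\end{align}

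To verify this identity I would substitute $\bar{\pi}(z) = \pi(z)\sqrt{\abs{J_\Phi(z)}}$ together with the Jacobian reciprocity $\abs{J_\Phi(\Phi(z))} = 1/\abs{J_\Phi(z)}$ supplied by the preceding proposition, so that $\sqrt{\abs{J_\Phi(\Phi(z))}}\,\abs{J_\Phi(z)} = \sqrt{\abs{J_\Phi(z)}}$. The two sides then collapse to $\sqrt{\abs{J_\Phi(z)}}$ times, respectively, $\pi(z)\min\set{1, \pi(\Phi(z))/\pi(z)}$ and $\pi(\Phi(z))\min\set{1, \pi(z)/\pi(\Phi(z))}$, and the elementary identity $x\min\set{1, y/x} = \min\set{x, y} = y\min\set{1, x/y}$ for positive $x,y$ shows these agree. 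Reinstating the trivially balanced rejection contribution yields full detailed balance, and taking $A = \R^m$ gives stationarity of $\bar{\pi}$. I expect the only genuinely delicate step to be the change of variables in the reverse flow: it is precisely here that one must invoke both that $\Phi$ is a bona fide diffeomorphism (guaranteed by smoothness together with self-inverseness) and the reciprocity of its Jacobian determinant, and it is the appearance of the factor $\abs{J_\Phi(z)}$ at this point — absent in the volume-preserving case treated in the main text — that forces the corrective factor $\sqrt{\abs{J_\Phi(z)}}$ in the stationary density and explains why a non-unit Jacobian silently warps the target distribution.
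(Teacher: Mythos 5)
Your proof is correct and follows essentially the same route as the paper's: both establish detailed balance for $\bar{\pi}$ by a change of variables $z=\Phi(w)$ in the jump component, invoke the Jacobian reciprocity $\abs{J_\Phi(\Phi(z))}=1/\abs{J_\Phi(z)}$ from the preceding proposition so that $\sqrt{\abs{J_\Phi(\Phi(z))}}\,\abs{J_\Phi(z)}=\sqrt{\abs{J_\Phi(z)}}$, apply the symmetry $x\min\set{1,y/x}=y\min\set{1,x/y}$, and dispose of the rejection component by its trivial symmetry. Your reduction to a pointwise integrand identity is only a cosmetic repackaging of the paper's integral manipulation.
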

  \begin{proof}
    The detailed balance condition states that for all regions $A, B\subset \R^m$ we have
    \begin{align}
      \underset{z\sim \bar{\pi}}{\mathrm{Pr}}\left[z\in A ~\mathrm{and}~ \Phi(z)\in B\right] = \underset{z\sim \bar{\pi}}{\mathrm{Pr}}\left[z\in B ~\mathrm{and}~ \Phi(z)\in A\right].
    \end{align}
    Let $Z \defeq \int_{\R^m} \pi(z) \cdot \sqrt{\abs{J_\Phi(z)}}~\mathrm{d}z$ be the normalizing constant of $\bar{\pi}(z)$. We have,
    \begin{align}
      &\int_{\R^m} \mathbf{1}\set{z\in A} \cdot \mathbf{1}\set{\Phi(z) \in B}\cdot \bar{\pi}(z)\cdot \min\set{1, \frac{\pi(\Phi(z))}{\pi(z)}} ~\mathrm{d}z \\
      =& \frac{1}{Z}\int_{\R^m} \mathbf{1}\set{z\in A} \cdot \mathbf{1}\set{\Phi(z) \in B}\cdot \pi(z) \cdot \sqrt{\abs{J_\Phi(z)}} \cdot \min\set{1, \frac{\pi(\Phi(z))}{\pi(z)}} ~\mathrm{d}z \\
      =& \frac{1}{Z} \int_{\R^m} \mathbf{1}\set{\Phi(z')\in A}\cdot \mathbf{1}\set{z'\in B}\cdot  \pi(\Phi(z')) \cdot \sqrt{\abs{J_\Phi(\Phi(z'))}} \cdot \min\set{1,\frac{\pi(z')}{\pi(\Phi(z'))}} \cdot \abs{J_\Phi(z')} ~\mathrm{d}z' \\
      =& \frac{1}{Z} \int_{\R^m} \mathbf{1}\set{\Phi(z')\in A}\cdot \mathbf{1}\set{z'\in B}\cdot  \pi(\Phi(z')) \cdot \sqrt{\frac{1}{\abs{J_\Phi(z')}}}\cdot  \min\set{1,\frac{\pi(z')}{\pi(\Phi(z'))}} \cdot \abs{J_\Phi(z')} ~\mathrm{d}z' \\
      =& \frac{1}{Z} \int_{\R^m} \mathbf{1}\set{\Phi(z')\in A}\cdot \mathbf{1}\set{z'\in B} \cdot \pi(z') \cdot \sqrt{\abs{J_\Phi(z')}}\cdot  \min\set{1,\frac{\pi(\Phi(z'))}{\pi(z')}} ~\mathrm{d}z' \\
      =& \int_{\R^m} \mathbf{1}\set{z\in B} \cdot \mathbf{1}\set{\Phi(z) \in A}\cdot \bar{\pi}(z)\cdot \min\set{1, \frac{\pi(\Phi(z))}{\pi(z)}} ~\mathrm{d}z 
    \end{align}
    and similarly,
    \begin{align}
    \begin{split}
        & \int_{\R^m} \mathbf{1}\set{z\in A} \cdot \mathbf{1}\set{z \in B}\cdot \bar{\pi}(z)\cdot \paren{1 - \min\set{1, \frac{\pi(\Phi(z))}{\pi(z)}}} ~\mathrm{d}z = \\ &\qquad \int_{\R^m} \mathbf{1}\set{z\in B} \cdot \mathbf{1}\set{z \in A}\cdot \bar{\pi}(z)\cdot \paren{1 - \min\set{1, \frac{\pi(\Phi(z))}{\pi(z)}}} ~\mathrm{d}z.
    \end{split}
    \end{align}
    These two equations verify the symmetry of $A$ and $B$ in the Metropolis-Hastings accept-reject decision. Therefore, 
    \begin{align}
    \begin{split}
        \underset{z\sim \bar{\pi}}{\mathrm{Pr}}\left[z\in A ~\mathrm{and}~ \Phi(z)\in B\right] &= \int_{\R^m} \mathbf{1}\set{z\in A} \cdot \mathbf{1}\set{\Phi(z) \in B}\cdot \bar{\pi}(z)\cdot \min\set{1, \frac{\pi(\Phi(z))}{\pi(z)}} ~\mathrm{d}z \\ 
        &\qquad + \int_{\R^m} \mathbf{1}\set{z\in A} \cdot \mathbf{1}\set{z \in B}\cdot \bar{\pi}(z)\cdot \paren{1 - \min\set{1, \frac{\pi(\Phi(z))}{\pi(z)}}} ~\mathrm{d}z
    \end{split} \\
    \begin{split}
        &= \int_{\R^m} \mathbf{1}\set{z\in B} \cdot \mathbf{1}\set{\Phi(z) \in A}\cdot \bar{\pi}(z)\cdot \min\set{1, \frac{\pi(\Phi(z))}{\pi(z)}} ~\mathrm{d}z \\ 
        &\qquad + \int_{\R^m} \mathbf{1}\set{z\in B} \cdot \mathbf{1}\set{z \in A}\cdot \bar{\pi}(z)\cdot \paren{1 - \min\set{1, \frac{\pi(\Phi(z))}{\pi(z)}}} ~\mathrm{d}z
    \end{split} \\
    &= \underset{z\sim \bar{\pi}}{\mathrm{Pr}}\left[z\in B ~\mathrm{and}~ \Phi(z)\in A\right]
    \end{align}
    Since $\bar{\pi}$ satisfies detailed balance with respect to the transition operator, it follows that $\bar{\pi}$ is the stationary distribution of the Markov chain with transition operator $\Psi$.
  \end{proof}
  
  This result demonstrates what can go wrong when the Metropolis-Hastings accept-reject rule does not account for the change-of-volume of the function $\Phi$. We can interpret $\bar{\pi}(z)$ as proportional to a perturbation of $\pi(z)$ according to the square-root of the Jacobian determinant of the transformation $\Phi$ at $z$. More formally,
  \begin{proposition}
    \begin{align}
      \mathrm{KL}(\pi\Vert \bar{\pi}) &= \log \underset{z\sim\pi}{\mathbb{E}} \sqrt{\abs{J_\Phi(z)}} - \underset{z\sim \pi}{\mathbb{E}} \log \sqrt{\abs{J_\Phi(z)}} \\
      &\geq 0.
    \end{align}
  \end{proposition}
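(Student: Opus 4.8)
The plan is to unwind the definition of $\bar\pi$ established in the previous proposition and recognize the resulting expression as a Jensen gap. First I would write the normalized density explicitly as
\begin{align}
  \bar{\pi}(z) = \frac{1}{Z}\, \pi(z)\,\sqrt{\abs{J_\Phi(z)}}, \qquad Z = \int_{\R^m} \pi(z)\,\sqrt{\abs{J_\Phi(z)}}~\mathrm{d}z = \underset{z\sim\pi}{\mathbb{E}}\sqrt{\abs{J_\Phi(z)}},
\end{align}
where the last equality is just the definition of the normalizing constant rewritten as an expectation under $\pi$. The key simplification is that the density $\pi(z)$ cancels in the likelihood ratio: $\pi(z)/\bar{\pi}(z) = Z / \sqrt{\abs{J_\Phi(z)}}$.

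Next I would substitute this ratio into the definition of the Kullback--Leibler divergence and expand the logarithm. Concretely,
\begin{align}
  \mathrm{KL}(\pi\Vert\bar{\pi}) = \underset{z\sim\pi}{\mathbb{E}} \log\frac{\pi(z)}{\bar{\pi}(z)} = \underset{z\sim\pi}{\mathbb{E}}\paren{\log Z - \log\sqrt{\abs{J_\Phi(z)}}} = \log Z - \underset{z\sim\pi}{\mathbb{E}}\log\sqrt{\abs{J_\Phi(z)}},
\end{align}
where $\log Z$ is a constant and hence passes through the expectation unchanged. Recalling that $Z = \underset{z\sim\pi}{\mathbb{E}}\sqrt{\abs{J_\Phi(z)}}$ yields precisely the claimed first line of the proposition.

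Finally, for the inequality I would invoke the concavity of the logarithm and apply Jensen's inequality to the random variable $\sqrt{\abs{J_\Phi(z)}}$ under $z\sim\pi$, which gives
\begin{align}
  \log\underset{z\sim\pi}{\mathbb{E}}\sqrt{\abs{J_\Phi(z)}} \geq \underset{z\sim\pi}{\mathbb{E}}\log\sqrt{\abs{J_\Phi(z)}},
\end{align}
so that the difference is nonnegative. (Equivalently one may simply cite the nonnegativity of the KL divergence itself, but deriving it from Jensen keeps the argument self-contained.) I do not expect any serious obstacle here; the only point requiring mild care is integrability, namely that $Z$ is finite and that $\underset{z\sim\pi}{\mathbb{E}}\log\sqrt{\abs{J_\Phi(z)}}$ is well-defined, which I would simply assume together with the standing smoothness and self-inverse hypotheses on $\Phi$. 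The substantive content is entirely the cancellation of $\pi$ in the likelihood ratio, after which the statement reduces to a textbook application of Jensen's inequality.
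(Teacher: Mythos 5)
Your proposal is correct and follows essentially the same route as the paper: expand $\log(\pi/\bar{\pi})$ using $\bar{\pi}\propto\pi\sqrt{\abs{J_\Phi}}$, identify $Z$ with $\mathbb{E}_{z\sim\pi}\sqrt{\abs{J_\Phi(z)}}$, and conclude nonnegativity by Jensen's inequality applied to the concave logarithm. No gaps.
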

  \begin{proof}
    \begin{align}
      \mathrm{KL}(\pi\Vert \bar{\pi}) &= \underset{z\sim\pi}{\mathbb{E}} \log \frac{\pi(z)}{\bar{\pi}(z)} \\
      &= \underset{z\sim\pi}{\mathbb{E}}\left[\log\pi(z) -\log\pi(z) - \log\sqrt{J_\Phi(z)} +\log Z\right] \\
      &= \log Z - \underset{z\sim\pi}{\mathbb{E}}\left[\log\sqrt{J_\Phi(z)}\right]
    \end{align}
    Now recall
    \begin{align}
      Z &= \int_{\R^m} \pi(z) \cdot \sqrt{\abs{J_\Phi(z)}}~\mathrm{d}z \\
      &= \underset{z\sim\pi}{\mathbb{E}} \sqrt{\abs{J_\Phi(z)}},
    \end{align}
    which proves equality. Non-negativity then follows from Jensen's inequality.
  \end{proof}
  
  \section{Sensitivity of Jacobian Determinant Estimates to Perturbation Size}\label{app:jacobian-determinant-perturbation}
  \begin{figure}[t!]
  \begin{subfigure}[t]{0.3\textwidth}
    \centering
    \includegraphics[width=\textwidth]{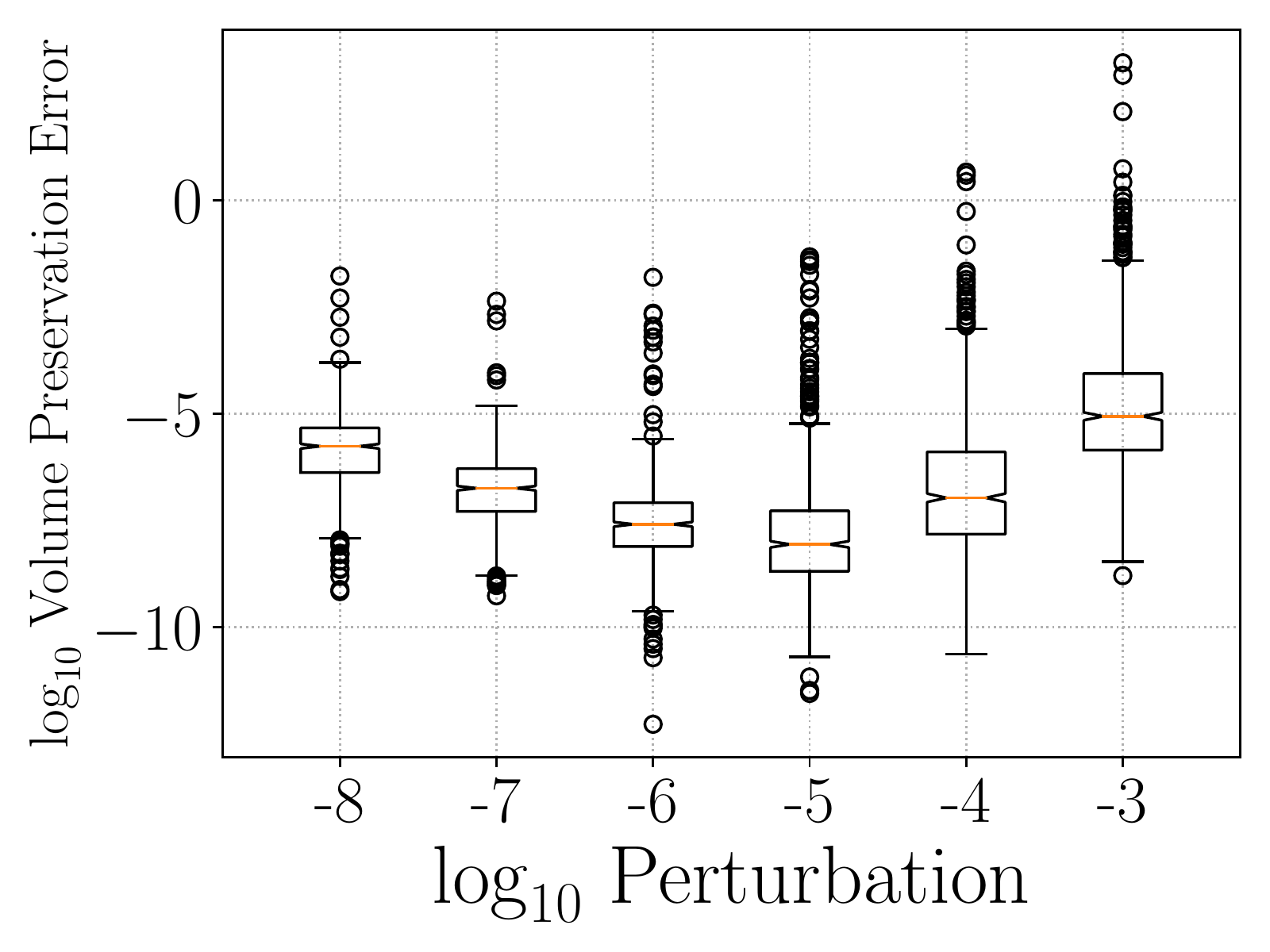}
    \caption{Banana}
    \label{subfig:jacobian-determinant-banana}
  \end{subfigure}
  ~
  \begin{subfigure}[t]{0.3\textwidth}
    \centering
    \includegraphics[width=\textwidth]{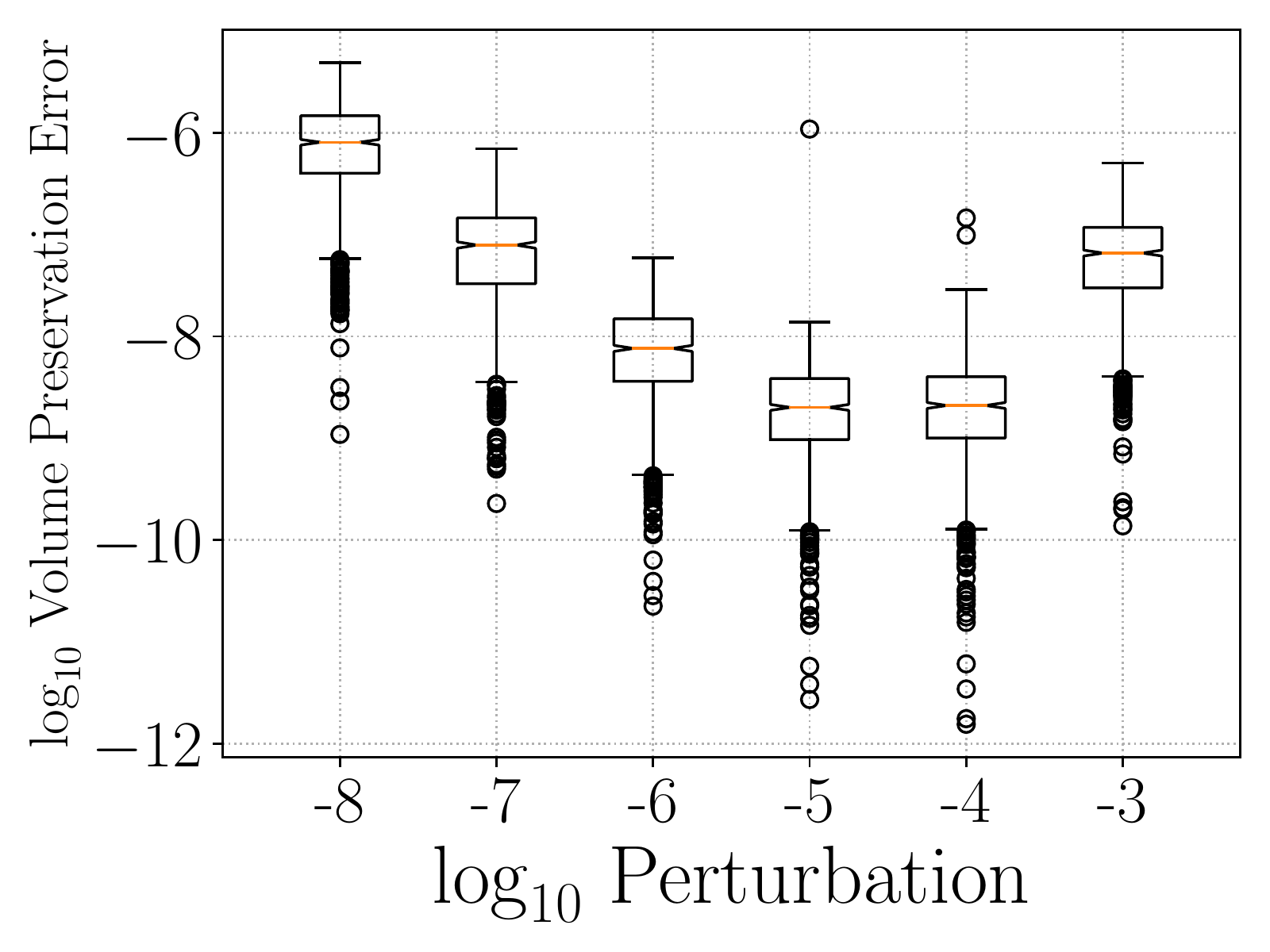}
    \caption{Logistic Regression}
    \label{subfig:jacobian-determinant-logistic}
  \end{subfigure}
  ~
  \begin{subfigure}[t]{0.3\textwidth}
    \centering
    \includegraphics[width=\textwidth]{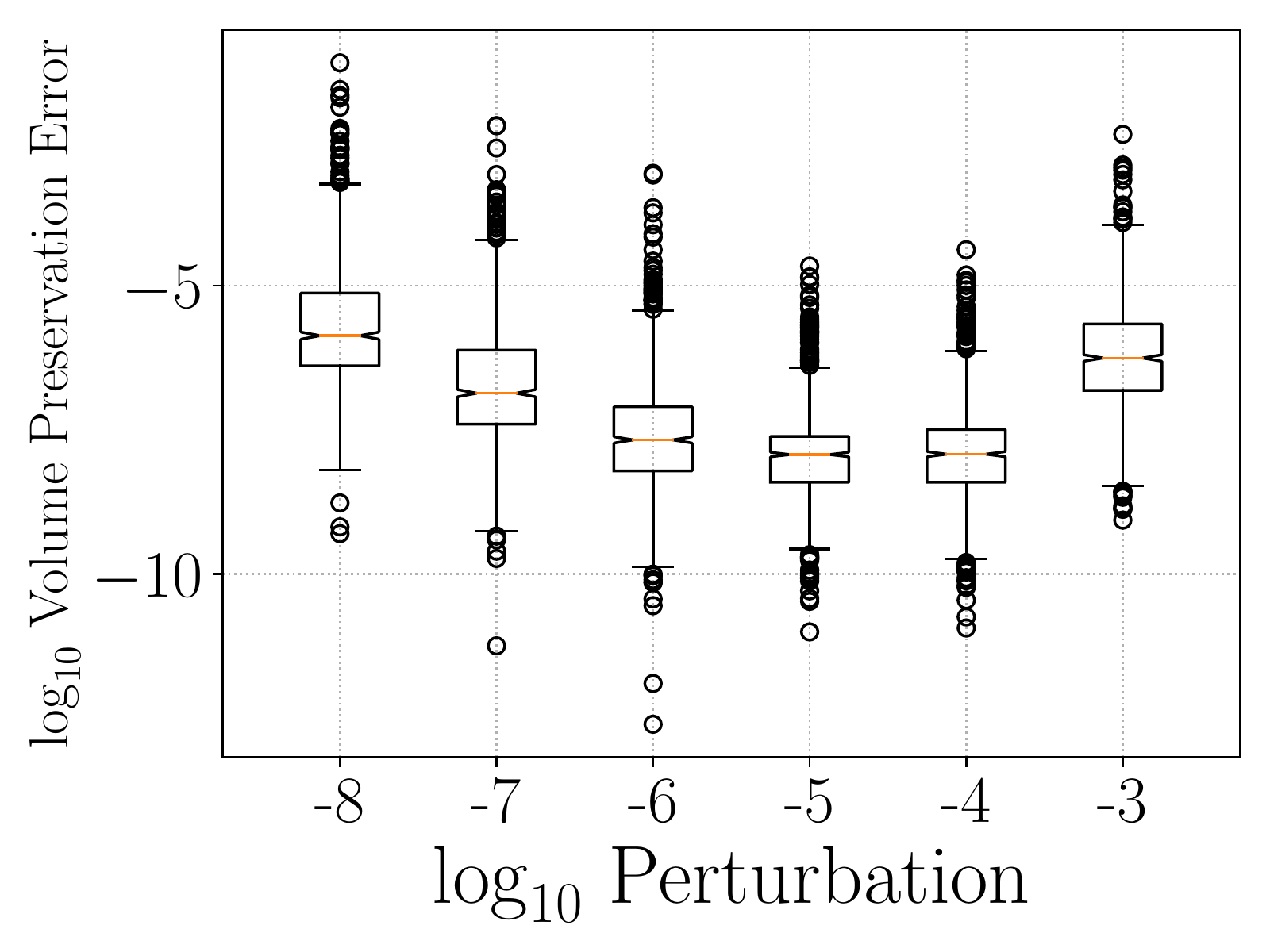}
    \caption{Neal Funnel}
    \label{subfig:jacobian-determinant-neal-funnel}
  \end{subfigure}

  \begin{subfigure}[t]{0.3\textwidth}
    \centering
    \includegraphics[width=\textwidth]{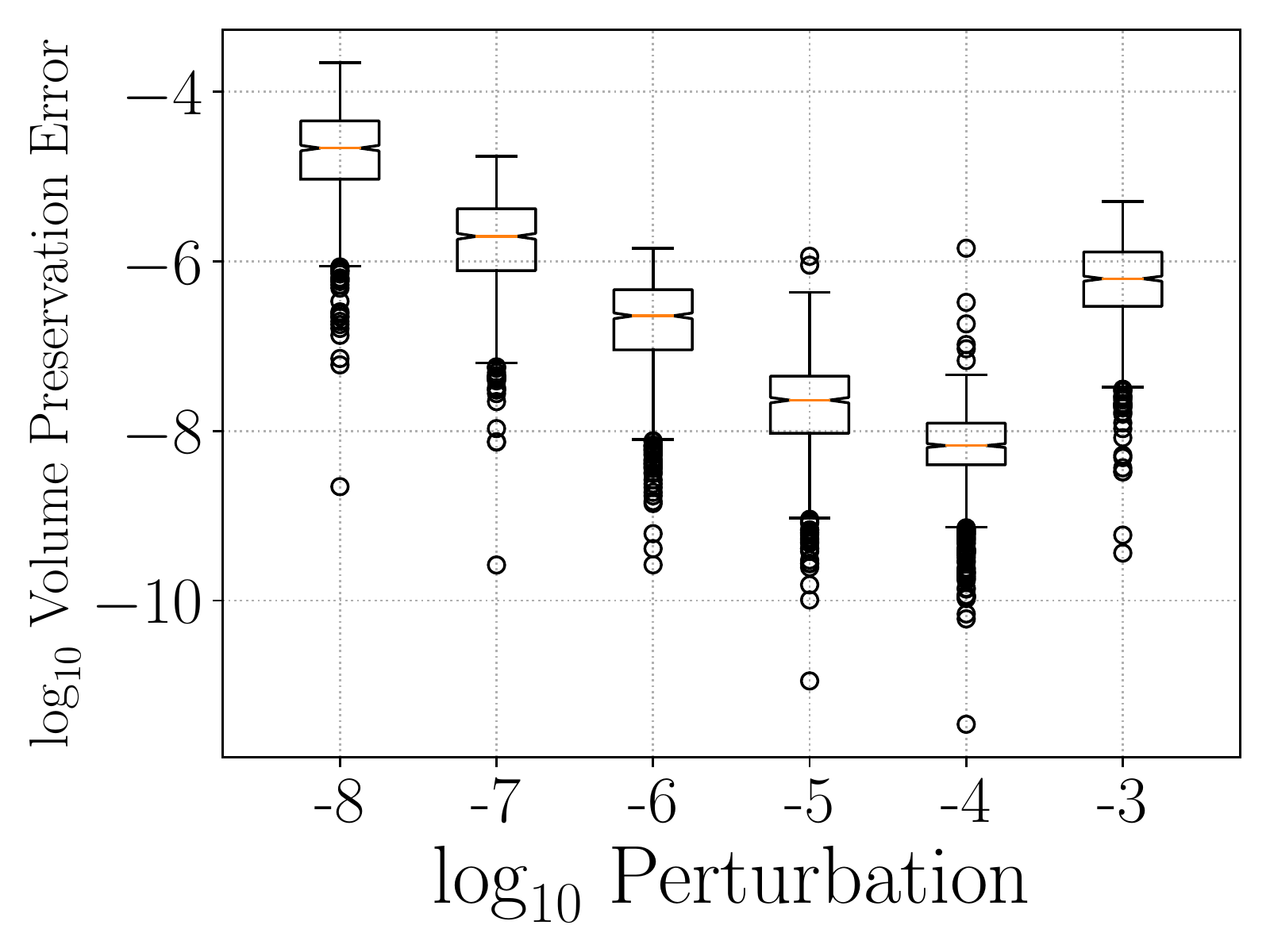}
    \caption{Stochastic Volatility}
    \label{subfig:jacobian-determinant-stochastic-volatility}
  \end{subfigure}
  ~
  \begin{subfigure}[t]{0.3\textwidth}
    \centering
    \includegraphics[width=\textwidth]{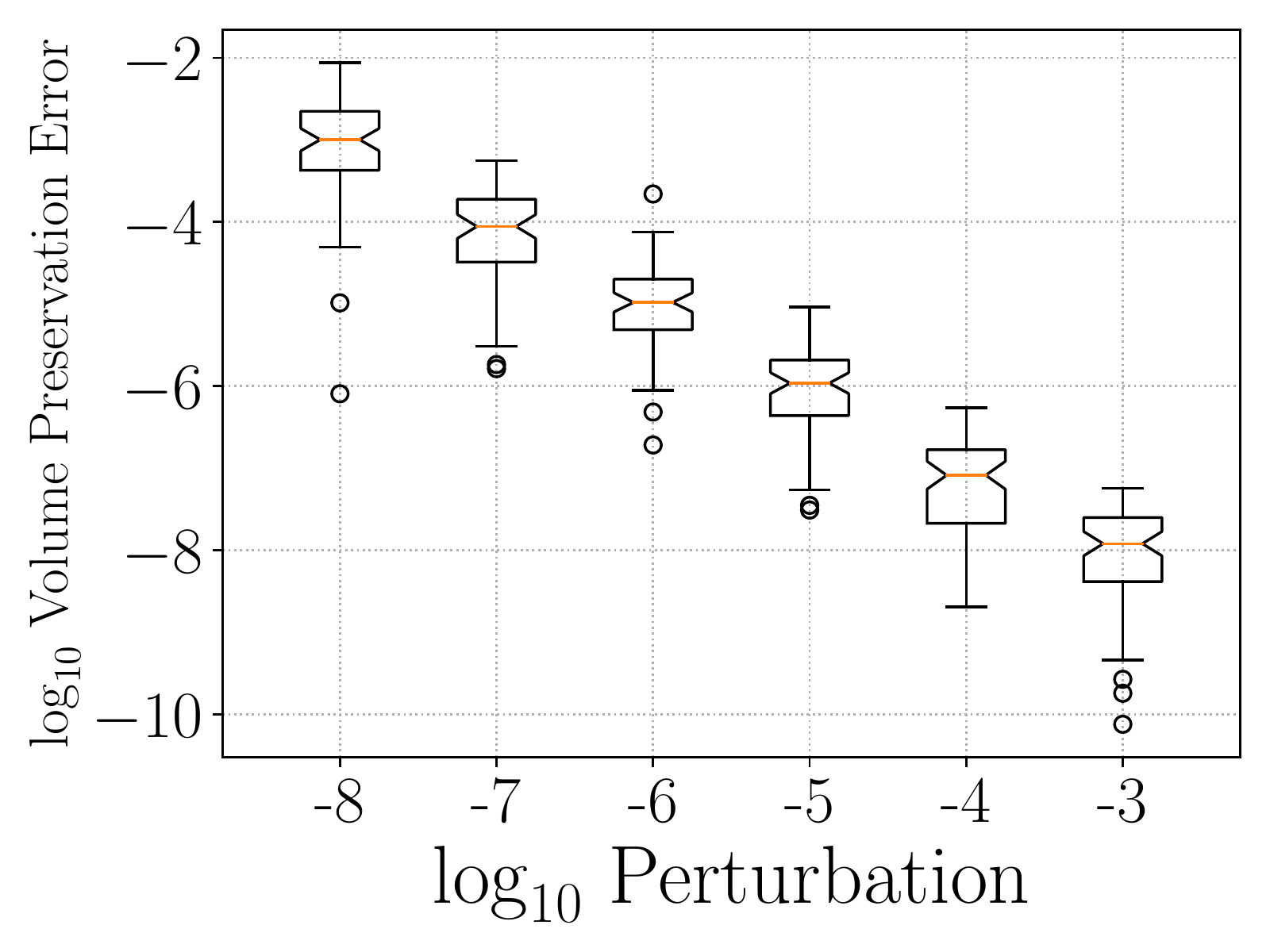}
    \caption{Cox-Poisson}
    \label{subfig:jacobian-determinant-cox-poisson}
  \end{subfigure}
  ~
  \begin{subfigure}[t]{0.3\textwidth}
    \centering
    \includegraphics[width=\textwidth]{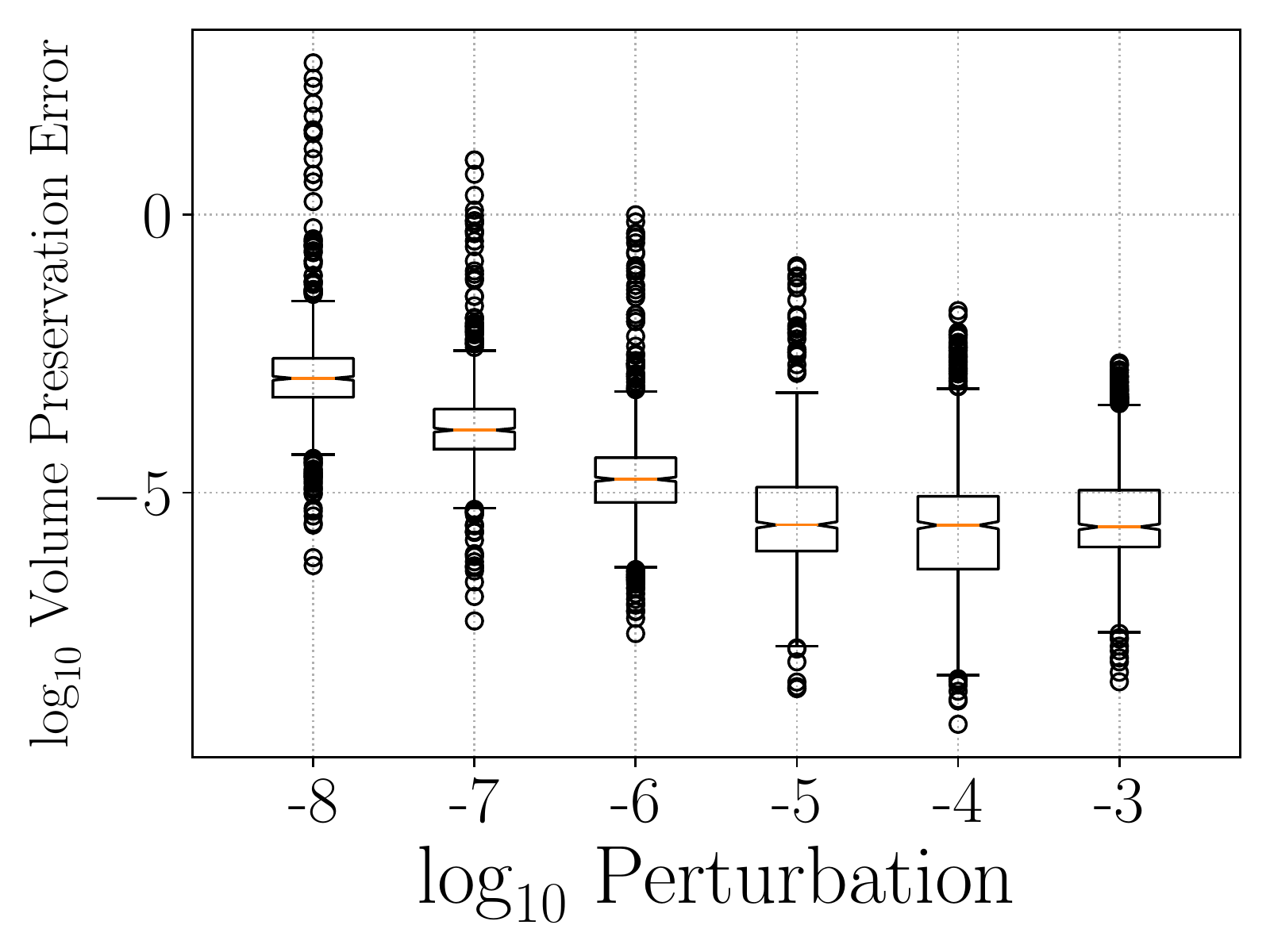}
    \caption{Fitzhugh-Nagumo}
    \label{subfig:jacobian-determinant-fitzhugh-nagumo}
  \end{subfigure}

  \begin{subfigure}[t]{0.3\textwidth}
    \centering
    \includegraphics[width=\textwidth]{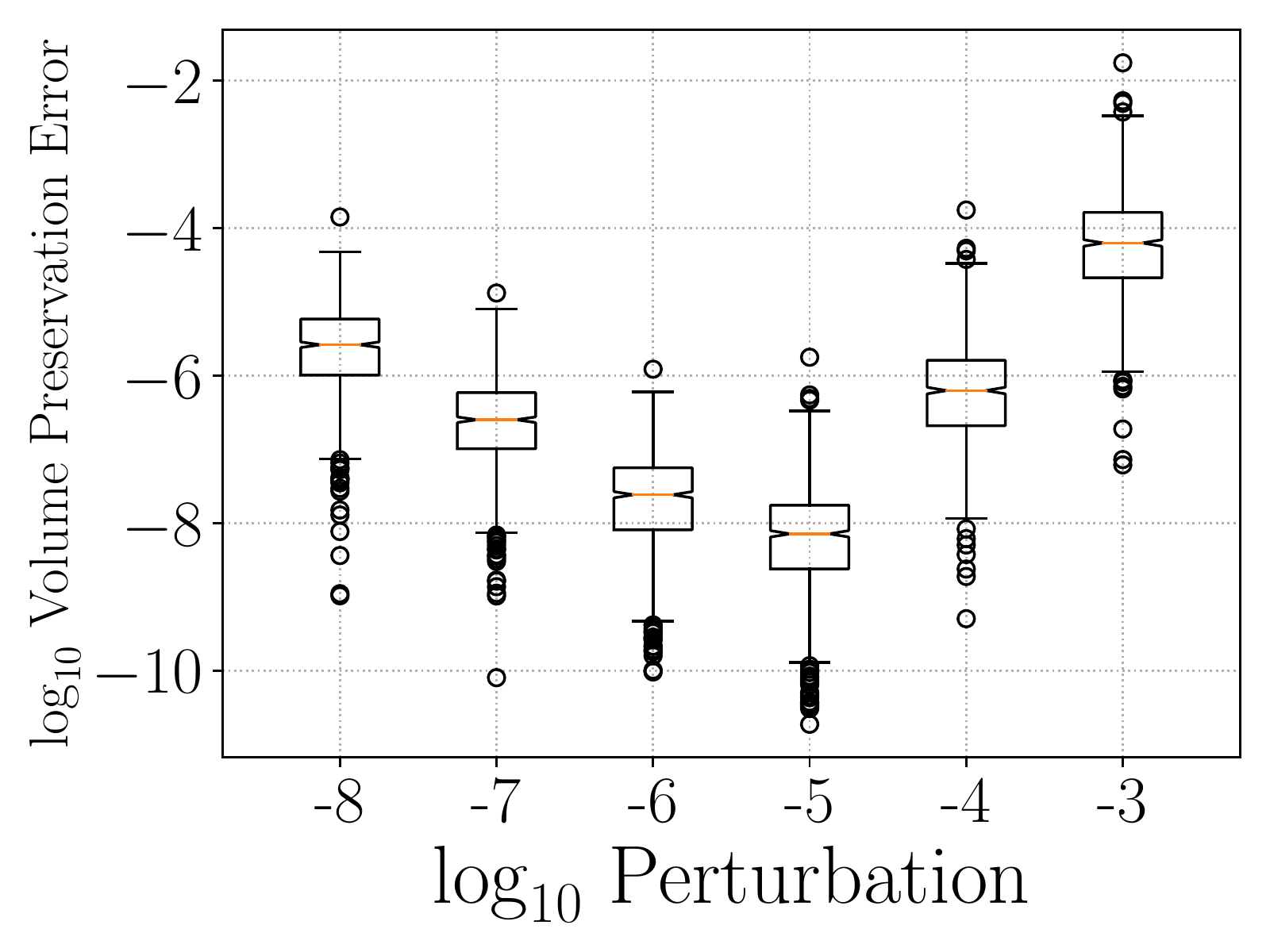}
    \caption{Student-$t$}
    \label{subfig:jacobian-determinant-student-t}
  \end{subfigure}

  \caption{We show how the estimated volume preservation of the generalized leapfrog integrator is affected by the choice of finite difference perturbation size. We report the degree of volume-preservation for the generalized leapfrog integrator with a convergence threshold of $\delta= 1\times 10^{-9}$.}
  \label{fig:jacobian-determinant-perturbation}
\end{figure}

  We evaluate the sensitivity of our estimates of volume preservation to the finite difference perturbation size. In \cref{fig:jacobian-determinant-perturbation} we show the distribution of Jacobian determinants for all of our experiments for perturbations in the set $\set{1\times 10^{-8}, 1\times 10^{-7}, 1\times 10^{-6}, 1\times 10^{-5}, 1\times 10^{-4}, 1\times 10^{-3}}$. Our criterion to select a perturbation size is to identify which perturbation produces an estimated {\it median} Jacobian determinant that is closest to zero when using a convergence threshold of $1\times 10^{-9}$. For the banana-shaped distribution in \cref{subfig:jacobian-determinant-banana}, the selected perturbation is $1\times 10^{-5}$. For the Bayesian logistic regression model in \cref{subfig:jacobian-determinant-logistic}, the selected perturbation is $1\times 10^{-5}$. For Neal's funnel distribution in \cref{subfig:jacobian-determinant-neal-funnel}, the selected perturbation is $1\times 10^{-5}$. For the stochastic volatility model in \cref{subfig:jacobian-determinant-stochastic-volatility}, the selected perturbation is $1\times 10^{-4}$. For the Cox-Poission model in \cref{subfig:jacobian-determinant-cox-poisson}, the selected perturbation is $1\times 10^{-3}$. For the Fitzhugh-Nagumo posterior in \cref{subfig:jacobian-determinant-fitzhugh-nagumo}, the selected perturbation is $1\times 10^{-5}$. For the multiscale Student-$t$ distribution with $\sigma^2 = 1\times 10^4$ in \cref{subfig:jacobian-determinant-student-t}, the selected perturbation is $1\times 10^{-5}$.

  \section{Relative Error of Reversibility}\label{app:relative-reversibility-error}
  
  \begin{figure}[t!]
  \begin{subfigure}[t]{0.3\textwidth}
    \centering
    \includegraphics[width=\textwidth]{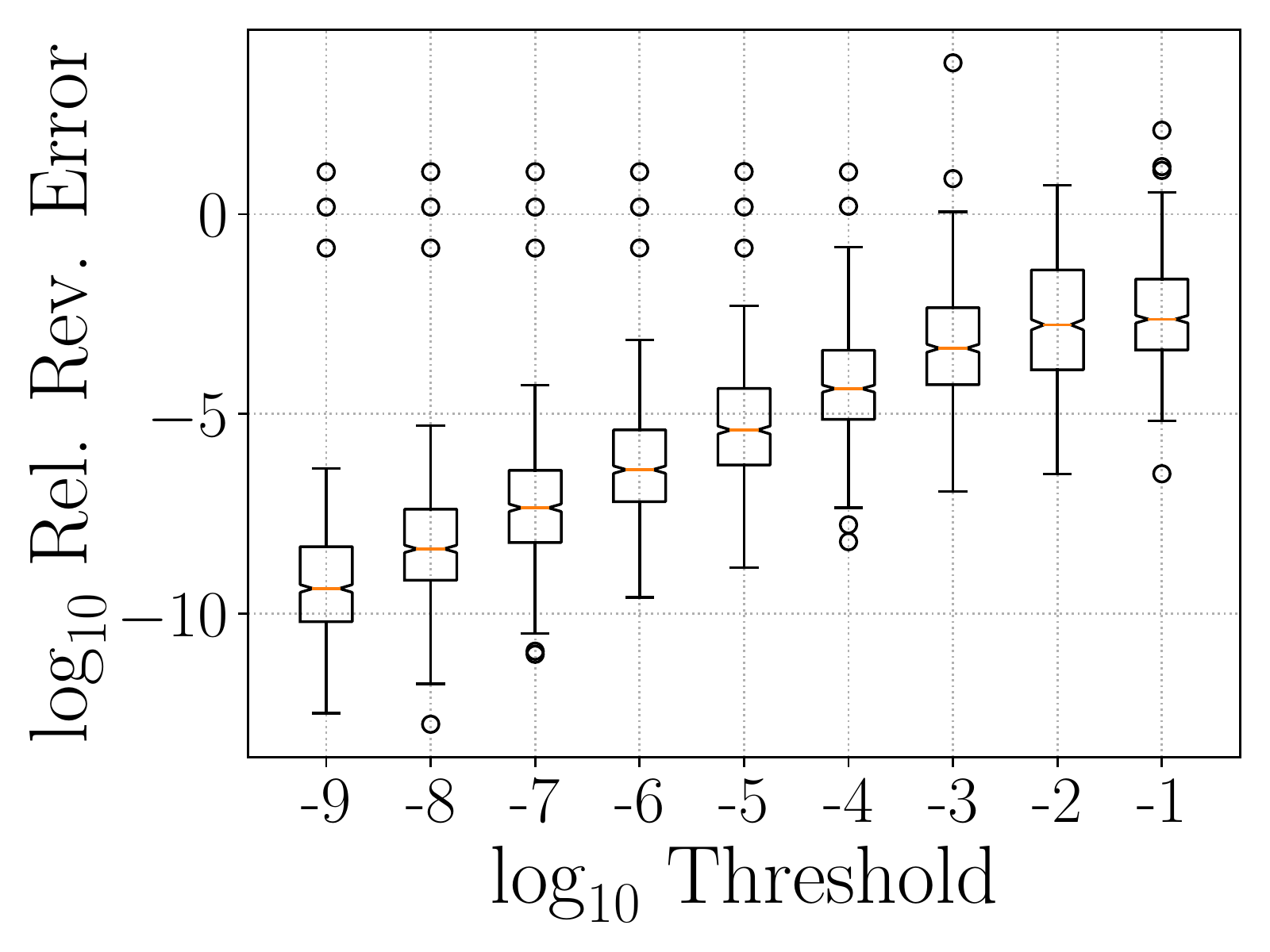}
    \caption{Banana}
    \label{subfig:relative-error-banana}
  \end{subfigure}
  ~
  \begin{subfigure}[t]{0.3\textwidth}
    \centering
    \includegraphics[width=\textwidth]{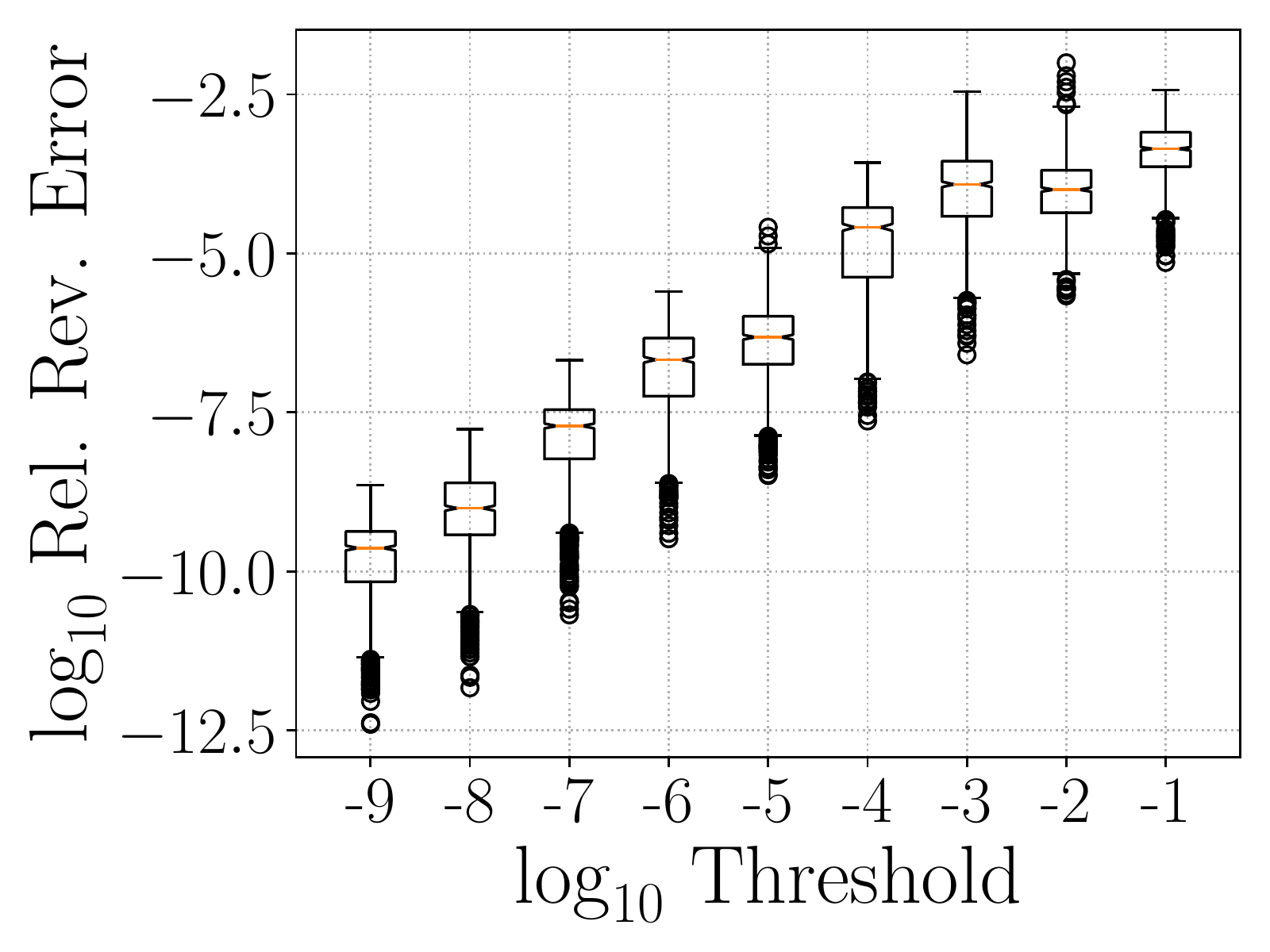}
    \caption{Logistic Regression}
    \label{subfig:relative-error-logistic}
  \end{subfigure}
  ~
  \begin{subfigure}[t]{0.3\textwidth}
    \centering
    \includegraphics[width=\textwidth]{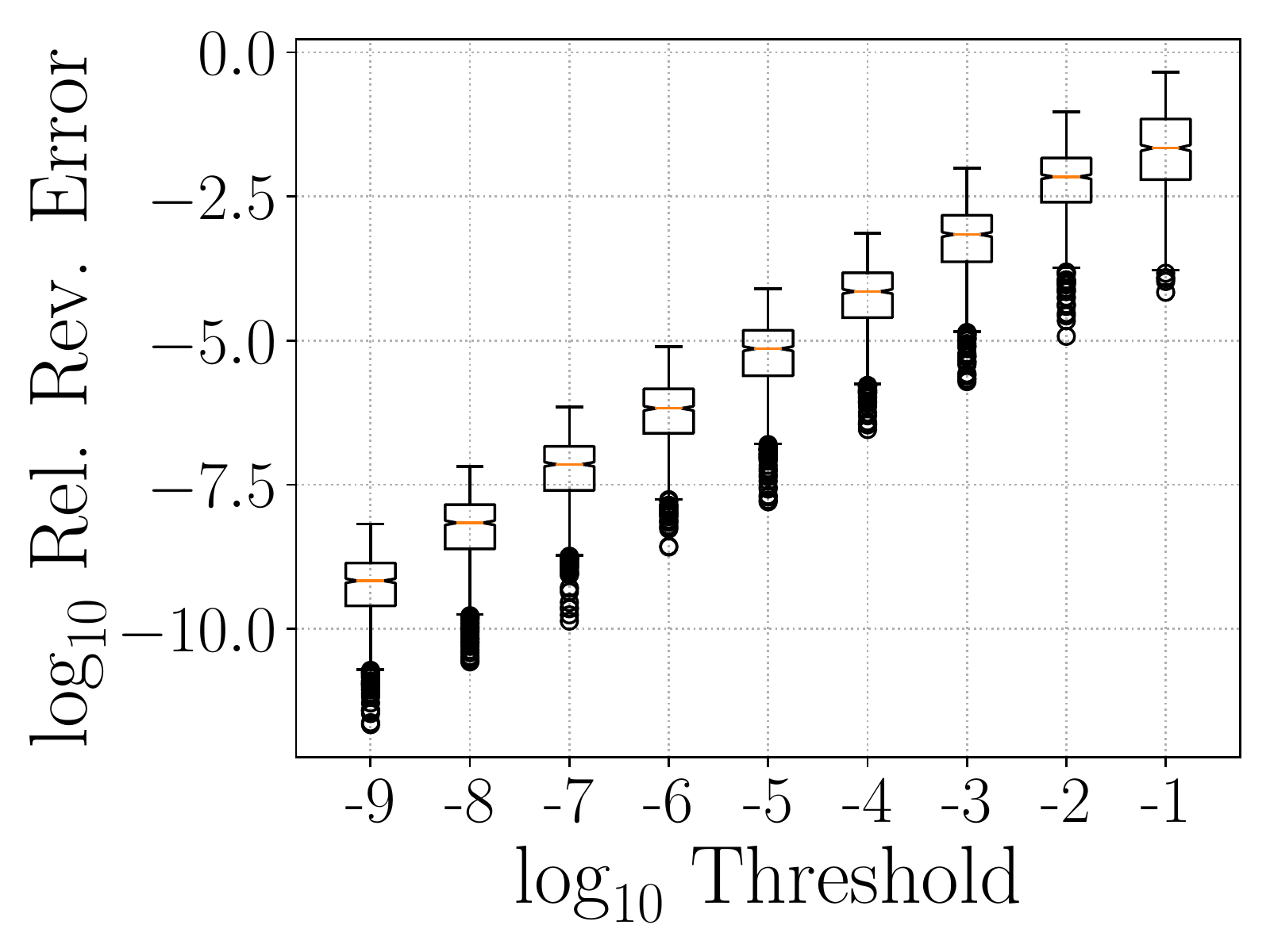}
    \caption{Neal Funnel}
    \label{subfig:relative-error-neal-funnel}
  \end{subfigure}

  \begin{subfigure}[t]{0.3\textwidth}
    \centering
    \includegraphics[width=\textwidth]{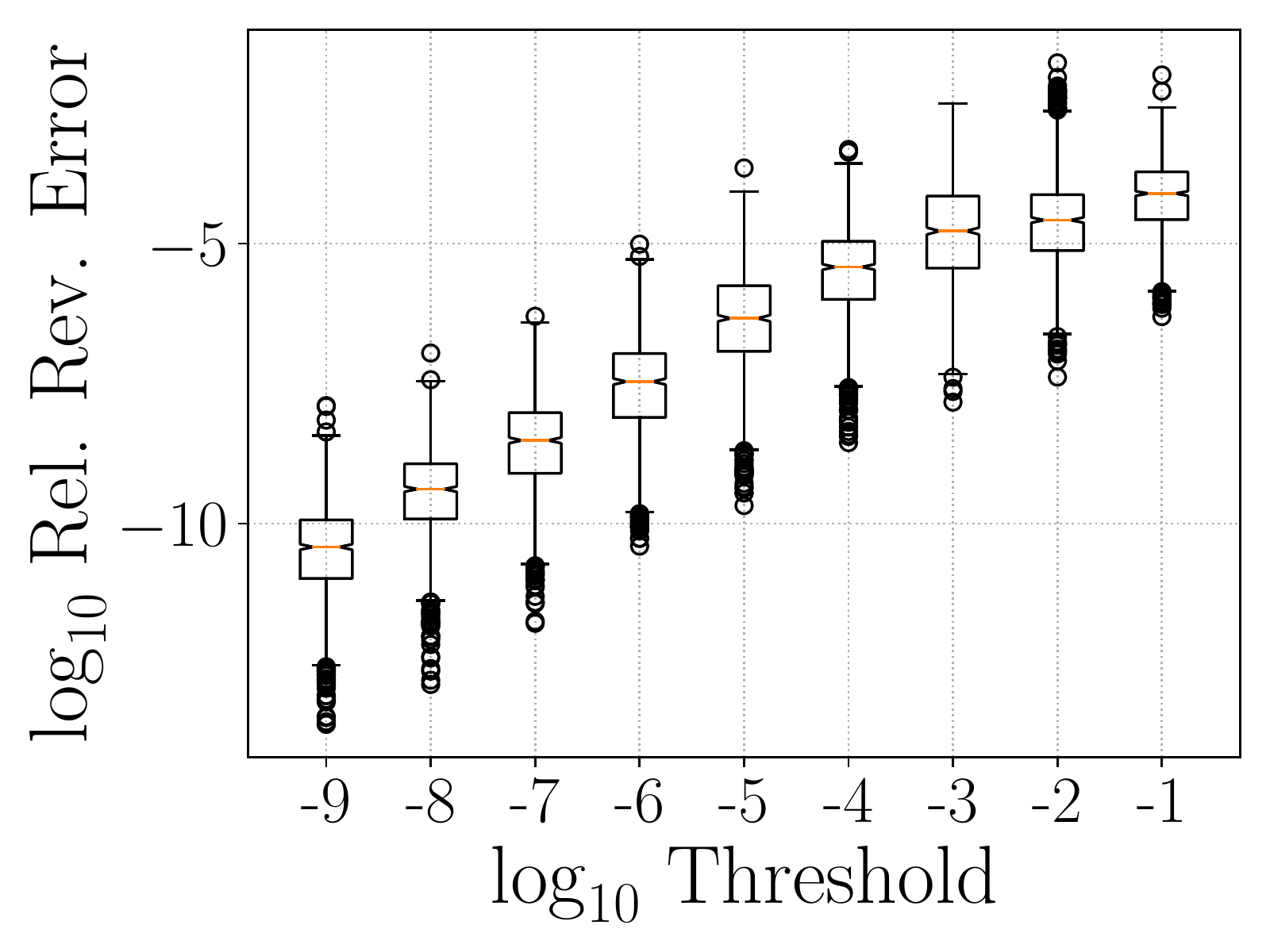}
    \caption{Stochastic Volatility}
    \label{subfig:relative-error-stochastic-volatility}
  \end{subfigure}
  ~
  \begin{subfigure}[t]{0.3\textwidth}
    \centering
    \includegraphics[width=\textwidth]{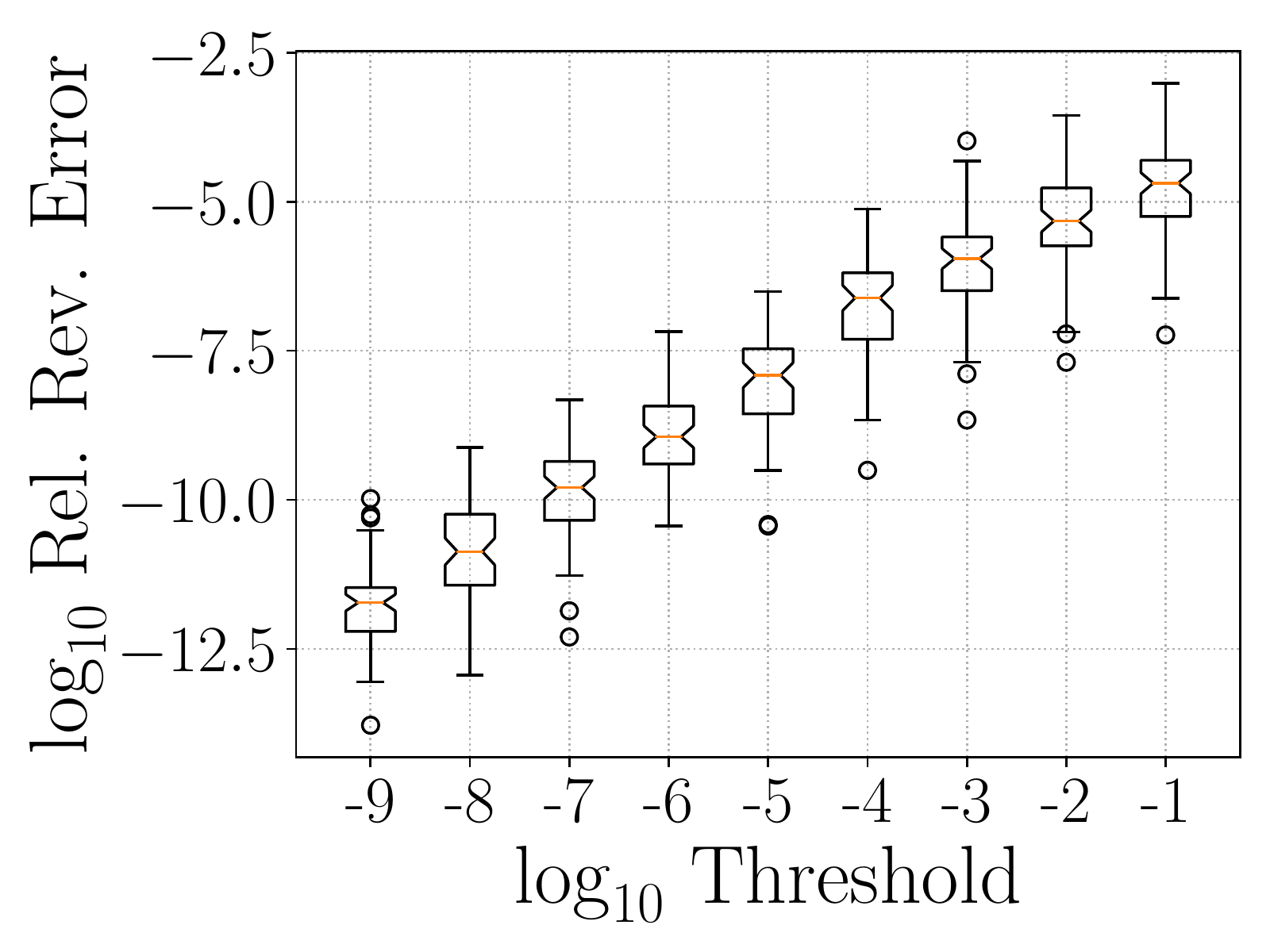}
    \caption{Cox-Poisson}
    \label{subfig:relative-error-cox-poisson}
  \end{subfigure}
  ~
  \begin{subfigure}[t]{0.3\textwidth}
    \centering
    \includegraphics[width=\textwidth]{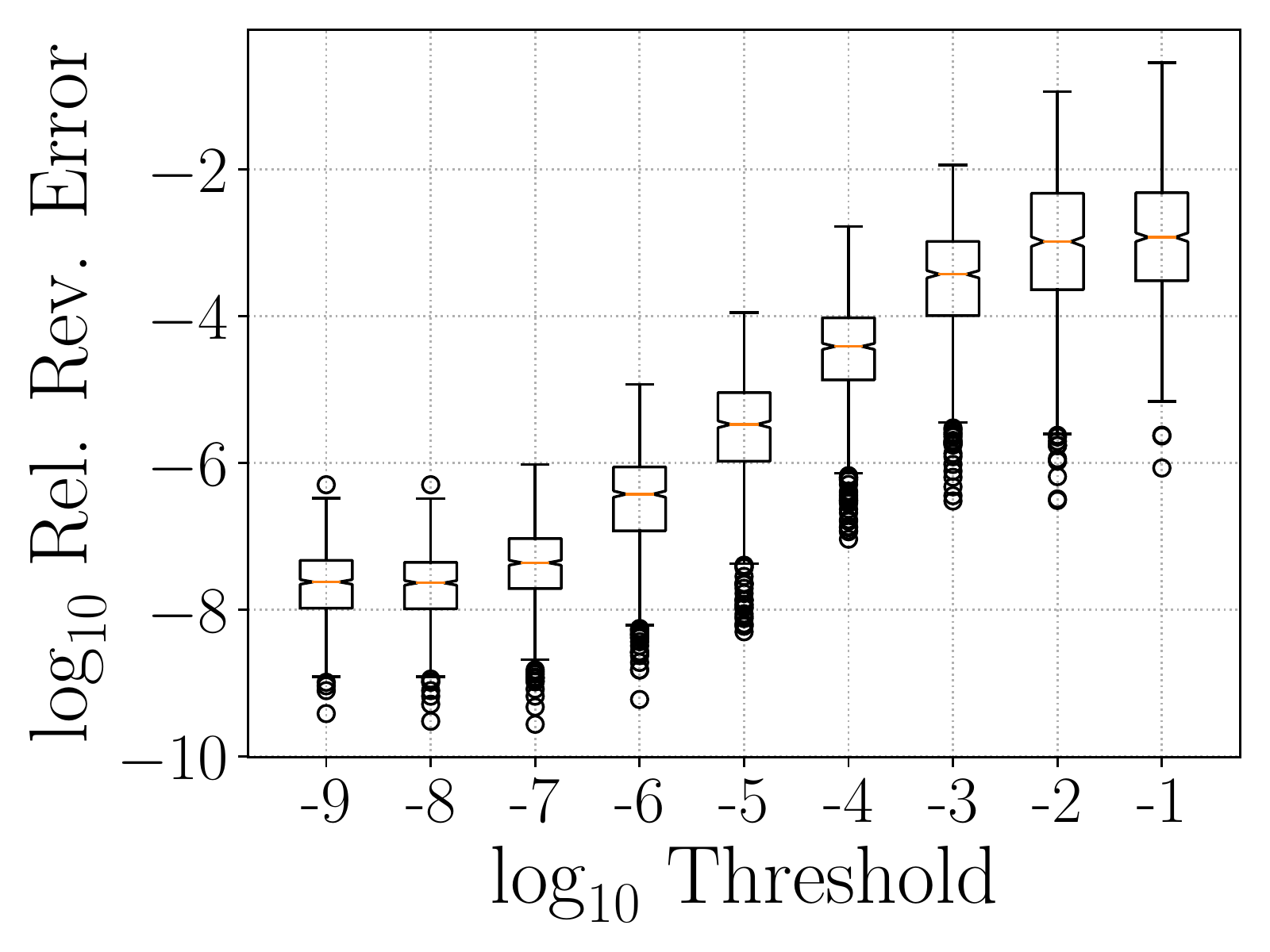}
    \caption{Fitzhugh-Nagumo}
    \label{subfig:relative-error-fitzhugh-nagumo}
  \end{subfigure}

  \begin{subfigure}[t]{0.3\textwidth}
    \centering
    \includegraphics[width=\textwidth]{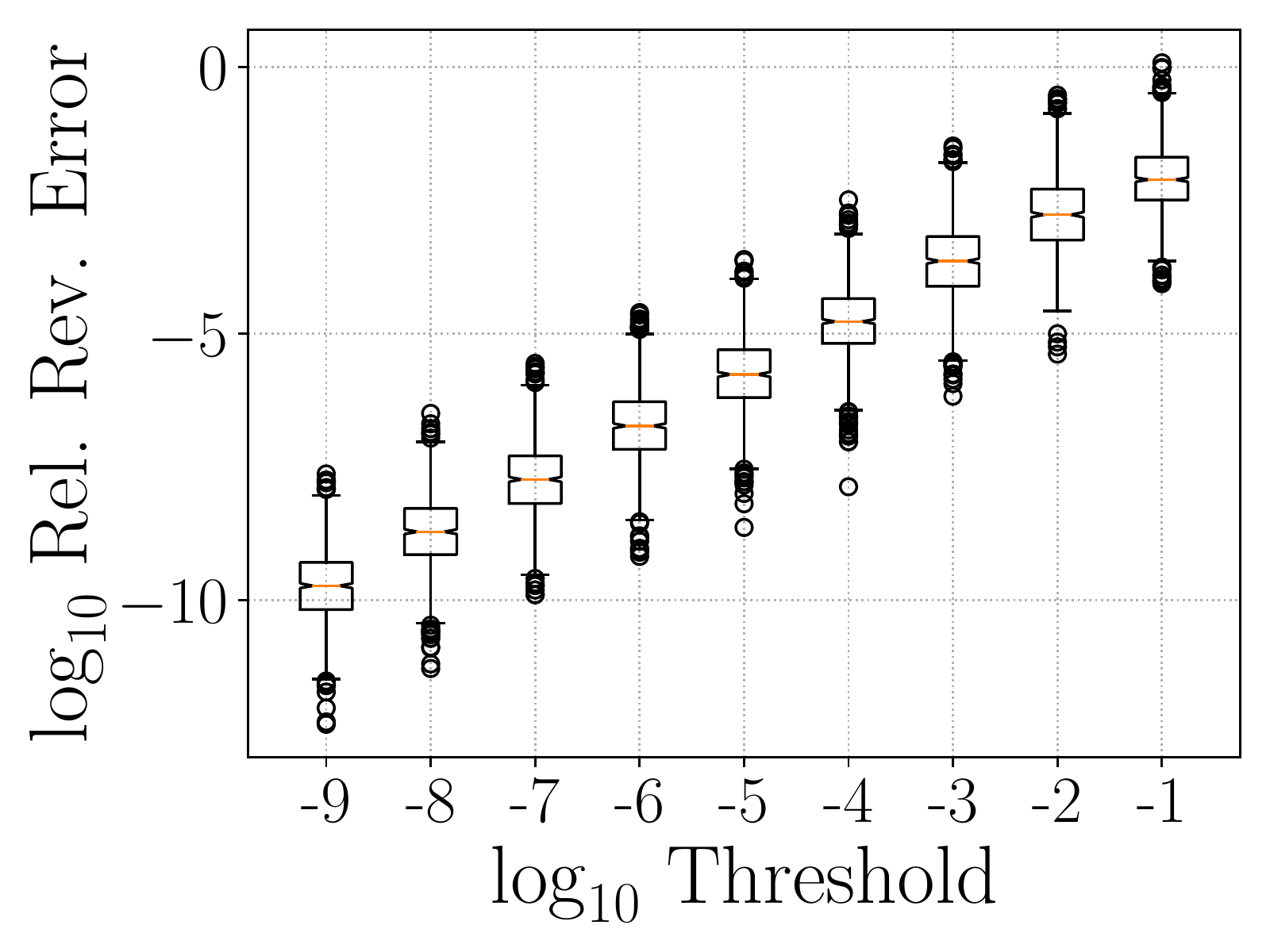}
    \caption{Student-$t$}
    \label{subfig:relative-error-student-t}
  \end{subfigure}

  \caption{We measure the relative error in reversibility by normalizing the absolute reversibility error by the norm of initial condition. This measure contextualizes the reversibility error in terms of the scale of the distribution.}
  \label{fig:reversibility-relative-error}
\end{figure}

  We show in \cref{fig:reversibility-relative-error} the relative error in reversibility for each of the posterior distributions considered in the main text. Qualitatively the relative error is similar to the absolute error in that, as expected, it is a decreasing function of the convergence threshold.
\end{appendix}

%%%%%%%%%%%%%%%%%%%%%%%%%%%%%%%%%%%%%%%%%%%%%%
%% Support information, if any,             %%
%% should be provided in the                %%
%% Acknowledgements section.                %%
%%%%%%%%%%%%%%%%%%%%%%%%%%%%%%%%%%%%%%%%%%%%%%
\begin{acks}[Acknowledgments]
The authors would like to thank Marcus A. Brubaker for helpful discussions. We thank Alex H. Barnett for the suggestion to investigate higher order methods for resolving the implicit updates in the generalized leapfrog method.
\end{acks}

%%%%%%%%%%%%%%%%%%%%%%%%%%%%%%%%%%%%%%%%%%%%%%
%% Funding information, if any,             %%
%% should be provided in the                %%
%% funding section.                         %%
%%%%%%%%%%%%%%%%%%%%%%%%%%%%%%%%%%%%%%%%%%%%%%
\begin{funding}
This material is based upon work supported by the National Science Foundation Graduate Research Fellowship under Grant No. 1752134. Any opinion, findings, and conclusions or recommendations expressed in this material are those of the authors(s) and do not necessarily reflect the views of the National Science Foundation.
The works is also supported in part by NIH/NIGMS 1R01GM136780-01 and AFSOR FA9550-21-1-0317.
\end{funding}

\end{document}